\newcommand{\myuline}[1]{%
  \uline{\phantom{#1}}%
  \llap{\contour{white}{#1}}%
}
\newtheorem{theorem}{Theorem}
\newtheorem{corollary}{Corollary}
\newtheorem{definition}[theorem]{Definition}
\newtheorem{proposition}{Proposition}
\newtheorem{lemma}{Lemma}
\newcolumntype{L}[1]{>{\raggedright\let\newline\\arraybackslash\hspace{0pt}}m{#1}}
\newcolumntype{C}[1]{>{\centering\let\newline\\arraybackslash\hspace{0pt}}m{#1}}
\newcolumntype{R}[1]{>{\raggedleft\let\newline\\arraybackslash\hspace{0pt}}m{#1}}
\begin{document}

\begin{titlepage}
\title{Optimal Carbon Prices in an Unequal World: \\
The Role of Regional Welfare Weights} 
 \author{Simon F. Lang\thanks{
ETH Zurich, Department of Management, Technology and Economics (simon.lang@mtec.ethz.ch). \\
Earlier versions of this paper circulated as part of a longer draft that included a companion analysis on international climate finance.
I am grateful to Kenneth Gillingham, William Nordhaus, Matthew Kotchen, David Anthoff, and Narasimha Rao for invaluable feedback and support. I also thank Eli Fenichel, François Salanié, Francis Dennig, Frikk Nesje, Moritz Drupp, Antony Millner, Itai Sher, Maya Eden, Paul Kelleher, Mirco Dinelli, Andrew Vogt, Matthew Gordon, Ethan Addicott, Matthew Ashenfarb, Andie Creel, Kenneth Jung, Jillian Stallman, Lisa Rennels, Enrico Salonia, Eric Donald, Costanza Tomaselli, 
and participants of the SSE Economics of Inequality and the Environment Workshop, the EEA Congress, the CEPR Climate Change and the Environment Symposium, the EAERE Annual Conference, the AERE Summer Conference, the Toulouse Conference on the Economics of Energy and Climate, SURED, the Swiss Workshop on Environmental, Resource and Energy Economics, the Oxford Global Priorities Fellowship, the Columbia IPWSD Workshop, the IAMC Annual Meeting, the Penn State Science and Values in Climate Risk Management webinar, as well as seminar audiences at Yale University, ETH Zurich, the University of Gothenburg, Aarhus University, the Nova School of Business and Economics, and Middlebury College, for valuable comments and discussions.}}

\date{May 6, 2026}
\maketitle
\vspace{-10pt}
\begin{abstract}
\noindent
How should nations price carbon? This paper examines how the treatment of global inequality, captured by regional welfare weights, affects optimal carbon prices. I develop theory to identify the conditions under which accounting for differences in marginal utilities of consumption across countries leads to more stringent global climate policy in the absence of international transfers. 
I further establish a connection between the optimal uniform carbon prices implied by different welfare weights and heterogeneous regional preferences over climate policy stringency.
In calibrated simulations, I find that accounting for global inequality reduces optimal global emissions relative to an inequality-insensitive benchmark. This holds both when carbon prices are regionally differentiated, with emissions 21\% lower, and when they are constrained to be globally uniform, with the uniform carbon price 15\% higher. \\

\vspace{-15pt}
\singlespacing
\noindent\textbf{Keywords:} carbon pricing, 
climate policy,
global inequality, 
welfare weights,
social welfare function. \\

\vspace{-10pt}
\noindent \textbf{JEL Codes:} Q54, H23, D63, Q58.
\vspace{7pt}
\normalsize

\end{abstract}

\setcounter{page}{0}
\thispagestyle{empty}
\end{titlepage}
\pagebreak \newpage


\onehalfspacing

\setlength{\abovedisplayskip}{8pt}  
\setlength{\belowdisplayskip}{8pt}  

\section{Introduction} \label{sec:introduction}

The distributional effects of climate change and climate policies are at the heart of international climate change negotiations. Central to these debates are inequalities in climate impacts, responsibilities for emissions, and capabilities to mitigate and adapt---aspects that are all interlinked with global wealth inequality \parencite{chancel_climate_2023}. These inequalities are recognized in international climate agreements, as exemplified by the principle of “common but differentiated responsibilities and respective capabilities” of the United Nations Framework Convention on Climate Change \parencite{unfccc_united_1992}. The Paris Agreement further emphasizes that developed countries should take the lead in reducing emissions,
stressing equity considerations \parencite{unfccc_report_2015}. 

Despite the importance of inequalities 
in international climate policy, the standard approach to estimating optimal carbon prices concentrates solely on efficiency and effectively disregards global inequality 
\parencite{nordhaus_economic_2010, yang_magnitude_2006}. It does so by maximizing a social welfare function (SWF) with Negishi welfare weights, which are higher for wealthier countries, offsetting differences in marginal utilities of consumption across countries. In contrast, an alternative approach focuses on maximizing global welfare, subject to constraints on international transfers \parencite{budolfson_utilitarian_2021}. A common version of this approach maximizes the utilitarian SWF, which assigns equal weight to the welfare of all individuals. Crucially, the utilitarian SWF accounts for global inequality in that it considers differences in marginal utilities of consumption across wealthier and poorer countries.
These differences in welfare weights may be particularly important, as the costs and benefits of emission reductions are unevenly distributed, with poorer countries often disproportionately impacted by climate change \parencite{burke_global_2015, carleton_valuing_2022, kalkuhl_impact_2020, mejean_climate_2024}. 
Given that optimal carbon prices are well-known to be highly sensitive to the utility discount rate \parencite{stern_economics_2007, nordhaus_review_2007}---a form of temporal welfare weighting---it may be surprising that comparatively little research has explored the role of regional welfare weights.

In this paper, I examine how the stance on global inequality---reflected in the choice of regional welfare weights---affects optimal carbon prices. 
I study this question theoretically, to uncover insights into the underlying economic forces, and through numerical simulations with the integrated assessment model RICE \parencite{nordhaus_economic_2010}, to evaluate the quantitative implications for climate policy. 

Specifically, I compare optimal carbon prices under the Negishi-weighted SWF to those under the utilitarian SWF, in the absence of international transfers\footnote{
In a companion paper (in progress), I examine how the availability of international transfers affects optimal carbon prices.
}. 
I begin by imposing the same two constraints on the utilitarian optimization that are implicit in the Negishi solution: no international transfers and uniform carbon prices. This constrains the utilitarian problem to an identical policy instrument---a globally uniform carbon price in each period---enabling a direct comparison with the Negishi solution. Next, I remove the uniform carbon price constraint, allowing for differentiated carbon prices, which can improve utilitarian welfare by shifting some of the abatement cost burden from poorer to wealthier countries. I refer to carbon prices in the utilitarian solution as welfare-maximizing carbon prices to highlight that they maximize the (unweighted or equally-weighted) sum of individuals’ utilities.

Using a theoretical model, I show that optimal carbon prices and aggregate abatement may be higher or lower in the utilitarian solutions than in the Negishi solution and that this depends on the distribution of the marginal climate damages and the abatement cost burden across countries. 
Specifically, the utilitarian climate policy is more stringent if poorer nations have comparatively high marginal climate damages and a relatively steep marginal abatement cost function, resulting in smaller changes in abatement when carbon prices are altered.
In a dynamic extension of the model, I show that regional differences in population growth and economic growth critically influence how regional welfare weights affect optimal carbon prices by impacting the relative importance that regions place on the future, when most climate impacts occur, versus the present, when abatement efforts take place.
Moreover, I establish a novel and intuitive connection between the uniform carbon prices under utilitarian and Negishi weights and nations' preferred uniform carbon prices that maximize national welfare, a notion that was established by \textcite{weitzman_can_2014} and \textcite{kotchen_which_2018}: The utilitarian uniform carbon price exceeds the Negishi-weighted carbon price if and only if poorer nations prefer higher uniform carbon prices than wealthier nations.
These uniform prices can be expressed as weighted averages of nations' preferred prices; under additional structure, the weights take a simple form---endowment weights under Negishi and approximately population weights under utilitarianism.
At a conceptual level, I introduce the concept of \textit{welfare-cost-effectiveness}, which refers to emission reductions that are achieved at the lowest possible welfare (utility) cost, contrasting it with the prevalent concept of cost-effectiveness, which focuses on minimizing monetary costs. I demonstrate that regionally differentiated utilitarian carbon prices are welfare-cost-effective, yielding higher carbon prices in wealthier countries, while Negishi-weighted carbon prices are cost-effective.

To address the theoretical ambiguity regarding whether accounting for inequality raises or lowers optimal carbon prices, I employ numerical simulations with RICE to explore the direction and magnitude of this effect.
I find that the welfare-maximizing solutions yield more stringent climate policies than the Negishi solution. 
Simply put, accounting for inequality means stronger climate action.
Specifically, the utilitarian uniform carbon price in 2025 is around 15\% higher than the Negishi-weighted carbon price, under default discounting parameters. 
The utilitarian solution that allows for differentiated carbon prices features higher carbon prices in rich countries and lower carbon prices in poor countries; globally, cumulative emissions are 21\% lower compared to the Negishi solution.

I leverage the theoretical insights to uncover the key drivers of the numerical results. 
The utilitarian differentiated carbon price solution results in reduced global emissions because the additional abatement in developed countries outweighs the reduced abatement in the poorest regions.
Higher uniform carbon prices in the utilitarian solution, compared to the Negishi solution, are primarily driven by the poorest region, Africa, which is most impacted by climate change for two main reasons: it has the highest marginal damages and the fastest population growth.
Using the theoretical link to regions’ preferred uniform carbon prices to strengthen the intuition, I find that Africa also favors the highest uniform carbon price---more than twice the preferred price of the US and the Negishi-weighted carbon price. 
Thus, the main intuition for lower carbon prices in the Negishi solution is as follows: by assigning lower weight to the welfare of poorer regions, Negishi weights effectively also downweight the region most impacted by climate change, leading to less stringent climate policy.


This paper makes several contributions to the literature on optimal carbon prices with heterogeneous regions.
First, it provides a set of novel theoretical results on how optimal carbon prices depend on regional welfare weights. 
To my knowledge, it is the first paper to derive the conditions under which accounting for global inequality increases the optimal global climate policy stringency in the absence of transfers.
In doing so, I identify factors that have previously been underappreciated in this context: regional heterogeneities in the convexity of the abatement cost function, population growth, and economic growth.
These results build on influential papers by \textcite{chichilnisky_who_1994} and \textcite{eyckmans_efficiency_1993}, which show that globally uniform carbon prices are optimal if and only if distributional issues are ignored (through the choice of welfare weights) or unrestricted lump-sum transfers can be made between countries. 
I also offer a new perspective to the literature on regions' preferred uniform carbon prices \parencite{weitzman_can_2014, kotchen_which_2018}, 
casting the choice of welfare weights as a problem of aggregating heterogeneous climate policy preferences. 
Instead of focusing on voting mechanisms, I concentrate on an aggregation of preferences rooted in welfare-economic theory.
However, I show that under certain assumptions, there is an intuitive connection between the two: the Negishi solution behaves like ``one dollar, one vote'', whereas the utilitarian solution behaves (approximately) like ``one person, one vote''.
Another related strand of literature explored aspects of efficiency and equity in emission permit markets \parencite{chichilnisky_environmental_2000, shiell_equity_2003, sandmo_global_2007, borissov_optimal_2022}. In contrast, this paper focuses on a setting without international permit markets.
Other related papers examined the importance of accounting for inequalities at a fine-grained level \parencite{dennig_inequality_2015, schumacher_aggregation_2018}, and how optimal carbon taxes, 
under different welfare weights, 
depend on distortionary fiscal policy \parencite{barrage_optimal_2020, dautume_should_2016, douenne_optimal_2023} and inequality within and between countries \parencite{kornek_social_2021}.
However, unlike the present paper, these studies do not theoretically explore how utilitarian and Negishi weights shape the optimal stringency of climate policy.

Second, this paper adds to a body of work that numerically investigates the role of regional welfare weights in IAMs. The study most closely related to this research is \textcite{anthoff_optimal_2011}, which also compares the Negishi solution to a utilitarian solution, although using a different integrated assessment model\footnote{
\textcite{budolfson_optimal_2020} also compare utilitarian and Negishi solutions. However, they do not technically use Negishi weights but a model in which all individuals consume the global average consumption.
}. 
The present paper expands upon this study in multiple ways. First, it offers a deeper understanding of the key drivers behind the results by linking the new theoretical insights to the numerical findings and regional characteristics. 
Furthermore, I extend the analysis by examining the distributional implications, the utilitarian uniform carbon price, and regions' preferred uniform carbon prices. This offers additional insights into heterogeneous climate policy preferences and further strengthens the intuition behind the numerical results.
A related but distinct literature estimates the equity-weighted social cost of carbon (SCC), a measure of the marginal damages of carbon emissions that places more weight on the costs and benefits in poor countries. 
The key difference between this literature and the present paper is that the equity-weighted SCC typically estimates the marginal damages along \textit{non-optimal} emissions pathways \parencite{azar_discounting_1996, anthoff_equity_2009, adler_priority_2017, anthoff_inequality_2018, prest_equity_2024}, while this paper investigates how regional welfare weights affect optimal carbon prices.


The remainder of this paper is organized as follows. Section \ref{sec:Conceptual Background} provides conceptual background on the different optimization approaches. In Section \ref{sec:Theory}, a theoretical model is introduced and key analytical results are derived. Section \ref{sec:Simulations} describes modifications to the RICE model and presents simulation results. Section \ref{sec:conclusion} concludes.

\section{Conceptual Background} \label{sec:Conceptual Background}

\todo[inline]{Old text. Completely rewrite, restructure, and shorten the conceptual background section. Maybe put some supplementary information in the appendix.}

This section provides conceptual background on different optimization approaches in order to lay the foundation for my own analysis. In Section \ref{ssec: Positive versus normative optimizations}, the difference between positive and normative approaches is introduced. Section \ref{ssec: The positive approach: Background on Negishi weights} discusses the positive approach.
Finally, Section \ref{ssec: The normative approach: Welfare-economic conceptualization} provides a welfare-economic conceptualization of normative optimizations.
 
\subsection{Positive and normative optimizations}\label{ssec: Positive versus normative optimizations}
The purpose of optimization is a main source of debate in climate economics, and two approaches are sometimes distinguished: positive and normative optimizations \parencite{kelleher_reflections_2019}. \textcite[p.~1081]{nordhaus_integrated_2013} provides an instructive discussion of these two approaches, noting that ``the use of optimization can be interpreted in two ways: they can be seen both, from a positive point of view, as a means of simulating the behavior of a system of competitive markets and, from a normative point of view, as a possible approach to comparing the impact of alternative paths or policies on economic welfare''. 
In brief, the positive approach seeks to identify the competitive equilibrium, while the normative approach aims at maximizing social welfare. 
Which approach is taken depends on the welfare weights in the SWF\footnote{
The positively and normatively determined welfare weights coincide for a specific normative stance, but in general they are different.
}. 

The issue of discounting, which determines the intertemporal weighting of consumption and welfare, has received much attention in the debate on positive versus normative optimization approaches \parencite{arrow_how_2013, azar_discounting_1996, beckerman_ethics_2007, dasgupta_discounting_2008, dietz_why_2008, gollier_debate_2012, nordhaus_review_2007}. Under the positive approach, the discount rate is determined based on market observations. In contrast, the normative approach relies on ethical reasoning to determine the discount rate.

However, the difference between positive and normative optimization approaches extends to the interregional weighting of welfare. The typical positive approach relies on Negishi welfare weights, which are higher for rich individuals, to identify the competitive equilibrium. In contrast, under the normative approach, uniform welfare weights, which are also called utilitarian welfare weights, are most commonly used, weighting everybody’s welfare within a time period equally\footnote{
Note that other normatively-founded SWFs have been used in the climate economics literature, including the prioritarian SWF \parencite{adler_priority_2017} and variants of the Rawlsian SWF \parencite{roemer_ethics_2011, llavador_intergenerational_2010,llavador_dynamic_2011}.
}. This paper focuses on interregional welfare weights and how they influence optimal carbon prices.

While the distinction between positive and normative optimizations is useful to clarify the different purposes of optimization, this distinction is not always clear-cut in climate economics.
In particular, it has been questioned whether it is possible to interpret the modeling choices that are typically made under the positive optimization approach as purely positive (see \textcite{chawla_normative_2023}, for a discussion)\footnote{
An additional confusion sometimes arises when ``positive'' optimization results appear to be used to suggest how policies \textit{should} be designed \parencite{kelleher_reflections_2019}. 
While possible in principle, a normative justification of positively calibrated welfare weights would be required to draw normative conclusions from a positive analysis.
}.
Keeping this caveat in mind, I use the labels ``positive'' and ``normative'' to highlight the conceptual difference underlying these optimization approaches: whether the optimization seeks to simulate markets or to maximize social welfare.

\todo[inline]{Maybe add my Table on positive and normative welfare weights. Actually don't think I need it.}

\subsection{The positive approach: Background on Negishi weights}\label{ssec: The positive approach: Background on Negishi weights}

Negishi welfare weights are commonly used in regionally disaggregated integrated assessment models of climate change. Popular IAMs that use Negishi weights include RICE \parencite{nordhaus_regional_1996}, which this paper focuses on, MERGE \parencite{manne_merge_2005},  REMIND \parencite{leimbach_mitigation_2010} and WITCH \parencite{bosetti_incentives_2012}. 
This section outlines the rationale for and critiques of using Negishi weights in IAMs. It finishes with a welfare economics perspective on the positive optimization approach.

\subsubsection{Rationale for using Negishi weights in IAMs}\label{sssec: Rationale for using Negishi weights in IAMs}

The theoretical basis for the use of Negishi weights is a theorem by \textcite{negishi_welfare_1960}. Negishi proved that a competitive equilibrium can be found by maximizing a social welfare function in which the welfare of each agent is appropriately weighted such that each agent’s budget constraint is satisfied at the equilibrium \parencite{nordhaus_regional_1996}. 
The Negishi-weighted SWF is given by a weighted sum of agents’ utilities, where the weights are inversely proportional to the marginal utility of consumption.
For identical and concave utility functions, which are commonly assumed, this implies higher welfare weights for wealthy individuals, with a low marginal utility of consumption, than for poorer individuals.
The appeal of the Negishi-weighted SWF is that it provides a computationally convenient method to identify the competitive equilibrium, which is Pareto efficient if the conditions of the first fundamental theorem of welfare economics are satisfied.

Besides this theoretical foundation, there were two additional motivations for the use of Negishi weights in IAMs: (1) to prevent transfers across regions, which were considered politically infeasible or unrealistic \parencite{nordhaus_regional_1996}, and (2) to obtain a uniform carbon price in all regions, ensuring that global emissions are reduced at the lowest possible cost. 
Indeed, to achieve these two objectives in every period of the RICE model, \textcite{nordhaus_regional_1996} made refinements to what they call the ``pure Negishi solution'' that relies on time-invariant welfare weights.
Specifically, \textcite{nordhaus_regional_1996} adjust the Negishi weights in each time period such that the weighted marginal utility of consumption is equalized in each period \parencite{stanton_negishi_2011}.
This approach yields time-variant Negishi weights and accomplishes the goal of equalizing the carbon price across regions in every period. 
Moreover, these weights ensure that no cross-regional transfers take place, since such transfers do not increase the objective value of the Negishi-weighted SWF.
Notably, without Negishi weights, social welfare could be increased by redistributing capital or consumption from rich to poor regions in models that maximize the unweighted sum of agents’ utilities, as long as utility is an increasing concave function of consumption, which is commonly assumed.
Hence, the constraints of equalized carbon prices and no transfers are effectively incorporated in the time-variant Negishi weights used in RICE.

\subsubsection{Critiques of using Negishi weights in IAMs}\label{sssec: Critiques of using Negishi weights in IAMs}
While Negishi weights are commonly used in IAMs, the use of such weights has been criticized on both ethical and theoretical grounds \parencite{anthoff_differentiated_2021, dennig_note_2019, stanton_negishi_2011, stanton_inside_2009}. This section provides a summary of main critiques.

From an ethical perspective, a main critique is that Negishi weights assign greater weight to the welfare of people in rich countries than in poor countries. This is the case because Negishi weights are inversely proportional to the marginal utility of consumption and the utility function is typically assumed to be concave. 
Models with Negishi weights are thus ``acting as if human welfare is more valuable in the richer parts of the world'' \parencite[176]{stanton_inside_2009}. 
Moreover, because Negishi weights equalize the weighted marginal utility of consumption, aspects of interregional equity are effectively ignored and global inequality is neglected \parencite{stanton_negishi_2011, stanton_inside_2009}. As a result, it is irrelevant whether poor or rich countries are affected by climate change and climate policies \parencite{dennig_inequality_2015}.




Moreover, \textcite{stanton_negishi_2011} notes that models with Negishi weights have an inherent conceptual inconsistency: the diminishing marginal utility of consumption is embraced intertemporally, but suppressed interregionally. 
Consequently, transfers from richer to poorer individuals are desired in an intertemporal context but rejected in an interregional context.

Another criticism from a theoretical perspective is provided by \textcite{dennig_note_2019} and \textcite{anthoff_differentiated_2021}. In a simple analytical model, these authors show that the time-variant Negishi weights, used for example in the RICE model \parencite{nordhaus_regional_1996}, distort the time-preferences of agents and result in different saving rates than those implied by the underlying preference parameters. Furthermore, they note that the time-invariant weights proposed by \textcite{negishi_welfare_1960} do not have this problem because they only consist of one weight per agent, and thus only affect the distribution between agents, but leave the intertemporal choices of each agent unaffected. 

A final criticism of Negishi weights concerns the manner in which Negishi weights are often introduced---if discussed at all---which is frequently rather technical with no or little transparent discussion of the ethical implications \parencite{abbott_following_2014, stanton_negishi_2011}. 


\subsubsection{Welfare economics perspective on the positive approach}\label{sssec: Welfare economics perspective on the positive approach}


This section provides a discussion of the positive optimization approach from the perspective of welfare economics. 
From the first fundamental theorem of welfare economics, it is known that, under certain conditions,
the competitive equilibrium is Pareto efficient \parencite{sen_moral_1985}. 
The maximization of a Negishi-weighted SWF in IAMs seeks to identify the competitive equilibrium with a Pareto efficient level of abatement\footnote{
However, \textcite{anthoff_differentiated_2021} show that the time-variant Negishi weights used in IAMs do not, in fact, yield a Pareto efficient solution. This is because of a time-preference altering effect of time-variant Negishi weights. In this section, I focus on a static setting in which this issue does not arise. 
}. 
I refer to this solution as the ``Negishi solution''. 
The Negishi solution is one particular point---among infinitely many points---on the Pareto frontier in a first-best setting in which only resource and technology constraints are present (assuming that the conditions of the first fundamental theorem of welfare economics hold).
Notably, it is the only Pareto efficient allocation in a first-best setting that does not require transfers \parencite{shiell_equity_2003}. 
In the absence of abatement, the competitive equilibrium is not efficient due to the climate externality\footnote{
This is also the case if abatement is inefficiently low, as it is the case in the Nash equilibrium.
}. 
The Pareto efficiency of the Negishi solution, and the inefficiency of no abatement, is illustrated in Figure~\ref{F1}a, which shows 
the Pareto frontier for a simple example with two regions: a rich Global North, and a comparatively poor Global South.

While the Negishi solution is Pareto efficient, it cannot generally be interpreted as maximizing social welfare in a normative sense.
The Negishi-weighted SWF is not intended to reflect ethical theories of social welfare but is instead calibrated to support a Pareto-efficient allocation without requiring transfers.
In contrast, normative analyses rely on social welfare functions grounded in explicit normative theories of social welfare. The most commonly used such theory in public economics is utilitarianism \parencite{fleurbaey_optimal_2018}, which 
assigns equal weight to the welfare of all individuals.
Importantly, the Negishi solution does not maximize aggregate welfare if the welfare of all people is weighted equally. 
Maximizing a utilitarian SWF maximizes the (equally-weighted) sum of the welfare of all individuals. 
This is illustrated in Figure~\ref{F1}b, which shows the social indifference curves of the utilitarian and Negishi-weighted SWFs, and the points that maximize these SWFs\footnote{
To choose among different points on the Pareto frontier (or, more generally, any vector of utilities), interpersonal utility comparisons are often made. However, the admissibility of such comparisons is a longstanding point of contention in welfare economics \parencite{robbins_essay_1935, harsanyi_cardinal_1955, sen_interpersonal_1970, stiglitz_pareto_1987}.    
Indeed, contemporary welfare economics is split into two branches: one dismisses interpersonal utility comparisons, while the other branch relies on such comparisons and uses SWFs to determine socially preferable outcomes \parencite{fleurbaey_utilitarianism_2008}.
This paper belongs to the second branch.
}.

\begin{figure}[!htb]
    \centering
    \includegraphics[width=0.8\linewidth]{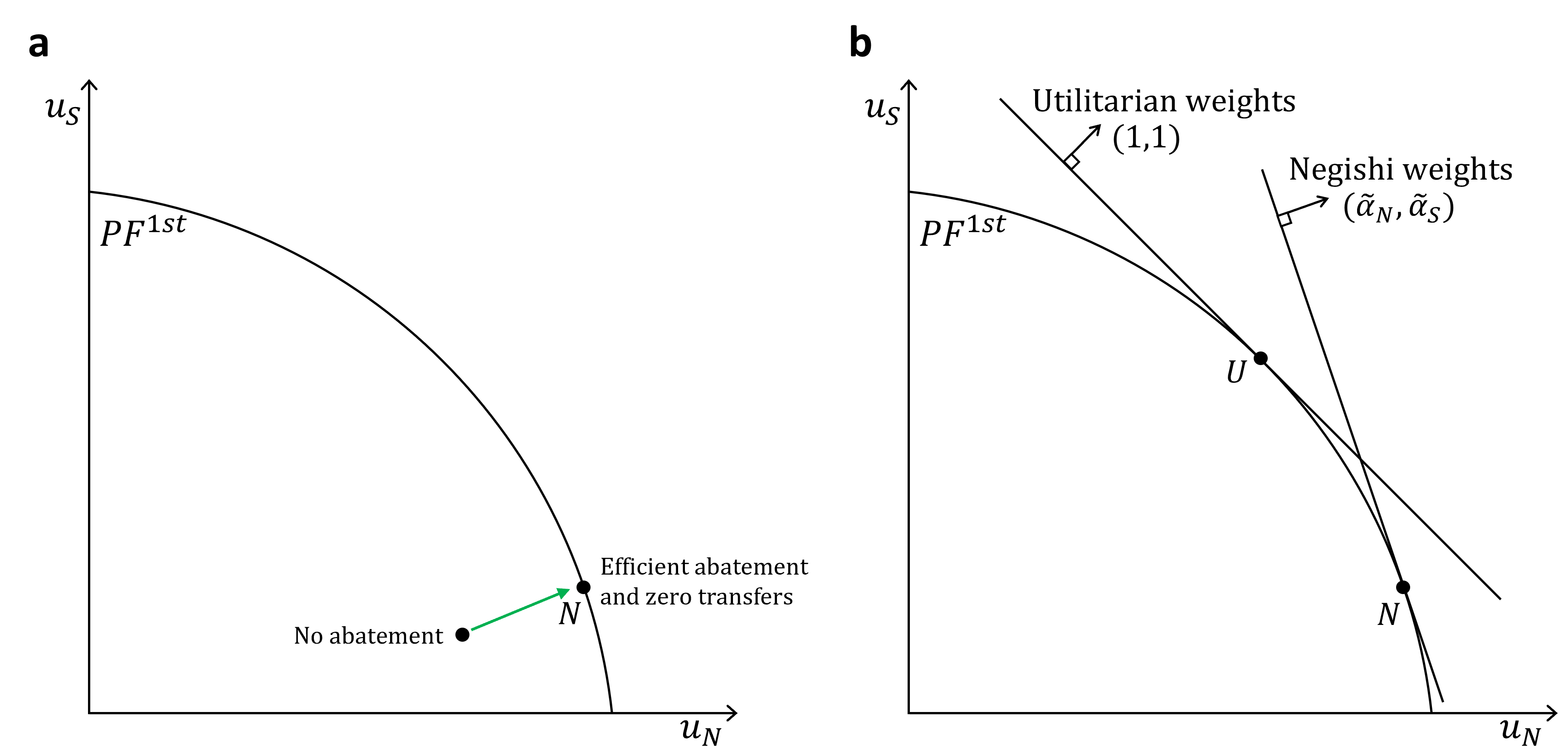}
\caption{Illustrative two-region example of the welfare outcomes under the Negishi and utilitarian solutions.}
 \vspace{1mm}  
    \begin{minipage}{1\textwidth}
        \small \textit{Notes}:
        Panel (a) shows that the Negishi solution ($N$) is Pareto efficient. Panel (b) shows an illustrative comparison of the Negishi ($N$) and utilitarian ($U$) solutions. The utilities of representative agents in the Global North and Global South are denoted by $u_N$ and $u_S$, respectively. $PF^{1st}$ is the Pareto frontier in a first-best setting. The welfare weights vectors are the gradient vectors of the SWFs, which are perpendicular to the linear social indifference curves. The Negishi weights are denoted by $\tilde{\alpha}_i$.
    \end{minipage}
\label{F1}
\end{figure}
 
Given that the Negishi solution does not maximize aggregate (unweighted) welfare, how may the use of Negishi weights in IAMs be justified? 
There are at least two possible lines of argument.
First, it may be argued that the Negishi solution has no normative but only a positive interpretation; that it is merely a procedure to identify the competitive equilibrium with Pareto efficient abatement and zero transfers. 
For example, \textcite[1111]{nordhaus_integrated_2013} notes that ``if the distribution of endowments across individuals, nations, or time is ethically unacceptable, then the ``maximization'' is purely algorithmic and has no compelling normative properties''.
Moreover, \textcite[20]{nordhaus_dice_2013} clarify: ``We do not view the solution as one in which a world central planner is allocating resources in an optimal fashion''.

A second line of argument used to support employing Negishi weights relies on the second fundamental theorem of welfare economics, which states that any point on the Pareto frontier can be supported as a competitive equilibrium if unrestricted lump-sum transfers can be made. 
This is sometimes used to argue that the issues of equity and efficiency can be separated.
However, this typically is not the case for climate policy; the Pareto efficient abatement level generally depends on the distribution of wealth.
This is because the marginal willingness to pay for abatement generally varies with income \parencite{shiell_equity_2003}. 
Therefore, the Negishi solution only identifies a Pareto efficient abatement level if no transfers occur.
Moreover, the practical relevance of the second welfare theorem has been questioned. 
For instance, \textcite[12]{sen_moral_1985} notes that ``if there is an absence of---or reluctance to use---a political mechanism that would actually redistribute resource-ownership and endowments appropriately, then the practical relevance of the converse theorem [the second welfare theorem] is severely limited''.

To summarize, the abatement in the Negishi solution generally differs from the abatement that maximizes global utilitarian welfare (hereafter, simply ``global welfare''),
regardless of whether unrestricted transfers are feasible.





\subsection{The normative approach: Welfare-economic conceptualization}\label{ssec: The normative approach: Welfare-economic conceptualization}


This section provides a conceptualization of the normative optimization approach, grounded in welfare economics. In doing so, the objective of this section is to clarify the fundamental distinction between positive and normative optimization approaches in climate economics.
In Section \ref{sssec: Rationale for using Negishi weights in IAMs}, I have argued that constraints are implicitly incorporated in the welfare weights under the positive approach. Here, I emphasize that this marks a key difference from the normative approach, where constraints and welfare weights are determined separately.

I propose to conceptualize the normative optimization approach as consisting of two steps. First, the social welfare function is defined based on ethical principles. Second, potential constraints are specified which affect the feasible set of allocations.
The first step---the specification of the SWF based on ethical principles---is common in normative analyses.
Such SWFs have a long tradition in public economics, and particularly in the optimal income taxation literature \parencite{mirrlees_exploration_1971,piketty_optimal_2013,fleurbaey_optimal_2018}. 
They are referred to as Bergson-Samuelson SWFs \parencite{bergson_reformulation_1938, samuelson_foundations_1947}
and produce an ethical ordering of societal outcomes. Common Bergson-Samuelson SWFs include the utilitarian, prioritarian and Rawlsian SWFs \parencite{mas-colell_microeconomic_1995}. 
In this paper, I focus on the utilitarian SWF,
the most widely used normatively grounded Bergson-Samuelson SWF in climate economics.

\todo[inline]{Axiomatic foundation of SWFs}


The second step is to carefully consider and explicitly account for real-world constraints in the optimization. This step is often less thoroughly addressed in the existing literature. It is of course challenging to determine and formalize plausible real-world constraints, especially in stylized IAMs. It therefore seems valuable to explore a plausible range of constraints.
Conceptually, such constraints affect the feasible set of allocations, which, in turn, determines the utility possibility set (UPS), which was introduced by \textcite{samuelson_foundations_1947}. 
Ultimately, we are interested in the Pareto frontier, which is defined as the upper frontier of the UPS\footnote{
Economists sometimes use the term efficiency to simply mean outcomes that maximize the total monetary sum (for short, ``maximizing dollars''). In a first-best setting in which unrestricted lump-sum transfers are feasible, maximizing dollars is necessary and sufficient for Pareto efficiency. Importantly, however, in a second-best setting in which unrestricted lump-sum transfers are infeasible, maximizing dollars is not necessary for Pareto efficiency (nonetheless, maximizing dollars is, of course, one Pareto efficient outcome on the Pareto frontier among infinitely many other points on the Pareto frontier that do not maximize dollars). Throughout this paper, I use the standard definition of Pareto efficiency that no one can be made better off without making someone else worse off, given the constraints of the problem.
}. 
Finally, the social optimum is the point on the Pareto frontier that maximizes the SWF.

Depending on the constraints imposed on the optimization, a conceptual distinction between first-best and second-best settings is frequently made \parencite{mas-colell_microeconomic_1995}.
Typically, a first-best setting is considered to be a setting in which only resource and technology constraints are present, but otherwise the social planner has access to any policy instrument, including unrestricted lump-sum transfers. 
In contrast, the notion of second-best settings is used when additional constraints are present.

It is instructive to illustrate how the normative optimization approach works in the context of this paper. This is shown in Figure~\ref{fig:Illustration of the normative optimization approach} for optimization problems considered in this paper. In the first step, the utilitarian SWF is specified (which has linear social indifference curves with slope -1). In the second step, potential constraints are specified. 
Of particular relevance in the context of international climate policy are constraints on international transfers and whether carbon prices are constrained to be uniform across countries.

In the first-best setting, there are no constraints apart from the usual resource and technology constraints. 
In particular, unrestricted lump-sum transfers can be made. 
In this setting, the social planner uses cost-effective and efficient uniform carbon prices to internalize the climate externality and lump-sum transfers to address distributional issues. 
With identical and concave utility functions, large transfers are made to equalize per capita consumption across regions \parencite{dennig_inequality_2015}, eliminating inequality. This results in the highest utilitarian welfare; the outermost social indifference curve, $W_{1st}$, is achieved. 

However, as discussed above, such a first-best setting with large international transfers may be politically infeasible. 
As \textcite[43]{shiell_equity_2003} puts it, ``Unrestricted lump-sum transfers are a useful construct which scarcely exist outside the confines of economic theory''. 
As discussed in Section \ref{sssec: Rationale for using Negishi weights in IAMs}, the political infeasibility of large transfers was one of the reasons that motivated the use of Negishi weights under the positive optimization approach. In contrast, under the normative optimization approach, political transfer constraints affect the feasible set of allocations while welfare weights remain unchanged.

\begin{figure}[!htb]
    \centering
    \includegraphics[width=0.45\linewidth]{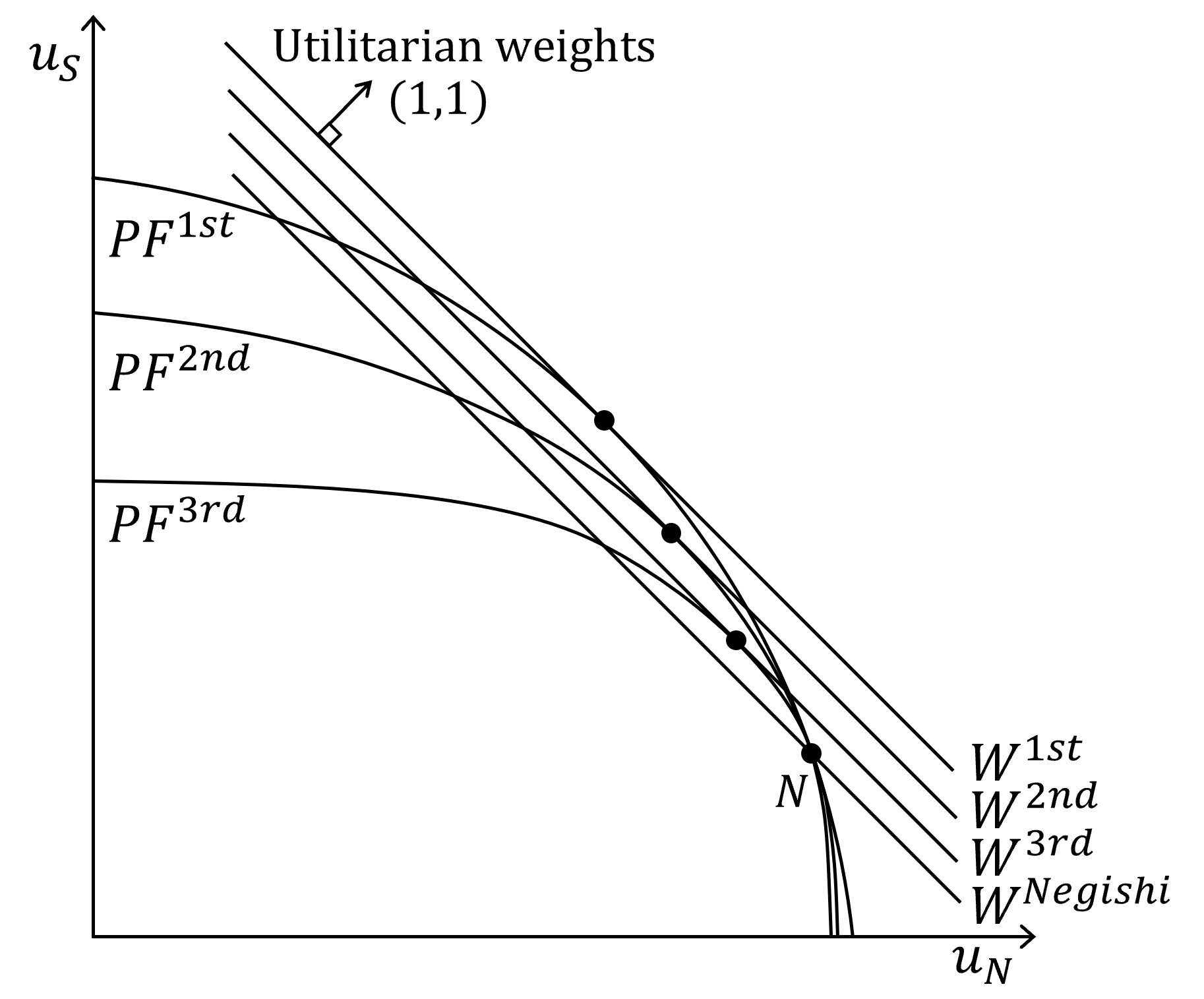}
    \caption{Illustration of the utilitarian social welfare outcomes in first-, second-, and third-best settings, and the Negishi solution.}
    \vspace{1mm}  
    \begin{minipage}{1\textwidth}
        \small \textit{Notes}:
        The figure shows the utilitarian optima in first-, second-, and third-best settings and the utilitarian welfare level of the Negishi solution.
        The utilities of representative agents in the Global North and Global South are denoted by $u_N$ and $u_S$, respectively. 
        $PF^x$ is the Pareto frontier in the $x^{th}$-best setting. $W^x$ is the utilitarian social indifference curve that corresponds to the social optimum in the $x^{th}$-best setting or the Negishi solution.
        The utilitarian weights vector is the gradient vector of the utilitarian SWF.
    \end{minipage}
    \label{fig:Illustration of the normative optimization approach}
\end{figure}

Let us now consider such a second-best setting in which international lump-sum transfers are infeasible\footnote{
I intentionally focus on the case of no transfers here to keep the discussion simple. In reality, however, some transfers are feasible (e.g., international aid or climate finance). A companion paper (in progress) examines how international climate finance affects optimal carbon prices.
}. The lack of this policy option reduces the feasible set of allocations, the UPS gets smaller, and the Pareto frontier moves inward (except for one point on the frontier, which corresponds to the Negishi solution, which does not require transfers). Consequently, the social optimum lies on a lower social indifference curve, $W_{2nd}$.
In the absence of the option to address inequality with lump-sum transfers, the utilitarian social planner accounts for global inequality in the climate policy design. Specifically, differentiated carbon prices that are higher in rich regions and lower in poor regions are used to reduce the welfare cost of abating emissions \parencite{chichilnisky_who_1994} (also see Section \ref{sssec:Optimal carbon prices under different welfare weights}). It should be noted that a potential problem with differentiated carbon prices is carbon leakage---an increase in emissions in countries with laxer climate policies as a result of stricter climate policies elsewhere. However, additional policies such as carbon border adjustments and binding emission targets can avert the issue of carbon leakage. For a more detailed discussion, see \textcite{budolfson_utilitarian_2021} and Appendix \ref{Assec: Discussion of DCPO}. 

Finally, consider a third-best setting in which the policy instruments the social planner can use are restricted even further to a globally uniform carbon price (in addition to a constraint of no transfers). I would argue that this is not a plausible constraint in reality, as evidenced by widely different empirical carbon prices across countries \parencite{world_bank_state_2023}.
Nevertheless, it provides a useful comparison to the solution under the positive optimization approach, as it constrains the utilitarian problem to an identical policy instrument---a globally uniform carbon price and no transfers. 
Yet, an important difference remains.
The utilitarian uniform carbon price accounts for inequality in the carbon price level, while the positive optimization approach ignores inequality altogether through the specification of the Negishi weights, which equalize the weighted marginal utility across regions.
Consequently, the utilitarian uniform carbon price solution is weakly better, from the perspective of utilitarian welfare, than the Negishi solution (compare social indifference curves $W_{3rd}$ and $W_{Negishi}$). 
This is simply because the utilitarian uniform carbon price is, by construction, the uniform carbon price that maximizes utilitarian welfare in a setting in which transfers are infeasible. 

It is worth highlighting how the different solutions respond to global inequality. The spectrum ranges from completely solving inequality through lump-sum transfers in the first-best utilitarian setting to ignoring inequality altogether in the Negishi solution. While the social optima in the second- and third-best settings do not solve inequality through transfers, they account for inequality to different degrees in the carbon pricing policy. In the second-best setting, inequality is accounted for in the \textit{level} and \textit{differentiation} of carbon prices across regions. In contrast, in the third-best setting, inequality is only accounted for in the \textit{level} of the carbon price.

The extent to which inequality is ultimately accounted for in international climate policy is decided by policymakers and international negotiations. However, international agreements indicate that there is a political consensus to account for inequality to some extent. This is evidenced, for example, by the UNFCCC principle of ``common but differentiated responsibilities and respective capabilities, in the light of different national circumstance'' and a general recognition that developed countries have an obligation to reduce their emissions faster and support developing countries in their transitions toward low-carbon economies, which is also reflected in the respective nationally determined contributions (NDCs) under the Paris Agreement \parencite{unfccc_report_2015, climate_watch_explore_2022}. More broadly, the Paris Agreement underscores the necessity of incorporating the principle of equity and the goal of poverty eradication into climate policy, indicating that countries have agreed to account for inequality in international climate policy \parencite{unfccc_report_2015}. Hence, policymakers may be interested in socially optimal climate policies that take inequality into account.
The present study seeks to identify such policies and contrasts them with the conventional, positive approach that neglects inequality.

\section{Theory} \label{sec:Theory}

This section provides a theoretical analysis of how regional welfare weights affect optimal carbon prices.

\subsection{Model setup} \label{ssec:Model setup}

The model setup builds on \textcite{chichilnisky_who_1994} and \textcite{dennig_note_2017}.
I intend to construct the simplest model possible to generate key insights and to provide theoretical underpinnings for the main drivers of the simulation results in Section \ref{sec:Simulations}.

There are two regions, indexed by $i \in \mathcal{I} = \{N,S\}$, and a single period; for intuition, interpret $N$ and $S$ as the Global North and Global South, respectively. 
Section \ref{sssec: Extension: Dynamic model} considers a two-period extension, and the appendix generalizes main results to $n$ regions.
The population of each region is exogenous and denoted by $L_{i}$.
Uppercase letters are used for aggregate variables at the region level, while lowercase letters are used for per capita variables and, in some cases, per endowment variables.

Abatement costs, $C_{i}(A_{i})$, are a function of the abatement $A_{i} \geq 0$ in region $i$. The abatement cost function differs by region and is assumed to be smooth, strictly increasing, $\frac{dC_{i}}{dA_{i}} > 0$, and strictly convex, $\frac{d^2C_{i}}{dA_{i}^2} > 0$. Moreover, to keep the exposition simple, I assume that $\frac{d^2C_{i}}{dA_{i}^2}$ is constant but region-specific; that is, $\frac{d^3C_{i}}{dA_{i}^3} = 0$ for all $A_i$.
\todo[inline]{Confirm that it's true that some of the main results do not require the assumption of constant second derivatives of the abatement cost functions}
This is the case for the commonly assumed quadratic abatement cost function.
The aggregate global abatement is given by $A \equiv \sum_i A_{i}$. 
Region-specific climate damages, $D_{i}(A)$, are a function of the global abatement. 
The damage function is assumed to be smooth, strictly decreasing, $\frac{dD_{i}}{dA} < 0$, and strictly convex in abatement, $\frac{d^2D_{i}}{dA^2} > 0$, reflecting the idea of convex damages as a function of emissions.
Regional consumption, $X_{i}$, is given by the exogenous endowment, $W_{i}$, net of abatement costs and climate damages:
$X_{i} = W_{i} - C_{i}(A_{i}) - D_{i}(A)$.
There is a representative agent in each region, who derives utility, $u(x_i)$, from per capita consumption, $x_i = X_i/L_i$. The utility function is assumed to be identical for all individuals, strictly increasing, strictly concave, and smooth. Thus, $\frac{du}{dx_i}>0$ and $\frac{d^2u}{dx_i^2}<0$.

I assume throughout that the Global North is richer than the Global South, both in terms of per capita endowment and consumption. Thus, we have $w_{N} > w_{S}$  and $x_{N} > x_{S}$. The implicit assumption is that the difference in endowment per capita between the Global North and the Global South is sufficiently large such that individuals in the Global North remain richer after abatement costs and climate damages are subtracted. From the concavity of the utility function, it follows that $u'(x_{N}) < u'(x_{S})$.

While I derive the theoretical results for general functional forms, it is useful to put more structure on the abatement cost and damage functions to closely link theory and simulation results. To do this, I use simplified versions of the functions employed in the RICE model\footnote{
See Appendix \ref{Assec: Abatement cost and damage functions of the RICE model} for the abatement cost and damage functions of the RICE model.
}, capturing their key characteristics. I define these ``simplified RICE functions'' as follows:
\begin{gather}
    D_{i}(A) = L_{i} w_{i} d_{i}(A),\\
    C_{i}(A_i) = L_{i} w_{i} c_{i}\left(\frac{A_{i}}{L_{i} w_{i}}\right),
\end{gather}
where $w_i$ is the endowment per capita and $d_i$ and $c_i$ denote the damages and abatement costs per aggregate regional endowment, respectively. Note that $c_i$ is a function of abatement relative to the size of the economy, reflecting that larger economies have more abatement opportunities of a certain type and cost.

\todo[inline]{Simplified RICE functions. Mention that I assume that per capita endowment and population are exogenous (maybe cite Kevin Kuruc).}

\subsubsection{Optimization problems} \label{sssec:Optimization problems}


I consider two general optimization problems, reflecting the optimizations that are most commonly performed in the literature on optimal carbon prices (e.g., in \textcite{nordhaus_regional_1996}, \textcite{dennig_inequality_2015}, \textcite{budolfson_utilitarian_2021}). 
The first allows for (but does not require) differentiated carbon prices and the second requires uniform carbon prices. 
The social planner's objective is to choose the carbon prices that maximize the SWF, with welfare weights $\alpha_i \geq 0$, subject to regional budget constraints, reflecting a constraint of no interregional transfers. 


Formally, the \textit{differentiated carbon price optimization problem} is
\begin{gather}
    \max_{X_i, A_i} \sum_i L_i \alpha_i u \left(\frac{X_i}{L_i}\right) \\
    \begin{aligned}
    \text{subject to: }
    \label{E: constraints diff1period}
    &X_i = W_i - C_i(A_{i}) - D_i(A), \quad \forall i. \\
    \end{aligned}
\end{gather}

The \textit{uniform carbon price optimization problem} is identical except that an additional constraint of uniform marginal abatement costs is imposed\footnote{
Note that I am using prime notation for derivatives: $C'_{i}(A_{i}) \equiv \frac{dC_{i}(A_{i})}{dA_{i}}$,  $D'_{i}(A) \equiv \frac{dD_{i}(A)}{dA}$, $u'(x_{i}) \equiv \frac{du(x_{i})}{dx_{i}}$.
},
\begin{gather}
    C'_N(A_N) = C'_S(A_S).
\end{gather}

\subsection{Optimal carbon prices under different welfare weights} \label{sssec:Optimal carbon prices under different welfare weights}

Solving the optimization problems yields expressions for the optimal marginal abatement costs. Optimal carbon prices, $\tau_{i}$, are equal to the optimal marginal abatement costs, $C'^*_{i}$, because regions are assumed to optimally respond to a carbon price by abating until their marginal abatement cost equals the carbon price; that is,   $C'_{i}(A_i^*(\tau_i)) = \tau_i$. 

I focus on the optimal carbon prices under the welfare weights that are most commonly used in climate economics---Negishi weights and utilitarian weights. Optimal carbon prices under arbitrary welfare weights are shown in Appendix \ref{Assec: Optimal carbon prices for arbitrary welfare weights}. Derivations are provided in Appendix \ref{Assec: Derivation of optimal carbon prices}.

\subsubsection{The Negishi solution}
I begin with the Negishi solution. Negishi weights, $  \tilde{\alpha}_i$, are inversely proportional to a region's marginal utility of consumption at the optimal solution that was obtained with the Negishi weights\footnote{
The Negishi weights that satisfy this are obtained by iteratively updating the weights until convergence. 
}; that is, $\tilde{\alpha}_i = 1/u'(\tilde{x}_i)$. I use ``tilde'' to indicate the Negishi solution. 
Since we assume that consumption is higher in the North and the utility function is concave, it follows that the Negishi weight is greater for the North than the South: $\alpha_N > \alpha_S$.

Solving the differentiated carbon price optimization problem with Negishi weights yields the Negishi solution. For reference, I record the optimal carbon prices in definitions.

\begin{definition}
    The \textbf{Negishi-weighted carbon price} is implicitly defined by
    \begin{gather}
    \label{Def: Negishi-weighted carbon price}
        \tilde{\tau} = {C}'_{i}(\tilde{A}_i) = - \sum_i {D}'_i(\tilde{A}) .
    \end{gather}
\end{definition}
The Negishi-weighted carbon price is simply equal to the sum of marginal benefits of abatement (i.e., the reduced marginal damages) in monetary terms. This condition is effectively the Samuelson condition for the optimal provision of public goods \parencite{samuelson_pure_1954}.
We have thus obtained the knife-edge result that the Negishi-weighted carbon price is uniform even though we allowed for differentiated carbon prices. Uniform carbon prices arise from the specification of the Negishi weights, which equalize weighted marginal utilities across regions. 
Notably, this also renders no transfers between regions optimal. 

It is insightful to also characterize the optimality conditions in terms of the derivatives with respect to carbon prices. 
Rewriting Equation~\eqref{Def: Negishi-weighted carbon price},  we can see that the Negishi-weighted carbon price equalizes the sum of the marginal abatement costs and benefits from marginally increasing the carbon price (see Appendix \ref{Asssec: Negishi solution} for a derivation):
\begin{gather}
\label{E: Optimality condition, Negishi}
        \sum_i
        \frac{dC_i(\tilde{A}_i(\tilde{\tau}))}
        {d\tilde{\tau}}
        = 
        - \sum_i \frac{d{D}_i(\tilde{A}(\tilde{\tau}))}{d\tilde{\tau}} .
    \end{gather}
\subsubsection{The utilitarian solution with uniform carbon prices}
Next, I turn to the optimal carbon prices under the utilitarian SWF. 
Utilitarian welfare weights are uniform across regions. Without loss of generality, I set them equal to unity: $\alpha_i^{U} = 1$. To highlight that the maximization of the utilitarian SWF  maximizes the (equally-weighted/unweighted) sum of utilities, I refer to the utilitarian solutions as \textit{welfare-maximizing} solutions.

First, I solve the uniform carbon price optimization problem to determine the uniform carbon price that maximizes global welfare.
\begin{definition}
    The \textbf{utilitarian uniform carbon price} is implicitly defined by
    \begin{gather}\label{Def: utilitarian uniform carbon price}
        \check{\tau} =
        {C}'_{i}(\check{A}_i) = 
        - \sum_i {u}'(\check{x}_i) {D}'_i(\check{A})
        \frac{{C}''_S + {C}''_N}
        {{u}'(\check{x}_N) {C}''_S + {u}'(\check{x}_S) {C}''_N } .
    \end{gather}
\end{definition}
The utilitarian uniform carbon price is a function of the sum of the avoided marginal damages in welfare terms rather than monetary terms, which is the case for the Negishi-weighted carbon price. 
Moreover, it depends on a second factor which contains the second derivatives of the abatement cost functions, which govern the abatement changes in response to a marginal change in carbon prices; specifically, 
$\frac{dA_i(\tau_i)}{d\tau_i} = \frac{1}{C_i''}$. Thus, raising the carbon price increases abatement more in the region with the flatter marginal abatement cost curve.  

As before, it is instructive to rewrite the optimality condition in Equation~\eqref{Def: utilitarian uniform carbon price} in terms of the derivatives with respect to the carbon price (see Appendix \ref{Asssec: Utilitarian uniform carbon price solution}):
\begin{gather}
   \sum_i {u}'(\check{x}_i) \frac{d C_i(\check{A}_i(\check{\tau}))}{d \check{\tau}}
       = 
        - \sum_i {u}'(\check{x}_i) \frac{d{D}_i(\check{A}(\check{\tau}))}{d\check{\tau}}.
    \end{gather}
The utilitarian uniform carbon price equalizes the sum of the marginal \textit{welfare} costs and benefits of abatement from marginally increasing the carbon price. This can be contrasted with the Negishi-weighted carbon price, which equalizes the sum of the marginal \textit{monetary} costs and benefits of abatement from marginally increasing the carbon price.

\subsubsection{The utilitarian solution with differentiated carbon prices}
I now relax the constraint of uniform carbon prices and solve the differentiated carbon price optimization problem.
\begin{definition}
    The \textbf{utilitarian differentiated carbon price} for region i is implicitly defined by
    \begin{gather}\label{Def: utilitarian differentiated carbon price}
        \hat{\tau}_i 
        = {C}'_i(\hat{A}_i) 
        = - \frac{1}{{u}'(\hat{x}_i)}
        \sum_{j \in \mathcal{I}} {u}' (\hat{x}_j) {D}'_j(\hat{A}).
    \end{gather}
\end{definition}
The utilitarian differentiated carbon prices equalize the marginal \textit{welfare} costs of abatement across regions (as opposed to the marginal \textit{monetary} costs of abatement in the Negishi solution), which, in turn, are equal to the marginal welfare benefits of abatement:
\begin{gather}
    {u}'(\hat{x}_N) {C}'_N(\hat{A}_N) = {u}'(\hat{x}_S) {C}'_S(\hat{A}_S)
     = - \sum_{j \in \mathcal{I}} {u}' (\hat{x}_j) {D}'_j(\hat{A}).
\end{gather}
This can be interpreted as a form of equal burden sharing, a common concept in international climate negotiations and the related literature (e.g., \textcite{bretschger_climate_2013, rao_international_2014}).

Thus, the welfare-maximizing differentiated carbon price is higher in the richer region, as it is inversely proportional to the marginal utility of consumption---a result that was first established by \textcite{eyckmans_efficiency_1993} and \textcite{chichilnisky_who_1994}. 
This implies that emissions are not reduced at the lowest \textit{monetary} cost, and emission reductions are therefore not cost-effective. Importantly, however, by equalizing the marginal \textit{welfare} cost of abatement, utilitarian differentiated carbon prices achieve emission reductions at the lowest possible \textit{welfare} cost (in the absence of transfers). Thus, I propose to classify these emission reductions as \textit{welfare-cost-effective}, contrasting it with the concept of (monetary) cost-effectiveness. The concept of welfare-cost-effectiveness may also offer a useful perspective in other public policy contexts\footnote{
It seems especially useful in contexts in which transfers by other means are not feasible.
}, particularly in the context of the new regulatory impact analysis guidelines in the US (Circular A-4), which allow for distributional weighting in cost-benefit analyses \parencite{us_office_of_management_and_budget_circular_2023}. 

A second important point is that the utilitarian differentiated carbon prices are Pareto efficient if international transfers cannot be made\footnote{
Sometimes the notion of \textit{constrained} Pareto efficiency is used to refer to Pareto efficiency in settings with additional constraints (beyond the usual resource and technology constraints), particularly constraints on lump-sum transfers \parencite{chichilnisky_equity_2000, shiell_equity_2003}. Instead, I opt to be explicit about the setting, and the corresponding constraints, which determine the Pareto frontier.
}. This point requires elaboration. It is well known that cost-effective emission reductions are necessary to achieve Pareto efficiency if unrestricted lump-sum transfers can be made \parencite{shiell_equity_2003}. However, this is no longer the case when transfers are infeasible. In such a constrained, second-best setting, the set of feasible allocations shrinks and the Pareto frontier moves inward (except for one point that does not require transfers, which is the Negishi solution). If transfers cannot be made, the only way to move from one Pareto efficient allocation to another is through changing the differentiation of carbon prices. In fact, in this setting, all points on the Pareto frontier require differentiated carbon prices, except for one point, which corresponds to the Negishi solution (see Equation~\eqref{Def: optimal differentiated carbon price} in the appendix). The utilitarian differentiated carbon price yields the point on the Pareto frontier that maximizes global welfare.

\subsection{Comparison of optimal climate policy stringency}\label{sssec:Comparison of optimal climate policy stringency}

I now address the central question of this section: How does the optimal climate policy stringency depend on regional welfare weights? 

\subsubsection{Utilitarian uniform versus Negishi}
 I first compare the uniform carbon prices under utilitarian and Negishi weights.
By construction, the utilitarian carbon price maximizes global utilitarian welfare, while the Negishi-weighted carbon price maximizes global consumption in monetary terms. 
The following proposition and corollary establish the conditions under which one is greater than the other.

\begin{proposition}\label{Proposition: utilitarian vs Negishi}
    The utilitarian uniform carbon price is greater than the Negishi-weighted carbon price, that is $\check{\tau} > \tilde{\tau}$, if and only if $\frac{\check{D}'_S}{\check{D}'_N} > \frac{C''_N}{C''_S}$\footnote{
    I use $\check{D}_i'$ as a short-hand for ${D}'_i(\check{A}_i)$. This notation also applies to other functions and solutions.
    }.
\end{proposition}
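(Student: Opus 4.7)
The plan is to treat utilitarian welfare $W(\tau) = \sum_i L_i u(x_i(\tau))$ as a scalar function of the uniform carbon price $\tau$, where each region's optimal response is $A_i(\tau) = (C'_i)^{-1}(\tau)$ (linear in $\tau$ since $C''_i$ is constant), and then compare $\check{\tau}$ with $\tilde{\tau}$ by analyzing the sign of $W'(\tilde{\tau})$. A preliminary step is to observe that $W$ is strictly concave in $\tau$: $C_i(A_i(\tau))$ has constant second derivative $1/C''_i$; convex $D_i$ composed with the affine $A(\tau)$ gives convex $D_i(A(\tau))$; so $x_i(\tau)$ is concave, and composition with the concave, increasing $u$ preserves concavity. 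Concavity implies that $\check{\tau}$ is uniquely pinned down by $W'(\check{\tau}) = 0$ and yields the useful equivalence $\check{\tau} > \tilde{\tau} \iff W'(\tilde{\tau}) > 0$.

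Next I would compute $W'(\tau)$ by the chain rule, using $dA_i/d\tau = 1/C''_i$ and $dA/d\tau = 1/C''_N + 1/C''_S$:
\begin{equation*}
W'(\tau) = -\sum_i u'(x_i(\tau))\left[\frac{\tau}{C''_i} + D'_i(A(\tau))\Bigl(\frac{1}{C''_N} + \frac{1}{C''_S}\Bigr)\right].
\end{equation*}
Substituting the Negishi condition $\tilde{\tau} = -D'_N(\tilde{A}) - D'_S(\tilde{A})$, the own-region damage pieces cancel against parts of the $\tilde{\tau}/C''_i$ terms, and the remainder rearranges to the clean factored form
\begin{equation*}
W'(\tilde{\tau}) = \bigl(u'(\tilde{x}_N) - u'(\tilde{x}_S)\bigr)\left(\frac{D'_S(\tilde{A})}{C''_N} - \frac{D'_N(\tilde{A})}{C''_S}\right).
\end{equation*}

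The sign analysis is then immediate. Because $\tilde{x}_N > \tilde{x}_S$ and $u$ is strictly concave, the first factor is negative. Since $D'_N, D'_S < 0$, multiplying $D'_S/D'_N > C''_N/C''_S$ through by $D'_N C''_S < 0$ flips the direction and yields $D'_S(\tilde{A})/C''_N - D'_N(\tilde{A})/C''_S < 0$, so the product of the two factors is positive. Hence $W'(\tilde{\tau}) > 0$---and, by concavity of $W$, $\check{\tau} > \tilde{\tau}$---if and only if the ratio condition holds at $\tilde{A}$. The proposition writes the condition at $\check{A}$ instead; the two versions coincide in the equality case $\check{\tau} = \tilde{\tau}$, and more generally they agree whenever the ratio $D'_S(A)/D'_N(A)$ is stable across $A$, as holds exactly under RICE-style specifications in which $D_i(A) = L_i w_i \phi_i d(A)$ and the ratio is independent of $A$.

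The main obstacle I anticipate is the algebraic simplification producing the factored form for $W'(\tilde{\tau})$: the cancellations between the $\tilde{\tau}/C''_i$ pieces and the cross-damage terms after substituting the Negishi first-order condition must be executed carefully, but once done, the sign analysis and the equivalence to the ratio condition are transparent. A secondary subtlety---matching the evaluation point of the ratio in the statement ($\check{A}$) with the one that falls out of the derivation ($\tilde{A}$)---is benign under the damage specifications actually used in the simulations.
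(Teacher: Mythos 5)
Your route is genuinely different from the paper's—the paper compares the two implicit price formulas directly via a long chain of inequalities and a proof by contradiction, whereas you reduce the uniform-price problem to a one-dimensional strictly concave program in $\tau$ and read off the sign of $W'(\tilde{\tau})$. The reduction, the concavity argument, and the factorization $W'(\tilde{\tau}) = \bigl(u'(\tilde{x}_N) - u'(\tilde{x}_S)\bigr)\bigl(\tilde{D}'_S/C''_N - \tilde{D}'_N/C''_S\bigr)$ all check out (I verified the cancellation after substituting $\tilde{\tau} = -\tilde{D}'_N - \tilde{D}'_S$). But what you have proved is $\check{\tau} > \tilde{\tau} \iff \tilde{D}'_S/\tilde{D}'_N > C''_N/C''_S$, with the marginal damages evaluated at the \emph{Negishi} abatement level, whereas the proposition evaluates them at the \emph{utilitarian} level $\check{A}$. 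This is a genuine gap, not a benign one: the model allows region-specific damage functions ($D_i(A) = L_i w_i d_i(A)$ with distinct $d_i$, and Table 1 explicitly varies $d'_S/d'_N$), so the ratio $D'_S(A)/D'_N(A)$ need not be constant in $A$, and monotonicity of each $D'_i$ alone does not preserve the direction of the ratio inequality as $A$ moves from $\tilde{A}$ to $\check{A}$. Your appeal to a common-shape specification $D_i(A) = L_i w_i \phi_i d(A)$ is a strictly stronger assumption than the paper makes.

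The gap is fixable within your own framework by anchoring the comparison at the other solution. Define $g(\tau) = \tau + \sum_i D'_i(A(\tau))$; since $dA/d\tau > 0$ and each $D_i$ is strictly convex, $g$ is strictly increasing, and $g(\tilde{\tau}) = 0$, so $\check{\tau} > \tilde{\tau} \iff g(\check{\tau}) > 0 \iff \check{\tau} > -\check{D}'_N - \check{D}'_S$. Substituting the utilitarian first-order condition
\begin{equation*}
\check{\tau} = -\bigl(u'(\check{x}_N)\check{D}'_N + u'(\check{x}_S)\check{D}'_S\bigr)\,\frac{C''_S + C''_N}{u'(\check{x}_N)C''_S + u'(\check{x}_S)C''_N}
\end{equation*}
and expanding, the cross terms cancel and the inequality collapses to $\check{D}'_N C''_N(u'(\check{x}_N) - u'(\check{x}_S)) < \check{D}'_S C''_S(u'(\check{x}_N) - u'(\check{x}_S))$, which, after dividing by the negative quantities $u'(\check{x}_N) - u'(\check{x}_S)$ and $\check{D}'_N C''_S$, is exactly $\check{D}'_S/\check{D}'_N > C''_N/C''_S$. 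With that substitution your argument becomes a complete and considerably shorter proof of the stated proposition than the one in the appendix.
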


\begin{proof}[Proof]
\renewcommand{\qedsymbol}{}
See Appendix \ref{Assec: Proof of Proposition 1}.
\end{proof}

\begin{corollary}\label{Corollary: utilitarian vs Negishi}
    The utilitarian uniform carbon price is greater than the Negishi-weighted carbon price, that is $\check{\tau} > \tilde{\tau}$, if and only if $\frac{-\frac{d\check{D}_S}{d\check{\tau}}}{\frac{d\check{C}_S}{d\check{\tau}}} > 1 > \frac{-\frac{d\check{D}_N}{d\check{\tau}}}{\frac{d\check{C}_N}{d\check{\tau}}}$.
\end{corollary}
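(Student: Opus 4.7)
My plan is to reduce the comparison $\check{\tau} \gtrless \tilde{\tau}$ to the sign of a single explicit expression evaluated at $\check{\tau}$, by exploiting the fact that the Negishi uniform FOC $\tilde{\tau} = -\sum_i {D}'_i(\tilde{A})$ coincides with the first-order condition for maximizing total monetary consumption $\sum_i X_i(\tau)$ over a uniform $\tau$. Define $h(\tau) := \sum_i \bigl[\tfrac{dC_i}{d\tau} + \tfrac{dD_i}{d\tau}\bigr]$; since $C''_i$ is constant, $A_i(\tau) = \tau/C''_i + \text{const.}$ is affine in $\tau$, and substituting gives the clean form $h(\tau) = (1/C''_N + 1/C''_S)\bigl[\tau + {D}'_N(A(\tau)) + {D}'_S(A(\tau))\bigr]$. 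This immediately yields $h(\tilde{\tau}) = 0$, and $h'(\tau) > 0$ follows from $D''_i > 0$, so $h$ is strictly increasing. Consequently $\check{\tau} > \tilde{\tau} \Leftrightarrow h(\check{\tau}) > 0$.

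The next step is to evaluate $h(\check{\tau})$ using the utilitarian uniform FOC in its derivative form, $u'(\check{x}_N)\bigl[\tfrac{dC_N}{d\tau} + \tfrac{dD_N}{d\tau}\bigr]_{\check{\tau}} + u'(\check{x}_S)\bigl[\tfrac{dC_S}{d\tau} + \tfrac{dD_S}{d\tau}\bigr]_{\check{\tau}} = 0$. Solving for the $N$-bracket and summing, I obtain $h(\check{\tau}) = \bigl(1 - u'(\check{x}_S)/u'(\check{x}_N)\bigr)\bigl[\tfrac{dC_S}{d\tau} + \tfrac{dD_S}{d\tau}\bigr]_{\check{\tau}}$. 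Because $u'(\check{x}_S) > u'(\check{x}_N)$, the leading factor is strictly negative, so $h(\check{\tau}) > 0$ if and only if the Global South's bracketed ``net marginal monetary cost of raising $\tau$'' is negative, i.e.\ $-\tfrac{d\check{D}_S}{d\check{\tau}} > \tfrac{d\check{C}_S}{d\check{\tau}}$. Reinserting this into the utilitarian FOC automatically delivers the companion inequality $-\tfrac{d\check{D}_N}{d\check{\tau}} < \tfrac{d\check{C}_N}{d\check{\tau}}$; together these are precisely Corollary \ref{Corollary: utilitarian vs Negishi}.

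To reach Proposition \ref{Proposition: utilitarian vs Negishi}, I would plug in the closed forms $\tfrac{d\check{C}_i}{d\check{\tau}} = \check{\tau}/C''_i$ (from $C'_i(\check{A}_i) = \check{\tau}$) and $\tfrac{d\check{D}_i}{d\check{\tau}} = \check{D}'_i\,(1/C''_N + 1/C''_S)$ into the two Corollary inequalities and divide them; the common positive factors $(1/C''_N + 1/C''_S)$ and $\check{\tau}$ cancel, leaving exactly $\check{D}'_S/\check{D}'_N > C''_N/C''_S$ (a ratio of two positives, since $\check{D}'_i < 0$). The converse direction of the ``if and only if'' I would handle by contrapositive: if $\check{\tau} \leq \tilde{\tau}$ then $h(\check{\tau}) \leq 0$, and the same chain of implications runs with reversed inequalities to yield $\check{D}'_S/\check{D}'_N \leq C''_N/C''_S$.

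The step I expect to be the main obstacle is securing the global monotonicity $\check{\tau} \gtrless \tilde{\tau} \Leftrightarrow h(\check{\tau}) \gtrless 0$, since $A_i(\tau) = (C'_i)^{-1}(\tau)$ need not be affine in general and convex-convex compositions do not automatically preserve convexity. The assumption $C'''_i = 0$ is doing the essential work: it makes each $A_i(\tau)$ affine in $\tau$, so $C_i \circ A_i$ and $D_i \circ A$ are both convex-of-affine compositions and hence convex, yielding strict concavity of $\sum_i X_i$ and strict monotonicity of $h$. Beyond this regularity point, the argument is a direct chain of algebraic manipulations, with no sign ambiguity because $C''_i > 0$, $\check{\tau} > 0$, and $\check{D}'_i < 0$ are all unambiguous under the assumptions.
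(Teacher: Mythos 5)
Your proof is correct, but it takes a genuinely different route from the paper. The paper derives the corollary \emph{from} Proposition \ref{Proposition: utilitarian vs Negishi}: it first rewrites the condition $\frac{\check{D}'_S}{\check{D}'_N} > \frac{C''_N}{C''_S}$ as the ratio comparison $\frac{-d\check{D}_S/d\check{\tau}}{d\check{C}_S/d\check{\tau}} > \frac{-d\check{D}_N/d\check{\tau}}{d\check{C}_N/d\check{\tau}}$, and then pins down that $1$ separates the two ratios by invoking the preferred-carbon-price machinery (Lemma \ref{Lemma: U&N in between preferred carbon price} and Proposition \ref{Proposition: U>N iff S>N} give $\mathring{\tau}^S > \check{\tau} > \tilde{\tau} > \mathring{\tau}^N$, each region's benefit--cost ratio equals one at its own preferred price, and convexity arguments track how the ratio moves between $\mathring{\tau}^i$ and $\check{\tau}$). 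You instead introduce the aggregate net-marginal-benefit function $h(\tau)=\sum_i\bigl[\tfrac{dC_i}{d\tau}+\tfrac{dD_i}{d\tau}\bigr]$, observe that the Negishi FOC is exactly $h(\tilde{\tau})=0$, that $C'''_i=0$ makes $A_i(\tau)$ affine and hence $h$ strictly increasing, and then read off $\operatorname{sign} h(\check{\tau})$ from the utilitarian FOC as $\bigl(1-u'(\check{x}_S)/u'(\check{x}_N)\bigr)\bigl[\tfrac{d\check{C}_S}{d\check{\tau}}+\tfrac{d\check{D}_S}{d\check{\tau}}\bigr]$. This is shorter and self-contained: it bypasses Proposition \ref{Proposition: utilitarian vs Negishi}, Lemma \ref{Lemma: U&N in between preferred carbon price}, and Proposition \ref{Proposition: U>N iff S>N} entirely, avoids the paper's contradiction-based case analysis, and (as you note) yields Proposition \ref{Proposition: utilitarian vs Negishi} as a byproduct by substituting the closed forms of the derivatives. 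What the paper's longer route buys is the explicit link to regions' preferred uniform carbon prices, which is thematically central to Section \ref{sssec: A region's preferred uniform carbon price}. Two small points to make explicit in a write-up: the step ``$B_S<0$ iff the South's ratio exceeds one'' uses $\tfrac{d\check{C}_S}{d\check{\tau}}=\check{\tau}/C''_S>0$, which holds because $\check{\tau}>0$ follows from $D'_i<0$; and your claim that the North's inequality ``automatically'' follows should be spelled out as $u'(\check{x}_N)B_N=-u'(\check{x}_S)B_S>0$ from the utilitarian FOC.
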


\begin{proof}[Proof]
\renewcommand{\qedsymbol}{}
See Appendix \ref{Assec: Proof of Corollary 1}.
\end{proof}

\todo[inline]{Check again why this needs to hold at the utilitarian solution and not generally? Do my assumptions in fact imply that it holds everywhere (I don't think so)? Do I introduce new monotonicity assumptions in the proof? Or does it just indicate in which direction the utilitarian solution has moved away from maximizing dollars?}

Proposition \ref{Proposition: utilitarian vs Negishi} establishes that the welfare-maximizing uniform carbon is greater than the Negishi-weighted carbon price if and only if the relative benefits of increasing global abatement, for the Global South compared to the Global North, exceed the relative costs.
The left-hand side, $\frac{\check{D}'_S}{\check{D}'_N}$, is the relative benefit of an extra unit of global abatement $A$. The right-hand side, $\frac{C''_N}{C''_S}$ is the relative cost of an extra unit of global abatement. 
Since the marginal abatement cost (MAC) is equal across regions, the relative cost of an extra unit of aggregate abatement is determined by the relative fractions of that unit of global abatement that are provided by each region, which in turn is determined by the relative slopes of the MAC function.
A steeper MAC function results in a smaller abatement increase, and therefore a smaller increase in abatement costs\footnote{
To see this, notice that 
    $\frac{C''_N}{C''_S}
    =
    \frac{\frac{d\check{A}_S}{d\check{\tau}}}{\frac{d\check{A}_N}{d\check{\tau}}}
    =
    \frac{\frac{d\check{A}_S}{d\check{A}}}{\frac{d\check{A}_N}{d\check{A}}}
    =
     \frac{\frac{d\check{C}_S}{d\check{A}}}{\frac{d\check{C}_N}{d\check{A}}},$
where $\frac{dA_i}{d\tau_i} = \frac{1}{C_i''}$,
and the third equality follows from $\frac{d\check{C}_S}{d\check{A}_S} = \frac{d\check{C}_N}{d\check{A}_N}$.
}.
Appendix \ref{Assec: Generalization of Proposition 1} generalizes Proposition \ref{Proposition: utilitarian vs Negishi} to $n$ regions and shows that the same insights carry over. 

Using the simplified RICE functions, Proposition \ref{Proposition: utilitarian vs Negishi} can also be expressed in terms of the damage and abatement cost functions per endowment, allowing for a more straightforward comparison of economies of different sizes\footnote{
Here, $d'_i = \frac{dd_i(A)}{dA}$ and $c''_i =\frac{d^2 c_i\left(\frac{A_i}{W_i}\right)}{d\left(\frac{A_i}{W_i}\right)^2}$. Also note that $D'_i = W_i d'_i$ and
$C''_i = c''_i \frac{1}{W_i}$.
}:
\begin{equation*}
    \check{\tau} > \tilde{\tau}
    \iff 
    \frac{\check{d}'_S}{\check{d}'_N} > \frac{c''_N}{c''_S}.
\end{equation*}





Corollary \ref{Corollary: utilitarian vs Negishi} provides an additional piece to understand the condition under which the utilitarian uniform carbon price exceeds the Negishi-weighted carbon price. It states that this is the case if and only if, at the utilitarian uniform carbon price, the ratio of the marginal benefits of abatement to the marginal costs of abatement from marginally increasing the carbon price is greater than one for the South and less than one for the North. Intuitively, this implies that the South would benefit from further increasing the carbon price while the North would be made worse off. The corollary shows that this is necessary and sufficient for the utilitarian uniform carbon price to be greater than the Negishi-weighted carbon price.

We may also be interested in how different factors affect the magnitude of the difference between the two carbon prices. To this end, it is useful to define the \textit{utilitarian-Negishi uniform carbon price ratio}, $\check{\tau}/\tilde{\tau}$. Using the simplified RICE functions (which allow for an easier interpretation), this ratio is given by
\begin{gather}
    \label{E: carbon price ratio (static)}
\begin{aligned}
    \frac{\check{\tau}}{\tilde{\tau}}
    &=
    \frac{\check{u}'_{N} L_{N} w_{N} \check{d}'_{N} + \check{u}'_{S} L_{S} w_{S} \check{d}'_{S}}
    {L_{N} w_{N} \tilde{d}'_{N} + L_{S} w_{S} \tilde{d}'_{S}}
    \frac{c''_{S} \frac{1}{L_{S} w_{S}} + c''_{N} \frac{1}{L_{N} w_{N}}}
    {\check{u}'_{N} c''_{S} \frac{1}{L_{S} w_{S}} + \check{u}'_{S} c''_{N} \frac{1}{L_{N} w_{N}}} \\
    &\approx
    \frac{\frac{L_{S}}{L_{N}} \left(\frac{w_S}{w_N}\right)^{1-\eta} \frac{{d}'_{S}}{{d}'_{N}} + 1}
    {\frac{L_{S}}{L_{N}} \frac{w_S}{w_N} \frac{{d}'_{S}}{{d}'_{N}} +  1}
    \frac{\frac{L_{S}}{L_{N}} \frac{w_{S}}{w_{N}} \frac{c''_{N}}{c''_{S}} + 1 }
    {\frac{L_{S}}{L_{N}} \left(\frac{w_S}{w_N}\right)^{1-\eta} \frac{c''_{N}}{c''_{S}} + 1},
\end{aligned}
\end{gather}
where the second line assumes that the utility function is isoelastic, $u(x) = \frac{x^{1-\eta}}{1-\eta}$ (for $\eta=1$,  $u(x) = \log(x)$, where $\eta$ is the elasticity of the marginal utility of consumption), marginal damages are approximately equal in the utilitarian and Negishi solutions, $\check{d}'_i \approx \tilde{d}'_i$, and the per capita consumption and endowment ratios are approximately equal, $\frac{x_{S}}{x_{N}} \approx \frac{w_{S}}{x_{N}}$. The latter two approximations are useful because they allow us to write the utilitarian-Negishi uniform carbon price ratio simply as a function of the ratios of variable values in the South compared to the North.

Using these approximations, Table \ref{T: Utilitarian-Negishi uniform carbon price ratio (static)} illustrates how the carbon price ratio is affected by the abatement cost and damage functions, inequality and inequality aversion. The default values of the population and endowment per capita ratios are $\frac{L_S}{L_N} = 3.7$ and $\frac{w_N}{w_S} = 3.2$, respectively, which are calibrated to empirical data in 2023 \parencite{world_economics_global_2024}\footnote{
The endowment per capita ratio is calibrated to the GDP per capita ratio (in PPP terms).
}. 

\renewcommand{\arraystretch}{1.4}
\begin{table}[!htb]
    \centering
\begin{threeparttable}
    \caption{Utilitarian-Negishi uniform carbon price ratio ($\check{\tau}/\tilde{\tau}$). Static model.}
    \setlength{\tabcolsep}{13pt}
\begin{tabular}{lcccccccc}
\hline \hline
            & & \multicolumn{3}{c}{$\eta = 1$} &  & \multicolumn{3}{c}{$\eta = 1.5$} \\ \cline{3-5} \cline{7-9} 
\multicolumn{1}{l}{$d_{S}'/d_{N}'$:} & & 0.5 & 1            & 2            &  & 0.5 & 1             & 2             \\ \hline
\multicolumn{9}{l}{\textbf{A) Abatement costs}}                            \\
$c_N''/c_S'' = 0.5$ & & 1.00 & 1.21         & 1.40         &  & 1.00 & 1.29          & 1.57          \\
$c_N''/c_S'' = 1$ & & 0.83 & 1.00         & 1.16         &  & 0.77 & 1.00          & 1.22          \\
$c_N''/c_S'' = 2$ & & 0.71 & 0.86         & 1.00         &  & 0.64 & 0.82          & 1.00          \\
\multicolumn{9}{l}{\textbf{B) Inequality}}                              \\
$w_N/w_S = 1$ & & 1.00 & 1.00         & 1.00         &  & 1.00 & 1.00          & 1.00          \\
$w_N/w_S = 3.2$ & & 0.83 & 1.00         & 1.16         &  & 0.77 & 1.00          & 1.22          \\
$w_N/w_S = 6.4$ & & 0.74 & 1.00         & 1.31         &  & 0.67 & 1.00          & 1.39          \\ \hline \hline
\end{tabular}\scriptsize
\label{T: Utilitarian-Negishi uniform carbon price ratio (static)}
    \begin{minipage}{\linewidth}
        \vspace{1mm}  
        \footnotesize
        \textit{Notes}: The carbon price ratios are approximations based on Equation \eqref{E: carbon price ratio (static)}. Variable values that are not shown are set as follows: In both panels, $L_{S}/L_{N} = 3.7$. In panel A, $w_{N}/w_{S} = 3.2$. In panel B, $c''_{N}/c''_{S} = 1$.
    \end{minipage}
\end{threeparttable}
\end{table}

Panel A of Table \ref{T: Utilitarian-Negishi uniform carbon price ratio (static)} shows that relatively higher marginal damages and a more convex abatement cost function in the South increase the carbon price ratio. Confirming the insight from Proposition \ref{Proposition: utilitarian vs Negishi}, carbon prices are equal if $\frac{d'_S}{d'_N} = \frac{c''_N}{c''_S}$.
Panel B demonstrates that greater inequality amplifies the difference between the utilitarian and Negishi-weighted uniform carbon prices\footnote{This also suggests that accounting for inequality at a more granular resolution (e.g., across countries) may increase the carbon price ratio.}, as does a more concave utility function, which implies a higher inequality aversion.
Furthermore, there is no difference between the carbon prices if there is no inequality or if the utility function is linear (i.e., $\eta = 0$).


\todo[inline]{Finish graph and explain in the text.}

\subsubsection{Utilitarian differentiated versus Negishi}
I now turn to the utilitarian differentiated carbon price solution.
I explore when the utilitarian differentiated carbon price solution leads to higher or lower global emissions compared to the Negishi solution. 
I begin by establishing the following lemma.

\begin{lemma}\label{Lemma: Regional abatement Udiff vs Negishi}
    South's (North's) carbon price under the utilitarian differentiated carbon price solution is less (greater) than the Negishi-weighted carbon price. That is, $\hat{\tau}_S < \tilde{\tau} < \hat{\tau}_N$. Consequently, South's (North's) abatement level is lower (higher) in the utilitarian differentiated carbon price solution than in the Negishi solution; that is $\hat{A}_S < \tilde{A}_S$ and $\hat{A}_N > \tilde{A}_N$.
\end{lemma}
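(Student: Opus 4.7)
The plan is to work directly from the two characterizations in Definitions~1 and~3, using the convexity of the damage function and the strict monotonicity of marginal abatement costs. Writing out Definition~3 and separating the own-region term from the other-region term, I would first record
\[
\hat{\tau}_N = -D'_N(\hat{A}) - \frac{u'(\hat{x}_S)}{u'(\hat{x}_N)} D'_S(\hat{A}), \qquad
\hat{\tau}_S = -\frac{u'(\hat{x}_N)}{u'(\hat{x}_S)} D'_N(\hat{A}) - D'_S(\hat{A}).
\]
Since the standing assumption $\hat{x}_N > \hat{x}_S$ together with strict concavity of $u$ gives $u'(\hat{x}_N) < u'(\hat{x}_S)$, and since $-D'_j(\hat{A}) > 0$, comparing each display to the raw monetary marginal benefit $-D'_N(\hat{A}) - D'_S(\hat{A})$ yields the strict bracketing
\[
\hat{\tau}_N \; > \; -D'_N(\hat{A}) - D'_S(\hat{A}) \; > \; \hat{\tau}_S.
\]
This already sandwiches both differentiated prices around one single object depending only on $\hat{A}$; the task reduces to comparing that object to $\tilde{\tau} = -D'_N(\tilde{A}) - D'_S(\tilde{A})$, which ultimately depends on the sign of $\hat{A} - \tilde{A}$.

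Next I would establish $\hat{\tau}_S < \tilde{\tau}$ by contradiction. Suppose instead $\hat{\tau}_S \geq \tilde{\tau}$. Combined with the strict inequality $\hat{\tau}_N > \hat{\tau}_S$ just derived, and with each region satisfying $C'_i(A_i) = \tau_i$, the strict monotonicity of $C'_i$ implies $\hat{A}_N > \tilde{A}_N$ and $\hat{A}_S \geq \tilde{A}_S$, so $\hat{A} > \tilde{A}$. Convexity of $D_i$ then gives $D'_i(\hat{A}) > D'_i(\tilde{A})$ for each $i$, and therefore $-D'_N(\hat{A}) - D'_S(\hat{A}) < \tilde{\tau}$. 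Substituting this into the bracketing inequality yields $\hat{\tau}_S < -D'_N(\hat{A}) - D'_S(\hat{A}) < \tilde{\tau}$, contradicting the supposition.

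The argument for $\hat{\tau}_N > \tilde{\tau}$ is symmetric: assuming $\hat{\tau}_N \leq \tilde{\tau}$ together with $\hat{\tau}_N > \hat{\tau}_S$ forces $\hat{A}_N \leq \tilde{A}_N$ and $\hat{A}_S < \tilde{A}_S$ by MAC monotonicity, hence $\hat{A} < \tilde{A}$, so damage convexity gives $-D'_N(\hat{A}) - D'_S(\hat{A}) > \tilde{\tau}$, and then the bracketing inequality produces $\hat{\tau}_N > -D'_N(\hat{A}) - D'_S(\hat{A}) > \tilde{\tau}$, a contradiction. The abatement statements $\hat{A}_S < \tilde{A}_S$ and $\hat{A}_N > \tilde{A}_N$ then follow immediately from $\hat{\tau}_S < \tilde{\tau} < \hat{\tau}_N$ by strict monotonicity of $C'_i$. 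I expect the main obstacle to be purely logical rather than technical: one must carefully thread the implications from regional prices to regional abatement to global abatement and back to marginal damages, without assuming what is to be proved about the sign of $\hat{A} - \tilde{A}$; the contradiction structure organizes this cleanly and uses only the primitive assumptions of convex damages, convex abatement costs, and concave utility.
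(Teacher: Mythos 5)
Your proof is correct and follows essentially the same route as the paper's: both argue by contradiction, using the fact that Definition~3 forces $\hat{\tau}_S < \hat{\tau}_N$ (so a supposed violation pushes both regional prices to the same side of $\tilde{\tau}$), then passing through monotone marginal abatement costs to the sign of $\hat{A}-\tilde{A}$, and back through convex damages to contradict the bracketing of $\hat{\tau}_S$ and $\hat{\tau}_N$ around $-D'_N(\hat{A})-D'_S(\hat{A})$. Your write-up just makes the sandwiching step more explicit than the paper does; no gap.
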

\begin{proof}
    \renewcommand{\qedsymbol}{}
    See Appendix \ref{Assec: Proof of Lemma 1}.
\end{proof}

Therefore, whether global abatement is higher or lower in the utilitarian differentiated carbon price solution than in the Negishi solution depends on whether the additional abatement in the North outweighs the reduced abatement in the South. Proposition \ref{Prop: Aggregate abatement Udiff vs Negishi} establishes the condition under which this holds,
and Appendix \ref{Assec: Generalization of Proposition 2} shows that the same insights extend to the $n$-region setting. 


\begin{proposition}
\label{Prop: Aggregate abatement Udiff vs Negishi}
    The global abatement under utilitarian differentiated carbon prices is greater than under the Negishi-weighted carbon price, that is $\hat{A} > \tilde{A}$, if and only if $\frac{\hat{u}_S'}{\hat{u}_N'} \frac{\hat{D}'_S}{\hat{D}'_N} > \frac{C''_N}{C''_S}$.
\end{proposition}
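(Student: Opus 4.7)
The plan is to exploit the constant second derivatives $C''_i$ of the abatement cost functions and the implicit-function structure of the two solutions, reducing the comparison of aggregate abatements to a single algebraic condition evaluated at $\hat{A}$. Because marginal abatement cost is linear in each region, aggregate abatement under any vector of regional carbon prices is affine in those prices.

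First, I define the Negishi map $\psi(A) = -(D'_N(A)+D'_S(A))(1/C''_N+1/C''_S) - \sum_i C'_i(0)/C''_i$, which returns the aggregate abatement that Negishi's FOC would support if marginal damages were evaluated at $A$. By construction $\tilde{A}=\psi(\tilde{A})$, and strict damage convexity implies $\psi'(A)<0$, so $\psi$ crosses the $45^{\circ}$ line exactly once, at $\tilde{A}$. Monotonicity then gives $\hat{A}>\tilde{A}$ if and only if $\hat{A}>\psi(\hat{A})$.

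Second, I compute $\hat{A}-\psi(\hat{A})$ from the utilitarian differentiated FOC. Writing $\hat{\tau}_i = M/\hat{u}_i'$ with $M=-\hat{u}_N'\hat{D}_N'-\hat{u}_S'\hat{D}_S'$ and summing $\hat{A}_i=(\hat{\tau}_i-C'_i(0))/C''_i$ gives $\hat{A}$ explicitly; the $C'_i(0)$ offsets cancel in the difference $\hat{A}-\psi(\hat{A})$. A short expansion reorganizes the result as
\begin{equation*}
\hat{A}-\psi(\hat{A})=(\hat{u}_S'-\hat{u}_N')\left[\frac{\hat{D}_N'}{\hat{u}_S' C''_S}-\frac{\hat{D}_S'}{\hat{u}_N' C''_N}\right].
\end{equation*}
The first factor is strictly positive because the South is poorer. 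Using $\hat{D}_i'<0$, so that $\hat{D}_S'/\hat{D}_N'$ is a positive ratio, clearing denominators in the bracket by the positive quantity $\hat{u}_N'\hat{u}_S'C''_NC''_S$ shows that the bracket is positive if and only if $\frac{\hat{u}_S'}{\hat{u}_N'}\frac{\hat{D}_S'}{\hat{D}_N'}>\frac{C''_N}{C''_S}$. Combined with the monotonicity step, this yields the claimed ``if and only if''.

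The main obstacle is justifying that the condition can be evaluated cleanly at the utilitarian solution rather than at some intermediate point, since marginal utilities, marginal damages, and the stage-2 allocation all shift between $\tilde{A}$ and $\hat{A}$. The monotonicity reduction resolves this: because $\psi$ is strictly decreasing, the sign of $\hat{A}-\psi(\hat{A})$ at the single point $\hat{A}$ uniquely pins down whether $\hat{A}$ lies to the right or left of $\tilde{A}$, in direct analogy with the argument that establishes Proposition~\ref{Proposition: utilitarian vs Negishi}.
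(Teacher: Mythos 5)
Your proof is correct, and it takes a genuinely different route from the paper's. The paper inverts the (affine) marginal abatement cost functions, substitutes the two sets of first-order conditions into $\hat{A}_N+\hat{A}_S$ versus $\tilde{A}_N+\tilde{A}_S$, and then establishes the forward implication by a chain of inequality weakenings (using $\tilde{A}<\hat{A}\Rightarrow\tilde{D}'_i<\hat{D}'_i$) and the backward implication by a separate proof by contradiction, with auxiliary sign conditions on intermediate expressions. You instead collapse both directions into a single step: the auxiliary map $\psi$ is strictly decreasing with unique fixed point $\tilde{A}$, so $\operatorname{sign}(\hat{A}-\tilde{A})=\operatorname{sign}(\hat{A}-\psi(\hat{A}))$, and the exact factorization
\begin{equation*}
\hat{A}-\psi(\hat{A})=(\hat{u}_S'-\hat{u}_N')\left[\frac{\hat{D}_N'}{\hat{u}_S' C''_S}-\frac{\hat{D}_S'}{\hat{u}_N' C''_N}\right]
\end{equation*}
(which I have verified: the intercept terms $C'_i(0)/C''_i$ cancel and the bracket is positive precisely when $\frac{\hat{u}_S'}{\hat{u}_N'}\frac{\hat{D}_S'}{\hat{D}_N'}>\frac{C''_N}{C''_S}$, after clearing the negative quantity $\hat{D}'_N\hat{u}'_N C''_S$) delivers the biconditional in one stroke, since $\hat{u}'_S>\hat{u}'_N$. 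Both arguments lean equally on the assumption of constant $C''_i$, so neither is more general, but yours is shorter, avoids the case split and the contradiction machinery, gives a quantitative expression for the gap $\hat{A}-\psi(\hat{A})$ that could support comparative statics, and makes transparent why the condition is legitimately evaluated at the utilitarian solution --- the point the paper's own margin note flags as needing clarification. The only presentational gap is that you state rather than display the intermediate expansion of $\hat{A}-\psi(\hat{A})$; writing out the two groupings by $1/C''_N$ and $1/C''_S$ would make the factorization self-checking.
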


\begin{proof}
    \renewcommand{\qedsymbol}{}
    See Appendix \ref{Assec: Proof of Proposition 2}.
\end{proof}

The first thing to notice is the similarity of this condition with the corresponding condition for the comparison between the utilitarian uniform carbon price and the Negishi solution detailed in Proposition \ref{Proposition: utilitarian vs Negishi}. The aggregate abatement is again more likely to be higher under the utilitarian solution if the South has relatively high marginal damages and a steep marginal abatement cost curve, compared to the North.

However,  there is an additional term in the condition of Proposition \ref{Prop: Aggregate abatement Udiff vs Negishi}; the ratio of marginal utilities of consumption, $\frac{\hat{u}_S'}{\hat{u}_N'}$. Thus, the marginal damages in the two regions are weighted by their respective marginal utilities, reflecting marginal damages in welfare terms (as opposed to monetary terms). For a poorer South,  $\hat{u}_S' > \hat{u}_N'$ and hence $\frac{\hat{u}_S'}{\hat{u}_N'} > 1$.
The important implication is that the aggregate abatement in the utilitarian differentiated carbon price solution is more likely to be greater than in the Negishi solution if consumption inequality is large.

The attentive reader may wonder why the marginal utilities only appear on the left-hand side of the inequality (representing the relative benefits of abatement), but not on the right-hand side (concerning the costs of abatement). The intuition for this is as follows. The difference in marginal utilities is already accounted for in the region-specific carbon prices which equalize the marginal welfare costs of abatement (i.e., $\hat{u}_N' \hat{C}'_N = \hat{u}_S' \hat{C}'_S$). Consequently, the carbon price in the poorer region is lower because of its higher marginal utility. The term on the right-hand side, $\frac{C''_N}{C''_S}$, simply determines how much the abatement decreases in the South and increases in the North (relative to the Negishi solution). 
A relatively steeper marginal abatement cost in the South and a flatter one in the North make it more likely that the aggregate abatement increases.
It is also worth noting the subtle, but important, difference in intuition behind the $\frac{C''_N}{C''_S}$ term in Propositions \ref{Proposition: utilitarian vs Negishi} and \ref{Prop: Aggregate abatement Udiff vs Negishi}. In Proposition  \ref{Proposition: utilitarian vs Negishi}, this term reflects the relative abatement cost increases to the two regions as a result of a marginal increase in a uniform carbon price. 
In contrast, in Proposition  \ref{Prop: Aggregate abatement Udiff vs Negishi}, it reflects how much abatement in the South decreases and how much it increases in the North when we allow for differentiated carbon prices.

\subsection{Aggregating heterogeneous climate policy preferences}\label{sssec: A region's preferred uniform carbon price}

To sharpen how heterogeneous climate policy preferences map into a single globally uniform carbon price under different social welfare functions, this section derives each region’s preferred uniform carbon price. 
Doing so translates the choice of regional welfare weights into a more tangible preference-aggregation problem: how are heterogeneous regional policy preferences combined into one global price?
In doing so, I establish connections to \textcite{weitzman_can_2014} and \textcite{kotchen_which_2018}, who introduced the notions of preferred uniform carbon prices and the preferred social cost of carbon, respectively.

\subsubsection{Regions' preferred uniform carbon prices}

The preferred uniform carbon price for a region is obtained by solving the uniform carbon price optimization problem with welfare weights fully assigned to that region; thus,  $\alpha_i = 1$ and $\alpha_{-i} = 0$. 

\begin{definition}
    The \textbf{preferred uniform carbon price of region i} is implicitly defined by
    \begin{gather}
    \label{Def: preferred uniform carbon price}
        \mathring{\tau}^i =
        {C}'_{i}(\mathring{A}_i^i) = 
        {C}'_{-i}(\mathring{A}_{-i}^i) =
        - {D}'_i(\mathring{A^i})
        \frac{{C}''_i + {C}''_{-i}}
        {{C}''_{-i}},
    \end{gather}
\end{definition}
\noindent where the superscript $i$ indicates that the functions are evaluated at the solution under the preferred uniform carbon price of region $i$ (for example, $\mathring{A}_S^N$ is the abatement in the South under the preferred uniform carbon price of the North).

Equation~\eqref{Def: preferred uniform carbon price} reveals that a region's preferred uniform carbon price is higher when its marginal benefit of abatement is large and when its abatement cost function is more convex compared to the other region. 
Put simply, this is the case if a region is particularly vulnerable to climate change and if the cost burden of raising a uniform carbon price falls predominantly on the other region\footnote{
To see this, note that
$\frac{{C}''_i + {C}''_{-i}}{{C}''_{-i}}
= 1 + \frac{\frac{dA_{-i}}{d \mathring{\tau}^i}}{\frac{dA_{i}}{d \mathring{\tau}^i}}
= 1 + \frac{\frac{d\mathring{C}_{-i}}{d \mathring{\tau}^i}}{\frac{d\mathring{C}_{i}}{d \mathring{\tau}^i}}$, where the second equality follows from $\frac{d\mathring{C}_{-i}}{d A_{-i}} = \frac{d\mathring{C}_{i}}{d A_{i}}$.
}.
The crucial role of the relative convexities of abatement cost functions for regions’ preferred uniform carbon prices has received limited attention in the existing literature\footnote{
Most studies assume uniform convexities of the abatement cost function across regions \parencite{weitzman_can_2014, weitzman_voting_2017, kotchen_which_2018}.
\textcite{weitzman_world_2017} allows for different convexities of the abatement cost function across regions, but does not highlight their role. 
}.
It is also worth noting that a region's preferred uniform carbon price is greater than its own marginal benefit of abatement, $- {D}'_i$. 
This is because region $i$ accounts for the fact that increasing a uniform carbon price results in additional abatement in the other region. This is represented by the term 
$\frac{{C}''_i + {C}''_{-i}}{{C}''_{-i}} 
=  1 + \frac{A_{-i}'(\mathring{\tau}^i)}{A'_{i}(\mathring{\tau}^i)}
>1$, 
where $A_i'(\tau_i) \equiv \frac{dA_i(\tau_i)}{d\tau_i}$.

It is again instructive to rewrite the optimality condition in Equation~\eqref{Def: preferred uniform carbon price} in terms of the derivatives with respect to the uniform carbon price (see Appendix \ref{Asssec: Regions' preferred uniform carbon price}):
\begin{gather}
\label{E: Preferred uniform carbon price wrt carbon price}
   \frac{d C_i(\mathring{A}_i(\mathring{\tau}^i))}{d \mathring{\tau}^i}
       = 
        - \frac{d{D}_i(\mathring{A}(\mathring{\tau}^i))}{d\mathring{\tau}^i}.
    \end{gather}
Intuitively, the preferred uniform carbon price of region $i$ equalizes the cost and benefits to region $i$  from marginally increasing the uniform carbon price.

\subsubsection{The connection to optimal carbon prices under different welfare weights}

Next, we ask how the preferred uniform carbon prices relate to the optimal uniform carbon prices under the utilitarian solution and the Negishi solution.
I begin by establishing the following lemma, which helps to build intuition and acts as a building block towards proving the proposition that follows. 

\begin{lemma}
    The utilitarian uniform carbon price ($\check{\tau}$) and the Negishi-weighted carbon price ($\tilde{\tau}$) lie strictly between the preferred uniform carbon prices of the Global North ($\mathring{\tau}^N$) and the Global South ($\mathring{\tau}^S$), unless all prices coincide\footnote{
    Strict inequality holds under the assumption that both regions’ marginal utilities are positive and finite, and unless all prices coincide.
    }.
    \label{Lemma: U&N in between preferred carbon price}
\end{lemma}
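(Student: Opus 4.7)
The plan is to work with the derivative (marginal) forms of the optimality conditions already established in the paper. For each region $i$, let $B_i(\tau) \equiv -\frac{dD_i(A(\tau))}{d\tau} - \frac{dC_i(A_i(\tau))}{d\tau}$ denote region $i$'s net marginal payoff from a marginal increase in a globally uniform carbon price (avoided damages minus incurred abatement costs, both evaluated under the region's optimal abatement response). Equation~\eqref{E: Preferred uniform carbon price wrt carbon price} then says $B_i(\mathring{\tau}^i) = 0$; Equation~\eqref{E: Optimality condition, Negishi} says $B_N(\tilde{\tau}) + B_S(\tilde{\tau}) = 0$; and the derivative-form restatement of Equation~\eqref{Def: utilitarian uniform carbon price} says $u'(\check{x}_N) B_N(\check{\tau}) + u'(\check{x}_S) B_S(\check{\tau}) = 0$, with both welfare weights strictly positive and finite by assumption. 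The task thus reduces to understanding the sign of each $B_i$ along the $\tau$-axis.

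The key intermediate step is to show that each $B_i$ is strictly decreasing in $\tau$, so that it has a unique zero at $\mathring{\tau}^i$ and changes sign only there. Using the constant $C_i'' > 0$ assumption from Section~\ref{ssec:Model setup}, the best response is $A_i(\tau) = (C_i')^{-1}(\tau)$ with $A_i'(\tau) = 1/C_i''$, so $\frac{dC_i}{d\tau} = C_i'(A_i) A_i'(\tau) = \tau/C_i''$ is strictly increasing in $\tau$. Likewise $\frac{dD_i}{d\tau} = D_i'(A(\tau)) \sum_j 1/C_j''$; since $D_i$ is strictly convex and decreasing, $D_i'$ is negative and strictly increasing in $A$, and $A(\tau)$ is strictly increasing in $\tau$, so $-\frac{dD_i}{d\tau}$ is strictly positive and strictly decreasing. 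Hence $B_i$ is strictly decreasing with a unique root $\mathring{\tau}^i$.

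The conclusion then follows by a sign argument. Without loss of generality, assume $\mathring{\tau}^S \leq \mathring{\tau}^N$. For any $\tau < \mathring{\tau}^S$, strict monotonicity gives $B_S(\tau) > 0$ and $B_N(\tau) > 0$, so both the unweighted sum and the $u'$-weighted sum are strictly positive; analogously, for $\tau > \mathring{\tau}^N$ both are strictly negative. The roots $\tilde{\tau}$ and $\check{\tau}$ must therefore lie in $[\mathring{\tau}^S, \mathring{\tau}^N]$. To upgrade to strict inequality, observe that if $\tilde{\tau} = \mathring{\tau}^S$, then $B_S(\tilde{\tau}) = 0$ and $B_N(\tilde{\tau}) + B_S(\tilde{\tau}) = 0$ together force $B_N(\tilde{\tau}) = 0$, so $\mathring{\tau}^N = \mathring{\tau}^S$ and all prices coincide; the same reasoning, using strict positivity and finiteness of the $u'(\check{x}_i)$ weights, handles $\check{\tau}$.

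The only substantive issue is establishing strict monotonicity of $B_i$, which is immediate under the paper's constant-$C_i''$ assumption; without that assumption one would instead invoke the second-order condition of each region's single-variable concave problem in $\tau$. Everything else is a routine sign argument, so I expect no further obstacles.
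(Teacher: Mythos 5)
Your proof is correct, but it takes a genuinely different route from the paper's. The paper proves Lemma~\ref{Lemma: U&N in between preferred carbon price} by first establishing four separate pairwise comparisons (Lemmas~\ref{lemma: North<U}--\ref{lemma: S>Negishi}), each shown in both directions via contradiction arguments that reduce everything to the condition $\frac{D'_S}{D'_N} \gtrless \frac{C''_N}{C''_S}$, and then assembling these into the bracketing statement. You instead introduce a single net-marginal-payoff function $B_i(\tau) = -\frac{dD_i}{d\tau} - \frac{dC_i}{d\tau}$, observe that the three first-order conditions --- \eqref{E: Preferred uniform carbon price wrt carbon price}, \eqref{E: Optimality condition, Negishi}, and the derivative form of \eqref{Def: utilitarian uniform carbon price} --- are all statements that a positive-weighted combination of the $B_i$ vanishes, and then use strict monotonicity of each $B_i$ (which does rely on the constant-$C_i''$ assumption, exactly as you note, since $\frac{dC_i}{d\tau} = \tau/C_i''$ and $-\frac{dD_i}{d\tau} = -D_i'(A(\tau))\sum_j 1/C_j''$) to conclude that any such zero must lie strictly between the individual roots $\mathring{\tau}^S$ and $\mathring{\tau}^N$ unless they coincide. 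Your sign/monotonicity argument is shorter, unifies the utilitarian and Negishi cases, and makes the ``weighted average of preferred prices'' intuition transparent; what it does not deliver is the explicit equivalence of each ordering with $\frac{D'_S}{D'_N} > \frac{C''_N}{C''_S}$, which the paper's four sub-lemmas produce as by-products and then reuses in the proofs of Corollary~\ref{Corollary: utilitarian vs Negishi} and Proposition~\ref{Proposition: U>N iff S>N}. The only point worth making explicit if you wrote this up is the implicit interiority assumption ($A_i(\tau)>0$ so that the best response $C_i'(A_i)=\tau$ is valid), which the paper shares.
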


\begin{proof}
\renewcommand{\qedsymbol}{}
    See Appendix \ref{Assec: Proof of Lemma 2}.
\end{proof}

The intuition behind Lemma \ref{Lemma: U&N in between preferred carbon price} is as follows. Regions' preferred uniform carbon prices are obtained by using ``edge weights" in the SWF, giving full weight to one region and zero weight to the other.
Utilitarian and Negishi weights are linear combinations of these edge weights, giving a positive weight to both regions.
It is therefore not surprising that ``edge weights" results in more extreme carbon prices than ``more balanced" welfare weights.

Using Lemma \ref{Lemma: U&N in between preferred carbon price}, I establish the following relationship between regions' preferred uniform carbon prices and the main result detailed in Proposition \ref{Proposition: utilitarian vs Negishi}. 

\begin{proposition}
    The utilitarian uniform carbon price is greater than the Negishi-weighted carbon price, that is $\check{\tau} > \tilde{\tau}$, if and only if the preferred uniform carbon price of the Global South is greater than the preferred uniform carbon price of the Global North, that is $\mathring{\tau}^S > \mathring{\tau}^N$.
    \label{Proposition: U>N iff S>N}
\end{proposition}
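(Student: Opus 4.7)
The plan is to recast each of the four carbon prices in the statement as the root of a suitable ``marginal net benefit'' function in $\tau$, and then use Lemma \ref{Lemma: U&N in between preferred carbon price} as the bridge. For any uniform carbon price $\tau$, let
\[
MNB_i(\tau) \equiv -\frac{dD_i(A(\tau))}{d\tau} - \frac{dC_i(A_i(\tau))}{d\tau}
\]
denote the marginal net benefit to region $i$ of a small rise in $\tau$. Convexity of $D_i$ and $C_i$ (together with $A_i'(\tau) = 1/C_i'' > 0$) makes $MNB_i$ strictly decreasing in $\tau$. Reading off Equations \eqref{E: Preferred uniform carbon price wrt carbon price}, \eqref{E: Optimality condition, Negishi}, and the corresponding condition for $\check{\tau}$, the preferred price $\mathring{\tau}^i$ is the unique root of $MNB_i$, the Negishi price $\tilde{\tau}$ is the unique root of $MNB_N+MNB_S$, and the utilitarian uniform price $\check{\tau}$ is the unique root of $u'(x_N(\tau))MNB_N(\tau) + u'(x_S(\tau))MNB_S(\tau)$.

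First I would handle the degenerate case $\mathring{\tau}^S = \mathring{\tau}^N$: Lemma \ref{Lemma: U&N in between preferred carbon price} then forces $\check{\tau} = \tilde{\tau}$ and both sides of the claimed equivalence hold trivially. For the main case, suppose $\mathring{\tau}^S > \mathring{\tau}^N$. Lemma \ref{Lemma: U&N in between preferred carbon price} places $\tilde{\tau}$ strictly between them, so by monotonicity of $MNB_i$ we obtain $MNB_S(\tilde{\tau}) > 0 > MNB_N(\tilde{\tau})$. The Negishi first-order condition then pins down $MNB_N(\tilde{\tau}) = -MNB_S(\tilde{\tau})$. Substituting into the utilitarian weighted expression evaluated at $\tilde{\tau}$ collapses it to
\[
u'(\tilde{x}_N)MNB_N(\tilde{\tau}) + u'(\tilde{x}_S)MNB_S(\tilde{\tau}) = \bigl(u'(\tilde{x}_S) - u'(\tilde{x}_N)\bigr)\,MNB_S(\tilde{\tau}) > 0,
\]
where positivity uses that the South is poorer (so $u'(\tilde{x}_S) > u'(\tilde{x}_N)$) and that $MNB_S(\tilde{\tau}) > 0$. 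Since the utilitarian weighted sum is monotonically decreasing in $\tau$, its root $\check{\tau}$ must lie strictly above $\tilde{\tau}$.

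The converse direction follows by symmetry: the same argument with $\mathring{\tau}^S < \mathring{\tau}^N$ yields $\check{\tau} < \tilde{\tau}$, and contraposition closes both implications. The main obstacle will be rigorously verifying the monotonicity of the utilitarian-weighted aggregate $\sum_i u'(x_i(\tau))MNB_i(\tau)$: differentiating introduces a feedback term through $u''$ and the induced consumption response $x_i'(\tau)$, in addition to the convexity contributions from $D_i$ and $C_i$. Under the smoothness and convexity assumptions of Section \ref{ssec:Model setup}, and provided the consumption-response feedback does not overturn the convexity terms (which is the standard regularity situation, and is guaranteed locally around each root), the weighted expression is strictly decreasing and the root $\check{\tau}$ is unique, completing the argument. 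A clean way to sidestep this concern, which I would pursue as a backup, is to argue only in terms of the sign of the weighted expression at $\tilde{\tau}$ together with uniqueness of $\check{\tau}$, without requiring global monotonicity.
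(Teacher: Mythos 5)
Your argument is correct, but it takes a genuinely different route from the paper. The paper proves this proposition by reducing every pairwise price comparison to the scalar condition $\tfrac{D'_S}{D'_N} \gtrless \tfrac{C''_N}{C''_S}$ (evaluated at the appropriate solution): it unpacks the definitions of $\mathring{\tau}^S$ and $\mathring{\tau}^N$, shows $\mathring{\tau}^S>\mathring{\tau}^N$ forces $\tfrac{\tilde{D}'_S}{\tilde{D}'_N}>\tfrac{C''_N}{C''_S}$ via the ordering of marginal damages, and then invokes Proposition \ref{Proposition: utilitarian vs Negishi} and the four sub-lemmas behind Lemma \ref{Lemma: U&N in between preferred carbon price}, with the backward direction run by contradiction. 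Your proof instead evaluates the utilitarian first-order expression $G(\tau)=\sum_i u'(x_i(\tau))\,MNB_i(\tau)$ at the Negishi optimum, uses the Negishi first-order condition to collapse it to $\bigl(u'(\tilde{x}_S)-u'(\tilde{x}_N)\bigr)MNB_S(\tilde{\tau})$, and signs that via Lemma \ref{Lemma: U&N in between preferred carbon price} and monotonicity. This is more conceptual and shorter, makes the economic mechanism transparent (the utilitarian planner departs from the Negishi price exactly in the direction favoured by the poorer region), and sidesteps almost all of the paper's ratio algebra; the paper's approach, by contrast, delivers the explicit threshold condition in terms of primitives as a by-product, which it reuses elsewhere. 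The one obstacle you flag resolves more cleanly than you anticipate: since $X_i(\tau)=W_i-C_i(A_i(\tau))-D_i(A(\tau))$, the induced consumption response is $x_i'(\tau)=MNB_i(\tau)/L_i$, so the feedback term in $G'(\tau)$ is $u''(x_i)\,MNB_i(\tau)^2/L_i\le 0$ and \emph{reinforces} the decline coming from $u'(x_i)MNB_i'(\tau)<0$; hence $G$ is globally strictly decreasing under the stated assumptions and no local or backup argument is needed. The only other point worth making explicit is that $u'(x_S(\tilde{\tau}))>u'(x_N(\tilde{\tau}))$, which is exactly the paper's standing assumption that the North remains richer net of abatement costs and damages.
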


\vspace{-2.1em}
\begin{proof}
\renewcommand{\qedsymbol}{}
    See Appendix \ref{Assec: Proof of Proposition 3}.
\end{proof}

The intuition for the result of Proposition \ref{Proposition: U>N iff S>N} builds on the logic behind Lemma \ref{Lemma: U&N in between preferred carbon price}. Giving a positive weight to both regions, the utilitarian uniform carbon price and the Negishi-weighted carbon price can be understood as ``weighted averages'' of regions' preferred uniform carbon prices, where the welfare weights determine the relative weight given to the preferences of the two regions. 
Since Negishi weights downweight the South,  it is intuitive that the utilitarian uniform carbon price is greater than the Negishi-weighted carbon price if the South prefers a higher uniform carbon price than the North.
This result provides perhaps the clearest intuition for the conditions under which the utilitarian uniform carbon price is higher or lower than the Negishi-weighted carbon price: it simply depends on whether South's preferred uniform carbon price is greater or lower than North's. 



\subsubsection{A voting interpretation}

The next proposition goes one step further: under a set of additional assumptions, it makes explicit how the utilitarian and Negishi objectives aggregate regions’ heterogeneous policy preferences into a single globally uniform carbon price.

\begin{proposition}
    \label{Prop: voting}
    Suppose marginal damages are constant and that the slope of each region’s marginal abatement cost function is inversely proportional to its endowment, that is, $C_i'' = \frac{\kappa}{W_i}$, for some constant $\kappa > 0$ common across regions.
    
    \begin{itemize}
        \item[(i)] Then, the Negishi-weighted carbon price is the endowment-weighted average of regions’ preferred uniform carbon prices:
        \[
            \tilde{\tau} = \sum_i \frac{W_i}{\sum_j W_j} \, \mathring{\tau}^i.
        \]
        
        \item[(ii)] If, in addition, utility is logarithmic, $u(x_i) = \log(x_i)$, and the per capita consumption and endowment ratios are approximately equal, $\frac{\check{x}_S}{\check{x}_N} \approx \frac{w_S}{w_N}$, then the utilitarian uniform carbon price approximately equals the population-weighted average of regions’ preferred uniform carbon prices:
        \[
            \check{\tau} \approx \sum_i \frac{L_i}{\sum_j L_j} \, \mathring{\tau}^i.
        \]
    \end{itemize}
\end{proposition}

\begin{proof}
\renewcommand{\qedsymbol}{}
    See Appendix \ref{Assec: Proof of Proposition 4}.
\end{proof}

The key nonstandard restriction is $C_i'' = \frac{\kappa}{W_i}$, which captures a scale effect: all else equal, a larger economy has proportionally more emissions and thus more abatement opportunities in each marginal cost interval, implying a flatter marginal abatement cost function.
The approximation $\frac{\check{x}_S}{\check{x}_N} \approx \frac{w_S}{w_N}$ is accurate when baseline inequality dominates climate-induced consumption changes, so per-capita consumption remains roughly proportional to endowments.
The assumptions in Proposition \ref{Prop: voting} are relaxed in Appendix \ref{Assec: Generalizations of Proposition 4}. 

Proposition \ref{Prop: voting} admits a simple voting interpretation: under the stated assumptions, the Negishi solution behaves like ``one dollar, one vote'', whereas the utilitarian solution behaves (approximately) like ``one person, one vote''\footnote{These voting analogies are not definitional; they emerge from the maintained structure in Proposition \ref{Prop: voting} and, for utilitarianism, from the approximation in part (ii).}.
Notably, the ``one person, one vote'' interpretation lends the utilitarian uniform carbon price an intuitive democratic appeal\footnote{\textcite[583]{weitzman_world_2017} also notes that ``the inherent democracy of one person, one vote is an attractive feature in and
of itself'', though without connecting this interpretation to a utilitarian social welfare function.}.

Two additional points bear emphasis.
First, to the extent that the utilitarian optimum departs from “one person, one vote”, this departure raises utilitarian welfare---by construction, $\check{\tau}$ maximizes utilitarian welfare among uniform carbon prices.
Second, \textcite{weitzman_can_2014} notes that aggregating preferences based on “one person, one vote” can yield a uniform carbon price close to the efficient benchmark, and he suggests this as an attractive feature\footnote{Note, however, that in Weitzman’s model, “one person, one vote” corresponds to population-weighted majority voting, where the voting outcome is the population \textit{median} of preferred prices.}.
The connection to social welfare functions I establish here flips the logic from a utilitarian perspective: to the extent that the utilitarian uniform carbon price, approximating ``one person, one vote'', is different from the efficient Negishi-weighted carbon price, this difference can be viewed as a feature rather than a drawback, because it raises utilitarian welfare.

\todo[inline]{Maybe provide a table of optimality conditions}

\subsection{Extension: Dynamic model}\label{sssec: Extension: Dynamic model}

An important aspect of climate change is that emission reductions today reduce the impact of climate change in the future. To capture this temporal dimension, I consider a two-period model in this section, which I refer to as ``dynamic''. 
I focus on uniform carbon prices in this extension to illustrate how welfare weights affect optimal carbon prices in a dynamic setting, even when the policy instrument is identical---a globally uniform carbon price.

\subsubsection{Model modifications}
The objective of the dynamic model is to account for the fact that the benefits of abatement come with a delay. To capture this in the simplest way, I assume that abatement occurs in the first period and climate damages in the second period. Aggregate regional consumption is thus given by
$X_{i1} = W_{i1} - C_{i1}(A_{i})$ and
$X_{i2} = W_{i2} - D_{i2}(A)$,
where the second subscript denotes the period $t \in \{1,2\}$.

\subsubsection{Optimal carbon prices}
The optimal uniform carbon prices for the dynamic model are obtained by solving the following optimization problem:
\begin{gather}
\begin{aligned}
    \max_{X_{it}, A_{i1}}
    & \sum_t \sum_i \beta^{t-1} L_{it} \alpha_{it} u_{it}\left(\frac{X_{it}}{L_{it}}\right) \\
\end{aligned} \\
    \begin{aligned}
        \label{E: constraints_diff2period}
    \text{subject to: } 
    &X_{i1} = W_{i1} - C_{i1}(A_{i}), \quad \forall i, \\
    &X_{i2} = W_{i2} - D_{i2}(A), \quad \forall i,\\
    &C'_{N1}(A_{N}) = C'_{S1}(A_{S}),
    \end{aligned}
\end{gather}
where $\beta^{t-1}$ is the utility discount factor (given by $\beta^{t-1}=(1+\rho)^{1-t}$,  where $\rho$ is the utility discount rate or pure rate of time preference). 


The welfare weights are defined as follows. Utilitarian weights are uniform across regions and periods and set to unity; that is, $\alpha_{it}^U = 1$.
Negishi weights are time-variant and defined in accordance with the RICE model:
$  \tilde{\alpha}_{i1} = \frac{1}{\tilde{u}'_{i1}}$ and
$\tilde{\alpha}_{i2} = v \frac{1}{\tilde{u}'_{i2}},$
where $v = \pi \frac{\tilde{u}'_{N2}}{\tilde{u}'_{N1}} + (1 - \pi) \frac{\tilde{u}'_{S2}}{\tilde{u}'_{S1}}$ is the wealth-based component of the social discount factor\footnote{
For a model with a single representative agent, the wealth-based component of the social discount factor is approximated by $\frac{1}{1 + \eta g}$, where $\eta$ is the elasticity of the marginal utility of consumption and $g$ is the growth rate in per capita consumption. Note that $\eta g$ is the wealth-based component of the social discount rate (SDR) in the Ramsey Rule, $SDR \approx \rho + \eta g$, reflecting the rationale for discounting future consumption if future generations are richer.}, which is pinned down as a weighted average of the regional wealth-based discount factors. The discounting weights $\pi \in (0,1)$ and $(1-\pi)$ are given by the regional capital or output shares in previous versions of the RICE model \parencite{nordhaus_warming_2000, nordhaus_economic_2010}. I consider general discounting weights, unless explicitly specified.

Solving the optimization problem above yields the following optimal carbon prices\footnote{
The derivation is largely analogous to the static model (see Appendix \ref{Assec: Derivation of optimal carbon prices}).
}.

\begin{definition}
\label{Def: dynamic Negishi-weighted carbon price}
    The \textbf{dynamic Negishi-weighted carbon price} is implicitly defined by
    \begin{gather}\label{Def: dynamic Negishi-weighted carbon price}
        \tilde{\tau} = C'_{i1}(\tilde{A}_i) = - v \beta 
        \sum_{i} D'_{i2}(\tilde{A}).
    \end{gather}
\end{definition}

\begin{definition}
    \label{Def: dynamic utilitarian uniform carbon price}
    The \textbf{dynamic utilitarian uniform carbon price} is implicitly defined by
    \begin{gather}
        \check{\tau} =
        \tilde{C}'_{i1} =
          - \beta \sum_i {u}'(\check{x}_{i2}) {D}'_{i2}(\check{A})
        \frac{{C}''_{S1} + {C}''_{N1}}
        {{u}'(\check{x}_{N1}) {C}''_{S1} + {u}'(\check{x}_{S1}) {C}''_{N1} } 
    \end{gather}
\end{definition}


These expressions are similar to the ones of the static model with the important difference that damages occur in the second period (and are discounted) while abatement takes place in the first period.
Consequently, optimal carbon prices are generally affected by the developments of endowment, consumption per capita, and population, which is the focus of the comparative analysis below. Moreover,  discounting is affected by the choice of welfare weights; while the utility discount factor is assumed to be the same, the wealth-based component of the social discount factor differs. Under Negishi weights, the wealth-based component of the social discount factor is $v$, which is uniform across regions and given by the weighted average of the regional wealth-based discount factors\footnote{
\textcite{dennig_note_2019} and \textcite{anthoff_differentiated_2021}
show that this distorts regional time-preferences.}.
In contrast, under utilitarian weights, it is simply the regional wealth-based discount factor, $u'(x_{i2})/u'(x_{i1})$, for each region. Notably, utilitarian weights value consumption across regions in the same fashion as across periods.

\todo[inline]{Dynamic region's preferred uniform carbon price + propositions}

\subsubsection{Comparative results}
As before, the central question is how the utilitarian uniform carbon price compares to the Negishi-weighted carbon price. The following proposition establishes this relationship.

\begin{proposition}\label{Proposition: utilitarian vs Negishi (dynamic)}
    The dynamic utilitarian uniform carbon price is greater than the dynamic Negishi-weighted carbon price, that is $\check{\tau} > \tilde{\tau}$, if and only if 
    \begin{gather}
    \frac{\check{u}'_{N2} \check{D}'_{N2} 
        + \check{u}'_{S2} \check{D}'_{S2}}
        {\check{D}'_{N2} + \check{D}'_{S2}}
        > 
        v 
        \frac{{\check{u}'_{N1} C''_{S1} + \check{u}'_{S1} C''_{N1}} }           {C''_{S1} + C''_{N1}},
    \end{gather}
 where $v = \pi \frac{\tilde{u}'_{N2}}{\tilde{u}'_{N1}} + (1 - \pi) \frac{\tilde{u}'_{S2}}{\tilde{u}'_{S1}}$ .
\end{proposition}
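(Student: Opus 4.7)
The plan is to adapt the proof strategy of Proposition~\ref{Proposition: utilitarian vs Negishi} to the two-period setting. The key observation is that the dynamic Negishi carbon price is the maximizer of a ``money-metric'' welfare function that aggregates unweighted regional monetary consumption, with the second period discounted by $v\beta$ rather than $\beta$. I would therefore define
$W^{M}(\tau)\equiv\sum_{i} X_{i1}(\tau)+v\beta\sum_{i} X_{i2}(\tau)$,
treating $v$ as a fixed constant (the value it takes at the Negishi allocation). Differentiating and using $\frac{dA_{i}}{d\tau}=1/C''_{i1}$ together with $C'_{i1}(A_{i})=\tau$, the FOC $\frac{dW^{M}}{d\tau}=0$ reduces to $\tau=-v\beta\sum_{i}D'_{i2}(A)$, which is exactly Definition~\ref{Def: dynamic Negishi-weighted carbon price}. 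Since $C''_{i1}$ is constant, $A_{i}(\tau)$ is affine in $\tau$, and so $W^{M}$ is a positive combination of $-C_{i1}(A_{i}(\tau))$ and $-D_{i2}(A(\tau))$, each concave in $\tau$ by the convexity of $C_{i1}$ and $D_{i2}$; hence $W^{M}$ is strictly concave and $\tilde{\tau}$ is its unique maximizer.

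Strict concavity immediately yields the equivalence $\check{\tau}>\tilde{\tau}\iff\left.\frac{dW^{M}}{d\tau}\right|_{\check{\tau}}<0$. Evaluating the derivative at the utilitarian optimum produces
$-\bigl(\sum_{j}1/C''_{j1}\bigr)\bigl[\check{\tau}+v\beta\sum_{i}\check{D}'_{i2}\bigr]$,
so the sign condition collapses to $\check{\tau}>-v\beta\sum_{i}\check{D}'_{i2}$. Substituting the closed-form expression for $\check{\tau}$ from Definition~\ref{Def: dynamic utilitarian uniform carbon price} on the left-hand side and then dividing both sides by $-\beta(\check{D}'_{N2}+\check{D}'_{S2})$, which is strictly positive because each $\check{D}'_{i2}<0$, the $\beta$ cancels and what remains is precisely the inequality in the statement of Proposition~\ref{Proposition: utilitarian vs Negishi (dynamic)}. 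Because every step is a reversible rearrangement, both directions of the ``if and only if'' are obtained simultaneously, so no separate argument is needed for the converse.

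The main obstacle I anticipate is sign bookkeeping: marginal damages $\check{D}'_{i2}$ are negative by assumption, so every multiplication or division by a sum of such terms flips the inequality, and it is easy to lose a sign. A secondary subtlety is the treatment of $v$: although $v$ is endogenous to the Negishi problem through the $\tilde{u}'_{it}$, for this comparison it enters as the \emph{constant} that appears in the Negishi FOC, so one does not differentiate $v$ with respect to $\tau$ when computing $\frac{dW^{M}}{d\tau}$; one just needs to verify that the fixed-$v$ money metric is concave, which it is by the argument above. Beyond these two points the dynamic proof is a routine extension of the static one: the extra discount factor $\beta$ and the Negishi discount factor $v$ end up appearing symmetrically on both sides of the final inequality, leaving the intuitive ``weighted second-period damages versus $v$ times weighted first-period abatement costs'' structure that the proposition highlights.
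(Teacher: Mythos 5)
Your argument is correct, but it takes a genuinely different route from the paper's. The paper proves Proposition~\ref{Proposition: utilitarian vs Negishi (dynamic)} by directly comparing the two implicit definitions: it writes $\tilde{C}'_1 < \check{C}'_1$ as an inequality between the closed-form expressions, uses the monotonicity chain $\tilde{\tau}<\check{\tau}\Rightarrow\tilde{A}_1<\check{A}_1\Rightarrow\tilde{D}'_{i2}<\check{D}'_{i2}$ to replace the Negishi-evaluated marginal damages by utilitarian-evaluated ones in a one-directional estimate, and then handles the converse by contraposition. You instead introduce the auxiliary money metric $W^{M}(\tau)=\sum_i X_{i1}(\tau)+v\beta\sum_i X_{i2}(\tau)$ with $v$ frozen at its Negishi value, show its unique maximizer is $\tilde{\tau}$ (strict concavity here genuinely relies on the paper's assumption $C'''_{i1}=0$, which makes $A_i(\tau)$ affine), and reduce the comparison to the sign of $W^{M\prime}(\check{\tau})$. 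This buys two things: every quantity is evaluated at the utilitarian allocation from the outset, so the awkward $\tilde{D}'$-versus-$\check{D}'$ substitution disappears, and the biconditional drops out of a single chain of equivalences with no separate contrapositive step; it also gives a clean interpretation of the Negishi price as the maximizer of discounted aggregate consumption. The paper's version is more elementary and self-contained (it never needs second-order/concavity structure beyond convexity of $C$ and $D$), while yours is shorter but should state explicitly that an interior critical point of $W^{M}$ exists and that $v>0$ (which follows from $\pi\in(0,1)$ and positive marginal utilities) so that the concavity and sign-test steps are airtight.
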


\begin{proof}
\renewcommand{\qedsymbol}{}
    See Appendix \ref{Assec: Proof of Proposition 5}.
\end{proof}

As in the static model, this condition is more likely to be satisfied if the South has relatively higher marginal damages and a more convex abatement cost function. 
All else equal, this is also the case for a lower wealth-based component of the social discounting factor under the Negishi-weighted SWF, $v$.

Crucially, the damage and abatement cost functions generally depend on the economy size, which in turn depends on the population size\footnote{To keep the exposition simple, I assume that the endowment per capita is exogenous and does not depend on the population size.}. 
Since the costs and benefits of abatement occur in different periods, economic and population growth affect the relative regional costs and benefits of abatement.
Using the simplified RICE functions, we can rewrite the condition in Proposition \ref{Proposition: utilitarian vs Negishi (dynamic)} as
\begin{gather}
    \label{E: utilitarian vs Negishi (dynamic, RICE)}
    \frac{\check{u}'_{N2} L_{N1} g^L_N w_{N1} g^w_N \check{d}'_{N2} 
    + \check{u}'_{S2} L_{S1} g^L_S w_{S1} g^w_S \check{d}'_{S2}}
    {L_{N1} g^L_N w_{N1} g^w_N \check{d}'_{N2} + L_{S1} g^L_S w_{S1} g^w_S \check{d}'_{N2}}
    > 
    v 
    \frac{{\check{u}'_{N1} \frac{1}{L_{S1} w_{S1}} c''_{S1} + \check{u}'_{S1} \frac{1}{L_{N1} w_{N1}} c''_{N1}} }           
    {\frac{1}{L_{S1} w_{S1}} c''_{S1} + \frac{1}{L_{N1} w_{N1}} c''_{N1}},
\end{gather}
where I used $L_{i2} = L_{i1} g^L_i$  and $w_{i2} = w_{i1} g^w_i$, with $g^L_i$ and $g^w_i$ denoting the population and economic growth factors, respectively.

Equation \eqref{E: utilitarian vs Negishi (dynamic, RICE)} yields a central result: if population growth is faster in the South than the North, then the utilitarian uniform carbon price is more likely to exceed the Negishi-weighted carbon price. 
The intuition is that relatively faster population growth in the South increases the relative damages of climate change in the South, as they manifest in the future, which are given comparatively less weight under the Negishi-weighted SWF.
Simply put, climate change is a bigger problem for the South if its population is growing faster, as this results in more people being harmed by climate change\footnote{
Equivalently, a larger population results in a bigger economy, thereby increasing aggregate marginal damages (which are assumed to be proportional to the economy size). To see the different interpretations formally, note that
$D' = L w d' = W d'$.
Population growth effectively plays an analogous (but opposite) role to time discounting, a point that was formally made by \textcite{budolfson_optimal_2018}.
}.

The role of economic growth is more complicated. This is because economic growth simultaneously affects climate damages and the development of marginal utilities of consumption, which affect discounting under both SWFs\footnote{
For a thorough examination of how interregional inequality and heterogeneous economic growth impact the discount rate under the utilitarian SWF, see \textcite{gollier_discounting_2015}.
} (note that it also affects $v$). However, we can gain traction on the role of economic growth with additional assumptions. 
To obtain intuition for the role of economic growth,
it is again useful to define the utilitarian-Negishi uniform carbon price ratio. Using the simplified RICE functions, this ratio is given by
\begin{gather}
    \label{E: carbon price ratio (dynamic)}
\begin{aligned}
    \frac{\check{\tau}}{\tilde{\tau}}
    &=
    \frac{1}{\pi \frac{\tilde{u}'_{N2}}{\tilde{u}'_{N1}} + (1-\pi) \frac{\tilde{u}'_{S2}}{\tilde{u}'_{S1}}}
    \frac{\check{u}'_{N2} L_{N2} w_{N2} \check{d}'_{N2} + \check{u}'_{S2} L_{S2} w_{S2} \check{d}'_{S2}}
    {L_{N2} w_{N2} \tilde{d}'_{N2} + L_{S2} w_{S2} \tilde{d}'_{S2}}
    \frac{c''_{S1} \frac{1}{L_{S1} w_{S1}} + c''_{N1} \frac{1}{L_{N1} w_{N1}}}
    {\check{u}'_{N1} c''_{S1} \frac{1}{L_{S1} w_{S1}} + \check{u}'_{S1} c''_{N1} \frac{1}{L_{N1} w_{N1}}} \\
    &\approx
    \frac{\frac{L_{S1}}{L_{N1}} \frac{g^L_S}{g^L_N} \frac{ w_{S1}}{w_{N1}} \frac{g^w_S}{g^w_N} + 1}
    {\frac{L_{S1}}{L_{N1}} \frac{g^L_S}{g^L_N} \frac{w_{S1}}{ w_{N1}} \left(\frac{g^w_S}{g^w_N}\right)^{1-\eta} +1}
    \frac{  \frac{L_{S1}}{L_{N1}} \frac{g^L_S}{g^L_N}  \left(\frac{w_{S1}}{w_{N1}} \frac{g^w_S}{g^w_N}\right)^{1-\eta} \frac{{d}'_{S2}}{{d}'_{N2}} +1}
    {\frac{L_{S1}}{L_{N1}} \frac{g^L_S}{g^L_N} \frac{w_{S1}}{w_{N1}} \frac{g^w_S}{g^w_N}\frac{{d}'_{S2}}{{d}'_{N2}} +1}
    \frac{\frac{c''_{N1}}{c''_{S1}} \frac{L_{S1}}{L_{N1}} \frac{w_{S1}}{w_{N1}} +1}
    {\left(\frac{w_{S1}}{w_{N1}}\right)^{-\eta}  \frac{c''_{N1}}{c''_{S1}} \frac{L_{S1}}{L_{N1}} \frac{w_{S1}}{w_{N1}} +1} .
\end{aligned}
\end{gather}
The second line utilizes the following assumptions and approximations:
(1) the utility function is isoelastic, $u(x) = \frac{x^{1-\eta}}{1-\eta}$ (for $\eta=1$,  $u(x) = \log(x)$), 
(2) the discounting weights are given by the regional endowment shares\footnote{
Both endowment and capital shares have been used in previous versions of the RICE model \parencite{nordhaus_warming_2000, nordhaus_economic_2010}. The RICE-2010 model uses capital shares but both approaches are numerically close, according to \textcite{nordhaus_warming_2000}.
}, $\pi = \frac{W_{N2}}{\sum_i W_{i2}}$, 
(3) per capita consumption and endowment growth are approximately equal, $g^x_i \approx g^w_i$, and the per capita consumption and endowment ratios are approximately equal, $\frac{x_{St}}{x_{Nt}} \approx \frac{w_{St}}{x_{Nt}}$, 
(4) per capita consumption growth and marginal damages are approximately equal in the utilitarian and Negishi solutions,  $\check{g}^x_i \approx \tilde{g}^x_i$ and $\check{d}'_i \approx \tilde{d}'_i$. 
These assumptions will generally not hold precisely but can be expected to be good approximations, serving the purpose of obtaining clean intuitions for the role of economic growth.

Using these approximations, the utilitarian-Negishi uniform carbon price ratio only depends on ratios of variable values in the South compared to the North. To demonstrate the role of population and economic growth, an illustrative numerical example of carbon price ratios is shown in Table \ref{T: Utilitarian-Negishi uniform carbon price ratio (dynamic)}. For these calculations, I assume that the first and second periods are 50 years apart and the growth factors are given by $g^y_i = (1 + \bar{g}^y_i)^t$, where $y\in\{L,w\}$, $\bar{g}^y_i$ are the annual growth rates and $t=50$. 

\todo[inline]{Mention data source for population and GDP per capita ratios.}

\renewcommand{\arraystretch}{1.4}
\begin{table}[!htb]
    \centering
\begin{threeparttable}
    \caption{Utilitarian-Negishi uniform carbon price ratio ($\check{\tau}/\tilde{\tau}$). Dynamic model.}
    \setlength{\tabcolsep}{13pt}
\begin{tabular}{lcccccc}
\hline \hline
            & & \multicolumn{2}{c}{$\eta = 1$} &  & \multicolumn{2}{c}{$\eta = 1.5$} \\ \cline{3-4} \cline{6-7} 
\multicolumn{1}{l}{$d_{S2}'/d_{N2}'$:} & & 1            & 2            &  & 1             & 2             \\ \hline
\multicolumn{7}{l}{\textbf{A) Population growth}}                            \\
$\bar{g}^L_S=0\%$, $\bar{g}^L_N=0\%$ & & 1.00         & 1.16         &  & 1.00          & 1.22          \\
$\bar{g}^L_S=1\%$, $\bar{g}^L_N=0\%$ & & 1.12         & 1.26         &  & 1.16          & 1.34          \\
$\bar{g}^L_S=2\%$, $\bar{g}^L_N=0\%$ & & 1.22         & 1.33         &  & 1.29          & 1.44          \\
\multicolumn{7}{l}{\textbf{B) Economic growth}}                              \\
$\bar{g}^w_S=2\%$, $\bar{g}^w_N=2\%$ & & 1.22         & 1.33         &  & 1.29          & 1.44          \\
$\bar{g}^w_S=3\%$, $\bar{g}^w_N=2\%$ & & 1.22         & 1.27         &  & 1.23          & 1.30          \\
$\bar{g}^w_S=4\%$, $\bar{g}^w_N=2\%$ & & 1.22         & 1.23         &  & 1.16          & 1.18          \\ \hline \hline
\end{tabular}\scriptsize
\label{T: Utilitarian-Negishi uniform carbon price ratio (dynamic)}
    \begin{minipage}{\linewidth}
        \vspace{1mm}  
        \footnotesize
        \textit{Notes}: The carbon price ratios are approximations based on Equation \eqref{E: carbon price ratio (dynamic)}.
        Variable values that are not shown are set as follows: In both panels, $w_{N1}/w_{S1} = 3.2$, $L_{S1}/L_{N1} = 3.7$, $c''_{N1}/c''_{S1} = 1$. In panel A, $\bar{g}^w_S/g^w_N = 1$. In panel B, $\bar{g}^L_S = 2\%$, $\bar{g}^L_N = 0\%$.
    \end{minipage}
\end{threeparttable}
\end{table}

Panel A of Table \ref{T: Utilitarian-Negishi uniform carbon price ratio (dynamic)} confirms that faster population growth in the South increases the carbon price ratio. Importantly, this holds even if marginal damages per endowment are homogeneously distributed across regions (i.e., $d'_{S2}=d'_{N2}$). 
Panel B examines the effect of faster economic growth in the South in terms of endowment per capita.
The first thing to note is that  economic growth plays no role if marginal damages per endowment are evenly distributed and $\eta=1$. 
However, economic growth reduces the carbon price ratio if either (1) the utility function is more concave than logarithmic utility ($\eta>1$) and $d'_{S2} \geq d'_{N2}$, or (2) the South has disproportionately high climate damages ($d'_{S2}>d'_{N2}$) and $\eta \geq 1$. 
Since climate damages are expected to be disproportionately large in the South, this last case is the most relevant in practice. Hence, faster economic growth in the South can be expected to reduce the carbon price ratio.

\todo[inline]{Dynamic model. (1) Proof of proposition. (2) Formal derivation of role of economic growth + Intuition. (3) Explain focus on uniform carbon prices. (4) Differentiated carbon prices in Appendix.}

\section{Simulations} \label{sec:Simulations}

This section presents the simulation results. Section \ref{sec:Method} introduces the RICE model and methodology. Section \ref{ssec: Results} discusses how optimal carbon prices are affected by the choice of welfare weights.

\subsection{Method} \label{sec:Method}


\subsubsection{Model} \label{sec:Model}

To provide simulation-based empirical evidence, I use the IAM Mimi-RICE-2010 \parencite{anthoff_mimi-rice-2010jl_2019}, which is an implementation of the RICE-2010 model \parencite{nordhaus_economic_2010} in the Julia programming language using the modular modeling framework Mimi. RICE is the regional variant of the Dynamic Integrated model of Climate and the Economy (DICE), disaggregating the world into 12 regions (see Figure~\ref{fig:RICE regions} for the region classification) \parencite{nordhaus_dice_2013}. It is based on a neoclassical optimal growth model, which is linked to a simple climate model. Economic production is determined by a Cobb-Douglas production function and results in industrial CO$_2$ emissions. The relationship between economic production and emissions depends on the emissions intensity of an economy, which can be reduced by investments in abatement. Emissions then translate to atmospheric CO$_2$ concentrations, radiative forcing, atmospheric and oceanic warming, and finally economic damages resulting from atmospheric temperature changes and sea-level rise.
Importantly, the functions that determine climate damages and abatement costs are region-specific (see Appendix \ref{Assec: Abatement cost and damage functions of the RICE model} for additional information).


\todo[inline]{Put damage and AC functions in the Appendix and comment on some key features here.}

\subsubsection{Optimizations} \label{sec:Optimizations}

I introduce one main modification to the Mimi-RICE-2010 model: the implementation of three different optimization problems.
The final model that includes these modifications is referred to as \textit{Mimi-RICE-plus}.
	
\paragraph{Optimization problems}\mbox{}\\[3pt]
\noindent \vspace{-6pt}The following three optimization problems are implemented:
\begin{enumerate}
    \item \textit{Negishi solution}: Maximization of the discounted Negishi-weighted SWF with no constraints on the marginal abatement costs and the interregional transfers\footnote{
    Note that regions are autarkic in the RICE-2010 model. Thus, the model implicitly contains a constraint of zero transfers. This is also the case in the optimization using the Negishi-weighted objective, even though in this case, zero transfers are also optimal under the Negishi-weighted SWF. 
    }. \vspace{-6pt}
    \item \textit{Utilitarian differentiated carbon price solution}: Maximization of the discounted utilitarian SWF with a constraint on the total level of interregional transfers, but with no constraint on the marginal abatement costs. \vspace{-6pt}
    \item \textit{Utilitarian uniform carbon price solution}: Maximization of the discounted utilitarian SWF with a constraint on the total level of interregional transfers, and an additional constraint of equalized marginal abatement costs across regions in each period. \vspace{-6pt}
\end{enumerate}

 In addition, I also compute regions' preferred uniform carbon prices by maximizing the respective regional SWFs (with welfare weights that equal unity for one region, and zero for all other regions) subject to a zero transfer constraint and a constraint of equalized marginal abatement costs across regions.

 The choice variables are the emissions control rates, which determine carbon prices\footnote{
 Note that I do not optimize the saving rates, as optimizing emission control rates and transfers in each period already results in long convergence times.  Moreover, assuming fixed saving rates is relatively common in the climate economics literature (see \textcite{golosov_optimal_2014, dennig_inequality_2015, budolfson_utilitarian_2021} for more information). I use the saving rates from the base scenario of the original RICE model.
 }.
This is described in more detail below.

\paragraph{Social welfare functions}\mbox{}\\[3pt]	
\noindent The first optimization problem is the maximization of the \textit{discounted Negishi-weighted SWF}
\begin{equation}
    \mathcal{W}^N=\sum_{t \in \mathscr{T}} \sum_{i \in \mathscr{I}} L_{it} \beta^t \tilde{\alpha}_{it} u\left(x_{it}\right)
\end{equation}

where $\mathscr{I}$ denotes the set of the 12 RICE regions, and $\mathscr{T} = \{0, 1, 2, ..., 590\} $ is the time horizon of the RICE model\footnote{
For clarity of exposition, I am omitting the detail that one time period in RICE represents 10 years.
}, corresponding to the model years 2005 to 2595, $L_{it}$ is the population, $x_{it}$ is the per capita consumption, $\beta^t$ is the utility discount factor (given by $\beta^t=(1+\rho)^{-t}$,  where $\rho$ is the utility discount rate), and $\tilde{\alpha}_{it}$ are the time-variant Negishi welfare weights. 
The utility function is given by
\begin{equation*}
    u\left(x_{it}\right)=\left\{\begin{array}{ll}
    \log \left(x_{it}\right) & \text { for } \eta=1 \\
    \frac{x_{it}{ }^{1-\eta}}{1-\eta}+1 & \text { for } \eta \neq 1
\end{array}\right.
\end{equation*}	
where $\eta$ is the elasticity of marginal utility of consumption, which is set to 1.5, consistent with the value employed in the original RICE model.

The time-variant Negishi weights are given by
\begin{equation}
    \label{E:def_NW}
    \tilde{\alpha}_{it} = \frac{1}{u^{\prime}\left(\tilde{x}_{it}\right)} v_t, 
\end{equation}
where $\tilde{x}_{it}$ is the consumption at the Negishi solution\footnote{
The Negishi weights are obtained by solving the optimization multiple times (in the presence of an implicit no transfer constraint, since regions in RICE-2010 are autarkic) and iteratively updating the weights until convergence.  
}, $v_t$ is the wealth-based component of the social discount factor.
In the RICE-2010 model, it is defined as the capital-weighted average of the regional wealth-based discount factors (see \textcite{nordhaus_economic_2010} and Appendix \ref{Assec: Time-variant Negishi weights} for more details). 


The second and third optimization problems maximize the
\textit{discounted utilitarian SWF}
\begin{equation}
\mathcal{W}^U=\sum_{t \in \mathscr{T}} \sum_{i \in \mathscr{I}} L_{it} \beta^t u\left(x_{it}\right) .
\end{equation}
 



\paragraph{Carbon prices}\mbox{}\\[3pt]	
\noindent In optimization problems (1) and (2), carbon prices are allowed to be differentiated across regions. However, recall that in the Negishi solution, uniform carbon prices are optimal by the construction of the Negishi weights. In the third optimization problem, a constraint of equal marginal abatement costs across regions is added\footnote{
The source code for the implementation of this constraint was adopted from the Mimi-NICE model \parencite{dennig_mimi_nice_2017}. 
}. 

\paragraph{Optimization algorithms}\mbox{}\\[3pt]	
\noindent  The optimization problems are solved with the numerical algorithm “\texttt{NLOPT\_LN\_SBPLX}” which is an implementation of the Subplex algorithm \parencite{rowan_functional_1990} in the NLopt (nonlinear-optimization) package \parencite{johnson_nlopt_2020}.  
Parts of the source code were adopted from the mimi-NICE model \parencite{dennig_mimi_nice_2017} and the RICEupdate model \parencite{dennig_riceupdate_2019}.

\todo[inline]{Maybe move the paragraph on optimization algorithms to the appendix.}

\subsection{Results} \label{ssec: Results}

This section investigates how optimal carbon prices depend on the choice of welfare weights in the absence of international transfers. As in the theory section, I distinguish between two utilitarian solutions contingent on whether carbon prices are constrained to be uniform. 
I begin by presenting the main finding: an increased optimal climate policy stringency under both utilitarian approaches compared to the Negishi solution. 
Leveraging the theoretical insights, the remainder of the section explores the drivers of this result.

\subsubsection{The effect on optimal carbon prices} \label{sssec: The effect on optimal climate policy (WW)}

It is useful to first examine the overall stringency of the optimal climate policy paths. To this end, Figure~\ref{F:temp and welfare weights} shows the respective optimal atmospheric temperature trajectories for different optimization problems and different utility discount rates (also referred to as the pure rate of time preference in the literature); 
specifically, I compare the results for the commonly used positive and normative calibrations of the utility discount rates by Nordhaus (1.5\%) and Stern (0.1\%), respectively \parencite{nordhaus_estimates_2011, stern_stern_2006}\footnote{
Like Negishi weights, the utility discount rate also places different weights on the welfare of different people. However, it does so on the basis of time – giving lower weight to the welfare of future generations – rather than on the basis of the wealth (or, more precisely, the consumption level) of an individual. 
The issue of discounting future utilities is heavily debated among economists and has received much more attention than the use of Negishi weights.
}. 

\begin{figure}[!htb]
\centering \includegraphics[width=0.85\textwidth]{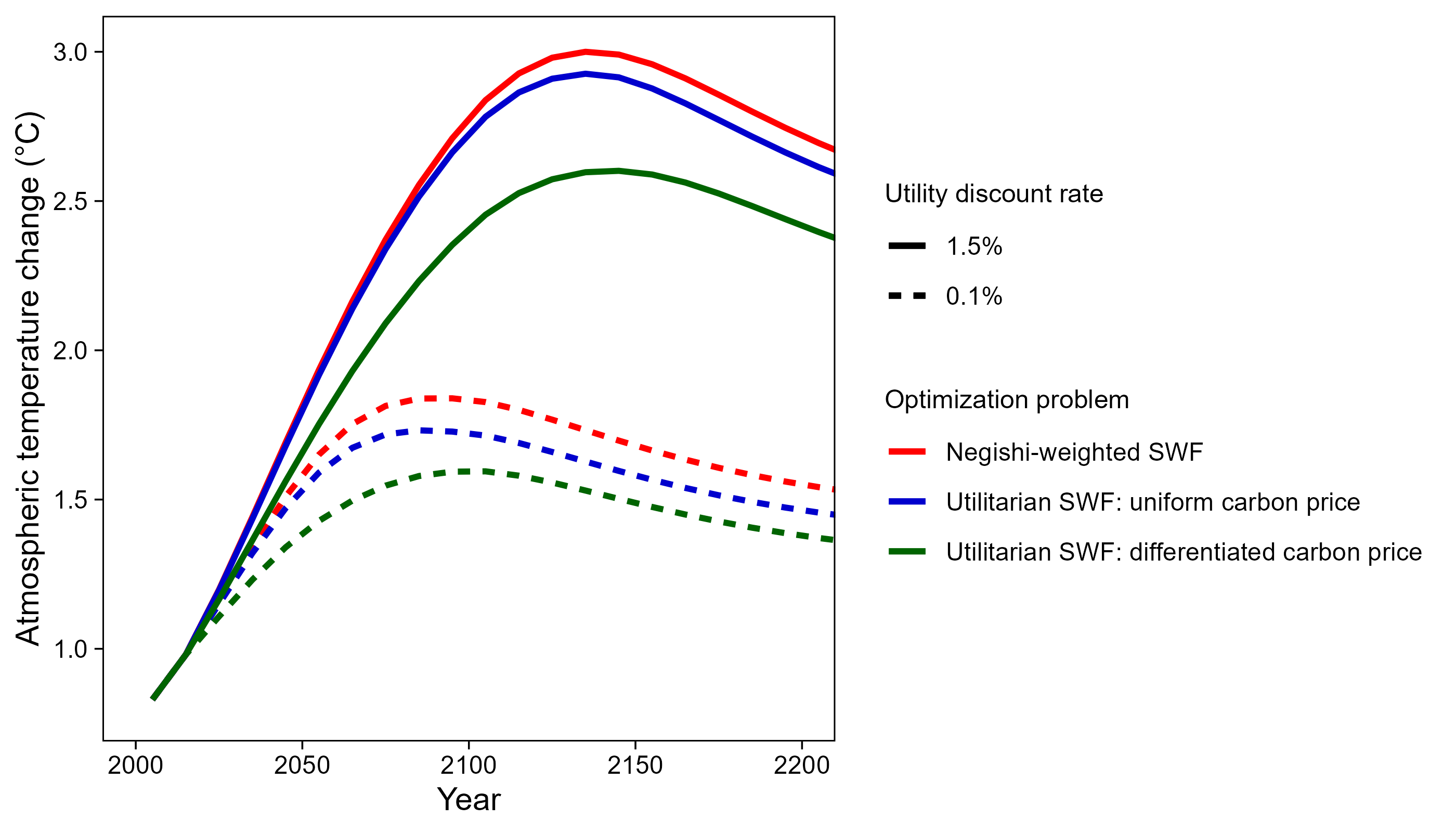}
\caption{Optimal atmospheric temperature trajectories conditional on the optimization problem and the utility discount rate.}
\vspace{1mm}  
    \begin{minipage}{1\textwidth}
        \small \textit{Notes}:
        The Negishi-weighted solutions (red) are compared to the solutions under the utilitarian objective with (green) and without (blue) the additional constraint of equalized regional carbon prices for the Nordhaus (solid lines) and Stern (dashed lines) utility discount rates \parencite{nordhaus_estimates_2011, stern_stern_2006}. Temperature changes are relative to 1900.
    \end{minipage}
\label{F:temp and welfare weights}
\end{figure}

The first main result is that accounting for global inequality increases the optimal climate policy stringency in the RICE model; the utilitarian solutions with uniform and differentiated carbon prices yield lower optimal temperature trajectories than the Negishi solution. Allowing for differentiated carbon prices in the utilitarian optimization results in the lowest warming by accounting for inequality in determining both the carbon price level and differentiation. Figure~\ref{F:temp and welfare weights} also shows the well-known large sensitivity of optimal climate policy to the utility discount rate. Specifically, peak warming is 3.00°C (1.84°C) in the Negishi solution, 2.93°C (1.73°C) in the utilitarian solution with uniform carbon prices, and 2.60°C (1.59°C) in the utilitarian solution with differentiated carbon prices for the 1.5\% (0.1\%) utility discount rate.

The corresponding cumulative global industrial\footnote{
There are two sources of emissions in RICE: endogenous region-level industrial emissions and exogenous emissions from land use change. Industrial emissions constitute the bulk of total emissions. Cumulative emissions from land use change are 29 GtCO$_2$ globally over the entire model horizon from 2005-2595.
} carbon dioxide emissions for the entire model horizon from 2005-2595 are shown in Table~\ref{T: Cumulative global industrial CO2 emissions} in the appendix. 
The effect of increased optimal abatement in the utilitarian solutions relative to the Negishi solution is larger for the lower utility discount rate, when the welfare impacts of future damages are given comparatively greater weight. Specifically, relative to the Negishi solution, cumulative global industrial CO$_2$ emissions are around 5\% (13\%) lower for the utilitarian solution with the additional constraint of uniform carbon prices, and 21\% (27\%) lower for the utilitarian differentiated carbon price solution, using the 1.5\% (0.1\%) utility discount rate.

The optimal carbon prices\footnote{
Note that all dollar values are 2022 USD. I convert the 2005 USD values of the RICE model to 2022 USD values using the World Bank GDP deflator \parencite{world_bank_world_2023}.
} in 2025 are shown in Table~\ref{T: Optimal carbon price in 2025} (the full carbon price trajectories are shown in Figure \ref{F: Optimal trajectories for the carbon price and industrial emissions} in the appendix, along with the corresponding emissions). Welfare-maximizing uniform carbon prices exceed the Negishi-weighted carbon prices for both utility discount rates; specifically, by 15\% (21\%) under the 1.5\% (0.1\%) utility discount rate.

\begin{table}[!htb]
\centering
\begin{threeparttable}
\caption{Optimal carbon price in 2025 (in 2022 \$/tCO$_2$) depending on the optimization problem and the utility discount rate ($\rho$).}
\begin{tabular}{l c c }
\hline \hline
\multirow{2}{*}{Optimization problem} & \multicolumn{2}{c}{Utility discount rate} \\
\cline{2-3}
& \textbf{$\rho = 1.5\%$} & \textbf{$\rho = 0.1\%$} \\
\hline
A) \textit{Negishi-weighted SWF} & 25 & 100* \\
B) \textit{Utilitarian SWF: uniform carbon price} & 29 & 121 \\
C) \textit{Utilitarian SWF: differentiated carbon price} & & \\
\hspace{0.5cm} US & 338 & $> 410$ \\
\hspace{0.5cm} Other High Income & 233 & $> 501$ \\
\hspace{0.5cm} Japan & 232 & $> 638$ \\
\hspace{0.5cm} EU & 199 & $> 638$ \\
\hspace{0.5cm} Russia & 78 & $> 273$ \\
\hspace{0.5cm} Latin America & 48& 202 \\
\hspace{0.5cm} Middle East & 44 & 182 \\
\hspace{0.5cm} China & 32 & 134 \\
\hspace{0.5cm} Eurasia & 24 & 103 \\
\hspace{0.5cm} Other Asia & 10 & 44 \\
\hspace{0.5cm} India & 10 & 41 \\
\hspace{0.5cm} Africa & 5 & 23 \\
\hline \hline
\end{tabular}\scriptsize
\label{T: Optimal carbon price in 2025}
    \begin{minipage}{\linewidth}
        \vspace{1mm}  
        \footnotesize
        \textit{Notes}: Mimi-RICE-plus only yields an approximately equalized carbon price for the Negishi solution. In this case (*), it varied between 98 and 102 \$/tCO$_2$ across regions. The “$>$” sign indicates that the regional backstop price has been reached. Thus, any price above the backstop price is optimal as complete abatement is required.
    \end{minipage}
\end{threeparttable}
\end{table}

Furthermore, to reduce the welfare cost of abatement, there are large differences in welfare-maximizing carbon prices between regions when the constraint of equalized carbon prices is not imposed. 
Consistent with the theoretical results in Section \ref{sssec:Optimal carbon prices under different welfare weights}, this yields high carbon prices in rich regions – exceeding $\sim$\$200/tCO$_2$, even for the high utility discount rate – and much lower carbon prices in poor regions. 
For the lower Stern utility discount rate, the richest five regions already reach their backstop price in 2025, resulting in zero carbon emissions. 
Notably, the utilitarian differentiated carbon prices exceed the Negishi-weighted carbon prices for all regions but the poorest three (or four) regions. This also results in large regional changes in cumulative emissions compared to the Negishi solution, featuring substantial emission reductions in rich regions, smaller reductions in middle-income regions, and emission increases in the poorest three regions (see Figure \ref{EIND_cum_ro_NWvsUtil} in the appendix). 
Importantly, the emission reductions in rich and middle-income regions outweigh the emission increases in the poorest regions, resulting in lower global emissions compared to the Negishi solution. The differentiated carbon price optimum is discussed in more detail in Appendix~\ref{Assec: Discussion of DCPO}.

\subsubsection{Regional heterogeneities and distributional effects}

Why do the utilitarian maximizations lead to greater climate policy ambition compared to the Negishi solution? This section addresses this question by analyzing the distributional impacts of the different climate policy pathways and the regional heterogeneities that drive these outcomes. 

I begin by examining which regions are better off and which are worse off under the utilitarian solutions compared to the Negishi solution.
To provide a simple summary statistic that shows which regions gain or lose overall over the entire model horizon, I focus on regions' aggregate intertemporal welfare changes, expressed as the net present value (NPV) of consumption changes over time. 
These regional NPV consumption changes are shown in Figure \ref{fig:C_pctchange_NPV} along with the global welfare gains of the utilitarian solutions (the temporal trajectories in consumption changes are shown in Figure \ref{fig:C_pctchange} in the appendix). 
More specifically, I compute the consumption changes in the initial period (2005) that would yield a welfare change (in utility terms) that is equivalent to the welfare difference between each of the utilitarian solutions and the Negishi solution. 
Global utilitarian welfare changes are expressed in the welfare-equivalent consumption change in 2005 if consumption were distributed equally.
Details of this calculation are provided in Appendix~\ref{Assec: Calculation of welfare-equivalent consumption changes}.  For the remainder of the numerical results, I focus on the 1.5\% utility discount rate to streamline the discussion. Additional results for the 0.1\% utility discount rate are shown in the appendix.

\begin{figure}[!htb]
    \centering
    \includegraphics[width=0.9\linewidth]{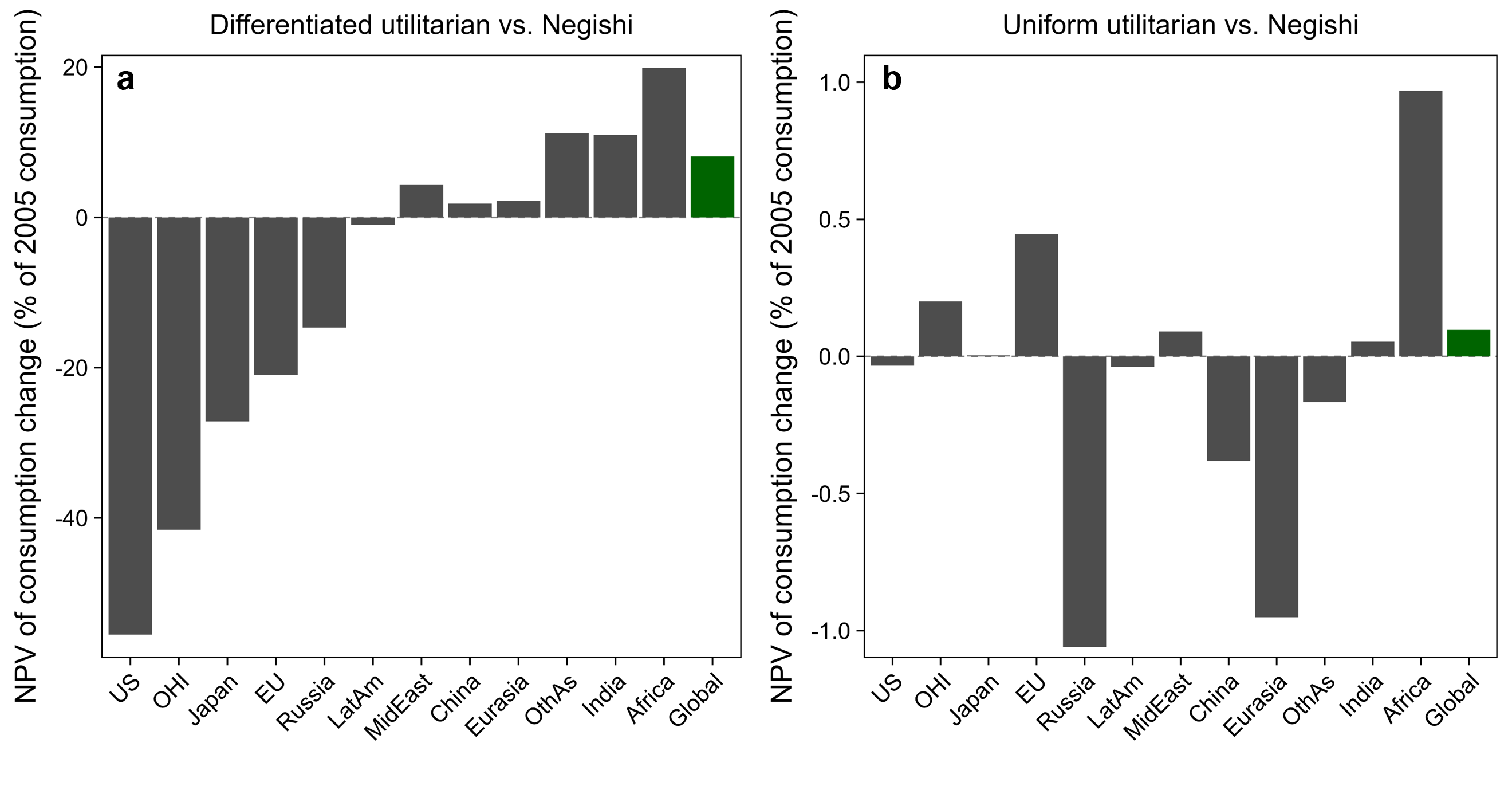}
    \caption{Net present value of consumption changes.}
    \vspace{1mm}  
    \begin{minipage}{1\textwidth}
        \small \textit{Notes}:
        The values show the welfare-equivalent consumption change in 2005, as a percentage of the consumption in 2005. The ``Global'' value expresses the global utilitarian welfare change in the welfare-equivalent consumption change in 2005 if consumption were distributed equally (for details, see Appendix \ref{Assec: Calculation of welfare-equivalent consumption changes}). The figure shows the results for the utility discount rate of 1.5\%.
    \end{minipage}
    \label{fig:C_pctchange_NPV}
\end{figure}
The comparison of the utilitarian differentiated carbon price solution and the Negishi solution is straightforward: rich regions are better off in the Negishi solution and poor regions are better off in the utilitarian solution (see Figure~\ref{fig:C_pctchange_NPV}a). Most importantly, global (unweighted) welfare is higher in the utilitarian solution.
This is of course unsurprising since the utilitarian SWF measures global (unweighted) welfare and Negishi weights upweight the welfare of rich regions and downweight the welfare of poor regions. 
The poorest four regions are better off in all periods in the utilitarian solution, due to both lower abatement costs in their regions and lower global emissions. In contrast, rich regions experience NPV consumption losses as a result of higher abatement costs associated with increased carbon prices. 
Crucially, however, all regions enjoy consumption gains after 2150 because of reduced climate damages due to reduced global emissions (see Figure \ref{fig:C_pctchange}a). Thus, the increased abatement in rich regions does not lead to persistently lower consumption trajectories. 
In addition, it is worth noting that the consumption losses in rich regions do not imply negative consumption per capita growth rates. 
More generally, the consumption per capita trajectories of all regions are not affected substantially, especially compared to the magnitude of the inequality across regions\footnote{
The consumption per capita trajectories for the regions with the largest positive and negative consumption changes, Africa and the US, respectively, are shown in Figure~\ref{fig: Consumption per capita trajectories} in the appendix.
}. Thus, while the utilitarian solutions result in greater global welfare by accounting for global inequality in setting the carbon prices, they do not solve the inequality issue.




The distributional consequences are more complicated for the uniform carbon price solutions. 
The region that benefits the most from higher carbon prices in the utilitarian solution relative to the Negishi solution is the poorest region, Africa (see Figure~\ref{fig:C_pctchange_NPV}b). Indeed, the intertemporal welfare gain in utility terms is by far the largest in Africa (see Figure \ref{fig:C_bar_change_NPV}b in the appendix).
This indicates that the lower carbon prices in the Negishi solution are primarily driven by the down-weighting of Africa in the Negishi-weighted SWF. I have confirmed this through a model run in which Africa is removed from the utilitarian social welfare function\footnote{
To be more specific, setting Africa's welfare weights to zero in the utilitarian social welfare function yields an optimal carbon price trajectory nearly identical to that under Negishi weights, resulting in a peak warming of 3.00°C, as in the Negishi solution.
}.
Further analysis reveals that the difference in optimal policy stringency between the utilitarian and Negishi solutions is driven by disproportionately large climate damages in Africa\footnote{
Specifically, I find that if Africa had the same temperature damage function as the US, the utilitarian uniform carbon price solution again results in peak warming of 3.00°C, the same as the Negishi solution.
}. 
While all regions benefit again in the long-term, Russia and Eurasia experience the greatest consumption losses in NPV terms due to increased abatement costs. 

To understand the underlying drivers of the distributional effects for the uniform carbon price solutions we can leverage the results from the theory model.
Proposition \ref{Proposition: utilitarian vs Negishi (dynamic)} and Equation \eqref{E: utilitarian vs Negishi (dynamic, RICE)} show that relatively higher marginal damages and a more convex abatement cost function in the South, compared to the North, contribute to a higher uniform carbon price in the utilitarian solution than in the Negishi solution.
Figure \ref{fig:ratios_2025} examines whether this is the case in the RICE model, showing the ratios of the regional marginal damages and convexities of the abatement cost functions (both as a percentage of GDP in 2025) relative to the US.
Marginal damages are estimated as the present value (in 2025) of the stream of damages associated with a marginal pulse of emissions in 2025 (using region-specific discount rates)\footnote{
This is effectively the regional social cost of carbon, which is calculated as the welfare-equivalent regional consumption change in 2025.
}.

\begin{figure}[!htb]
    \centering
    \includegraphics[width=0.95\linewidth]{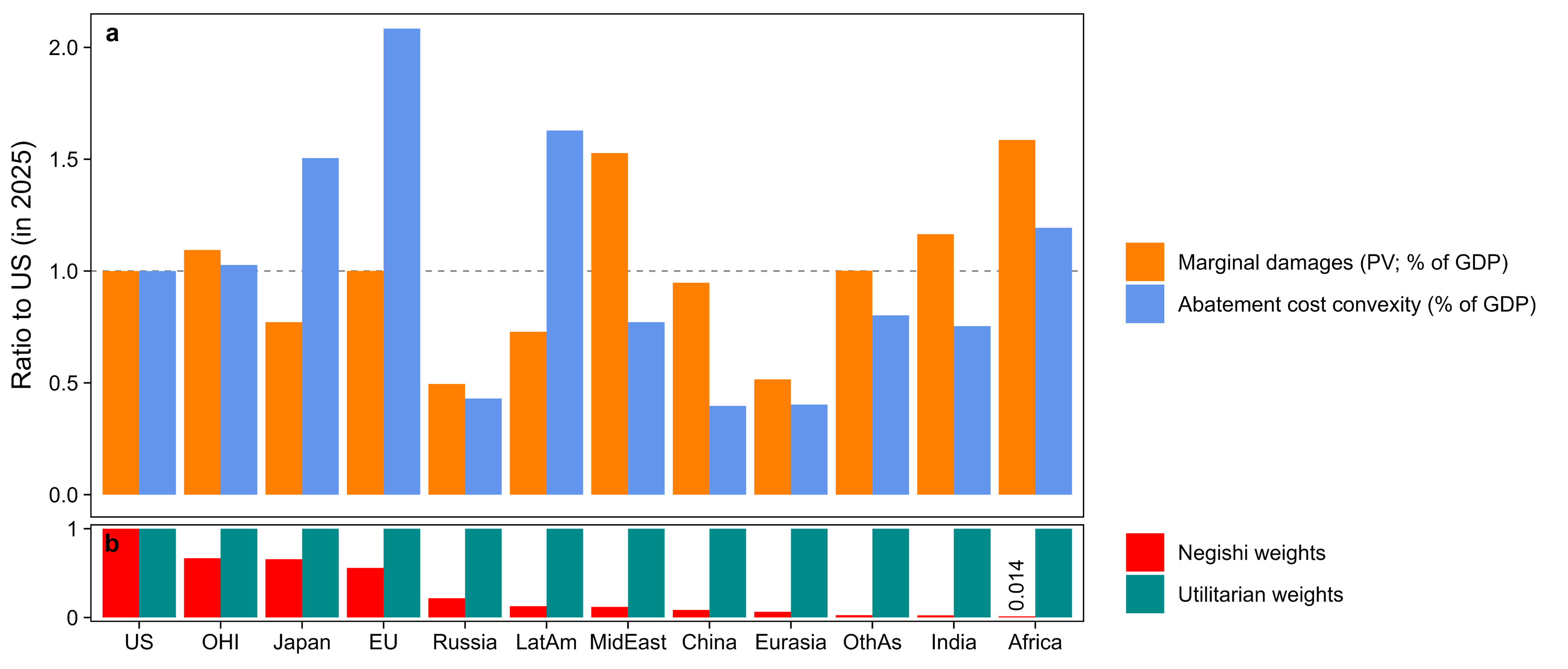} 
    \caption{Relative regional marginal damages and abatement cost convexities.}
    \vspace{1mm}  
    \begin{minipage}{1\textwidth}
        \small \textit{Notes}:
        The ratio of the 2025 present values (PV) of the stream of regional marginal damages as a percentage of the regional GDP in 2025, $\mathbf{d}'_{i}$, is given by $PV(\mathbf{d}'_{i})/PV(\mathbf{d}'_{US})$. The ratio of the convexities in the abatement cost functions is $c''_{i,t}/c''_{US,t}$ (evaluated at uniform carbon prices), where $t$ is the year 2025. 
        Panel (b) shows regions' Negishi and utilitarian welfare weights in 2025 relative to the weights in the US.
        The figure shows the results for the utility discount rate of 1.5\%.
    \end{minipage}
    \label{fig:ratios_2025}
\end{figure}

Consistent with the theoretical results, Figure \ref{fig:ratios_2025} shows that the regions that enjoy NPV consumption gains from the higher utilitarian carbon prices tend to have relatively high marginal damages and/or more convex abatement cost functions.
Most notably, Africa is the region with the largest marginal damages relative to the size of its economy. Additionally, Africa has a fairly convex abatement cost function, which reduces its abatement cost burden as carbon prices are increased.
Together, these two attributes explain why Africa experiences the largest gains in NPV consumption. 
The strongly convex abatement cost function in the EU is also noteworthy, resulting in NPV gains in the EU from the higher utilitarian carbon prices.
Conversely, Russia and Eurasia experience the largest NPV losses due to relatively low marginal damages and flatter marginal abatement cost curves.

As the poorest region, Africa receives the lowest Negishi weight, which is roughly 70 times smaller than that of the richest region, the US, in 2025 (see Figure \ref{fig:ratios_2025}b). Intuitively, the heavy down-weighting of welfare in the region most impacted by climate damages is a key factor behind the lower carbon prices in the Negishi solution. 

Figure \ref{fig:ratios_growth_2025} in the appendix provides a more detailed breakdown of the regional heterogeneities that give rise to differential climate impacts. Rather than computing the present values of the stream of marginal damages, which are affected by population growth and economic growth, it shows the undiscounted marginal damages as a percentage of GDP in a given year alongside population and economic growth. 
Building on the theoretical insights from the dynamic model, this figure shows that Africa is particularly strongly affected by climate change, not only due to its high climate damages as a percentage of GDP but also because it has the fastest population growth, amplifying the aggregate damages by increasing the number of people affected. However, counterbalancing this to some degree is Africa's fast economic growth, which causes Africa's climate damages to be more heavily discounted under the utilitarian SWF.

\todo[inline]{Continue writing after I have confirmed that this is a good way of doing it. Another way would be to show the utility differences, or welfare-equivalent consumption changes for a global representative agent.}


\subsubsection{Regions' preferred uniform carbon prices}

This section presents the preferred uniform carbon prices for different regions, offering a complementary perspective on why utilitarian welfare weights lead to higher uniform carbon prices than Negishi weights. 
Drawing on the theoretical two-region model from Section \ref{sssec: A region's preferred uniform carbon price}, we know that this occurs if and only if the poorer region prefers a higher uniform carbon price than the richer region. By examining these preferences within the RICE model, we can gain further intuition for this result. Additionally, understanding regions' preferences regarding uniform carbon prices is valuable in its own right, as it helps to identify which regions might advocate for more or less stringent global climate policies in international negotiations\footnote{
However, it is important to note that within the framework of international negotiations under the Paris Agreement, which emphasizes nationally determined contributions, a globally uniform carbon price has not been the central focus. 
}. 

Table~\ref{T: Preffered uniform carbon prices} shows each region's preferred uniform carbon prices and the resulting peak temperature increase (the full carbon price and temperature trajectories are shown in Figure \ref{fig: preferred uniform carbon prices} in the appendix). 
The preferred uniform carbon prices vary widely across regions. In 2025, they differ by nearly an order of magnitude, from \$7 per ton of CO$_2$ in Russia and Eurasia to over \$60 per ton in Africa and the EU. This also leads to significant differences in peak temperatures, with Africa's preferred policy limiting warming to 2.4°C, while Russia's preferred policy allows for nearly 4°C.

\renewcommand{\arraystretch}{1.2}
\begin{table}[!htb]
\centering
\begin{threeparttable}
\caption{Regions preferred uniform carbon prices (in 2022 \$/tCO$_2$) and resulting peak warming.
For comparison, the uniform utilitarian and Negishi-weighted carbon prices are also shown.}
\setlength{\tabcolsep}{10pt}
\begin{tabular}{lcccc}
\hline \hline
                        & \multicolumn{3}{c}{Carbon prices} &                   \\ \cline{2-4}
Welfare weights & 2025      & 2055      & 2085      & Peak warming (°C) \\ \hline
Negishi                 & 25        & 60*        & 116**       & 3.00              \\
Utilitarian             & 29        & 68        & 128       & 2.93              \\ \hdashline
US                      & 30        & 64        & 111       & 2.99              \\
OHI                     & 34        & 77        & 146       & 2.86              \\
Japan                   & 35        & 90        & 174       & 2.72              \\
EU                      & 63        & 133       & 224       & 2.42              \\
Russia                  & 7         & 17        & 35        & 3.98              \\
LatAm                   & 38        & 79        & 138       & 2.91              \\
MidEast                 & 37        & 72        & 118       & 3.00              \\
China                   & 13        & 40        & 94        & 3.00              \\
Eurasia                 & 7         & 20        & 43        & 3.65              \\
OthAs                   & 27        & 62        & 120       & 3.00              \\
India                   & 29        & 67        & 129       & 2.94              \\
Africa                  & 64        & 134       & 225       & 2.40              \\ \hline \hline
\end{tabular}\scriptsize
\label{T: Preffered uniform carbon prices}
    \begin{minipage}{\linewidth}
        \vspace{1mm}  
        \footnotesize
        \textit{Notes}: The table shows the results for the utility discount rate of 1.5\%. Temperature changes are relative to 1900.
        Mimi-RICE-plus only yields an approximately equalized carbon price for the Negishi solution. Specifically, it varied across regions between (*) 59 and 63 \$/tCO$_2$ in 2055, and (**) 113 and 121 \$/tCO$_2$ in 2085.
    \end{minipage}
\end{threeparttable}
\end{table}

The large differences in regions' preferred uniform carbon prices highlight the importance of how these preferences are weighted in the SWF. 
This perspective provides additional intuition for why optimal carbon prices are lower in the Negishi solution, which downweights the preferences of poorer regions. 
Notably, Africa, the poorest region that is most heavily downweighted in the Negishi-weighted SWF, has the highest preferred carbon prices, mainly due to its disproportionately large climate damages.
In contrast, the US, the region with the largest Negishi weights, prefers comparatively low carbon prices, particularly after 2050. 
Specifically, Africa's preferred uniform carbon prices are more than twice as high as those of the US. 
While the overall effect depends on all regions, the downweighting of Africa's preferences is the primary reason for lower carbon prices in the Negishi solution compared to the utilitarian solution, which assigns equal weight to all regions' welfare, as discussed above. 

\section{Conclusion} \label{sec:conclusion}

This paper investigates how accounting for inequality affects 
optimal carbon prices.
Specifically, it compares the optimal carbon prices under two optimization approaches: the conventional, positive approach which maximizes the Negishi-weighted social welfare function (SWF), and a normative approach which focuses on maximizing global welfare, employing constrained maximizations of the utilitarian SWF.

Using a theoretical model, I show that, in the absence of international transfers, accounting for inequality may result in higher or lower optimal carbon prices and that this depends on 
regional differences in marginal climate damages and the burden of abatement costs, and disparities in population and economic growth.
Intuitively, global welfare maximization warrants more stringent climate policy if poor countries are more vulnerable to future climate change---due to higher marginal damages, faster population growth and slow economic catch-up---and if the cost burden of abatement predominately falls on rich countries.
This highlights the importance of accurately accounting for regional heterogeneities in climate economic model. 
In numerical simulations with the integrated assessment model RICE, I find that accounting for inequality results in lower optimal global emissions, both if carbon prices are allowed to be regionally differentiated and if they are constrained to be globally uniform. 



There are some limitations of this study which are left for future research. First, the RICE model masks inequality within its twelve regions. Thus, a valuable avenue for future research would be to account for inequality at a finer resolution and examine how the quantitative results change. Existing modifications of the RICE model may be used for this analysis, such as NICE and RICE50+ \parencite{dennig_inequality_2015, gazzotti_persistent_2021}.

Second, the numerical simulations of this study are performed with a single integrated assessment model (IAM). As different IAMs are known to produce different results, it would be worthwhile to replicate the analysis with other IAMs to assess the robustness of the findings of this paper. 
Furthermore, it would be valuable to strengthen the empirical evidence on regional heterogeneities highlighted by this paper’s theoretical results, including differences in damage and abatement cost functions, as well as in economic and population growth. 


Third, this study relies on deterministic models. Given the substantial uncertainties in both human and physical systems and the associated economic effects of climate change, extending the analysis to a probabilistic framework would be valuable. Building on the findings of this paper, it could be particularly insightful to explore potential disparities in the uncertainties that different regions face.

\clearpage
\singlespacing

\printbibliography

\todo[inline]{Include all references}







\clearpage


\onehalfspacing

\appendix

\renewcommand{\thefigure}{A\arabic{figure}}
\renewcommand{\thetable}{A\arabic{table}}
\renewcommand{\theequation}{A\arabic{equation}}

\setcounter{figure}{0}
\setcounter{table}{0}
\setcounter{equation}{0}

\setcounter{proposition}{0}
\setcounter{lemma}{0}
\setcounter{corollary}{0}
\setcounter{theorem}{0}
\renewcommand{\theproposition}{A\arabic{proposition}}
\renewcommand{\thelemma}{A\arabic{lemma}}
\renewcommand{\thecorollary}{A\arabic{corollary}}
\renewcommand{\thetheorem}{A\arabic{theorem}}

\renewcommand{\thesection}{\Alph{section}}

\stepcounter{section}

\section*{Appendix \thesection: Proofs and Derivations}
\addcontentsline{toc}{section}{Appendix \thesection: Proofs and Derivations} 

\subsection{Derivation of optimal carbon prices}\label{Assec: Derivation of optimal carbon prices}

This section shows the derivation of the optimal uniform carbon price for arbitrary welfare weights (Equation~\ref{Def: optimal uniform carbon price}). As discussed briefly below, the derivation of the optimal differentiated carbon prices is largely analogous (and, in fact, simpler) and therefore not explicitly shown.

The Lagrangian of the uniform carbon price optimization problem is
\begin{gather*} \label{objective}
    \begin{aligned}
    \mathcal{L} = L_i \alpha_i u\left(\frac{X_i}{L_i}\right)
        &- \sum_i \lambda_i \left(X_i - W_i + C_i(A_{i}) + D_i(A) \right) \\
        & - \mu \left(C'_N(A_N) - C'_S(A_S) \right), 
    \end{aligned}
\end{gather*}
where $\lambda_i$ and $\mu$ are Lagrange multipliers.

The first-order conditions (FOCs) are:
\begin{gather*}
    \begin{aligned}
        &[X_i]: \alpha_i u'(x_i) = \lambda_i, \quad \forall i\\
        &[A_{N}]: \lambda_N C'_N(A_{N}) = -\sum_i \lambda_i D_i'(A) - \mu C''_N\\
        &[A_{S}]: \lambda_S C'_S(A_{S}) = -\sum_i \lambda_i D_i'(A) + \mu C''_S \\
        &[\lambda_i]: X_i = W_i - C_i(A_i) - D_i(A), \quad \forall i \\
        &[\mu]: C'_N(A_N) = C'_S(A_S) .
    \end{aligned}
\end{gather*}

Combining the FOCs, we get the following two optimality conditions:
\begin{gather}
     \alpha_N u'(x_N) C'_N(A_{N}) = - \sum_i \alpha_i u'(x_i) D_i'(A) - \mu C''_N \label{optimality condition 1_uniform}, \\
     \alpha_S u'(x_S) C'_S(A_{S}) = - \sum_i \alpha_i u'(x_i) D_i'(A) + \mu C''_S \label{optimality condition 2_uniform} .
\end{gather}

We can now solve these two equations for the optimal uniform carbon price, noting that $C'_N(A_N) = C'_S(A_S)$ by the uniform carbon price constraint. Eliminating $\mu$ by dividing Equation~\eqref{optimality condition 1_uniform} by Equation~\eqref{optimality condition 2_uniform}, and simple manipulations yield the optimal uniform carbon price (Equation~\eqref{Def: optimal uniform carbon price} in the main text),
\begin{gather*}
        \tau^{uni} = C'_i(A_i^*) = - \sum_i \alpha_i u'(x_i^*) D'_i(A^*) 
        \frac{C''_S + C''_N}
        {\alpha_N u'(x_N^*) C''_S + \alpha_S u'(x_S^*) C''_N} .
\end{gather*}

The derivation of the optimal differentiated carbon price is largely analogous. The main difference is that the uniform carbon price constraint is missing (i.e., $\mu (C'_N(A_N) - C'_S(A_S))$ is missing in the Lagrangian). Hence, we (generally) get different optimal carbon prices in the two regions.

\todo[inline]{Continue Appendix A.1 derivation}

\subsection{Derivation of optimality conditions}\label{Assec: Derivation of optimality conditions}

\subsubsection{Negishi solution}\label{Asssec: Negishi solution}

We start with Equation~\eqref{Def: Negishi-weighted carbon price}, 
\begin{gather*}
        \frac{d{C}_{i}(\tilde{A}_i)}{d\tilde{A}_i} 
        = 
        - \sum_i \frac{d{D}_i(\tilde{A})}{d\tilde{A}} .
    \end{gather*}
    
Multiplying both sides by $\frac{d\tilde{A}}{d\tilde{\tau}}$, and using  $d\tilde{A} = d\tilde{A}_S + d\tilde{A}_N$, yields
%
\begin{gather*}
        \frac{d{C}_{i}(\tilde{A}_i)}{d\tilde{A}_i} \frac{d\tilde{A}_S + d\tilde{A}_N}{d\tilde{\tau}}
        = 
        - \sum_i \frac{d{D}_i(\tilde{A})}{d\tilde{\tau}} .
    \end{gather*}

Using $\tilde{\tau} = \frac{d{C}_N(\tilde{A}_N)}{d\tilde{A}_N} = \frac{d{C}_S(\tilde{A}_S)}{d\tilde{A}_S}$, we obtain
%
%
\begin{gather*}
        \sum_i
        \frac{d{C}_i(\tilde{A}_i)}
        {d\tilde{\tau}}
        = 
        - \sum_i \frac{d{D}_i(\tilde{A})}{d\tilde{\tau}} .
    \end{gather*}

\subsubsection{Utilitarian uniform carbon price solution}\label{Asssec: Utilitarian uniform carbon price solution}

We start with Equation~\eqref{Def: utilitarian uniform carbon price},
    \begin{gather*}
        \frac{d{C}_{i}(\check{A}_i)}{d\check{A}_i} = 
        - \sum_i {u}'(\check{x}_i) \frac{d{D}_i(\check{A})}{d \check{A}}
        \frac{{C}''_S + {C}''_N}
        {{u}'(\check{x}_N) {C}''_S + {u}'(\check{x}_S) {C}''_N } .
    \end{gather*}

Using $C''_i = \frac{d \check{C}'_i}{d \check{A}_i} = \frac{d \check{\tau}}{d \check{A}_i}$, multiplying both sides by $\frac{d\check{A}}{d\check{\tau}}$ and rearranging, we have 
\begin{gather*}
       \frac{d {C}_{i}(\check{A}_i)}{d \check{A}_i} 
        \frac{{u}'(\check{x}_N) \frac{d\check{A}}{d \check{A}_S} 
        + {u}'(\check{x}_S) \frac{d\check{A}}{d \check{A}_N}}
        {\frac{d \check{\tau}}{d \check{A}_S} + \frac{d \check{\tau}}{d \check{A}_N}}
       = 
        - \sum_i {u}'(\check{x}_i) \frac{d{D}_i(\check{A})}{d\check{\tau}}.
    \end{gather*}

Using $\frac{d {C}_{S}(\check{A}_S)}{d \check{A}_S} = \frac{d {C}_{N}(\check{A}_N)}{d \check{A}_N}$, equalizing the denominators of the ratios in the denominator and rearranging yields
\begin{equation*}
    \begin{gathered}
        {u}'(\check{x}_N) \frac{d\check{A}}{d \check{A}_S d \check{\tau}}
           \frac{d C_S(\check{A}_S) d C_N(\check{A}_N)}{d C_S(\check{A}_S) + d C_N(\check{A}_N)}
            + {u}'(\check{x}_S) \frac{d\check{A}}{d \check{A}_N d \check{\tau}}
            \frac{d C_S(\check{A}_S) d C_N(\check{A}_N)}{d C_S(\check{A}_S) + d C_N(\check{A}_N)} \\
           = 
            - \sum_i {u}'(\check{x}_i) \frac{d{D}_i(\check{A})}{d\check{\tau}}.
    \end{gathered} 
    \end{equation*}

    Using $\check{\tau} = \frac{d\check{C}_S}{d\check{A}_S} = \frac{d\check{C}_N}{d\check{A}_N}$, and thus $\check{\tau} d\check{A}_i  = d\check{C}_i$ for all $i$, we rewrite the previous equation as
\begin{gather*}
\check{\tau} \frac{d\check{A}_S + d\check{A}_N}{\check{\tau} d\check{A}_S + \check{\tau} d\check{A}_N}
\left(
       {u}'(\check{x}_N) \frac{d C_N(\check{A}_N)}{d \check{\tau}}
        + {u}'(\check{x}_S) \frac{d C_S(\check{A}_S)}{d \check{\tau}}
        \right) 
       = 
        - \sum_i {u}'(\check{x}_i) \frac{d{D}_i(\check{A})}{d\check{\tau}},
    \end{gather*}

which simplifies to
\begin{gather*}
   \sum_i {u}'(\check{x}_i) \frac{d C_i(\check{A}_i)}{d \check{\tau}}
       = 
        - \sum_i {u}'(\check{x}_i) \frac{d{D}_i(\check{A})}{d\check{\tau}}.
    \end{gather*}

\subsubsection{Regions' preferred uniform carbon price}\label{Asssec: Regions' preferred uniform carbon price}

We start with Equation~\eqref{Def: preferred uniform carbon price},
\begin{gather*}
        \frac{d{C}_{i}(\mathring{A}_{i}^i)}{d\mathring{A}_{i}^i}
        =
        - \frac{d{D}_i(\mathring{A^i})}{d\mathring{A^i}}
        \frac{{C}''_i + {C}''_{-i}}
        {{C}''_{-i}},
    \end{gather*}

Using ${C}''_i = \frac{d {C}'_i(\mathring{A}_i^i)}{d \mathring{A}_i^i} = \frac{d \mathring{\tau}^i}{d \mathring{A}_i^i}$ and ${C}''_i = \frac{d {C}'_{-i}(\mathring{A}_{-i}^i)}{d \mathring{A}_{-i}^i} = \frac{d \mathring{\tau}^i}{d \mathring{A}_{-i}^i}$, multiplying both sides by $\frac{d\check{A}}{d\check{\tau}}$ and rearranging, we have 
\begin{gather*}
\begin{aligned}
    \frac{d{C}_{i}(\mathring{A}_{i}^i)}{d\mathring{A}_{i}^i}
        & =
        - \frac{d{D}_i(\mathring{A^i})}{d\mathring{A^i}}
        \frac{\frac{d \mathring{\tau}^i}{d \mathring{A}_i^i} +
        \frac{d \mathring{\tau}^i}{d \mathring{A}_{-i}^i}}
        {\frac{d \mathring{\tau}^i}{d \mathring{A}_{-i}^i}} \\
        & =
        - \frac{d{D}_i(\mathring{A^i})}{d\mathring{A^i}}
        \left(
        \frac{d\mathring{A}_{-i}^i}{d \mathring{A}_i^i} + 1
        \right) \\
        & =
        - \frac{d{D}_i(\mathring{A^i})}{d\mathring{A^i}}
        \left(
        \frac{d\mathring{A}_{-i}^i + d \mathring{A}_i^i}{d \mathring{A}_i^i}
        \right) \\
        & = - \frac{d{D}_i(\mathring{A^i})}{d\mathring{A^i_i}}.
\end{aligned}
\end{gather*}

Multiplying both sides by $\frac{d \mathring{A}^i_i}{d \mathring{\tau}^i}$, we obtain
\begin{gather*}
    \frac{d{C}_{i}(\mathring{A}_{i}^i)}{d\mathring{\tau}^i}
        = - \frac{d{D}_i(\mathring{A^i})}{d\mathring{\tau}^i}.
\end{gather*}


\todo[inline]{Rewrite proofs as proof by contrapositive.}

\todo[inline]{Shorten proof and check consistency of notation.}

\subsection{Proof of Proposition 1} \label{Assec: Proof of Proposition 1}

\todo[inline]{Shorten proof and check consistency of notation.}

\begin{proof}
    We split the proof into the forward and backward implications.


    \vspace{6pt}
    \myuline{Proof of forward implication}: $\tilde{\tau} < \check{\tau}$  $\implies$ $\frac{\check{D}'_S}{\check{D}'_N} > \frac{C''_N}{C''_S}$.
    
    Let us ask under which conditions $\tilde{\tau} < \check{\tau}$, or equivalently, $\tilde{C}' < \check{C}'$. First note that, for strictly convex abatement cost functions, $\tilde{C}' < \check{C}'$ implies $\tilde{A}_i < \check{A}_i$ for all $i$, and thus $\tilde{A} < \check{A}$. For strictly convex damage functions, this implies $\tilde{D}'_i < \check{D}'_i $ for all $i$ (note that marginal damages of abatement are negative).

    We have $\tilde{C}' < \check{C}'$ if and only if
    \begin{gather*}
        - \tilde{D}'_N - \tilde{D}'_S < (-\check{u}'_N \check{D}'_N - \check{u}'_S \check{D}'_S) \frac{C''_S + C''_N}{\check{u}'_N C''_S + \check{u}'_S C''_N} .
    \end{gather*}
    
    Multiplying both sides by the denominator on the right-hand side (which is positive), and rearranging, we get
    \begin{gather*}
         (\check{u}'_N \check{D}'_N + \check{u}'_S \check{D}'_S) (C''_S + C''_N) < (\tilde{D}'_N + \tilde{D}'_S) (\check{u}'_N C''_S + \check{u}'_S C''_N) .
    \end{gather*}
    
    Multiplying out and collecting terms, we have
    \begin{gather} \label{E: UvN inequality}
        C''_N \Bigl(\check{u}'_N \check{D}'_N + \check{u}'_S \check{D}'_S - \check{u}'_S \tilde{D}'_N  - \check{u}'_S \tilde{D}'_S  \Bigr)
        <
        C''_S \Bigl(\check{u}'_N \tilde{D}'_N  + \check{u}'_N \tilde{D}'_S  - \check{u}'_N \check{D}'_N - \check{u}'_S \check{D}'_S \Bigr) .
    \end{gather}
    
    We know that $\Bigl(\check{u}'_N \check{D}'_N + \check{u}'_S \check{D}'_S - \check{u}'_S \tilde{D}'_N  - \check{u}'_S \tilde{D}'_S  \Bigr) > 0$ if $\check{u}'_S > \check{u}'_N$ and $\tilde{D}'_i < \check{D}'_i$ because $\check{u}'_S \check{D}'_S - \check{u}'_S \tilde{D}'_S > 0$ since $\tilde{D}'_i < \check{D}'_i$ and $\check{u}'_N \check{D}'_N - \check{u}'_S \tilde{D}'_N >0$ since $\check{u}'_S > \check{u}'_N$ and $\tilde{D}'_i < \check{D}'_i$. Hence, we can divide by it and the sign of the inequality does not flip. Moreover, note that we must also have 
    \begin{gather}\label{E: UvN positivity condition}
        \Bigl(\check{u}'_N \tilde{D}'_N  + \check{u}'_N \tilde{D}'_S  - \check{u}'_N \check{D}'_N - \check{u}'_S \check{D}'_S \Bigr) >0
    \end{gather}
     for the inequality in Equation~\eqref{E: UvN inequality} to hold, since $C_i''>0$ for all $i$.
    
    Cross-division, collecting common terms, and rearranging yields
    \begin{gather*}
        \frac{C''_N}{C''_S} 
        <
         \frac{\check{u}'_N (\tilde{D}'_N  -  \check{D}'_N + \tilde{D}'_S  ) - \check{u}'_S \check{D}'_S }
         {\check{u}'_N \check{D}'_N - \check{u}'_S (\tilde{D}'_S  -\check{D}'_S + \tilde{D}'_N )}.
    \end{gather*}
    
    Note that
    \begin{gather*}
        \underbrace{\tilde{D}'_N  -  \check{D}'_N}_{<0} + \tilde{D}'_S < \tilde{D}'_S
    \end{gather*}
    
    and
    \begin{gather*}
        \underbrace{\tilde{D}'_S  -\check{D}'_S}_{<0} + \tilde{D}'_N < \tilde{D}'_N
    \end{gather*}
    
    Thus, for the numerator we have
    \begin{gather*}
    \begin{aligned}
        \overbrace{\check{u}'_N \underbrace{(\tilde{D}'_N  -  \check{D}'_N + \tilde{D}'_S )}_{< \tilde{D}'_S} - \check{u}'_S \check{D}'_S}^{>0 \text{ by Equation~\eqref{E: UvN positivity condition} }} &< \check{u}'_N \tilde{D}'_S - \check{u}'_S \check{D}'_S \\
        &< \check{u}'_N \check{D}'_S - \check{u}'_S \check{D}'_S \\
        &> 0 ,
    \end{aligned}  
    \end{gather*}
    where the second inequality holds since $\tilde{D}'_S < \check{D}'_S$ and the last inequality holds since $\check{u}'_S > \check{u}'_N$.
    
    Similarly, for the denominator, we have
    \begin{gather*}
    \begin{aligned}
        \check{u}'_N \check{D}'_N - \check{u}'_S \underbrace{(\tilde{D}'_S  -\check{D}'_S + \tilde{D}'_N)}_{<\tilde{D}'_N}
        &> \check{u}'_N \check{D}'_N - \check{u}'_S \tilde{D}'_N \\
        &> \check{u}'_N \check{D}'_N - \check{u}'_S \check{D}'_N \\
        &> 0 ,
    \end{aligned}
    \end{gather*}
    where the second inequality holds since $\tilde{D}'_S < \check{D}'_S$ and the last inequality holds since $\check{u}'_S > \check{u}'_N$.
    
    Compared to the case ``Negishi = Utilitarian", we have a greater (positive) denominator, and a smaller (positive, by Equation~\eqref{E: UvN positivity condition}) numerator. 
    
    Putting this together we have
    \begin{gather*}
    \begin{aligned}
        \frac{C''_N}{C''_S} 
        &< \frac{\check{u}'_N (\tilde{D}'_N  -  \check{D}'_N + \tilde{D}'_S  ) - \check{u}'_S \check{D}'_S }
        {\check{u}'_N \check{D}'_N - \check{u}'_S (\tilde{D}'_S  -\check{D}'_S + \tilde{D}'_N )} \\
        &< \frac{\check{u}'_N \check{D}'_S - \check{u}'_S \check{D}'_S}{\check{u}'_N \check{D}'_N - \check{u}'_S \check{D}'_N} \\
         &= \frac{\check{D}'_S}{\check{D}'_N} .
    \end{aligned} 
    \end{gather*}
    
    We have thus shown that $\tilde{C}' < \check{C}' \implies \frac{\check{D}'_S}{\check{D}'_N} > \frac{C''_N}{C''_S}$.



    \vspace{6pt}
    \myuline{Proof of backward implication}: $\frac{\check{D}'_S}{\check{D}'_N} > \frac{C''_N}{C''_S}$  $\implies$ $\tilde{\tau} < \check{\tau}$.

    In order to derive a contradiction, suppose that $\frac{\check{D}'_S}{\check{D}'_N} > \frac{C''_N}{C''_S}$  $\implies$ $\tilde{\tau} \geq \check{\tau}$.

    We begin by establishing the implications of $\tilde{\tau} \geq \check{\tau}$, or equivalently, $\tilde{C}' \geq \check{C}'$. First note that, for strictly convex abatement cost functions, $\tilde{C}' \geq \check{C}'$ implies $\tilde{A}_i \geq \check{A}_i$ for all $i$, and thus $\tilde{A} \geq \check{A}$. For strictly convex damage functions, this implies $\tilde{D}'_i \geq \check{D}'_i$ for all $i$ (note that marginal damages of abatement are negative).

    Next, note that $\tilde{C}' \geq \check{C}'$ if and only if
    \begin{gather*}
        - \tilde{D}'_N - \tilde{D}'_S \geq (-\check{u}'_N \check{D}'_N - \check{u}'_S \check{D}'_S) \frac{C''_S + C''_N}{\check{u}'_N C''_S + \check{u}'_S C''_N} .
    \end{gather*}
    
    Multiplying both sides by the denominator on the right-hand side (which is positive), and rearranging, we get
    \begin{gather*}
         (\check{u}'_N \check{D}'_N + \check{u}'_S \check{D}'_S) (C''_S + C''_N)
         \geq (\tilde{D}'_N + \tilde{D}'_S) (\check{u}'_N C''_S + \check{u}'_S C''_N) .
    \end{gather*}
    
    Multiplying this out and collecting common terms gives
    \begin{gather} \label{E: UvN inequality direction2}
        C''_N \Bigl(\check{u}'_N \check{D}'_N + \check{u}'_S \check{D}'_S - \check{u}'_S \tilde{D}'_N  - \check{u}'_S \tilde{D}'_S  \Bigr)
        \geq
        C''_S \Bigl(\check{u}'_N \tilde{D}'_N  + \check{u}'_N \tilde{D}'_S  - \check{u}'_N \check{D}'_N - \check{u}'_S \check{D}'_S \Bigr) .
    \end{gather}

    We know that
    \begin{gather}\label{E: UvN positivity condition part1 direction2}
        \Bigl(\check{u}'_N \tilde{D}'_N  + \check{u}'_N \tilde{D}'_S  - \check{u}'_N \check{D}'_N - \check{u}'_S \check{D}'_S \Bigr) > 0 
    \end{gather}

    because $\check{u}'_N \tilde{D}'_S - \check{u}'_S \check{D}'_S > 0$ since $\tilde{D}'_i \geq \check{D}'_i$ and $\check{u}'_S > \check{u}'_N$ and $\check{u}'_N \tilde{D}'_N - \check{u}'_N \check{D}'_N \geq 0$ since $\tilde{D}'_i \geq \check{D}'_i$.

    Moreover, note that Equations \eqref{E: UvN inequality direction2} and \eqref{E: UvN positivity condition part1 direction2} imply
    \begin{gather}\label{E: UvN positivity condition direction2}
        \Bigl(\check{u}'_N \check{D}'_N + \check{u}'_S \check{D}'_S - \check{u}'_S \tilde{D}'_N  - \check{u}'_S \tilde{D}'_S  \Bigr) >0
    \end{gather}

    since $C_i''>0$ for all $i$. Hence, we can divide by it and the sign of the inequality does not flip.
    
    Cross-division and collecting common terms yields
    \begin{gather*}
        \frac{C''_N}{C''_S} 
        \geq
        \frac{\check{u}'_N (\tilde{D}'_N  -  \check{D}'_N + \tilde{D}'_S  ) - \check{u}'_S \check{D}'_S }
        {\check{u}'_N \check{D}'_N - \check{u}'_S (\tilde{D}'_S  -\check{D}'_S + \tilde{D}'_N )} .
    \end{gather*}

    It is worthwhile to take stock at this point. So far, we have established that
    \begin{gather}\label{E: N>U implication}
        \tilde{C}' \geq \check{C}'
        \iff
        \frac{C''_N}{C''_S} 
        \geq
        \frac{\check{u}'_N (\tilde{D}'_N  -  \check{D}'_N + \tilde{D}'_S  ) - \check{u}'_S \check{D}'_S }
        {\check{u}'_N \check{D}'_N - \check{u}'_S (\tilde{D}'_S  -\check{D}'_S + \tilde{D}'_N )} .
    \end{gather}

    Next, we show that $\frac{\check{D}'_S}{\check{D}'_N} > \frac{C''_N}{C''_S}$  $\implies$ $\tilde{C}' \geq \check{C}'$ yields a contradiction:
    \begin{gather*}
    \begin{aligned}
        \frac{C''_N}{C''_S} &< \frac{\check{D}'_S}{\check{D}'_N} \\
        &= \frac{\check{u}'_N \check{D}'_S - \check{u}'_S \check{D}'_S}{\check{u}'_N \check{D}'_N - \check{u}'_S \check{D}'_N} \\
        &\leq \frac{\check{u}'_N \tilde{D}'_S - \check{u}'_S \check{D}'_S}{\check{u}'_N \check{D}'_N - \check{u}'_S \tilde{D}'_N} \\
        &\leq \frac{\check{u}'_N (\tilde{D}'_N  -  \check{D}'_N + \tilde{D}'_S  ) - \check{u}'_S \check{D}'_S }
        {\check{u}'_N \check{D}'_N - \check{u}'_S (\tilde{D}'_S  -\check{D}'_S + \tilde{D}'_N )} \\
        &\leq \frac{C''_N}{C''_S} ,
    \end{aligned}
    \end{gather*}

    where the second and third inequalities follow from $\tilde{D}'_i \geq \check{D}'_i$ for all $i$ and the fact that the denominator (and, trivially, the numerator) is positive by Equation~\eqref{E: UvN positivity condition direction2}\footnote{
    Note that the denominator in the third line is positive because it is greater than the positive denominator in the fourth line. This can be seen as follows:
    \begin{gather*}
        \check{u}'_N \check{D}'_N - \check{u}'_S \tilde{D}'_N
        > \check{u}'_N \check{D}'_N - \check{u}'_S ( \underbrace{\tilde{D}'_S  -\check{D}'_S}_{\geq 0} + \tilde{D}'_N )
        > 0.
    \end{gather*}
    }.
    The last inequality follows from the implication of $\tilde{C}' \geq \check{C}'$ documented in Equation~\eqref{E: N>U implication}.

    We have reached the contradiction $\frac{C''_N}{C''_S} < \frac{C''_N}{C''_S}$. Hence, $\frac{\check{D}'_S}{\check{D}'_N} > \frac{C''_N}{C''_S}$  $\implies$ $\tilde{C}' \geq \check{C}'$ is incorrect,
    and we have thus shown that we must have $\frac{\check{D}'_S}{\check{D}'_N} > \frac{C''_N}{C''_S} \implies \tilde{C}' < \check{C}'$.

    Together, the proofs of the forward and backward implications yield the equivalence
    \begin{gather*}
        \check{\tau} > \tilde{\tau}
        \iff
        \frac{\check{D}'_S}{\check{D}'_N} > \frac{C''_N}{C''_S}.
    \end{gather*}
\end{proof}

\subsection{Generalization of Proposition 1} \label{Assec: Generalization of Proposition 1}

This section generalizes Proposition \ref{Proposition: utilitarian vs Negishi} to a setting with $n$ regions. For an arbitrary number of regions $i \in \mathcal{I}=\{1,\dots,n\}$, the \textit{uniform carbon price optimization problem} is 
\begin{gather*}
    \max_{X_i, \tau} \sum_i L_i \alpha_i u \left(\frac{X_i}{L_i}\right) \\
    \begin{aligned}
    \text{subject to: }
    \label{E: constraints diff1period}
    &X_i = W_i - C_i(A_{i}(\tau)) - D_i(A(\tau)), \quad \forall i. \\
    \end{aligned}
\end{gather*}
Solving this problem with utilitarian weights yields the \textit{utilitaran uniform carbon price}, which is implicitely defined by
\begin{gather}
    \check{\tau} = C'(\check{A}_i) 
    = - \left( \sum_i \frac{1}{C''_i} \right) \frac{\sum_i  u'(\check{x}_i) D_i'(\check{A}(\check{\tau})) }
      {\sum_i u'(\check{x}_i) \frac{1}{C''_i}}.
\end{gather}
The Negishi-weighted carbon price is implicitely defined by
\begin{gather}
\label{eq: Negishi carbon price (n regions)}
    \tilde{\tau} = C'(\tilde{A}_i) 
    = - \sum_i D_i'(\tilde{A}(\tilde{\tau})).
\end{gather}
Proposition \ref{Proposition: utilitarian vs Negishi} then generalizes as follows.
\begin{proposition}[{\(n\)-region analogue of Proposition~\ref{Proposition: utilitarian vs Negishi}}]
\label{Proposition: utilitarian vs Negishi (n regions)}
    Let $\mathcal{I}=\{1,\dots,n\}$.
    The utilitarian uniform carbon price is greater than the Negishi-weighted carbon price, that is $\check{\tau} > \tilde{\tau}$, if and only if 
    $\frac{\sum_i \check{u}_i ' \check{D} _i'}{ \sum_i \check{D}'_i}
    >
    \frac{\sum_i \check{u}'_i \frac{1}{C''_i} }{ \sum_i \frac{1}{C''_i} } $.
\end{proposition}

\begin{proof}
We split the proof into the forward and backward implications.

    \vspace{6pt}
    \myuline{Proof of forward implication}: $\check{\tau} > \tilde{\tau}$  $\implies$ 
    $\frac{\sum_i \check{u}_i ' \check{D} _i'}{ \sum_i \check{D}'_i}
 >
 \frac{\sum_i \check{u}'_i \frac{1}{C''_i} }{ \sum_i \frac{1}{C''_i} }$ .
    Let us ask under which conditions $\check{\tau} > \tilde{\tau}$, or equivalently, $\check{C}' > \tilde{C}'$. First note that, for strictly convex abatement cost functions, $\check{C}' > \tilde{C}'$ implies $\check{A}_i > \tilde{A}_i$ for all $i$, and thus $\check{A} > \tilde{A}$. For strictly convex damage functions, this implies $\check{D}'_i > \tilde{D}'_i$ for all $i$ (note that marginal damages of abatement are negative).
    We have $\check{C}' > \tilde{C}'$ if and only if
    \begin{gather*}
        - \sum_i \frac{1}{C''_i} \frac{\sum_i \check{u}_i ' \check{D} _i' }
      {\sum_i \check{u}'_i \frac{1}{C''_i}}
        >
        - \sum_i \tilde{D}'_i .
    \end{gather*}
Multiply by the denominator on the LHS (which is positive)
  \begin{gather*}
        - \sum_i \frac{1}{C''_i} 
        \sum_i \check{u}_i ' \check{D} _i'
        >
        - \sum_i \check{u}'_i \frac{1}{C''_i} \sum_i \tilde{D}'_i .
    \end{gather*}
Divide by $\sum_i \frac{1}{C''_i}$ and  $(- \sum_i \tilde{D}'_i)$ (which are both positive)
\begin{gather*}
       \frac{\sum_i \check{u}_i ' \check{D} _i'}{ \sum_i \tilde{D}'_i }
       >
\frac{\sum_i \check{u}'_i \frac{1}{C''_i} }
{ \sum_i \frac{1}{C''_i} }.
    \end{gather*}
    Since $0>\check{D}'_i > \tilde{D}'_i$ for all $i$ (note that marginal damages of abatement are negative) and $0 > \sum_i \check{u}_i ' \check{D} _i'$, we have
\begin{gather*}
       \frac{\sum_i \check{u}_i ' \check{D} _i'}{ \sum_i \check{D}'_i }
       > 
       \frac{\sum_i \check{u}_i ' \check{D} _i'}{ \sum_i \tilde{D}'_i }
       >
\frac{\sum_i \check{u}'_i \frac{1}{C''_i} }
{ \sum_i \frac{1}{C''_i} }.
    \end{gather*}
We have thus shown that  $\check{\tau} > \tilde{\tau}$  $\implies$ 
    $\frac{\sum_i \check{u}_i ' \check{D} _i'}{ \sum_i \check{D}'_i}
    > \frac{\sum_i \check{u}'_i \frac{1}{C''_i} }{ \sum_i \frac{1}{C''_i} }$.

 \myuline{Proof of backward implication}: 
 $\frac{\sum_i \check{u}_i ' \check{D} _i'}{ \sum_i \check{D}'_i}
 >
 \frac{\sum_i \check{u}'_i \frac{1}{C''_i} }{ \sum_i \frac{1}{C''_i} }$  $\implies$ $\check{\tau} > \tilde{\tau}$.
We will prove the contrapositive of the stated result. That is, we will prove 
\begin{gather*}
    \check{\tau} \leq \tilde{\tau}
    \implies
    \frac{\sum_i \check{u}_i ' \check{D} _i'}{ \sum_i \check{D}'_i}
    \leq
    \frac{\sum_i \check{u}'_i \frac{1}{C''_i} }{ \sum_i \frac{1}{C''_i} }.
\end{gather*}
    We start by establishing the implications of $\check{\tau} \leq \tilde{\tau}$, or equivalently, $\check{C}' \leq \tilde{C}'$.

    First note that, for strictly convex abatement cost functions, $\check{C}' \leq \tilde{C}'$ implies $\check{A}_i \leq \tilde{A}_i$ for all $i$, and thus $\check{A} \leq \tilde{A}$. For strictly convex damage functions, this implies $\check{D}'_i \leq \tilde{D}'_i$ for all $i$ (note that marginal damages of abatement are negative).
    Next, note that $\check{C}' \leq \tilde{C}'$ if and only if
     \begin{gather*}
        - \sum_i \frac{1}{C''_i} \frac{\sum_i \check{u}_i ' \check{D} _i' }
      {\sum_i \check{u}'_i \frac{1}{C''_i}}
        \leq
        - \sum_i \tilde{D}'_i .
    \end{gather*}
Multiply by the denominator on the LHS (which is positive)
  \begin{gather*}
        - \sum_i \frac{1}{C''_i} 
        \sum_i \check{u}_i ' \check{D} _i'
        \leq
        - \sum_i \check{u}'_i \frac{1}{C''_i} \sum_i \tilde{D}'_i .
    \end{gather*}
Divide by $\sum_i \frac{1}{C''_i}$ and  $(- \sum_i \tilde{D}'_i)$ (which are both positive)
\begin{gather*}
       \frac{\sum_i \check{u}_i ' \check{D} _i'}{ \sum_i \tilde{D}'_i }
       \leq
\frac{\sum_i \check{u}'_i \frac{1}{C''_i} }
{ \sum_i \frac{1}{C''_i} }.
    \end{gather*}
  Since $\check{D}'_i \leq \tilde{D}'_i$ for all $i$ (note that marginal damages of abatement are negative) and $\sum_i \check{u}_i ' \check{D} _i' < 0$, we have
\begin{gather*}
       \frac{\sum_i \check{u}_i ' \check{D} _i'}{ \sum_i \check{D}'_i }
       \leq
       \frac{\sum_i \check{u}_i ' \check{D} _i'}{ \sum_i \tilde{D}'_i }
       \leq
\frac{\sum_i \check{u}'_i \frac{1}{C''_i} }
{ \sum_i \frac{1}{C''_i} }.
    \end{gather*}
We have thus shown that 
    \begin{gather*}
        \check{\tau} \leq \tilde{\tau}
        \implies
        \frac{\sum_i \check{u}_i ' \check{D} _i'}{ \sum_i \check{D}'_i}
        \leq
        \frac{\sum_i \check{u}'_i \frac{1}{C''_i} }{ \sum_i \frac{1}{C''_i} }.
    \end{gather*}
By contraposition, we therefore have
\begin{gather*}
        \check{\tau} > \tilde{\tau}
        \implies
       \frac{\sum_i \check{u}_i ' \check{D} _i'}{ \sum_i \check{D}'_i}
       >
       \frac{\sum_i \check{u}'_i \frac{1}{C''_i} }{ \sum_i \frac{1}{C''_i} }.
\end{gather*}

Together, the proofs of the forward and backward implications yield the equivalence
   \begin{gather*}
   \check{\tau} > \tilde{\tau}
   \iff
    \frac{\sum_i \check{u}_i ' \check{D} _i'}{ \sum_i \check{D}'_i}
    >
    \frac{\sum_i \check{u}'_i \frac{1}{C''_i} }{ \sum_i \frac{1}{C''_i} }  
    .
\end{gather*}
\end{proof}

The intuition of the two-region case carries over to the setting with an arbirary number of regions: 
the utilitarian uniform carbon price exceeds the Negishi-weighted carbon price
when poorer regions tend to face relatively higher marginal damages and steeper marginal abatement cost functions.


\subsection{Proof of Corollary 1} \label{Assec: Proof of Corollary 1}

Proposition \ref{Proposition: utilitarian vs Negishi} establishes that $\check{\tau} > \tilde{\tau}$, if and only if $\frac{\check{D}'_S}{\check{D}'_N} > \frac{C''_N}{C''_S}$. We can rewrite this condition as
\begin{gather*}
    \frac{\frac{d\check{D}_S}{d\check{\tau}}}{\frac{d\check{D}_S}{d\check{\tau}}}
    =
    \frac{\frac{d\check{D}_S}{d\check{A}}}{\frac{d\check{D}_S}{d\check{A}}}
    >
    \frac{\frac{d\check{C}_N'}{d\check{A}_N}}{\frac{d\check{C}_S'}{d\check{A}_S}}
    =
    \frac{\frac{d\check{A}_S}{d\check{\tau}}}{\frac{d\check{A}_N}{d\check{\tau}}}
    =
    \frac{\frac{d\check{A}_S}{d\check{A}}}{\frac{d\check{A}_N}{d\check{A}}}
    =
     \frac{\frac{d\check{C}_S}{d\check{A}}}{\frac{d\check{C}_N}{d\check{A}}}
    =
     \frac{\frac{d\check{C}_S}{d\check{\tau}}}{\frac{d\check{C}_N}{d\check{\tau}}},
\end{gather*}
where $\frac{dA_i}{d\tau_i} = \frac{1}{C_i''}$, and the third equality on the right-hand side follows from $\frac{d\check{C}_S}{d\check{A}_S} = \frac{d\check{C}_N}{d\check{A}_N}$ .

This establishes that 
\begin{gather*}
    \frac{-\frac{d\check{D}_S}{d\check{\tau}}}{\frac{d\check{C}_S}{d\check{\tau}}} > \frac{-\frac{d\check{D}_N}{d\check{\tau}}}{\frac{d\check{C}_N}{d\check{\tau}}}.
\end{gather*}

It remains to be shown that the left-hand side is greater than one, while the right-hand side is less than one. I utilize Proposition \ref{Proposition: U>N iff S>N} and Lemma \ref{Lemma: U&N in between preferred carbon price} to show this.

From Lemma \ref{Lemma: U&N in between preferred carbon price} we know that the utilitarian uniform carbon price ($\check{\tau}$) and the Negishi-weighted carbon price ($\tilde{\tau}$) are in between the preferred uniform carbon prices of the Global North ($\mathring{\tau}^N$) and the Global South ($\mathring{\tau}^S$), unless they all coincide. Moreover, from Proposition \ref{Proposition: U>N iff S>N} we know that the utilitarian uniform carbon price ($\check{\tau}$)  is greater than the Negishi-weighted carbon price ($\tilde{\tau}$) if and only if the preferred uniform carbon price of the Global South ($\mathring{\tau}^S$) is greater than the preferred uniform carbon price of the Global North ($\mathring{\tau}^N$). Therefore, $\check{\tau} > \tilde{\tau}$ implies $\mathring{\tau}^S > \check{\tau} > \tilde{\tau} > \mathring{\tau}^N$.

From Equation~\eqref{E: Preferred uniform carbon price wrt carbon price} we know that the marginal benefit-cost ratio with respect to the uniform carbon price equals one at the preferred uniform carbon price. That is, 
\begin{gather*}
\frac{- \frac{d{D}_i(\mathring{A}(\mathring{\tau}^i))}{d\mathring{\tau}^i}}
{\frac{d C_i(\mathring{A}_i(\mathring{\tau}^i))}{d \mathring{\tau}^i}}
=
1.
    \end{gather*}
We can relate this to the marginal benefit-cost ratios at the utilitarian uniform carbon price. 

For the North, we have
$- \frac{d{D}_N(\mathring{A}(\mathring{\tau}^N))}{d\mathring{\tau}^N}
>
- \frac{d{D}_N(\check{A}(\check{\tau}))}{d\check{\tau}}$. To see this, note that we have $\frac{d\mathring{\tau}^N}{d\mathring{A}(\mathring{\tau}^N)} = \frac{d\check{\tau}}{d\check{A}(\check{\tau})}$ since $\frac{d^3C(A_i)}{dA_i^3}=0$ for all $A_i$\footnote{
This can be seen as follows:
\begin{gather*}
    \frac{d\tau}{dA(\tau)}
    = \frac{d\tau}{dA_N(\tau) + dA_S(\tau)}
    = \frac{d\tau}{\frac{1}{C_N''} d\tau + \frac{1}{C_S''} d\tau}
    = \frac{1}{\frac{1}{C_N''} + \frac{1}{C_S''}},
\end{gather*}
where the second equality holds since $C''_i) = \frac{d\tau}{dA_i(\tau)}$. Notice that the last term is constant since $\frac{d^3C(A_i)}{dA_i^3}=0$.
}. Therefore,  $- \frac{d{D}_N(\mathring{A}(\mathring{\tau}^N))}{d\mathring{\tau}^N}
>
- \frac{d{D}_N(\check{A}(\check{\tau}))}{d\check{\tau}}$ if and only if $- \frac{d{D}_N(\mathring{A}(\mathring{\tau}^N))}{d\mathring{\tau}^N}
\frac{d\mathring{\tau}^N}{d\mathring{A}(\mathring{\tau}^N)}
>
- \frac{d{D}_N(\check{A}(\check{\tau}))}{d\check{\tau}}
\frac{d\check{\tau}}{d\check{A}(\check{\tau})}$,  which simplifies to 
$- \frac{d{D}_N(\mathring{A}(\mathring{\tau}^N))}{d\mathring{A}(\mathring{\tau}^N)}
>
- \frac{d{D}_N(\check{A}(\check{\tau}))}{d\check{A}(\check{\tau})}$. We know that this inequality holds from the strict convexity of the damage function and since $\check{\tau} > \mathring{\tau}^N$ implies $\check{A}_i(\check{\tau}) > \mathring{A}_i(\mathring{\tau}^N)$ for all $i$ for strictly convex abatement cost functions. 

Moreover, we have $\frac{d{C}_N(\mathring{A}(\mathring{\tau}^N))}{d\mathring{\tau}^N}
<
\frac{d{C}_N(\check{A}(\check{\tau}))}{d\check{\tau}}$ for the North. To see this, note that we can write this as 
$\frac{d{C}_N(\mathring{A}(\mathring{\tau}^N))}{d\mathring{A}(\mathring{\tau}^N)}
\frac{d\mathring{A}(\mathring{\tau}^N)}{d\mathring{\tau}^N}
<
\frac{d{C}_N(\check{A}(\check{\tau}))}{d\check{A}(\check{\tau})}
\frac{d\check{A}(\check{\tau})}{d\check{\tau}}$, which in turn can be rewritten as 
$\mathring{\tau}^N
\frac{1}{C''_N(\mathring{A}(\mathring{\tau}^N))}
<
\check{\tau}
\frac{1}{C''_N(\check{A}(\check{\tau}))}$. This inequality holds since  $\check{\tau} > \mathring{\tau}^N$ and $\frac{d^3C(A_i)}{dA_i^3}=0$ for all $A_i$.

Together, this establishes the following inequalities for the North:
\begin{gather*}
1 =
\frac{- \frac{d{D}_N(\mathring{A}(\mathring{\tau}^N))}{d\mathring{\tau}^N}}
{\frac{d C_N(\mathring{A}_N(\mathring{\tau}^N))}{d \mathring{\tau}^N}}
>
\frac{- \frac{d{D}_N(\check{A}(\check{\tau}))}{d\check{\tau}}}
{\frac{d C_N(\mathring{A}_N(\mathring{\tau}^N))}{d \mathring{\tau}^N}}
>
\frac{- \frac{d{D}_N(\check{A}(\check{\tau}))}{d\check{\tau}}}
{\frac{d C_N(\check{A}_N(\check{\tau}))}{d \check{\tau}}}.
    \end{gather*}

Conversely, for the South, we have
$- \frac{d{D}_S(\mathring{A}(\mathring{\tau}^S))}{d\mathring{\tau}^S}
<
- \frac{d{D}_S(\check{A}(\check{\tau}))}{d\check{\tau}}$. To see this, note that we have $\frac{d\mathring{\tau}^S}{d\mathring{A}(\mathring{\tau}^S)} = \frac{d\check{\tau}}{d\check{A}(\check{\tau})}$ since $\frac{d^3C(A_i)}{dA_i^3}=0$ for all $A_i$. Therefore,  $- \frac{d{D}_S(\mathring{A}(\mathring{\tau}^S))}{d\mathring{\tau}^S}
<
- \frac{d{D}_S(\check{A}(\check{\tau}))}{d\check{\tau}}$ if and only if $- \frac{d{D}_N(\mathring{A}(\mathring{\tau}^S))}{d\mathring{\tau}^S}
\frac{d\mathring{\tau}^S}{d\mathring{A}(\mathring{\tau}^S)}
<
- \frac{d{D}_S(\check{A}(\check{\tau}))}{d\check{\tau}}
\frac{d\check{\tau}}{d\check{A}(\check{\tau})}$,  which simplifies to 
$- \frac{d{D}_S(\mathring{A}(\mathring{\tau}^S))}{d\mathring{A}(\mathring{\tau}^S)}
<
- \frac{d{D}_S(\check{A}(\check{\tau}))}{d\check{A}(\check{\tau})}$. We know that this inequality holds from the strict convexity of the damage function and since $\check{\tau} < \mathring{\tau}^S$ implies $\check{A}_i(\check{\tau}) < \mathring{A}_i(\mathring{\tau}^S)$ for all $i$ for strictly convex abatement cost functions. 

Moreover, we have $\frac{d{C}_S(\mathring{A}(\mathring{\tau}^S))}{d\mathring{\tau}^S}
>
\frac{d{C}_S(\check{A}(\check{\tau}))}{d\check{\tau}}$ for the South. To see this, note that we can write this as 
$\frac{d{C}_S(\mathring{A}(\mathring{\tau}^S))}{d\mathring{A}(\mathring{\tau}^S)}
\frac{d\mathring{A}(\mathring{\tau}^S)}{d\mathring{\tau}^S}
>
\frac{d{C}_S(\check{A}(\check{\tau}))}{d\check{A}(\check{\tau})}
\frac{d\check{A}(\check{\tau})}{d\check{\tau}}$, which in turn can be rewritten as 
$\mathring{\tau}^S
\frac{1}{C''_S(\mathring{A}(\mathring{\tau}^S))}
>
\check{\tau}
\frac{1}{C''_S(\check{A}(\check{\tau}))}$. This inequality holds since  $\check{\tau} < \mathring{\tau}^S$ and $\frac{d^3C(A_i)}{dA_i^3}=0$ for all $A_i$.

Together, this establishes the following inequalities for the South:
\begin{gather*}
1 =
\frac{- \frac{d{D}_S(\mathring{A}(\mathring{\tau}^S))}{d\mathring{\tau}^S}}
{\frac{d C_S(\mathring{A}_S(\mathring{\tau}^S))}{d \mathring{\tau}^S}}
>
\frac{- \frac{d{D}_S(\check{A}(\check{\tau}))}{d\check{\tau}}}
{\frac{d C_S(\mathring{A}_S(\mathring{\tau}^S))}{d \mathring{\tau}^S}}
>
\frac{- \frac{d{D}_S(\check{A}(\check{\tau}))}{d\check{\tau}}}
{\frac{d C_S(\check{A}_S(\check{\tau}))}{d \check{\tau}}}.
    \end{gather*}

We have thus shown that
\begin{gather*}
    \frac{-\frac{d\check{D}_S}{d\check{\tau}}}{\frac{d\check{C}_S}{d\check{\tau}}}
    >
    1
    >
    \frac{-\frac{d\check{D}_N}{d\check{\tau}}}{\frac{d\check{C}_N}{d\check{\tau}}}.
\end{gather*}

\subsection{Proof of Lemma 1} \label{Assec: Proof of Lemma 1}

\begin{proof}
    We start by showing that $\hat{\tau}_S < \tilde{\tau}$. Suppose, towards a contradiction, that $\hat{\tau}_S \geq \tilde{\tau}$, which is the case if and only if
\begin{equation*}
    - \tilde{D}'_N - \tilde{D}'_S \leq - \hat{D}'_S - \frac{\hat{u}_N'} {\hat{u}_S'} \hat{D}'_N .
\end{equation*}

Since $\frac{\hat{u}_N'} {\hat{u}_S'} < 1$ this inequality is satisfied if and only if $\hat{A} < \tilde{A}$, and thus $\hat{D}'_i < \tilde{D}'_i$ for all $i$.  From the definition of the utilitarian differentiated carbon price, we know that $\hat{\tau}_S < \hat{\tau}_N$. However,  $\tilde{\tau} \leq \hat{\tau}_S < \hat{\tau}_N$ implies  $\hat{A} > \tilde{A}$, and we have thus arrived at a contradiction. Therefore, we must have $\hat{\tau}_S < \tilde{\tau}$.  $\hat{A}_S < \tilde{A}_S$ follows from the strict convexity of the abatement cost function (and the definitions of the optimal carbon prices).

Next, we show that $\tilde{\tau} < \hat{\tau}_N$. Suppose, towards a contradiction, that $\tilde{\tau} \geq \hat{\tau}_N$, which is the case if and only if
\begin{equation*}
    - \tilde{D}'_N - \tilde{D}'_S
    \geq - \frac{\hat{u}_S'} {\hat{u}_N'} \hat{D}'_S - \hat{D}'_N .
\end{equation*}

Since $\frac{\hat{u}_S'} {\hat{u}_N'} > 1$ this inequality is satisfied if and only if $\hat{A} > \tilde{A}$, and thus $\hat{D}'_i > \tilde{D}'_i$ for all $i$.  From the definition of the utilitarian differentiated carbon price, we know that $\hat{\tau}_S < \hat{\tau}_N$. However,  $\tilde{\tau} \geq \hat{\tau}_N > \hat{\tau}_S$ implies $\hat{A} < \tilde{A}$, and we have thus arrived at a contradiction. Therefore, we must have $\tilde{\tau} < \hat{\tau}_N$.  $\hat{A}_N > \tilde{A}_N$ follows from the strict convexity of the abatement cost function (and the definitions of the optimal carbon prices).
\end{proof}

\subsection{Proof of Proposition 2} \label{Assec: Proof of Proposition 2}

\begin{proof}

    We first need to obtain expressions for the abatement as a function of the marginal abatement cost. 
Since I assume that the abatement cost functions are smooth, strictly increasing, strictly convex, and the third derivative is equal to zero for all $A_i>0$, they have the following form\footnote{
To give an idea about what affects this constant, the characteristics that determine $k_i$ in the RICE model are the size of the economy, the baseline emissions intensity, the price of a backstop technology, and the parameter that determines the convexity of the abatement cost function (see Equation~\ref{E: AC function RICE}).
}: $C_i(A_i) = k_i A_i^2 + m_i A_i + n_i$, with $k_i >0$. 
The marginal abatement cost is thus $C_i'(A_i) = 2 k_i A_i + m_i$ and the second derivative is $C_i'' = 2 k_i$. Therefore,  $k_i = \frac{C''_i}{2}$.
We invert the marginal abatement cost function to obtain an expression for the abatement:
\begin{gather}
    \label{eq: inverted MAC}
    A_i = \frac{1}{2} (C'_i - m_i) k_i ^{-1}.
\end{gather}


    We split the proof into the forward and backward implications.

    
    \vspace{6pt}
    \myuline{Proof of forward direction}: $\hat{A} > \tilde{A}$  $\implies$ $\frac{\hat{u}_S'}{\hat{u}_N'} \frac{\hat{D}'_S}{\hat{D}'_N} > \frac{C''_N}{C''_S}$.
    
    We begin by establishing the implications of $\hat{A} > \tilde{A}$. 
First, $\hat{A} > \tilde{A}$ implies $\hat{A}_N + \hat{A}_S > \tilde{A}_N + \tilde{A}_S$. 
Therefore, $\hat{A} > \tilde{A}$ if and only if

\begin{gather*}
\begin{aligned}
     \tilde{C}'_N k_N^{-1}+
     \tilde{C}'_S k_S^{-1}
    &<  \hat{C}'_N k_N^{-1}+
     \hat{C}'_S k_S^{-1}.
\end{aligned}
\end{gather*}

Plugging in the expressions for the marginal abatement costs detailed in Definitions \ref{Def: Negishi-weighted carbon price} and \ref{Def: utilitarian differentiated carbon price}, and rewriting, we get
\begin{gather*}
\begin{aligned}
    \left( \hat{D}'_N - \tilde{D}'_N + \frac{\hat{u}_S'}{\hat{u}_N'} \hat{D}'_S- \tilde{D}'_S  \right) k_N^{-1}
    < \left(\tilde{D}'_S - \hat{D}'_S + \tilde{D}'_N     
     - \frac{\hat{u}_N'}{\hat{u}_S'} \hat{D}'_N   \right) k_S^{-1}.
\end{aligned}
\end{gather*}

Next, note that $\tilde{A} < \hat{A}$ implies  $\tilde{D'_i} < \hat{D'_i}$, for all $i$.
Therefore, the previous inequality implies\footnote{
To see this, note that $\tilde{D'_i} < \hat{D'_i}$ implies the following inequalities:
\begin{gather*}
\begin{aligned}
\left(\frac{\hat{u}_S'}{\hat{u}_N'} \hat{D}'_S - \hat{D}'_S \right) k_N^{-1} 
< \left(\frac{\hat{u}_S'}{\hat{u}_N'} \hat{D}'_S - \tilde{D}'_S \right) k_N^{-1}
   < \left( \hat{D}'_N - \tilde{D}'_N + \frac{\hat{u}_S'}{\hat{u}_N'} \hat{D}'_S - \tilde{D}'_S  \right) k_N^{-1} \\
    < \left(\tilde{D}'_S - \hat{D}'_S + \tilde{D}'_N     
     - \frac{\hat{u}_N'}{\hat{u}_S'} \hat{D}'_N   \right) k_S^{-1}
     < \left(\tilde{D}'_N     
     - \frac{\hat{u}_N'}{\hat{u}_S'} \hat{D}'_N   \right) k_S^{-1} 
     < \left(\hat{D}'_N     
     - \frac{\hat{u}_N'}{\hat{u}_S'} \hat{D}'_N   \right) k_S^{-1}.
\end{aligned}
\end{gather*}
}
\begin{gather*}
\begin{aligned}
 \left(\frac{\hat{u}_S'}{\hat{u}_N'} \hat{D}'_S - \hat{D}'_S \right) k_N^{-1}
     < \left(\hat{D}'_N     
     - \frac{\hat{u}_N'}{\hat{u}_S'} \hat{D}'_N   \right) k_S^{-1},
\end{aligned}
\end{gather*}
which we can rewrite as (recall that $\hat{D}'_i < 0$ so the inequality flips)
\begin{gather*}\label{E:Udiff > N, quadratic AC}
\begin{aligned}
    \frac{\hat{u}_S'}{\hat{u}_N'}
     \frac{\hat{D}'_S}{\hat{D}'_N}
    > 
    \frac{k_N}{k_S}.  
\end{aligned}
\end{gather*}







Using $k_i = \frac{C''_i}{2}$, we get
\begin{gather*}
\begin{aligned}
    \frac{\hat{u}'_S}{\hat{u}'_N}
    \frac{\hat{D}'_S}{\hat{D}'_N}
    >
    \frac{C''_N}{C''_S}.
\end{aligned}
\end{gather*}

\vspace{6pt}
 \myuline{Proof of backward direction}:   $\frac{\hat{D}'_S}{\hat{D}'_N} > \frac{\hat{u}_N'}{\hat{u}_S'} \frac{C''_N}{C''_S}$ $\implies$  $\hat{A} > \tilde{A}$.

 In order to derive a contradiction, suppose that $\frac{\hat{D}'_S}{\hat{D}'_N} > \frac{\hat{u}_N'}{\hat{u}_S'} \frac{C''_N}{C''_S}$ $\implies$  $\hat{A} \leq \tilde{A}$.
We start by establishing the implications of $\hat{A} \leq \tilde{A}$. 
$\hat{A} \leq \tilde{A}$ implies $\hat{A}_N + \hat{A}_S \leq \tilde{A}_N + \tilde{A}_S$. 
Therefore, $\hat{A} \leq \tilde{A}$ if and only if
\begin{gather*}
\begin{aligned}
     \tilde{C}'_N k_N^{-1}+
     \tilde{C}'_S k_S^{-1}
    &\geq  \hat{C}'_N k_N^{-1}+
     \hat{C}'_S k_S^{-1}.
\end{aligned}
\end{gather*}

Plugging in the expressions for the marginal abatement costs detailed in Definitions \ref{Def: Negishi-weighted carbon price} and \ref{Def: utilitarian differentiated carbon price}, and rewriting, we get 
\begin{gather*}
\begin{aligned}
   \left( \hat{D}'_N - \tilde{D}'_N + \frac{\hat{u}_S'}{\hat{u}_N'} \hat{D}'_S- \tilde{D}'_S  \right) k_N^{-1}
    \geq \left(\tilde{D}'_S - \hat{D}'_S + \tilde{D}'_N     
     - \frac{\hat{u}_N'}{\hat{u}_S'} \hat{D}'_N   \right) k_S^{-1}.
\end{aligned}
\end{gather*}
Using $k_i = \frac{C_i''}{2}$, we get
\begin{gather}\label{E: Udiff inequality1}
\begin{aligned}
   \left( \hat{D}'_N - \tilde{D}'_N + \frac{\hat{u}_S'}{\hat{u}_N'} \hat{D}'_S- \tilde{D}'_S  \right) \frac{1}{C_N''}
    \geq \left(\tilde{D}'_S - \hat{D}'_S + \tilde{D}'_N     
     - \frac{\hat{u}_N'}{\hat{u}_S'} \hat{D}'_N   \right) \frac{1}{C_S''}.
\end{aligned}
\end{gather}


Next, note that $\tilde{A} \geq \hat{A}$ implies  $\tilde{D'_i} \geq \hat{D'_i}$, for all $i$.
$\tilde{D'_i} \geq \hat{D'_i}$ for all $i$ and $\hat{u}_S' > \hat{u}_N'$ imply
\begin{equation}\label{E: Udiff inequality2}
     \hat{D}'_N - \tilde{D}'_N + \frac{\hat{u}_S'}{\hat{u}_N'} \hat{D}'_S- \tilde{D}'_S  < 0.
\end{equation}

Moreover, Equations \eqref{E: Udiff inequality1} and \eqref{E: Udiff inequality2} imply
\begin{equation*}\label{E: Udiff inequality3}
    \tilde{D}'_S - \hat{D}'_S + \tilde{D}'_N     
     - \frac{\hat{u}_N'}{\hat{u}_S'} \hat{D}'_N < 0,
\end{equation*}
since $C_i''>0$ for all $i$.

We therefore know that the inequality flips upon cross-division, yielding
\begin{gather*}
\begin{aligned}
    \frac{C_N''}{C_S''}
    \geq
   \frac{ \hat{D}'_N - \tilde{D}'_N + \frac{\hat{u}_S'}{\hat{u}_N'} \hat{D}'_S- \tilde{D}'_S  }
   {\tilde{D}'_S - \hat{D}'_S + \tilde{D}'_N     
     - \frac{\hat{u}_N'}{\hat{u}_S'} \hat{D}'_N}.
\end{aligned}
\end{gather*}

Multiplying both sides by $\frac{\hat{u}_N'}{\hat{u}_S'}$ and collecting common terms, we have
\begin{gather*}
\begin{aligned}
    \frac{\hat{u}_N'}{\hat{u}_S'} \frac{C_N''}{C_S''}
    \geq
   \frac{ \hat{u}_N' \left( \hat{D}'_N - \tilde{D}'_N - \tilde{D}'_S \right)+ \hat{u}_S' \hat{D}'_S  }
   {\hat{u}_S' \left( \tilde{D}'_S -  \hat{D}'_S + \tilde{D}'_N \right) - \hat{u}_N' \hat{D}'_N}.
\end{aligned}
\end{gather*}

So far, we have established that
\begin{gather}\label{E: N>Udiff implication}
\begin{aligned}
\tilde{A} \geq \hat{A} \iff
   \frac{\hat{u}_N'}{\hat{u}_S'} \frac{C_N''}{C_S''}
    \geq
   \frac{ \hat{u}_N' \left( \hat{D}'_N - \tilde{D}'_N - \tilde{D}'_S \right)+ \hat{u}_S' \hat{D}'_S  }
   {\hat{u}_S' \left( \tilde{D}'_S -  \hat{D}'_S + \tilde{D}'_N \right) - \hat{u}_N' \hat{D}'_N}.
\end{aligned}
\end{gather}
 Next, we show that $\frac{\hat{u}_S'}{\hat{u}_N'} \frac{\hat{D}'_S}{\hat{D}'_N} > \frac{C''_N}{C''_S}$ $\implies$  $\hat{A} \leq \tilde{A}$ yields a contradiction.   We start by rearranging $\frac{\hat{u}_S'}{\hat{u}_N'} \frac{\hat{D}'_S}{\hat{D}'_N} > \frac{C''_N}{C''_S}$ to $\frac{\hat{D}'_S}{\hat{D}'_N} > \frac{\hat{u}_N'}{\hat{u}_S'} \frac{C''_N}{C''_S}$. We then obtain the following contradiction:
 \begin{gather*}
     \begin{aligned}
         \frac{\hat{u}_N'}{\hat{u}_S'} \frac{C''_N}{C''_S}
         &< \frac{\hat{D}'_S}{\hat{D}'_N} \\
         &= \frac{\hat{u}_S' \hat{D}'_S - \hat{u}_N' \hat{D}'_S}{\hat{u}_S' \hat{D}'_N - \hat{u}_N' \hat{D}'_N}\\
         &\leq \frac{\hat{u}_S' \hat{D}'_S - \hat{u}_N' \tilde{D}'_S}{\hat{u}_S' \tilde{D}'_N - \hat{u}_N' \hat{D}'_N} \\
         &\leq \frac{ \hat{u}_N' \left( \hat{D}'_N - \tilde{D}'_N - \tilde{D}'_S \right)+ \hat{u}_S' \hat{D}'_S  }
   {\hat{u}_S' \left( \tilde{D}'_S -  \hat{D}'_S + \tilde{D}'_N \right) - \hat{u}_N' \hat{D}'_N} \\
        &\leq \frac{\hat{u}_N'}{\hat{u}_S'} \frac{C''_N}{C''_S}.
     \end{aligned}
 \end{gather*}

where the second and third inequalities follow from $\tilde{D}'_i \geq \check{D}'_i$ for all $i$ and the fact that the denominator and the numerator are negative  by Equations \eqref{E: Udiff inequality2} and \eqref{E: Udiff inequality3} \footnote{
    Note that the denominator in the third line is negative because it is less than the negative denominator in the fourth line. This can be seen as follows:
    \begin{gather*}
        \hat{u}_S' \tilde{D}'_N - \hat{u}_N' \hat{D}'_N
        < \hat{u}_S' ( \underbrace{\tilde{D}'_S -  \hat{D}'_S}_{\geq 0} + \tilde{D}'_N ) - \hat{u}_N' \hat{D}'_N
        < 0.
    \end{gather*}
    }.
    The last inequality follows from the implication of $\tilde{A} \geq \hat{A}$ documented in Equation~\eqref{E: N>Udiff implication}.

    We have reached the contradiction $\frac{\hat{u}_N'}{\hat{u}_S'} \frac{C''_N}{C''_S} < \frac{\hat{u}_N'}{\hat{u}_S'} \frac{C''_N}{C''_S}$. Hence,   $\frac{\hat{u}_S'}{\hat{u}_N'} \frac{\hat{D}'_S}{\hat{D}'_N} > \frac{C''_N}{C''_S}$  $\implies$  $\hat{A} \leq \tilde{A}$ is incorrect,
    and we have thus shown that we must have   $\frac{\hat{u}_S'}{\hat{u}_N'} \frac{\hat{D}'_S}{\hat{D}'_N} > \frac{C''_N}{C''_S}$  $\implies$  $\hat{A} > \tilde{A}$.

    Together, the proofs of the forward and backward implications yield the equivalence
    \begin{gather*}
        \hat{A} > \tilde{A}
        \iff
        \frac{\hat{u}_S'}{\hat{u}_N'} \frac{\hat{D}'_S}{\hat{D}'_N} > \frac{C''_N}{C''_S} .
    \end{gather*}
 
\end{proof}

\subsection{Generalization of Proposition 2} \label{Assec: Generalization of Proposition 2}

This section generalizes Proposition \ref{Prop: Aggregate abatement Udiff vs Negishi} to a setting with $n$ regions. For an arbitrary number of regions $i \in \mathcal{I} =\{1,\dots,n\} $, the \textit{differentiated carbon price optimization problem} is 
\begin{gather*}
    \max_{X_i, A_i} \sum_i L_i \alpha_i u \left(\frac{X_i}{L_i}\right) \\
    \begin{aligned}
    \text{subject to: }
    \label{E: constraints diff1period}
    &X_i = W_i - C_i(A_{i}) - D_i(A), \quad \forall i. \\
    \end{aligned}
\end{gather*}
Solving this problem with utilitarian weights yields the \textit{utilitaran differentiated carbon price}, which is implicitely defined by
\begin{gather*}
\hat{\tau}_i = C_i'(\hat{A}_i) = - \frac{1}{u'(\hat{x}_i)} \sum_{j \in \mathcal{I}} u'(\hat{x}_j) D_j'(\hat{A}).
\end{gather*}
The Negishi-weighted carbon price is given by Equation \eqref{eq: Negishi carbon price (n regions)}.
Proposition \ref{Prop: Aggregate abatement Udiff vs Negishi} then generalizes as follows.

\begin{proposition}[{\(n\)-region analogue of Proposition~\ref{Prop: Aggregate abatement Udiff vs Negishi}}]
Let $\mathcal{I}=\{1,\dots,n\}$. The global abatement under utilitarian differentiated carbon prices is greater than under the Negishi-weighted carbon price, that is $\hat{A}>\tilde{A}$, if and only if
\begin{gather*}
\frac{\sum_{j\in\mathcal{I}} \hat{u}_j' \hat{D}'_j}
{\sum_{j\in\mathcal{I}} \hat{D}'_j}
>
\frac{\sum_{i\in\mathcal{I}} \frac{1}{C_i''}}
{\sum_{i\in\mathcal{I}} \frac{1}{\hat{u}_i' C_i''}} .
\end{gather*}
\end{proposition}

\begin{proof}

We again split the proof into the forward and backward implications.

\vspace{6pt}
\myuline{Proof of forward direction}:
\begin{gather*}
\hat{A} > \tilde{A}
\implies
\frac{\sum_{j\in\mathcal{I}} \hat{u}_j' \hat{D}'_j}
{\sum_{j\in\mathcal{I}} \hat{D}'_j}
>
\frac{\sum_{i\in\mathcal{I}} \frac{1}{C_i''}}
{\sum_{i\in\mathcal{I}} \frac{1}{\hat{u}_i' C_i''}} .
\end{gather*}

We begin by establishing the implications of $\hat{A} > \tilde{A}$, using the expression for abatement in Equation \eqref{eq: inverted MAC}.
First, $\hat{A} > \tilde{A}$ implies $\sum_{i\in\mathcal{I}} \hat{A}_i > \sum_{i\in\mathcal{I}} \tilde{A}_i$.
Therefore, $\hat{A} > \tilde{A}$ if and only if
\begin{gather*}
\begin{aligned}
     \sum_{i\in\mathcal{I}} \hat{C}'_i k_i^{-1}
    &>
     \sum_{i\in\mathcal{I}} \tilde{C}'_i k_i^{-1}.
\end{aligned}
\end{gather*}

Plugging in the expressions for the marginal abatement costs under the Negishi and utilitarian differentiated solutions, and rewriting, we get
\begin{gather*}
\begin{aligned}
\left(\sum_{j\in\mathcal{I}} - \hat{u}_j' \hat{D}'_j\right)
\left(\sum_{i\in\mathcal{I}} \frac{k_i^{-1}}{\hat{u}_i'}\right)
>
\left(\sum_{j\in\mathcal{I}} - \tilde{D}'_j\right)
\left(\sum_{i\in\mathcal{I}} k_i^{-1}\right).
\end{aligned}
\end{gather*}

Next, note that $\tilde{A} < \hat{A}$ implies $\tilde{D}'_j < \hat{D}'_j$, for all $j\in\mathcal{I}$.
Since $\hat{D}'_j<0$ and $\tilde{D}'_j<0$, this implies
\begin{gather*}
\begin{aligned}
\sum_{j\in\mathcal{I}} - \hat{D}'_j
<
\sum_{j\in\mathcal{I}} - \tilde{D}'_j .
\end{aligned}
\end{gather*}
Therefore, the previous inequality implies
\begin{gather*}
\begin{aligned}
\left(\sum_{j\in\mathcal{I}} - \hat{u}_j' \hat{D}'_j\right)
\left(\sum_{i\in\mathcal{I}} \frac{k_i^{-1}}{\hat{u}_i'}\right)
>
\left(\sum_{j\in\mathcal{I}} - \hat{D}'_j\right)
\left(\sum_{i\in\mathcal{I}} k_i^{-1}\right).
\end{aligned}
\end{gather*}

Using $k_i = \frac{C_i''}{2}$ and rearranging yields

\begin{gather*}
\frac{\sum_{j\in\mathcal{I}} \hat{u}_j' \hat{D}'_j}
{\sum_{j\in\mathcal{I}} \hat{D}'_j}
>
\frac{\sum_{i\in\mathcal{I}} \frac{1}{C_i''}}
{\sum_{i\in\mathcal{I}} \frac{1}{\hat{u}_i' C_i''}} .
\end{gather*}
This proves the forward direction.

\vspace{6pt}
\myuline{Proof of backward direction}:
\begin{gather*}
\frac{\sum_{j\in\mathcal{I}} \hat{u}_j' \hat{D}'_j}
{\sum_{j\in\mathcal{I}} \hat{D}'_j}
>
\frac{\sum_{i\in\mathcal{I}} \frac{1}{C_i''}}
{\sum_{i\in\mathcal{I}} \frac{1}{\hat{u}_i' C_i''}}
\implies
\hat{A} > \tilde{A}.
\end{gather*}

In order to derive a contradiction, suppose that
\[
\frac{\sum_{j\in\mathcal{I}} \hat{u}_j' \hat{D}'_j}
{\sum_{j\in\mathcal{I}} \hat{D}'_j}
>
\frac{\sum_{i\in\mathcal{I}} \frac{1}{C_i''}}
{\sum_{i\in\mathcal{I}} \frac{1}{\hat{u}_i' C_i''}}
\implies
\hat{A} \leq \tilde{A}.
\]

We start by establishing the implications of $\hat{A} \leq \tilde{A}$.
$\hat{A} \leq \tilde{A}$ implies $\sum_{i\in\mathcal{I}} \hat{A}_i \leq \sum_{i\in\mathcal{I}} \tilde{A}_i$.
Therefore, $\hat{A} \leq \tilde{A}$ if and only if
\begin{gather*}
\begin{aligned}
     \sum_{i\in\mathcal{I}} \hat{C}'_i k_i^{-1}
    &\leq
     \sum_{i\in\mathcal{I}} \tilde{C}'_i k_i^{-1}.
\end{aligned}
\end{gather*}

Plugging in the expressions for the marginal abatement costs under the Negishi and utilitarian differentiated solutions, and rewriting, we get
\begin{gather}
\label{E: Udiff N inequality1}
\begin{aligned}
\left(\sum_{j\in\mathcal{I}} - \hat{u}_j' \hat{D}'_j\right)
\left(\sum_{i\in\mathcal{I}} \frac{k_i^{-1}}{\hat{u}_i'}\right)
\leq
\left(\sum_{j\in\mathcal{I}} - \tilde{D}'_j\right)
\left(\sum_{i\in\mathcal{I}} k_i^{-1}\right).
\end{aligned}
\end{gather}

Next, note that $\hat{A} \leq \tilde{A}$ implies $\hat{D}'_j \leq \tilde{D}'_j$, for all $j\in\mathcal{I}$.
Since $\hat{D}'_j<0$ and $\tilde{D}'_j<0$, this implies
\begin{gather}
\label{E: Udiff N inequality2}
\begin{aligned}
\sum_{j\in\mathcal{I}} - \hat{D}'_j
\geq
\sum_{j\in\mathcal{I}} - \tilde{D}'_j .
\end{aligned}
\end{gather}

Using $k_i=\frac{C_i''}{2}$, the assumed condition can be rewritten as
\begin{gather}
\label{E: Udiff N inequality3}
\begin{aligned}
\left(\sum_{j\in\mathcal{I}} - \hat{u}_j' \hat{D}'_j\right)
\left(\sum_{i\in\mathcal{I}} \frac{k_i^{-1}}{\hat{u}_i'}\right)
>
\left(\sum_{j\in\mathcal{I}} - \hat{D}'_j\right)
\left(\sum_{i\in\mathcal{I}} k_i^{-1}\right).
\end{aligned}
\end{gather}

Combining Equations~\eqref{E: Udiff N inequality2} and \eqref{E: Udiff N inequality3}, we obtain
\begin{gather*}
\begin{aligned}
\left(\sum_{j\in\mathcal{I}} - \hat{u}_j' \hat{D}'_j\right)
\left(\sum_{i\in\mathcal{I}} \frac{k_i^{-1}}{\hat{u}_i'}\right)
>
\left(\sum_{j\in\mathcal{I}} - \tilde{D}'_j\right)
\left(\sum_{i\in\mathcal{I}} k_i^{-1}\right),
\end{aligned}
\end{gather*}
which contradicts Equation~\eqref{E: Udiff N inequality1}.
Hence, 
\begin{gather*}
\frac{\sum_{j\in\mathcal{I}} \hat{u}_j' \hat{D}'_j}
{\sum_{j\in\mathcal{I}} \hat{D}'_j}
>
\frac{\sum_{i\in\mathcal{I}} \frac{1}{C_i''}}
{\sum_{i\in\mathcal{I}} \frac{1}{\hat{u}_i' C_i''}}
\implies\hat{A} > \tilde{A}.
\end{gather*}
Together, the proofs of the forward and backward implications yield the equivalence
\begin{gather*}
\hat{A} > \tilde{A}
\iff
\frac{\sum_{j\in\mathcal{I}} \hat{u}_j' \hat{D}'_j}
{\sum_{j\in\mathcal{I}} \hat{D}'_j}
>
\frac{\sum_{i\in\mathcal{I}} \frac{1}{C_i''}}
{\sum_{i\in\mathcal{I}} \frac{1}{\hat{u}_i' C_i''}} .
\end{gather*}

\end{proof}

As before, the intuition of the two-region case carries over to the setting with an arbirary number of regions: 
aggregate abatement under utilitarian differentiated  carbon prices exceeds the aggregate abatement under Negishi-weighted carbon prices
when poorer regions tend to face relatively higher marginal damages and steeper marginal abatement cost functions.

\subsection{Proof of Lemma 2} \label{Assec: Proof of Lemma 2}

I first prove foundational lemmas which act as building blocks to prove Lemma \ref{Lemma: U&N in between preferred carbon price}.








\begin{lemma}
    North's preferred uniform carbon price is less than the utilitarian uniform carbon price, that is $\mathring{\tau}^{N} < \check{\tau}$, if and only if $\frac{\check{D}'_S}{\check{D}'_N} > \frac{C''_N}{C''_S}$.
    \label{lemma: North<U}
\end{lemma}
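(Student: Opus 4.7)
The plan is to mirror the structure of the proof of Proposition~\ref{Proposition: utilitarian vs Negishi}, since the comparison $\mathring{\tau}^N < \check{\tau}$ has the same algebraic form: I am comparing the closed-form expression for $\check{\tau}$ from Definition~\ref{Def: utilitarian uniform carbon price} against $\mathring{\tau}^N = -D'_N(\mathring{A}^N)\frac{C''_N + C''_S}{C''_S}$ from Definition~\ref{Def: preferred uniform carbon price} specialized to $i = N$. Strict convexity of $C$ and $D$ translates carbon-price orderings into abatement and marginal-damage orderings at the two evaluation points $\check{A}$ and $\mathring{A}^N$; in particular, $\mathring{\tau}^N < \check{\tau}$ iff $\mathring{A}^N < \check{A}$ iff $D'_i(\mathring{A}^N) < \check{D}'_i < 0$ for both $i$. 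I split the proof into forward and backward implications.

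For the forward direction, I substitute the two closed forms into $\mathring{\tau}^N < \check{\tau}$, divide by the common positive factor $(C''_S + C''_N)$, and cross-multiply by the positive denominators $C''_S$ and $\check{u}'_N C''_S + \check{u}'_S C''_N$. After grouping terms this reduces to
\[
\check{u}'_N\, C''_S \bigl[D'_N(\mathring{A}^N) - \check{D}'_N\bigr] > \check{u}'_S \bigl[\check{D}'_S\, C''_S - D'_N(\mathring{A}^N)\, C''_N\bigr].
\]
The left-hand side is strictly negative because $\check{D}'_N > D'_N(\mathring{A}^N)$, so the bracketed term on the right must be even more negative, giving $|\check{D}'_S|\, C''_S > |D'_N(\mathring{A}^N)|\, C''_N$. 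Chaining this with $|D'_N(\mathring{A}^N)| > |\check{D}'_N|$ (again from $\mathring{A}^N < \check{A}$) yields $|\check{D}'_S|\, C''_S > |\check{D}'_N|\, C''_N$, which is exactly $\frac{\check{D}'_S}{\check{D}'_N} > \frac{C''_N}{C''_S}$.

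For the backward direction, I argue by contradiction, just as in the proof of Proposition~\ref{Proposition: utilitarian vs Negishi}. Assume $\frac{\check{D}'_S}{\check{D}'_N} > \frac{C''_N}{C''_S}$ but $\mathring{\tau}^N \geq \check{\tau}$, so that $\mathring{A}^N \geq \check{A}$ and $|\check{D}'_i| \geq |D'_i(\mathring{A}^N)|$. Repeating the manipulation above with all inequalities reversed gives $|\check{D}'_S|\, C''_S \leq |D'_N(\mathring{A}^N)|\, C''_N$, and chaining with $|\check{D}'_N|\, C''_N \geq |D'_N(\mathring{A}^N)|\, C''_N$ produces $\frac{\check{D}'_S}{\check{D}'_N} \leq \frac{C''_N}{C''_S}$, contradicting the hypothesis.

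The main obstacle will be careful sign-tracking: marginal damages are negative, so cross-multiplications and absolute-value substitutions flip inequality directions. A secondary wrinkle is that the two prices are evaluated at different abatement levels ($\check{A}$ versus $\mathring{A}^N$), so the bound on $|\check{D}'_S|\, C''_S$ comes out initially in terms of $|D'_N(\mathring{A}^N)|$ rather than $|\check{D}'_N|$; this gap is closed by one additional application of the monotonicity of $D'_i$ in $A$ implied by strict convexity of $D_i$.
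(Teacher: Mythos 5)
Your proposal is correct and follows essentially the same route as the paper's proof: a direct forward implication using the monotonicity of $A$, $C'$, and $D'$ under strict convexity, and a backward implication by contradiction, with only a cosmetic difference in that you carry the key bound in product form ($|\check{D}'_S|\,C''_S$ versus $|\mathring{D}^N_N\hspace{-0.3mm}'|\,C''_N$) where the paper uses a ratio form with a $\delta_N$ substitution. No gaps.
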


\begin{proof}
    We split the proof into the forward and backward implications.

    \vspace{6pt}
    \myuline{Proof of forward direction}: $\check{\tau} > \mathring{\tau}^{N}$  $\implies$ $\frac{\check{D}'_S}{\check{D}'_N} > \frac{C''_N}{C''_S}$.
    
    I begin by establishing the conditions under which $\check{\tau} > \mathring{\tau}^{N}$, or equivalently, $\check{C}' > \mathring{C}^{N}\hspace{-0.3mm}'$\footnote{
    While this notation is somewhat cumbersome, I use the notation $\mathring{C}^{i}\hspace{-0.3mm}'$ for clarity and conciseness as a short-hand for ${C}'_i(\mathring{A}_i^i)$, and I drop the subscript to reflect that ${C}'_i(\mathring{A}_i^i) = {C}'_{-i}(\mathring{A}_{-i}^i)$ under uniform carbon prices.
    }. First note that, for strictly convex abatement cost functions, $\check{C}' > \mathring{C}^{N}\hspace{-0.3mm}'$ implies $\check{A}_i > \mathring{A}_i^N$ for all $i$, and thus $\check{A} > \mathring{A}^N$.
    For strictly convex damage functions, this implies $\check{D}'_i > \mathring{D}_i^N\hspace{-0.3mm}'$ for all $i$ (note that marginal damages of abatement are negative).

    We have $\check{C}' > \mathring{C}^{N}\hspace{-0.3mm}'$ if and only if
    \begin{gather*}
        (-\check{u}'_N \check{D}'_N - \check{u}'_S \check{D}'_S) \frac{C''_S + C''_N}{\check{u}'_N C''_S + \check{u}'_S C''_N} >
        - \mathring{D}^N_N\hspace{-0.3mm}'
        \frac{C''_S + C''_{N}}{C''_{S}},
    \end{gather*}



    which can be rewritten as
    \begin{gather*}
        - \frac{\check{u}'_S}{\check{u}'_N} \check{D}'_S >
        - \mathring{D}^N_N\hspace{-0.3mm}'
        \left( 1 + \frac{\check{u}'_S C''_N}
        {\check{u}'_N C''_{S}}\right) + \check{D}'_N.
    \end{gather*}

    Using $\check{D}'_i > \mathring{D}_i^N\hspace{-0.3mm}'$, we have
    \begin{gather*}
    \begin{aligned}
        - \frac{\check{u}'_S}{\check{u}'_N} \check{D}'_S 
        &> - \mathring{D}^N_N\hspace{-0.3mm}' 
        \left( 1 + \frac{\check{u}'_S C''_N}{\check{u}'_N C''_{S}}\right) + \check{D}'_N \\
        &> - \check{D}'_N 
        \left( 1 + \frac{\check{u}'_S C''_N}{\check{u}'_N C''_{S}}\right) + \check{D}'_N \\
        &= - \check{D}'_N 
        \left(\frac{\check{u}'_S C''_N}{\check{u}'_N C''_{S}}\right).
    \end{aligned}
    \end{gather*}

    Rewriting yields
    \begin{gather*}
        \frac{\check{D}'_S}{\check{D}'_N}
        > \frac{C''_{N}}{C''_{S}}.
    \end{gather*}

We have thus shown that $\check{C}' > \mathring{C}^{N}\hspace{-0.3mm}' \implies \frac{\check{D}'_S}{\check{D}'_N} > \frac{C''_N}{C''_S}$.



    \vspace{6pt}
    \myuline{Proof of backward direction}: $\frac{\check{D}'_S}{\check{D}'_N} > \frac{C''_N}{C''_S}$  $\implies$ $\check{\tau} > \mathring{\tau}^N$.

    In order to derive a contradiction, suppose that $\frac{\check{D}'_S}{\check{D}'_N} > \frac{C''_N}{C''_S}$  $\implies$ $\check{\tau} \leq \mathring{\tau}^{N}$.

    We begin by establishing the implications of $\check{\tau} \leq \mathring{\tau}^{N}$, or equivalently, $\check{C}' \leq \mathring{C}^{N}\hspace{-0.3mm}'$. First note that, for strictly convex abatement cost functions, $\check{C}' \leq \mathring{C}^{N}\hspace{-0.3mm}'$ implies $\check{A}_i \leq \mathring{A}_i^N$ for all $i$, and thus $\check{A} \leq \mathring{A}^N$. For strictly convex damage functions, this implies $\check{D}'_i \leq \mathring{D}_i^N\hspace{-0.3mm}'$ for all $i$ (note that marginal damages of abatement are negative).

    Next, note that $\check{C}' \leq \mathring{C}^{N}\hspace{-0.3mm}'$ if and only if
    \begin{gather*}
        (-\check{u}'_N \check{D}'_N - \check{u}'_S \check{D}'_S) \frac{C''_S + C''_N}{\check{u}'_N C''_S + \check{u}'_S C''_N} \leq
        - \mathring{D}^N_N\hspace{-0.3mm}'
        \frac{C''_S + C''_{N}}
        {C''_{S}},
    \end{gather*}



    which can be rewritten as
    \begin{gather*}
        - \frac{\check{u}'_S}{\check{u}'_N} \check{D}'_S 
        \leq
        - \mathring{D}^N_N\hspace{-0.3mm}'
        \left( 1 + \frac{\check{u}'_S C''_N}
        {\check{u}'_N C''_{S}}\right) + \check{D}'_N.
    \end{gather*}

    Let us temporarily define $\delta_N \equiv \mathring{D}_N^N\hspace{-0.3mm}' - \check{D}'_N$\footnote{
    Note that I redefine $\delta_i$ below, keeping the same notation for simplicity.
    }. We know that $\delta_N \geq 0$ since $\check{D}'_N \leq \mathring{D}_N^N\hspace{-0.3mm}'$.
    Substitute $\check{D}'_N = \mathring{D}_N^N\hspace{-0.3mm}' - \delta_N$ into the previous expression to obtain
    \begin{gather*}
        - \frac{\check{u}'_S}{\check{u}'_N} \check{D}'_S 
        \leq
        - \mathring{D}^N_N\hspace{-0.3mm}'
        \left( 1 + \frac{\check{u}'_S C''_N}
        {\check{u}'_N C''_{S}}\right) + \mathring{D}^N_N\hspace{-0.3mm}' - \delta_N.
    \end{gather*}


    Rewriting yields
    \begin{gather*}
        \frac{\check{D}'_S}{\mathring{D}^N_N\hspace{-0.3mm}'} 
        \leq
        \frac{C''_N}{C''_{S}}
        - \underbrace{\frac{\delta_N}{- \mathring{D}^N_N\hspace{-0.3mm}'} 
        \frac{\check{u}'_N}{\check{u}'_S}}_{\geq 0}.
    \end{gather*}

    So far, we have established that
    \begin{gather}\label{E: U<P_N implication v2}
        \check{C}' \leq \mathring{C}^{N}\hspace{-0.3mm}'
        \iff
        \frac{\check{D}'_S}{\mathring{D}^N_N\hspace{-0.3mm}'} 
        \leq
        \frac{C''_N}{C''_{S}}
        - \underbrace{\frac{\delta_N}{- \mathring{D}^N_N\hspace{-0.3mm}'} 
        \frac{\check{u}'_N}{\check{u}'_S}}_{\geq 0}.
    \end{gather} 

     Next, we show that $\frac{\check{D}'_S}{\check{D}'_N} > \frac{C''_N}{C''_S}$  $\implies$ $\check{C}' \leq \mathring{C}^{N}\hspace{-0.3mm}'$ yields a contradiction:
     \begin{gather*}
    \begin{aligned}
        \frac{C''_N}{C''_S} < \frac{\check{D}'_S}{\check{D}'_N}
        \leq \frac{\check{D}'_S}{\mathring{D}^N_N\hspace{-0.3mm}'}
        \leq \frac{C''_N}{C''_{S}}
        - \underbrace{\frac{\delta_N}{- \mathring{D}^N_N\hspace{-0.3mm}'} 
        \frac{\check{u}'_N}{\check{u}'_S}}_{\geq 0}
        \leq \frac{C''_N}{C''_S},
    \end{aligned}
    \end{gather*}


    where the second inequality follows from $\check{D}'_i \leq \mathring{D}_i^N\hspace{-0.3mm}'$,
    and the third inequality follows from the implication of $\check{C}' \leq \mathring{C}^{N}\hspace{-0.3mm}'$ documented in Equation~\eqref{E: U<P_N implication v2}.

    We have reached the contradiction $\frac{C''_N}{C''_S} < \frac{C''_N}{C''_S}$. Hence, $\frac{\check{D}'_S}{\check{D}'_N} > \frac{C''_N}{C''_S}$  $\implies$ $\check{C}' \leq \mathring{C}^{N}\hspace{-0.3mm}'$ is incorrect,
    and we have thus shown that we must have $\frac{\check{D}'_S}{\check{D}'_N} > \frac{C''_N}{C''_S}  \implies \check{C}' > \mathring{C}^{N}\hspace{-0.3mm}'$.

    Together, the proofs of the forward and backward directions yield the equivalence
    \begin{gather*}
        \check{\tau} > \mathring{\tau}^{N}
        \iff
        \frac{\check{D}'_S}{\check{D}'_N} > \frac{C''_N}{C''_S}.
    \end{gather*}

\end{proof}




\begin{lemma}
    South's preferred uniform carbon price is greater than the utilitarian uniform carbon price, that is $\mathring{\tau}^{S} > \check{\tau}$, if and only if $\frac{\check{D}'_S}{\check{D}'_N} > \frac{C''_N}{C''_S}$.
    \label{lemma: South>U}
\end{lemma}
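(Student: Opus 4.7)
The plan is to mirror the two-part argument used for the preceding lemma on the North's preferred carbon price. The fact that both lemmas share the same right-hand condition $\frac{\check{D}'_S}{\check{D}'_N} > \frac{C''_N}{C''_S}$ is suggestive: this inequality simultaneously pushes $\mathring{\tau}^N$ below and $\mathring{\tau}^S$ above the utilitarian uniform price, which is exactly the combined content of the two lemmas.

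For the forward direction I would assume $\mathring{\tau}^S > \check{\tau}$, invoke strict convexity of $C_i$ and $D_i$ to deduce $\mathring{A}^S_i > \check{A}_i$ for all $i$ and hence $\mathring{D}^S_i{}' > \check{D}'_i$ (recalling that marginal damages of abatement are negative, so $\mathring{D}^S_i{}'$ is closer to zero). Next I would equate the explicit expressions for $\check{\tau}$ and $\mathring{\tau}^S$ from Definitions \ref{Def: utilitarian uniform carbon price} and \ref{Def: preferred uniform carbon price}, cancel the common positive factor $(C''_S + C''_N)$, cross-multiply by the positive denominator $\check{u}'_N C''_S + \check{u}'_S C''_N$, and divide through by $\check{u}'_S$. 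This reduces the inequality to a clean form relating $-\mathring{D}^S_S{}'$, $\check{D}'_S$, $\check{D}'_N$, and $C''_S/C''_N$. Substituting the bound $-\mathring{D}^S_S{}' < -\check{D}'_S$ on the right-hand side tightens the inequality to one purely in $\check{D}'_N$, $\check{D}'_S$, and the convexity ratio; dividing by $\check{D}'_N < 0$ (flipping the sense) then yields $\frac{\check{D}'_S}{\check{D}'_N} > \frac{C''_N}{C''_S}$.

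For the backward direction I would argue by contradiction: suppose the condition holds but $\mathring{\tau}^S \le \check{\tau}$. Strict convexity then gives the reverse damage inequality $\mathring{D}^S_i{}' \le \check{D}'_i$ for all $i$. Introducing $\delta_S \equiv \check{D}'_S - \mathring{D}^S_S{}' \ge 0$ and substituting $\check{D}'_S = \mathring{D}^S_S{}' + \delta_S$ into the reduced inequality produces a lower bound of the form $\frac{\check{D}'_N}{\mathring{D}^S_S{}'} \ge \frac{C''_S}{C''_N} + \text{(nonnegative)}$, so in particular $\frac{\check{D}'_N}{\mathring{D}^S_S{}'} \ge \frac{C''_S}{C''_N}$. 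Using $|\mathring{D}^S_S{}'| \ge |\check{D}'_S|$ (both negative) to pass to $\frac{\check{D}'_N}{\check{D}'_S} \ge \frac{\check{D}'_N}{\mathring{D}^S_S{}'}$ closes the chain with $\frac{\check{D}'_N}{\check{D}'_S} \ge \frac{C''_S}{C''_N}$, which contradicts the hypothesis $\frac{\check{D}'_S}{\check{D}'_N} > \frac{C''_N}{C''_S}$.

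The main obstacle I anticipate is sign bookkeeping. Marginal damages of abatement are negative whereas marginal utilities and abatement-cost convexities are positive, so several steps involve dividing by a negative quantity and flipping an inequality; in addition, the shift $\delta_S$ enters with the opposite orientation relative to the North lemma (in the South case $\check{D}'_S$ plays the role of the less-negative damage, whereas in the North case $\mathring{D}^N_N{}'$ did). Once these sign conventions are settled the algebra is essentially mechanical: the two carbon-price formulas share the same $(C''_S + C''_N)$ factor and the South-indexed quantities in $\mathring{\tau}^S$ play the dual structural role that the North-indexed quantities played in the earlier lemma.
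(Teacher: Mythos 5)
Your proposal is correct and follows essentially the same route as the paper's own proof: the forward direction uses the monotonicity implication $\check{D}'_i < \mathring{D}_i^S{}'$ to bound the right-hand side of the equated price formulas and then isolates $\frac{\check{D}'_S}{\check{D}'_N} > \frac{C''_N}{C''_S}$, and the backward direction argues by contradiction with the same shift $\delta_S \equiv \check{D}'_S - \mathring{D}_S^S{}' \geq 0$ and the same closing chain of inequalities. Your sign bookkeeping (dividing by negative marginal damages and flipping inequalities) matches the paper's treatment, so no gap remains.
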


\begin{proof}
    We split the proof into the forward and backward implications.

    \vspace{6pt}
    \myuline{Proof of forward direction}: $\mathring{\tau}^{S} > \check{\tau}$  $\implies$ $\frac{\check{D}'_S}{\check{D}'_N} > \frac{C''_N}{C''_S}$.
    
   I begin by establishing the conditions under which $\mathring{\tau}^{S} > \check{\tau}$, or equivalently, $\check{C}' < \mathring{C}^{S}\hspace{-0.3mm}'$. First note that, for strictly convex abatement cost functions, $\check{C}' < \mathring{C}^{S}\hspace{-0.3mm}'$ implies $\check{A}_i < \mathring{A}_i^S$ for all $i$, and thus $\check{A} < \mathring{A}^S$. For strictly convex damage functions, this implies $\check{D}'_i < \mathring{D}_i^S\hspace{-0.3mm}'$ for all $i$ (note that marginal damages of abatement are negative).

    We have  $\check{C}' < \mathring{C}^{S}\hspace{-0.3mm}'$ if and only if
    \begin{gather*}
        (-\check{u}'_N \check{D}'_N - \check{u}'_S \check{D}'_S) \frac{C''_S + C''_N}{\check{u}'_N C''_S + \check{u}'_S C''_N} <
        - \mathring{D}^S_S\hspace{-0.3mm}'
        \frac{C''_S + C''_{N}}{C''_{N}},
    \end{gather*}



    which can be rewritten as
    \begin{gather*}
        - \frac{\check{u}'_N}{\check{u}'_S}\check{D}'_N <
        - \mathring{D}^S_S\hspace{-0.3mm}'
        \left( 1 + \frac{\check{u}'_N C''_S}
        {\check{u}'_S C''_{N}}\right)
        + \check{D}'_S.
    \end{gather*}

    Using $\check{D}'_i < \mathring{D}_i^S\hspace{-0.3mm}'$, we have
    \begin{gather*}
    \begin{aligned}
        - \frac{\check{u}'_N}{\check{u}'_S}\check{D}'_N
        &< - \mathring{D}^S_S\hspace{-0.3mm}'
        \left( 1 + \frac{\check{u}'_N C''_S} {\check{u}'_S C''_{N}}\right) + \check{D}'_S \\
        &< - \check{D}'_S
        \left( 1 + \frac{\check{u}'_N C''_S} {\check{u}'_S C''_{N}}\right) + \check{D}'_S \\
        &= - \check{D}'_S
        \left(\frac{\check{u}'_N C''_S} {\check{u}'_S C''_{N}}\right).
    \end{aligned}
    \end{gather*}

    Rewriting yields
    \begin{gather*}
        \frac{\check{D}'_S}{\check{D}'_N}
        > \frac{C''_{N}}{C''_{S}}.
    \end{gather*}

We have thus shown that $\check{C}' < \mathring{C}^{S}\hspace{-0.3mm}' \implies \frac{\check{D}'_S}{\check{D}'_N} > \frac{C''_N}{C''_S}$.



    \vspace{6pt}
    \myuline{Proof of backward direction}: $\frac{\check{D}'_S}{\check{D}'_N} > \frac{C''_N}{C''_S}$  $\implies$ $\check{\tau} < \mathring{\tau}^{S}$.

    In order to derive a contradiction, suppose that $\frac{\check{D}'_S}{\check{D}'_N} > \frac{C''_N}{C''_S}$  $\implies$ $\check{\tau} \geq \mathring{\tau}^{S}$.

    We begin by establishing the implications of $\check{\tau} \geq \mathring{\tau}^{S}$, or equivalently $\check{C}' \geq \mathring{C}^{S}\hspace{-0.3mm}'$. First note that, for strictly convex abatement cost functions, $\check{C}' \geq \mathring{C}^{S}\hspace{-0.3mm}'$ implies $\check{A}_i \geq \mathring{A}_i^S$ for all $i$, and thus $\check{A} \geq \mathring{A}^S$. For strictly convex damage functions, this implies $\check{D}'_i \geq \mathring{D}_i^S\hspace{-0.3mm}'$ for all $i$ (note that marginal damages of abatement are negative).

    Next, note that $\check{C}' \geq \mathring{C}^{S}\hspace{-0.3mm}'$ if and only if
    \begin{gather*}
        (-\check{u}'_N \check{D}'_N - \check{u}'_S \check{D}'_S) \frac{C''_S + C''_N}{\check{u}'_N C''_S + \check{u}'_S C''_N} \geq
        - \mathring{D}^S_S\hspace{-0.3mm}'
        \frac{C''_S + C''_{N}}
        {C''_{N}},
    \end{gather*}



    which can be rewritten as
    \begin{gather*}
        - \frac{\check{u}'_N}{\check{u}'_S}\check{D}'_N
        \geq
        - \mathring{D}^S_S\hspace{-0.3mm}'
        \left( 1 + \frac{\check{u}'_N C''_S}
        {\check{u}'_S C''_{N}}\right)
        + \check{D}'_S.
    \end{gather*}

    Let us temporarily define $\delta_S \equiv \check{D}'_S - \mathring{D}_S^S\hspace{-0.3mm}'$. We know that $\delta_S \geq 0$ since $\check{D}'_S \geq \mathring{D}_S^S\hspace{-0.3mm}'$.
    Substitute $\check{D}'_S = \mathring{D}_S^S\hspace{-0.3mm}' + \delta_S$ into the previous expression to obtain
    \begin{gather*}
        - \frac{\check{u}'_N}{\check{u}'_S}\check{D}'_N
        \geq
        - \mathring{D}^S_S\hspace{-0.3mm}'
        \left( 1 + \frac{\check{u}'_N C''_S}
        {\check{u}'_S C''_{N}}\right)
        + \mathring{D}_i^S\hspace{-0.3mm}' + \delta_i.
    \end{gather*}


    Simplifying and rearranging yields
    \begin{gather*}
        \frac{\check{D}'_N}{\mathring{D}^S_S\hspace{-0.3mm}'} 
        \geq
        \frac{C''_S}{C''_{N}}
        + \underbrace{\frac{\delta_S}{- \mathring{D}^S_S\hspace{-0.3mm}'} 
        \frac{\check{u}'_S}{\check{u}'_N}}_{\geq 0} .
    \end{gather*}
    
    So far, we have established that
    \begin{gather}\label{E: U<P_S implication v2}
        \check{C}' \geq \mathring{C}^{S}\hspace{-0.3mm}'
        \iff
        \frac{\check{D}'_N}{\mathring{D}^S_S\hspace{-0.3mm}'} 
        \geq
        \frac{C''_S}{C''_{N}}
        + \underbrace{\frac{\delta_S}{- \mathring{D}^S_S\hspace{-0.3mm}'} 
        \frac{\check{u}'_S}{\check{u}'_N}}_{\geq 0}.
    \end{gather} 

     Next, we show that $\frac{\check{D}'_S}{\check{D}'_N} > \frac{C''_N}{C''_S}$  $\implies$ $\check{C}' \geq \mathring{C}^{S}\hspace{-0.3mm}'$ yields a contradiction.
     We start by rearranging $\frac{\check{D}'_S}{\check{D}'_N} > \frac{C''_N}{C''_S}$ to $\frac{\check{D}'_N}{\check{D}'_S} < \frac{C''_S}{C''_N}$. We then obtain the following contradiction:
    \begin{gather*}
    \begin{aligned}
        \frac{C''_S}{C''_N} > \frac{\check{D}'_N}{\check{D}'_S}
        \geq \frac{\check{D}'_N}{\mathring{D}^S_S\hspace{-0.3mm}'}
        \geq \frac{C''_S}{C''_{N}}
        + \underbrace{\frac{\delta_S}{- \mathring{D}^S_S\hspace{-0.3mm}'} 
        \frac{\check{u}'_S}{\check{u}'_N}}_{\geq 0}
        \geq \frac{C''_S}{C''_N} .
    \end{aligned}
    \end{gather*}
    

    where the second inequality follows from $\check{D}'_i \geq \mathring{D}_i^S\hspace{-0.3mm}'$, and the third inequality follows from the implication of $\check{C}' \geq \mathring{C}^{S}\hspace{-0.3mm}'$ documented in Equation~\eqref{E: U<P_S implication v2}.
    
    We have reached the contradiction $\frac{C''_S}{C''_N} > \frac{C''_S}{C''_N}$. Hence, $\frac{\check{D}'_S}{\check{D}'_N} > \frac{C''_N}{C''_S}$  $\implies$ $\check{C}' \geq \mathring{C}^{S}\hspace{-0.3mm}'$ is incorrect, and we have thus shown that we must have $\frac{\check{D}'_S}{\check{D}'_N} > \frac{C''_N}{C''_S}  \implies \check{C}' < \mathring{C}^{S}\hspace{-0.3mm}'$.
    
    Together, the proofs of the forward and backward directions yield the equivalence
    \begin{gather*}
        \check{\tau} < \mathring{\tau}^{S}
        \iff
        \frac{\check{D}'_S}{\check{D}'_N} > \frac{C''_N}{C''_S}.
    \end{gather*}

\end{proof}


\begin{lemma}
    North's preferred uniform carbon price is less than the Negishi-weighted carbon price, that is $\mathring{\tau}^{N} < \tilde{\tau}$, if and only if $\frac{\tilde{D}'_S}{\tilde{D}'_N} > \frac{C''_N}{C''_S}$.
    \label{lemma: N<Negishi}
\end{lemma}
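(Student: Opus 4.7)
The plan is to mirror the proof of Lemma \ref{lemma: North<U} line by line, replacing the utilitarian uniform carbon price $\check{\tau}$ with the Negishi-weighted carbon price $\tilde{\tau}$. The key simplification is that Negishi weights equalize the weighted marginal utilities, so the optimality condition $\tilde{\tau} = -\tilde{D}'_N - \tilde{D}'_S$ carries no $u'$ ratios; the algebra is therefore strictly simpler than in the utilitarian case while the logical skeleton is identical. I would use the same two-step conversion (price inequality $\Rightarrow$ abatement inequality $\Rightarrow$ marginal-damage inequality) driven by strict convexity of $C_i$ and $D_i$.

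For the forward implication, $\mathring{\tau}^{N} < \tilde{\tau}$ gives $\mathring{A}^{N}_i < \tilde{A}_i$ and hence $-\mathring{D}^{N}_i\hspace{-0.3mm}' > -\tilde{D}'_i > 0$ for all $i$. Substituting the two price definitions into the original inequality yields
\begin{equation*}
-\tilde{D}'_N - \tilde{D}'_S \,>\, -\mathring{D}^{N}_N\hspace{-0.3mm}'\Bigl(1 + \tfrac{C''_N}{C''_S}\Bigr).
\end{equation*}
Rearranging to isolate $-\tilde{D}'_S$ on the left, and then chaining the right-hand side with the tighter lower bound obtained by replacing $-\mathring{D}^{N}_N\hspace{-0.3mm}'$ by the smaller positive number $-\tilde{D}'_N$, causes the additive $\tilde{D}'_N$ term to cancel, leaving $-\tilde{D}'_S > -\tilde{D}'_N\,C''_N/C''_S$. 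Dividing by the positive quantity $-\tilde{D}'_N$ delivers $\tilde{D}'_S/\tilde{D}'_N > C''_N/C''_S$.

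For the backward implication, I would follow the contradiction template of Lemma \ref{lemma: North<U}. Assume $\tilde{D}'_S/\tilde{D}'_N > C''_N/C''_S$ together with $\mathring{\tau}^{N} \ge \tilde{\tau}$; the latter reverses all monotonicity conclusions to $\tilde{A}_i \le \mathring{A}^{N}_i$ and $\tilde{D}'_i \le \mathring{D}^{N}_i\hspace{-0.3mm}'$. Introducing a non-negative slack $\delta_N \equiv \mathring{D}^{N}_N\hspace{-0.3mm}' - \tilde{D}'_N$ and recasting $\mathring{\tau}^{N} \ge \tilde{\tau}$ as a ratio inequality yields
\begin{equation*}
\frac{\tilde{D}'_S}{\mathring{D}^{N}_N\hspace{-0.3mm}'} \,\le\, \frac{C''_N}{C''_S} - \frac{\delta_N}{-\mathring{D}^{N}_N\hspace{-0.3mm}'}.
\end{equation*}
Chaining this with $\tilde{D}'_S/\tilde{D}'_N \le \tilde{D}'_S/\mathring{D}^{N}_N\hspace{-0.3mm}'$ (both denominators negative, with $\tilde{D}'_N \ge \mathring{D}^{N}_N\hspace{-0.3mm}'$ flipping the ratio appropriately) produces the contradiction $C''_N/C''_S < \tilde{D}'_S/\tilde{D}'_N \le C''_N/C''_S$. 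The main obstacle is bookkeeping of signs: since $D'_i<0$, every monotonicity step must be carried through a sign flip, and ratio inequalities must be handled carefully when dividing by negatives; I would work throughout with the positive quantities $-D'_i$ to minimize such errors.
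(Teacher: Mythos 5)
Your proposal is correct and follows essentially the same route as the paper: both directions rest on the convexity-driven monotonicity chain (price ordering $\Rightarrow$ abatement ordering $\Rightarrow$ marginal-damage ordering), the forward step substitutes the two optimality conditions and bounds $-\mathring{D}^{N}_N\hspace{-0.3mm}'$ by $-\tilde{D}'_N$ so the additive term cancels, and the backward step is the same $\delta_N$-slack contradiction the paper uses. The only difference is cosmetic — you work with the positive quantities $-D'_i$ and skip the $\delta_N$ substitution in the forward direction, which the paper also effectively does in its proof of Lemma~\ref{lemma: North<U} — so no gap remains.
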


\begin{proof}
    We split the proof into the forward and backward implications.


    \vspace{6pt}
    \myuline{Proof of forward direction}: $\tilde{\tau} > \mathring{\tau}^{N}$  $\implies$ $\frac{\tilde{D}'_S}{\tilde{D}'_N} > \frac{C''_N}{C''_S}$.
    
    I begin by establishing the conditions under which $\tilde{\tau} > \mathring{\tau}^{N}$, or equivalently, $\tilde{C}' > \mathring{C}^{N}\hspace{-0.3mm}'$. First note that, for strictly convex abatement cost functions, $\tilde{C}' > \mathring{C}^{N}\hspace{-0.3mm}'$ implies $\tilde{A}_i > \mathring{A}_i^N$ for all $i$, and thus $\tilde{A} > \mathring{A}^N$. For strictly convex damage functions, this implies $\tilde{D}'_i > \mathring{D}_i^N\hspace{-0.3mm}'$ for all $i$ (note that marginal damages of abatement are negative).

    We have  $\tilde{C}' > \mathring{C}^{N}\hspace{-0.3mm}'$ if and only if
    \begin{gather*}
        - \tilde{D}'_N - \tilde{D}'_S >
        - \mathring{D}^N_N\hspace{-0.3mm}'
        \frac{C''_S + C''_{N}}
        {C''_{S}},
    \end{gather*}
    

    which we can rewrite as (note that $\mathring{D}^N_N\hspace{-0.3mm}'$ is negative so the sign of the inequality flips)
    \begin{gather*}
            \frac{C''_S + C''_{N}}
            {C''_{S}} <
            \frac{\tilde{D}'_N}{\mathring{D}^N_N\hspace{-0.3mm}'} + \frac{\tilde{D}'_S}{\mathring{D}^N_N\hspace{-0.3mm}'}.
    \end{gather*}

    Let us temporarily define $\delta_N \equiv \mathring{D}_N^N\hspace{-0.3mm}' - \tilde{D}'_N$. We know that $\delta_N < 0$ since $\tilde{D}'_N > \mathring{D}_N^N\hspace{-0.3mm}'$.
    Substitute $\tilde{D}'_N = \mathring{D}_N^N\hspace{-0.3mm}' - \delta_N$ into the previous expression to obtain
    \begin{gather*}
            \frac{C''_S + C''_{N}}
            {C''_{S}} <
            \frac{\mathring{D}^N_N\hspace{-0.3mm}' - \delta_N}{\mathring{D}^N_N\hspace{-0.3mm}'} + \frac{\tilde{D}'_S}{\mathring{D}^N_N\hspace{-0.3mm}'},
    \end{gather*}

    which simplifies to
    \begin{gather*}
            \frac{C''_N} {C''_{S}} <
            \frac{\tilde{D}'_S}{\mathring{D}^N_N\hspace{-0.3mm}'} + \frac{-\delta_N}{\mathring{D}^N_N\hspace{-0.3mm}'}.
    \end{gather*}

    We can now establish the following inequalities:
    \begin{gather*}
    \begin{aligned}
        \frac{C''_N} {C''_{S}} <
        \frac{\tilde{D}'_S}{\mathring{D}^N_N\hspace{-0.3mm}'} + \underbrace{\frac{-\delta_N}{\mathring{D}^N_N\hspace{-0.3mm}'}}_{<0}
        < \frac{\tilde{D}'_S}{\mathring{D}^N_N\hspace{-0.3mm}'}
        < \frac{\tilde{D}'_S}{\tilde{D}'_N}.
    \end{aligned}  
    \end{gather*}

    where the last inequality follows from $\tilde{D}'_N > \mathring{D}_N^N\hspace{-0.3mm}'$.

    We have thus shown that $\tilde{C}' > \mathring{C}^{N}\hspace{-0.3mm}' \implies \frac{\tilde{D}'_S}{\tilde{D}'_N} > \frac{C''_N}{C''_S}$.
    


    \vspace{6pt}
    \myuline{Proof of backward direction}: $\frac{\tilde{D}'_S}{\tilde{D}'_N} > \frac{C''_N}{C''_S}$  $\implies$ $\tilde{\tau} > \mathring{\tau}^N$.

    In order to derive a contradiction, suppose that $\frac{\tilde{D}'_S}{\tilde{D}'_N} > \frac{C''_N}{C''_S}$  $\implies$ $\tilde{\tau} \leq \mathring{\tau}^N$.
    
    We begin by establishing the implications of $\tilde{\tau} \leq \mathring{\tau}^N$, or equivalently, $\tilde{C}' \leq \mathring{C}^{N}\hspace{-0.3mm}'$. First note that, for strictly convex abatement cost functions, $\tilde{C}' \leq \mathring{C}^{N}\hspace{-0.3mm}'$ implies $\tilde{A}_i \leq \mathring{A}_i^N$ for all $i$, and thus $\tilde{A} \leq \mathring{A}^N$. For strictly convex damage functions, this implies $\tilde{D}'_i \leq \mathring{D}_i^N\hspace{-0.3mm}'$ for all $i$ (note that marginal damages of abatement are negative).

    Next, note that $\tilde{C}' \leq \mathring{C}^{N}\hspace{-0.3mm}'$ if and only if
    \begin{gather*}
        - \tilde{D}'_N - \tilde{D}'_S \leq - \mathring{D}^N_N\hspace{-0.3mm}'
        \frac{C''_S + C''_{N}}
        {C''_{S}},
    \end{gather*}
    

which we can rewrite as (note that $\mathring{D}^N_N\hspace{-0.3mm}'$ is negative so the sign of the inequality flips)
    \begin{gather*}
            \frac{C''_S + C''_{N}}
            {C''_{S}} \geq
            \frac{\tilde{D}'_N}{\mathring{D}^N_N\hspace{-0.3mm}'} + \frac{\tilde{D}'_S}{\mathring{D}^N_N\hspace{-0.3mm}'}.
    \end{gather*}

     Let us temporarily define $\delta_N \equiv \mathring{D}_N^N\hspace{-0.3mm}' - \tilde{D}'_N$. We know that $\delta_N \geq 0$ since $\tilde{D}'_N \leq \mathring{D}_N^N\hspace{-0.3mm}'$.
    Substitute $\tilde{D}'_N = \mathring{D}_N^N\hspace{-0.3mm}' - \delta_N$ into the previous expression to obtain    
   \begin{gather*}
            \frac{C''_S + C''_{N}}
            {C''_{S}} \geq
            \frac{\mathring{D}^N_N\hspace{-0.3mm}' - \delta_N}{\mathring{D}^N_N\hspace{-0.3mm}'} + \frac{\tilde{D}'_S}{\mathring{D}^N_N\hspace{-0.3mm}'},
    \end{gather*}

    which simplifies to
    \begin{gather*}
    \begin{aligned}
        \frac{C''_N} {C''_{S}} &\geq
        \frac{\tilde{D}'_S}{\mathring{D}^N_N\hspace{-0.3mm}'} + \underbrace{\frac{-\delta_N}{\mathring{D}^N_N\hspace{-0.3mm}'}}_{\geq0} .\\
    \end{aligned}  
    \end{gather*}
    
     So far, we have established that
    \begin{gather}\label{E: N>P_N implication v2}
        \tilde{C}' \leq \mathring{C}^{N}\hspace{-0.3mm}'
        \iff
        \frac{C''_N} {C''_{S}} \geq
        \frac{\tilde{D}'_S}{\mathring{D}^N_N\hspace{-0.3mm}'} + \underbrace{\frac{-\delta_N}{\mathring{D}^N_N\hspace{-0.3mm}'}}_{\geq0} .
    \end{gather}

    Next, we show that $\frac{\tilde{D}'_S}{\tilde{D}'_N} > \frac{C''_N}{C''_S}$  $\implies$ $\tilde{C}' \leq \mathring{C}^{N}\hspace{-0.3mm}'$ yields a contradiction.
    \begin{gather*}
    \begin{aligned}
        \frac{C''_N}{C''_S} < \frac{\tilde{D}'_S}{\tilde{D}'_N}
        \leq \frac{\tilde{D}'_S}{\mathring{D}^N_N\hspace{-0.3mm}'}
        \leq \frac{\tilde{D}'_S}{\mathring{D}^N_N\hspace{-0.3mm}'} + \underbrace{\frac{-\delta_N}{\mathring{D}^N_N\hspace{-0.3mm}'}}_{\geq 0}
        \leq \frac{C''_N}{C''_S}.
    \end{aligned}
    \end{gather*}


    where the second and third inequalities follow from $\tilde{D}'_i \leq \mathring{D}_i^N\hspace{-0.3mm}'$ for all $i$, and the last inequality follows from the implication of $\tilde{C}' \leq \mathring{C}^{N}\hspace{-0.3mm}'$ documented in Equation~\eqref{E: N>P_N implication v2}.

    We have reached the contradiction $\frac{C''_N}{C''_S} < \frac{C''_N}{C''_S}$. Hence, $\frac{\tilde{D}'_S}{\tilde{D}'_N} > \frac{C''_N}{C''_S}$  $\implies$ $\tilde{C}' \leq \mathring{C}^{N}\hspace{-0.3mm}'$ is incorrect, 
    and we have thus shown that we must have $\frac{\tilde{D}'_S}{\tilde{D}'_N} > \frac{C''_N}{C''_S}\implies \tilde{C}' > \mathring{C}^{N}\hspace{-0.3mm}'$.

    Together, the proofs of the forward and backward directions yield the equivalence
    \begin{gather*}
        \tilde{\tau} > \mathring{\tau}^{N}
        \iff
        \frac{\tilde{D}'_S}{\tilde{D}'_N} > \frac{C''_N}{C''_S}.
    \end{gather*}

\end{proof}


\begin{lemma}
    South's preferred uniform carbon price is greater than the Negishi-weighted carbon price, that is $\mathring{\tau}^{S} > \tilde{\tau}$, if and only if $\frac{\tilde{D}'_S}{\tilde{D}'_N} > \frac{C''_N}{C''_S}$.
    \label{lemma: S>Negishi}
\end{lemma}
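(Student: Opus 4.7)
The statement mirrors Lemma \ref{lemma: N<Negishi}, with the roles of $N$ and $S$ swapped in the preferred-price expression: here $\mathring{\tau}^{S} = -\mathring{D}^{S}_S\hspace{-0.3mm}'\,(C''_S+C''_N)/C''_N$ plays the role that $\mathring{\tau}^{N}$ played earlier, so the cost-curvature factor $(C''_S+C''_N)/C''_S$ is replaced by $(C''_S+C''_N)/C''_N$. Accordingly, I would split the equivalence into its two implications and re-run the template of Lemma \ref{lemma: N<Negishi} with the appropriate sign adjustments---directly for the forward direction, by contradiction for the backward direction.

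For the forward direction, I would assume $\mathring{\tau}^{S} > \tilde{\tau}$, equivalently $\mathring{C}^{S}\hspace{-0.3mm}' > \tilde{C}'$. Strict convexity of $C_i$ yields $\mathring{A}^{S}_i > \tilde{A}_i$ for every $i$, hence $\mathring{A}^{S} > \tilde{A}$, and strict convexity of $D_i$ (together with $D'_i<0$) then gives $\mathring{D}^{S}_i\hspace{-0.3mm}' > \tilde{D}'_i$ for every $i$. Substituting the closed-form expressions for $\mathring{\tau}^{S}$ and $\tilde{\tau}$ and dividing through by $\mathring{D}^{S}_S\hspace{-0.3mm}' < 0$ (which flips the inequality) rewrites the hypothesis as
\begin{equation*}
    \frac{C''_S + C''_N}{C''_N} > \frac{\tilde{D}'_N}{\mathring{D}^{S}_S\hspace{-0.3mm}'} + \frac{\tilde{D}'_S}{\mathring{D}^{S}_S\hspace{-0.3mm}'}.
\end{equation*}
Introducing the slack $\delta_S \equiv \mathring{D}^{S}_S\hspace{-0.3mm}' - \tilde{D}'_S > 0$, substituting $\tilde{D}'_S = \mathring{D}^{S}_S\hspace{-0.3mm}' - \delta_S$, and cancelling the resulting ``$1$'' against the ``$1$'' inside $(C''_S+C''_N)/C''_N$ gives
\begin{equation*}
    \frac{C''_S}{C''_N} > \frac{\tilde{D}'_N}{\mathring{D}^{S}_S\hspace{-0.3mm}'} + \frac{-\delta_S}{\mathring{D}^{S}_S\hspace{-0.3mm}'}.
\end{equation*}
Because $\delta_S > 0$ and $\mathring{D}^{S}_S\hspace{-0.3mm}' < 0$, the second summand is strictly positive and may be dropped to weaken the bound; using $\tilde{D}'_S < \mathring{D}^{S}_S\hspace{-0.3mm}' < 0$ to replace the denominator in the remaining term yields $\tilde{D}'_N/\mathring{D}^{S}_S\hspace{-0.3mm}' > \tilde{D}'_N/\tilde{D}'_S$, and inverting (both sides positive) delivers $\tilde{D}'_S/\tilde{D}'_N > C''_N/C''_S$.

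For the backward direction, I would suppose both $\tilde{D}'_S/\tilde{D}'_N > C''_N/C''_S$ and $\mathring{\tau}^{S} \leq \tilde{\tau}$ and aim for a contradiction. The latter gives $\mathring{D}^{S}_i\hspace{-0.3mm}' \leq \tilde{D}'_i$ for each $i$, and the analogous algebraic manipulation produces the ``$\leq$'' analog of the display above, which serves as a reusable characterization of $\mathring{C}^{S}\hspace{-0.3mm}' \leq \tilde{C}'$. Chaining $C''_N/C''_S < \tilde{D}'_S/\tilde{D}'_N$ with $\mathring{D}^{S}_i\hspace{-0.3mm}' \leq \tilde{D}'_i$ (to replace $\tilde{D}'_i$ by $\mathring{D}^{S}_i\hspace{-0.3mm}'$ in the ratio) and then with this characterization---exactly as in the contradiction step of Lemma \ref{lemma: N<Negishi}---collapses to $C''_N/C''_S < C''_N/C''_S$, the desired contradiction. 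The main obstacle is purely notational: marginal damages are negative, and several manipulations divide by or relocate $\mathring{D}^{S}_S\hspace{-0.3mm}'$ and $\tilde{D}'_i$ between numerator and denominator, so each inequality must be audited against the sign of the divisor. No new structural idea beyond those already used in Lemma \ref{lemma: N<Negishi} is needed.
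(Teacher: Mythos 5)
Your proposal is correct and follows essentially the same route as the paper's proof: the same convexity-based ordering of marginal damages, the same substitution of the closed-form price expressions with a sign flip upon dividing by $\mathring{D}^S_S\hspace{-0.3mm}' < 0$, the same slack variable $\delta_S$, and the same chain of inequalities culminating in a proof by contradiction for the backward implication. The only cosmetic difference is that the paper states the final contradiction as $\frac{C''_S}{C''_N} > \frac{C''_S}{C''_N}$ rather than in the reciprocal form you use.
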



\begin{proof}
    We split the proof into the forward and backward implications.


    \vspace{6pt}
    \myuline{Proof of forward direction}: $\tilde{\tau} < \mathring{\tau}^{S}$  $\implies$ $\frac{\tilde{D}'_S}{\tilde{D}'_N} > \frac{C''_N}{C''_S}$.
    
    I begin by establishing the conditions under which $\tilde{\tau} < \mathring{\tau}^{S}$, or equivalently, $\tilde{C}' < \mathring{C}^{S}\hspace{-0.3mm}'$. First note that, for strictly convex abatement cost functions, $\tilde{C}' < \mathring{C}^{S}\hspace{-0.3mm}'$ implies $\tilde{A}_i < \mathring{A}_i^S$ for all $i$, and thus $\tilde{A} < \mathring{A}^S$. For strictly convex damage functions, this implies $\tilde{D}'_i < \mathring{D}_i^S\hspace{-0.3mm}'$ for all $i$ (note that marginal damages of abatement are negative).

    We have  $\tilde{C}' < \mathring{C}^{S}\hspace{-0.3mm}'$ if and only if
    \begin{gather*}
        - \tilde{D}'_N - \tilde{D}'_S <
        - \mathring{D}^S_S\hspace{-0.3mm}'
        \frac{C''_S + C''_{N}}
        {C''_{N}},
    \end{gather*}
    

    which we can rewrite as (note that $\mathring{D}^S_S\hspace{-0.3mm}'$ is negative so the sign of the inequality flips)
    \begin{gather*}
            \frac{C''_S + C''_{N}}
            {C''_{N}} >
            \frac{\tilde{D}'_N}{\mathring{D}^S_S\hspace{-0.3mm}'} + \frac{\tilde{D}'_S}{\mathring{D}^S_S\hspace{-0.3mm}'}.
    \end{gather*}

 Let us temporarily define $\delta_S \equiv \mathring{D}_S^S\hspace{-0.3mm}' - \tilde{D}'_S$. We know that $\delta_S > 0$ since $\tilde{D}'_S < \mathring{D}_S^S\hspace{-0.3mm}'$.
    Substitute $\tilde{D}'_S = \mathring{D}_S^S\hspace{-0.3mm}' - \delta_S$ into the previous expression to obtain
    \begin{gather*}
            \frac{C''_S + C''_{N}}
            {C''_{N}} >
            \frac{\tilde{D}'_N}{\mathring{D}^S_S\hspace{-0.3mm}'} + \frac{\mathring{D}^S_S\hspace{-0.3mm}' - \delta_S}{\mathring{D}^S_S\hspace{-0.3mm}'},
    \end{gather*}

    which simplifies to
    \begin{gather*}
            \frac{C''_S} {C''_{N}} >
            \frac{\tilde{D}'_N}{\mathring{D}^S_S\hspace{-0.3mm}'} + \frac{-\delta_S}{\mathring{D}^S_S\hspace{-0.3mm}'} .
    \end{gather*}

    We can now establish the following inequalities: 
    \begin{gather*}
    \begin{aligned}
        \frac{C''_S} {C''_{N}} >
        \frac{\tilde{D}'_N}{\mathring{D}^S_S\hspace{-0.3mm}'} + \underbrace{\frac{-\delta_S}{\mathring{D}^S_S\hspace{-0.3mm}'}}_{>0} 
        > \frac{\tilde{D}'_N}{\mathring{D}^S_S\hspace{-0.3mm}'} 
        > \frac{\tilde{D}'_N}{\tilde{D}'_S} .
    \end{aligned}  
    \end{gather*}

where the last inequality follows from $\tilde{D}'_S < \mathring{D}_S^S\hspace{-0.3mm}'$.

We have thus shown that $\tilde{C}' < \mathring{C}^{S}\hspace{-0.3mm}' \implies \frac{\tilde{D}'_S}{\tilde{D}'_N} > \frac{C''_N}{C''_S}$.



    \vspace{6pt}
    \myuline{Proof of backward direction}: $\frac{\tilde{D}'_S}{\tilde{D}'_N} > \frac{C''_N}{C''_S}$  $\implies$ $\tilde{\tau} < \mathring{\tau}^{S}$.

    In order to derive a contradiction, suppose that $\frac{\tilde{D}'_S}{\tilde{D}'_N} > \frac{C''_N}{C''_S}$  $\implies$ $\tilde{\tau} \geq \mathring{\tau}^{S}$.
    
    We start by establishing the implications of $\tilde{\tau} \geq \mathring{\tau}^{S}$, or equivalently, $\tilde{C}' \geq \mathring{C}^{S}\hspace{-0.3mm}'$. First note that, for strictly convex abatement cost functions, $\tilde{C}' \geq \mathring{C}^{S}\hspace{-0.3mm}'$ implies $\tilde{A}_i \geq \mathring{A}_i^S$ for all $i$, and thus $\tilde{A} \geq \mathring{A}^S$. For strictly convex damage functions, this implies $\tilde{D}'_i \geq \mathring{D}_i^S\hspace{-0.3mm}'$ for all $i$ (note that marginal damages of abatement are negative).

    Next, note that $\tilde{C}' \geq \mathring{C}^{S}\hspace{-0.3mm}'$ if and only if
    \begin{gather*}
        - \tilde{D}'_N - \tilde{D}'_S \geq - \mathring{D}^S_S\hspace{-0.3mm}'
        \frac{C''_S + C''_{N}}
        {C''_{N}} ,
    \end{gather*}
    

    which we can rewrite as (note that $\mathring{D}^S_S\hspace{-0.3mm}'$ is negative so the sign of the inequality flips)
    \begin{gather*}
            \frac{C''_S + C''_{N}}
            {C''_{N}} \leq
            \frac{\tilde{D}'_N}{\mathring{D}^S_S\hspace{-0.3mm}'} + \frac{\tilde{D}'_S}{\mathring{D}^S_S\hspace{-0.3mm}'} .
    \end{gather*}
    
   Let us temporarily define $\delta_S \equiv \mathring{D}_S^S\hspace{-0.3mm}' - \tilde{D}'_S$. We know that $\delta_S \leq 0$ since $\tilde{D}'_S \geq \mathring{D}_S^S\hspace{-0.3mm}'$.
    Substitute $\tilde{D}'_S = \mathring{D}_S^S\hspace{-0.3mm}' - \delta_S$ into the previous expression to obtain
\begin{gather*}
            \frac{C''_S + C''_{N}}
            {C''_{N}} \leq
            \frac{\tilde{D}'_N}{\mathring{D}^S_S\hspace{-0.3mm}'} + \frac{\mathring{D}^S_S\hspace{-0.3mm}' - \delta_S}{\mathring{D}^S_S\hspace{-0.3mm}'},
    \end{gather*}


    which simplifies to
    \begin{gather*}
    \begin{aligned}
        \frac{C''_S} {C''_{N}} &\leq
        \frac{\tilde{D}'_N}{\mathring{D}^S_S\hspace{-0.3mm}'} + \underbrace{\frac{-\delta_S}{\mathring{D}^S_S\hspace{-0.3mm}'}}_{\leq 0}. \\
    \end{aligned}  
    \end{gather*}

     So far, we have established that
    \begin{gather}\label{E: N>P_S implication v2}
        \tilde{C}' \geq \mathring{C}^{S}\hspace{-0.3mm}'
        \iff
        \frac{C''_S} {C''_{N}} \leq
        \frac{\tilde{D}'_N}{\mathring{D}^S_S\hspace{-0.3mm}'} + \underbrace{\frac{-\delta_S}{\mathring{D}^S_S\hspace{-0.3mm}'}}_{\leq 0} .
    \end{gather}

    Next, we show that $\frac{\tilde{D}'_S}{\tilde{D}'_N} > \frac{C''_N}{C''_S}$  $\implies$ $\tilde{C}' \geq \mathring{C}^{S}\hspace{-0.3mm}'$ yields a contradiction.
    We start by rearranging $\frac{\tilde{D}'_S}{\tilde{D}'_N} > \frac{C''_N}{C''_S}$ to $\frac{\tilde{D}'_N}{\tilde{D}'_S} < \frac{C''_S}{C''_N}$.
    We then obtain the following contradiction:
    \begin{gather*}
    \begin{aligned}
        \frac{C''_S}{C''_N} > \frac{\tilde{D}'_N}{\tilde{D}'_S} 
        \geq \frac{\tilde{D}'_N}{\mathring{D}^S_S\hspace{-0.3mm}'} 
        \geq \frac{\tilde{D}'_N}{\mathring{D}^S_S\hspace{-0.3mm}'} + \underbrace{\frac{-\delta_S}{\mathring{D}^S_S\hspace{-0.3mm}'}}_{\leq 0} 
        \geq \frac{C''_S}{C''_N},
    \end{aligned}
    \end{gather*}


    where the second and third inequalities follow from $\tilde{D}'_i \geq \mathring{D}_i^S\hspace{-0.3mm}'$ for all $i$,
    and the last inequality follows from the implication of $\tilde{C}' \geq \mathring{C}^{S}\hspace{-0.3mm}'$ documented in Equation~\eqref{E: N>P_S implication v2}.

    We have reached the contradiction $\frac{C''_S}{C''_N} > \frac{C''_S}{C''_N}$. Hence, $\frac{\tilde{D}'_S}{\tilde{D}'_N} > \frac{C''_N}{C''_S}$  $\implies$ $\tilde{C}' \geq \mathring{C}^{S}\hspace{-0.3mm}'$ is incorrect, 
    and we have thus shown that we must have $\frac{\tilde{D}'_S}{\tilde{D}'_N} > \frac{C''_N}{C''_S}\implies \tilde{C}' < \mathring{C}^{S}\hspace{-0.3mm}'$.

    Together, the proofs of the forward and backward directions yield the equivalence
    \begin{gather*}
        \tilde{\tau} < \mathring{\tau}^{S}
        \iff
        \frac{\tilde{D}'_S}{\tilde{D}'_N} > \frac{C''_N}{C''_S}.
    \end{gather*}

\end{proof}


Using Lemmas \ref{lemma: North<U}-\ref{lemma: S>Negishi}, we can now prove Lemma \ref{Lemma: U&N in between preferred carbon price}, which is restated below.

\renewcommand{\thelemma}{\arabic{lemma}}

\setcounter{lemma}{1}
\begin{lemma}
    The utilitarian uniform carbon price ($\check{\tau}$) and the Negishi-weighted carbon price ($\tilde{\tau}$) are in between the preferred uniform carbon prices of the Global North ($\mathring{\tau}^N$) and the Global South ($\mathring{\tau}^S$), unless they all coincide.
\end{lemma}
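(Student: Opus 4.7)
The plan is to derive Lemma \ref{Lemma: U&N in between preferred carbon price} as an immediate consequence of the four iff-style auxiliary lemmas just established (Lemmas \ref{lemma: North<U}--\ref{lemma: S>Negishi}). The crucial observation is that Lemmas \ref{lemma: North<U} and \ref{lemma: South>U} share the identical necessary and sufficient condition $\frac{\check{D}'_S}{\check{D}'_N} > \frac{C''_N}{C''_S}$ for the two separate strict inequalities $\mathring{\tau}^N < \check{\tau}$ and $\check{\tau} < \mathring{\tau}^S$, and Lemmas \ref{lemma: N<Negishi} and \ref{lemma: S>Negishi} play the analogous role for $\tilde{\tau}$ with the common condition evaluated at the Negishi solution. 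No new analytic content is needed; everything is already encapsulated in the ratio conditions.

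For $\check{\tau}$, the shared iff chains into the equivalence $\mathring{\tau}^N < \check{\tau} \Leftrightarrow \check{\tau} < \mathring{\tau}^S$, whose contrapositive reads $\mathring{\tau}^N \geq \check{\tau} \Leftrightarrow \check{\tau} \geq \mathring{\tau}^S$. Hence in every case $\check{\tau}$ lies weakly between $\mathring{\tau}^N$ and $\mathring{\tau}^S$. To upgrade to strict betweenness whenever $\mathring{\tau}^N \neq \mathring{\tau}^S$, I would assume without loss that $\mathring{\tau}^N < \mathring{\tau}^S$: the alternative $\check{\tau} \leq \mathring{\tau}^N$ would force $\check{\tau} \geq \mathring{\tau}^S > \mathring{\tau}^N$, a contradiction, so $\check{\tau} > \mathring{\tau}^N$, and the shared equivalence then activates Lemma \ref{lemma: South>U} to give $\check{\tau} < \mathring{\tau}^S$. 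A symmetric argument covers $\mathring{\tau}^N > \mathring{\tau}^S$, and the identical two-line argument applied via Lemmas \ref{lemma: N<Negishi} and \ref{lemma: S>Negishi} delivers the same conclusion for $\tilde{\tau}$.

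To dispatch the ``all coincide'' clause, I would argue by contradiction: assume $\mathring{\tau}^N = \mathring{\tau}^S$ but $\check{\tau} \neq \mathring{\tau}^N$. If $\check{\tau} > \mathring{\tau}^N$, Lemma \ref{lemma: North<U} forces the strict ratio condition, whereupon Lemma \ref{lemma: South>U} yields $\check{\tau} < \mathring{\tau}^S = \mathring{\tau}^N$, contradicting $\check{\tau} > \mathring{\tau}^N$; the case $\check{\tau} < \mathring{\tau}^N$ is a mirror image. The identical reasoning applies to $\tilde{\tau}$ via the Negishi-side lemmas, so $\mathring{\tau}^N = \mathring{\tau}^S$ collapses all four prices together.

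The main obstacle is purely bookkeeping---keeping straight which of the four auxiliary lemmas applies to which pair of prices and whether the original or contrapositive direction is being invoked at each step. The symmetric structure of the common ratio condition across each pair of auxiliary lemmas is precisely what makes the ``betweenness'' automatic, and the footnoted assumption that marginal utilities are strictly positive and finite ensures the iff-equivalences are non-vacuous at the relevant evaluation points.
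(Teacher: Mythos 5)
Your proposal is correct and follows essentially the same route as the paper: both derive the betweenness claim purely from the four auxiliary iff-lemmas by exploiting that each pair shares the identical ratio condition $\check{D}'_S/\check{D}'_N \gtrless C''_N/C''_S$ (resp.\ its Negishi-solution analogue). The only cosmetic difference is that you work with contrapositives of the stated lemmas and a short case analysis, whereas the paper invokes the trichotomy by also asserting the reversed-inequality variants ``similarly''; your version arguably needs marginally less, but the substance is identical.
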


\setcounter{lemma}{4}
\renewcommand{\thelemma}{A\arabic{lemma}}

\begin{proof}
    Let us begin by showing that the utilitarian uniform carbon price lies between North's and South's preferred uniform carbon prices, unless they coincide. Lemma \ref{lemma: North<U} and \ref{lemma: South>U} imply that $\mathring{\tau}^S > \check{\tau} > \mathring{\tau}^N$ if and only if $\frac{\check{D}'_S}{\check{D}'_N} > \frac{C''_N}{C''_S}$. 
    Similarly, it can be shown that $\mathring{\tau}^S < \check{\tau} < \mathring{\tau}^N$ if and only if $\frac{\check{D}'_S}{\check{D}'_N} < \frac{C''_N}{C''_S}$.
    Hence, we have $\mathring{\tau}^S = \check{\tau} = \mathring{\tau}^N$ if and only if $\frac{\check{D}'_S}{\check{D}'_N} = \frac{C''_N}{C''_S}$, as this is the only remaining possibility for each inequality.

    Analogously, we can show that the Negishi-weighted uniform carbon price lies between North's and South's preferred uniform carbon prices, unless they coincide. Lemma \ref{lemma: N<Negishi} and \ref{lemma: S>Negishi} imply that $\mathring{\tau}^S > \tilde{\tau} > \mathring{\tau}^N$ if and only if $\frac{\tilde{D}'_S}{\tilde{D}'_N} > \frac{C''_N}{C''_S}$. 
     Similarly, it can be shown that $\mathring{\tau}^S < \tilde{\tau} < \mathring{\tau}^N$ if and only if $\frac{\tilde{D}'_S}{\tilde{D}'_N} < \frac{C''_N}{C''_S}$.
    Hence, we have $\mathring{\tau}^S = \tilde{\tau} = \mathring{\tau}^N$ if and only if $\frac{\tilde{D}'_S}{\tilde{D}'_N} = \frac{C''_N}{C''_S}$, as this is the only remaining possibility for each inequality.
    
    It follows that  $\mathring{\tau}^S = \check{\tau} = \tilde{\tau} = \mathring{\tau}^N$ if and only if $\frac{{D}'_S}{{D}'_N} = \frac{C''_N}{C''_S}$, where $D'_i  = \check{D}'_i = \tilde{D}'_i$.
    
\end{proof}

\subsection{Proof of Proposition 3} \label{Assec: Proof of Proposition 3}




\begin{proof}

We split the proof into the forward and backward implications.

    \vspace{6pt}
    \myuline{Proof of forward direction}: $\mathring{\tau}^S > \mathring{\tau}^N \implies \check{\tau} > \tilde{\tau}$.

South's preferred uniform carbon price is greater than North's preferred uniform carbon price if and only if
\begin{equation*}
    - \mathring{D}_{S}^S\hspace{-0.3mm}' \frac{{C}''_S + {C}''_{N}}{{C}''_{N}}
    >
    - \mathring{D}_{N}^N\hspace{-0.3mm}' \frac{{C}''_S + {C}''_{N}}{{C}''_{S}}.
\end{equation*}

Simplifying and rearranging yields
\begin{equation*}
    \frac{\mathring{D}_{S}^S\hspace{-0.3mm}'}{\mathring{D}_{N}^N\hspace{-0.3mm}'}
    >
    \frac{{C}''_{N}}{{C}''_{S}}.
\end{equation*}

From Lemma \ref{Lemma: U&N in between preferred carbon price}, we know that the utilitarian uniform carbon price lies between the two preferred uniform carbon prices. For strictly convex abatement cost functions, we know that if South's preferred uniform carbon price is greater than North's preferred uniform carbon price, then $\mathring{A}_{i}^S > \mathring{A}_{i}^N$ for all $i$, and thus $\mathring{A}^S > \mathring{A}^N$. For strictly convex damage functions, and recalling that marginal damages of abatement are negative, this implies

\begin{equation*}
    \mathring{D}_{i}^S\hspace{-0.3mm}' > \tilde{D}'_i > \mathring{D}_{i}^N\hspace{-0.3mm}', \quad \forall i.
\end{equation*}

We thus have
\begin{equation*}
    \frac{\tilde{D}'_S}{\tilde{D}'_N}
    > \frac{\mathring{D}_{S}^S\hspace{-0.3mm}'}{\mathring{D}_{N}^N\hspace{-0.3mm}'}
    > \frac{{C}''_{N}}{{C}''_{S}}.
\end{equation*}

We have thus shown that 
\begin{equation*}
    \mathring{\tau}^S > \mathring{\tau}^N \implies \check{\tau} > \tilde{\tau}.
\end{equation*}

\vspace{6pt}
\myuline{Proof of backward direction}: $\check{\tau} > \tilde{\tau} \implies \mathring{\tau}^S > \mathring{\tau}^N$.

Proposition \ref{Proposition: utilitarian vs Negishi} establishes that $\check{\tau} > \tilde{\tau}$ if and only if $\frac{\tilde{D}'_S}{\tilde{D}'_N} > \frac{C''_N}{C''_S}$.
From Lemma \ref{Lemma: U&N in between preferred carbon price}, we know that $\frac{\tilde{D}'_S}{\tilde{D}'_N} > \frac{C''_N}{C''_S}$ implies $\mathring{\tau}^S > \check{\tau} > \mathring{\tau}^N$. Therefore,
\begin{equation*}
    \check{\tau} > \tilde{\tau} \implies \mathring{\tau}^S > \mathring{\tau}^N.
\end{equation*}

Together, the proofs of the forward and backward directions yield the equivalence
\begin{gather*}
    \check{\tau} > \tilde{\tau}
    \iff
    \mathring{\tau}^S > \mathring{\tau}^N.
\end{gather*}

\end{proof}

\subsection{Proof of Proposition 4} \label{Assec: Proof of Proposition 4}

\begin{proof}
    \myuline{Part (i)}:

    We begin by writing the Negishi-weighted carbon price as a weighted average of regions' preferred uniform carbon prices:
    \begin{gather*}
    \begin{aligned}
        \underbrace{- \sum_i {D}'_i(\tilde{A})}_{ \tilde{\tau} } 
        &= \tilde{\omega} \underbrace{\left[- {D}'_N(\mathring{A}^N)
            \frac{{C}''_N + {C}''_{S}}
            {{C}''_{S}} \right]}_{ \mathring{\tau}^N}
            + (1-\tilde{\omega}) \underbrace{\left[- {D}'_S(\mathring{A}^S)
            \frac{{C}''_S + {C}''_{N}} 
            {{C}''_{N}}\right]}_{ \mathring{\tau}^S}.
    \end{aligned}
    \end{gather*}

Using the assumption of constant marginal damages, it is easy to see that $\tilde{\omega} = \frac{C_S''}{C_N'' + C_S''}$. Using the assumption that $C_i'' = \frac{\kappa}{W_i}$ for some constant $\kappa > 0$ yields $\tilde{\omega} = \frac{W_N}{\sum_j W_j}$ and $(1-\tilde{\omega})= \frac{W_S}{\sum_j W_j}$. Plugging in $\tilde{\omega}$ yields $\tilde{\tau} = \sum_i \frac{W_i}{\sum_j W_j} \, \mathring{\tau}^i$.

\vspace{6pt}

\myuline{Part (ii)}:

    Similarly, we write the utilitarian uniform carbon price as a weighted average of regions' preferred uniform carbon prices:
\begin{gather*}
\begin{aligned}
    &\underbrace{- \sum_i {u}'(\check{x}_i) {D}'_i(\check{A})
        \frac{{C}''_S + {C}''_N}
        {{u}'(\check{x}_N) {C}''_S + {u}'(\check{x}_S) {C}''_N }}_{\check{\tau} } \\
    &=  \check{\omega} \underbrace{\left[- {D}'_N(\mathring{A}^N)
        \frac{{C}''_N + {C}''_{S}}
        {{C}''_{S}} \right]}_{\mathring{\tau}^N}
        + (1-\check{\omega}) \underbrace{\left[- {D}'_S(\mathring{A}^S)
        \frac{{C}''_S + {C}''_{N}} 
        {{C}''_{N}}\right]}_{\mathring{\tau}^S}.
\end{aligned}
\end{gather*}
Using the assumption of constant marginal damages, we can see that $\check{\omega} = \frac{u'(\check{x}_N)C_S''}{u'(\check{x}_S) C_N'' + u'(\check{x}_N) C_S''}$. Using the assumption that  $C_i'' = \frac{\kappa}{W_i}$ for some constant $\kappa > 0$, and logarithmic utility, we have 
$\check{\omega} = \frac{\check{x}_S L_N w_N}{\check{x}_N L_S w_S + \check{x}_S L_N w_N}$.
Finally, using the approximation $\frac{\check{x}_S}{\check{x}_N} \approx \frac{w_S}{w_N}$ yields 
$\check{\omega} \approx \frac{L_N}{\sum_j L_j}$ and $(1 - \check{\omega}) \approx \frac{L_S}{\sum_j L_j}$.
Plugging in $\check{\omega}$ yields $\check{\tau} = \sum_i \frac{L_i}{\sum_j L_j} \, \mathring{\tau}^i$.

\end{proof}

\subsection{Generalizations of Proposition 4} \label{Assec: Generalizations of Proposition 4}

This section considers generalizations of Proposition \ref{Prop: voting}, relaxing one assumption at a time, while maintaining all other assumptions. The derivations are analogous to the ones in Appendix \ref{Assec: Proof of Proposition 4}.

\vspace{8pt}
\noindent
{\textbf{Abatament cost convexity}}

\noindent
If $C_i''=\kappa_i/W_i$ with\textit{ region-specific} $\kappa_i$, the weighted-average representations are adjusted for abatement responsiveness: (i) $\tilde{\tau} = \sum_i \frac{W_i/ \kappa_i}{\sum_j W_j / \kappa_j} \, \mathring{\tau}^i$, and (ii) $\check{\tau} \approx \sum_i \frac{L_i/\kappa_i}{\sum_j L_j \kappa_j} \, \mathring{\tau}^i$.

A steeper marginal abatement cost function (higher $k_i$) reduces the weight on the preferred uniform carbon price of region $i$. The intuition is that a relatively higher $k_i$ raises the preferred carbon price of region $i$. This effect is adjusted for in determining the optimal uniform carbon prices under welfare weights that are strictly between the ``edge weights'', which give full weight to one region.

\vspace{8pt}
\noindent
{\textbf{Isoelastic utility}}

\noindent
If $u(x_i) = \frac{x_i^{1-\eta}}{1-\eta}$ (for $\eta=1$,  $u(x_i) = \log(x_i)$), the weighted-average representation of the utilitarian uniform carbon price is adjusted for the concavity of the utility function\footnote{The weighted-average representation of the Negishi-weighted carbon price remains $\tilde{\tau} = \sum_i \frac{W_i}{\sum_j W_j} \, \mathring{\tau}^i$.}: 
\begin{equation}
\label{AEq: utilitarian voting weights isoelastic utility}
    \check{\tau} = \sum_i \frac{L_i w_i \check{x}_i^{-\eta}}{\sum_j L_j w_j \check{x}_j^{-\eta}} \, \mathring{\tau}^i.
\end{equation}
Using the approximation $\frac{\check{x}_S}{\check{x}_N} \approx \frac{w_S}{w_N}$ yields $\check{\tau} \approx \sum_i \frac{L_i \check{x}_i^{1 -\eta}}{\sum_j L_j \check{x}_j^{1 -\eta}} \, \mathring{\tau}^i$.

The weight on the preferred uniform carbon price of the South (which is assumed to be poorer), increases as $\eta$ increases (reflecting greater inequality aversion).  If we further assume that the approximation holds with equality, $\frac{\check{x}_S}{\check{x}_N} = \frac{w_S}{w_N}$, we obtain the following crisp intuitions: the weight on the preferred uniform carbon price of the South, $(1-\check{\omega})$, exceeds the South's population share, $\frac{L_S}{\sum_j L_j}$, for $\eta > 1$, equals it for $\eta = 1$, and is below it for  $\eta < 1$. Finally, for linear utility ($\eta = 0$), the Negishi-weighted and utilitarian carbon prices coincide, and the weights collapse to endowment share: $\check{\tau} = \tilde{\tau} = \sum_i \frac{W_i}{\sum_j W_j} \, \mathring{\tau}^i$.

\vspace{8pt}
\noindent
{\textbf{Endowment and consumption ratios}}

\noindent
Using Equation \eqref{AEq: utilitarian voting weights isoelastic utility}, and log utility, we can make the following observations about the weighted-average representation of the utilitarian uniform carbon price\footnote{As before, the weighted-average representation of the Negishi-weighted carbon price remains unaffected.}. The weight on the preferred uniform carbon price of the South, $(1-\check{\omega})$, exceeds the South's population share, $\frac{L_S}{\sum_j L_j}$, for $\frac{\check{x}_S}{\check{x}_N} < \frac{w_S}{w_N}$, equals it for $\frac{\check{x}_S}{\check{x}_N} = \frac{w_S}{w_N}$, and is below it for  $\frac{\check{x}_S}{\check{x}_N} > \frac{w_S}{w_N}$.

\subsection{Proof of Proposition 5} \label{Assec: Proof of Proposition 5}

 \begin{proof}

 \myuline{Proof of forward implication}: $\tilde{\tau} < \check{\tau}$  $\implies$ 
    $
    \frac{\check{u}'_{N2} \check{D}'_{N2} 
        + \check{u}'_{S2} \check{D}'_{S2}}
        {\check{D}'_{N2} + \check{D}'_{S2}}
        > 
        \nu 
        \frac{{\check{u}'_{N1} C''_{S1} + \check{u}'_{S1} C''_{N1}} }{C''_{S1} + C''_{N1}}$.
    
    We ask under which conditions $\tilde{\tau} < \check{\tau}$, or equivalently, $\tilde{C}'_{1} < \check{C}'_{1}$. 
  For strictly convex abatement cost functions, $\tilde{C}'_{1} < \check{C}'_{1}$ implies $\tilde{C}_{i1} < \check{C}_{i1}$ and $\tilde{A}_{i1} < \check{A}_{i1}$ for all $i$, and thus $\tilde{A}_1 < \check{A}_1$.
    This implies $\tilde{D}_{i2} > \check{D}_{i2}$ and $\tilde{D}'_{i2} < \check{D}'_{i2}$ for all $i$ (note that marginal damages of abatement are negative).
    Using the budget constraints, this implies $\check{X}_{i1} < \tilde{X}_{i1}$ and $\check{X}_{i2} > \tilde{X}_{i2}$ for all $i$.
    For an exogenous population, we thus have $\check{x}_{i1} < \tilde{x}_{i1}$ and $\check{x}_{i2} > \tilde{x}_{i2}$ for all $i$.
    Therefore, $\check{u}'_{i1} > \tilde{u}'_{i1}$ and $\check{u}'_{i2} < \tilde{x}'_{i2}$ for all $i$.

    We have $\tilde{C}'_{1} < \check{C}'_{1}$ if and only if
    \begin{gather*}
        - \beta \nu \left(\tilde{D}'_{N2} + \tilde{D}'_{S2}\right) 
        < 
        - \beta \left( \check{u}'_{N2} \check{D}'_{N2} 
        + \check{u}'_{S2} \check{D}'_{S2}\right)
        \frac{C''_{S1} + C''_{N1}}
        {\check{u}'_{N1} C''_{S1} + \check{u}'_{S1} C''_{N1}} .
    \end{gather*}

    Cancelling $\beta$, multiplying both sides by the denominator on the RHS (which is positive), and rearranging, we get
    \begin{gather*}
        \left( \check{u}'_{N2} \check{D}'_{N2} 
        + \check{u}'_{S2} \check{D}'_{S2}\right)
        \left(C''_{S1} + C''_{N1}\right)
        < 
        \nu \left(\tilde{D}'_{N2} + \tilde{D}'_{S2}\right) 
        \left( {\check{u}'_{N1} C''_{S1} + \check{u}'_{S1} C''_{N1}} \right)
        .
    \end{gather*}

Since $\tilde{D}'_{i2} < \check{D}'_{i2}$ for all $i$, the previous inequality implies
\begin{gather*}
        \left( \check{u}'_{N2} \check{D}'_{N2} 
        + \check{u}'_{S2} \check{D}'_{S2}\right)
        \left(C''_{S1} + C''_{N1}\right)
        < 
        \nu \left(\check{D}'_{N2} + \check{D}'_{S2}\right) 
        \left( {\check{u}'_{N1} C''_{S1} + \check{u}'_{S1} C''_{N1}} \right)
        .
    \end{gather*}

Noting that $(\check{D}'_{N2} + \check{D}'_{S2})$ is negative so the sign flips upon division, rearranging yields
\begin{gather*}
        \frac{\check{u}'_{N2} \check{D}'_{N2} 
        + \check{u}'_{S2} \check{D}'_{S2}}
        {\check{D}'_{N2} + \check{D}'_{S2}}
        > 
        \nu 
        \frac{{\check{u}'_{N1} C''_{S1} + \check{u}'_{S1} C''_{N1}} }{C''_{S1} + C''_{N1}}.
    \end{gather*}

 \vspace{6pt}

  \myuline{Proof of backward implication}:   $    
        \frac{\check{u}'_{N2} \check{D}'_{N2} 
        + \check{u}'_{S2} \check{D}'_{S2}}
        {\check{D}'_{N2} + \check{D}'_{S2}}
        > 
        \nu 
        \frac{{\check{u}'_{N1} C''_{S1} + \check{u}'_{S1} C''_{N1}} }{C''_{S1} + C''_{N1}}
        \implies
        \tilde{\tau} < \check{\tau}$.

We will prove the contrapositive of the stated result. That is, we will prove 
\begin{gather*}
        \tilde{\tau} \geq \check{\tau}
        \implies
    \frac{\check{u}'_{N2} \check{D}'_{N2} 
        + \check{u}'_{S2} \check{D}'_{S2}}
        {\check{D}'_{N2} + \check{D}'_{S2}}
        \leq 
        \nu 
        \frac{{\check{u}'_{N1} C''_{S1} + \check{u}'_{S1} C''_{N1}} }{C''_{S1} + C''_{N1}}.
\end{gather*}

    We start by establishing the implications of $\tilde{\tau} \geq \check{\tau}$, or equivalently, $\tilde{C}'_1 \geq \check{C}'_1$.
    
 For strictly convex abatement cost functions, $\tilde{C}'_{1} \geq \check{C}'_{1}$ implies $\tilde{C}_{i1} \geq \check{C}_{i1}$ and $\tilde{A}_{i1} \geq \check{A}_{i1}$ for all $i$, and thus $\tilde{A}_1 \geq \check{A}_1$.
   This implies $\tilde{D}_{i2} \leq \check{D}_{i2}$ and $\tilde{D}'_{i2} \geq \check{D}'_{i2}$ for all $i$ (note that marginal damages of abatement are negative).
    Using the budget constraints, this implies $\check{X}_{i1} \geq \tilde{X}_{i1}$ and $\check{X}_{i2} \leq \tilde{X}_{i2}$ for all $i$.
    For an exogenous population, we thus have $\check{x}_{i1} \geq \tilde{x}_{i1}$ and $\check{x}_{i2} \leq \tilde{x}_{i2}$ for all $i$.
    Therefore, $\check{u}'_{i1} \leq \tilde{u}'_{i1}$ and $\check{u}'_{i2} \geq \tilde{x}'_{i2}$ for all $i$.

    We have $\tilde{C}'_{1} \geq \check{C}'_{1}$ if and only if
    \begin{gather*}
        - \beta \nu \left(\tilde{D}'_{N2} + \tilde{D}'_{S2}\right) 
        \geq 
        - \beta \left( \check{u}'_{N2} \check{D}'_{N2} 
        + \check{u}'_{S2} \check{D}'_{S2}\right)
        \frac{C''_{S1} + C''_{N1}}
        {\check{u}'_{N1} C''_{S1} + \check{u}'_{S1} C''_{N1}} .
    \end{gather*}

    Cancelling $\beta$, multiplying both sides by the denominator on the RHS (which is positive), and rearranging, we get
    \begin{gather*}
        \left( \check{u}'_{N2} \check{D}'_{N2} 
        + \check{u}'_{S2} \check{D}'_{S2}\right)
        \left(C''_{S1} + C''_{N1}\right)
        \geq 
        \nu \left(\tilde{D}'_{N2} + \tilde{D}'_{S2}\right) 
        \left( {\check{u}'_{N1} C''_{S1} + \check{u}'_{S1} C''_{N1}} \right)
        .
    \end{gather*}
    
Since $\tilde{D}'_{i2} \geq \check{D}'_{i2}$ for all $i$, the previous inequality implies
\begin{gather*}
        \left( \check{u}'_{N2} \check{D}'_{N2} 
        + \check{u}'_{S2} \check{D}'_{S2}\right)
        \left(C''_{S1} + C''_{N1}\right)
        \geq 
        \nu \left(\check{D}'_{N2} + \check{D}'_{S2}\right) 
        \left( {\check{u}'_{N1} C''_{S1} + \check{u}'_{S1} C''_{N1}} \right)
        .
    \end{gather*}

Noting that $(\check{D}'_{N2} + \check{D}'_{S2})$ is negative so the sign flips upon division, rearranging yields
\begin{gather*}
        \frac{\check{u}'_{N2} \check{D}'_{N2} 
        + \check{u}'_{S2} \check{D}'_{S2}}
        {\check{D}'_{N2} + \check{D}'_{S2}}
        \leq 
        \nu 
        \frac{{\check{u}'_{N1} C''_{S1} + \check{u}'_{S1} C''_{N1}} }{C''_{S1} + C''_{N1}}.
    \end{gather*}

We have thus shown that
\begin{gather*}
    \tilde{\tau} \geq \check{\tau}
    \implies
        \frac{\check{u}'_{N2} \check{D}'_{N2} 
        + \check{u}'_{S2} \check{D}'_{S2}}
        {\check{D}'_{N2} + \check{D}'_{S2}}
        \leq 
        \nu 
        \frac{{\check{u}'_{N1} C''_{S1} + \check{u}'_{S1} C''_{N1}} }{C''_{S1} + C''_{N1}}.
\end{gather*}

By contraposition, we therefore have
\begin{gather*}
        \frac{\check{u}'_{N2} \check{D}'_{N2} 
        + \check{u}'_{S2} \check{D}'_{S2}}
        {\check{D}'_{N2} + \check{D}'_{S2}}
        > 
        \nu 
        \frac{{\check{u}'_{N1} C''_{S1} + \check{u}'_{S1} C''_{N1}} }           {C''_{S1} + C''_{N1}}
        \implies
        \tilde{\tau} < \check{\tau}.
\end{gather*}

Together, the proofs of the forward and backward implications yield the equivalence
   \begin{gather*}
    \tilde{\tau} < \check{\tau}
    \iff
        \frac{\check{u}'_{N2} \check{D}'_{N2} 
        + \check{u}'_{S2} \check{D}'_{S2}}
        {\check{D}'_{N2} + \check{D}'_{S2}}
        > 
        \nu 
        \frac{{\check{u}'_{N1} C''_{S1} + \check{u}'_{S1} C''_{N1}} }{C''_{S1} + C''_{N1}}.
\end{gather*}

\end{proof}

\subsection{Calculation of welfare-equivalent consumption changes} \label{Assec: Calculation of welfare-equivalent consumption changes}

    The aim is to calculate the consumption changes in the initial period (2005), $\Delta X_{i0}$ (where $t=0$ corresponds to the year 2005), that would yield a welfare change (in utility terms) that is equivalent to the intertemporal welfare difference between each of the utilitarian solutions and the Negishi solution. I begin by computing the net present value (NPV) of the utilitarian welfare changes across two solutions for each region (the numerator in Equation~\eqref{E:WECC_step1})
    \footnote{
    I use this approach, rather than calculating the NPV by discounting the consumption changes with fixed discount rates, to account for the fact that the social discount rates (SDR) are different across regions and change over time due to different economic growth rates. To see this, note that the SDR is approximated by the Ramsey Rule, $SDR \approx \rho + \eta g$, where $g$ is the growth rate in per capita consumption, which differs across regions and over time.
    }, 
    and divide that by the population size in 2005 to obtain the required per capita welfare change in 2005. I then set the NPV of the per capita welfare change equal to a counterfactual per capita welfare change in the initial period:
\begin{equation}\label{E:WECC_step1}
        \frac{\sum_t{L_{it}\beta^t u(x_{it}^{Util})} - \sum_t{L_{it}\beta^t u(x_{it}^{Neg})}}{L_{i0}} = u(x_{i0}^{cf}) - u(x_{i0}^{Neg}),
    \end{equation}
    
    where  $\beta^t$ is the utility discount factor ($\beta^t = (1+\rho)^{-t}$, where $\rho$ is the utility discount rate), and the superscripts on $x_{it}$ indicate whether this is the per capita consumption of one of the two utilitarian solutions ($Util$), the Negishi solution ($Neg$), or a counterfactual ($cf$) consumption which we compute. The remaining notation is the same as in the main text.
    
    Using the isoelastic specification of the utility function in the RICE model, $u(x_{it}) = \frac{x_{it}^{1-\eta}}{1-\eta}$ (for $\eta=1$,  $u(x_{it}) = \log(x_{it})$, where $\eta$ is the elasticity of the marginal utility of consumption), we can solve for the counterfactual per capita consumption in the initial period:
    \begin{equation*}
        x_{i0}^{cf} = \left[ (1-\eta)\frac{\sum_t{L_{it}\beta^t u(x_{it}^{Util})} - \sum_t{L_{it} \beta^t u(x_{it}^{Neg})}}{L_{i0}} + (x_{i0}^{})^{1-\eta} \right]^{\frac{1}{1-\eta}}.
    \end{equation*}
    
    Finally, the aggregate welfare-equivalent consumption change is calculated as
    \begin{equation*}
        \Delta X_{i0} = L_{i0} \left( x_{i0}^{cf} - x_{i0}^{Neg} \right).
    \end{equation*}


    In addition, I express the utilitarian welfare changes as the welfare-equivalent consumption change in 2005 if consumption were distributed equally (specifically, the ``Global'' values in Figures \ref{fig:C_pctchange_NPV} and \ref{fig:C_pctchange_NPV_rho01} and all values in Figures \ref{fig:C_bar_change_NPV} and \ref{fig:C_bar_change_NPV_rho01}). Let $\bar{x}_0$ be the world average per capita consumption in 2005; that is
    $\bar{x}_0 = \frac{\sum _i L_{i0} x_{i0}}{\sum_i L_{i0}}$.

    I then proceed as above to calculate the counterfactual per capita consumption in the initial period for the world average consumer:
    \begin{equation*}
        \bar{x}_{i0}^{cf} = \left[ (1-\eta)\frac{\Delta PV(U)}{\sum_i L_{i0}} + (\bar{x}_{i0}^{})^{1-\eta} \right]^{\frac{1}{1-\eta}},
    \end{equation*}
    where $\Delta PV(U) = \sum_t{L_{it}\beta^t u(x_{it}^{Util})} - \sum_t{L_{it}\beta^t u(x_{it}^{Neg})}$ for the regional values and $\Delta PV(U) = \sum_t \sum_i{L_{it}\beta^t u(x_{it}^{Util})} - \sum_t \sum_i{L_{it}\beta^t u(x_{it}^{Neg})}$ for the global values.

    Finally, the aggregate welfare-equivalent consumption change in 2005, if consumption were distributed equally, is calculated as follows: 
    \begin{equation*}
            \Delta \bar{X}_{0} = \sum_i L_{i0} \left( \bar{x}_{0}^{cf} - \bar{x}_{0}^{Neg} \right).
        \end{equation*}

\clearpage

\stepcounter{section}
\section*{Appendix \thesection: Additional Tables and Figures} \label{sec:appendixb}
\addcontentsline{toc}{section}{Appendix \thesection: Additional Tables and Figures} 

\subsection{Additional tables}

\renewcommand{\arraystretch}{1.2}
\begin{table}[!htb]
\centering
\caption{Cumulative global industrial CO$_2$ emissions (GtCO$_2$) depending on the optimization problem and the utility discount rate ($\rho$).}
\begin{tabular}{l c c c}
\hline \hline
\multirow{2}{*}{} & \multicolumn{3}{c}{Optimization problem} \\ 
\cline{2-4}
\noalign{\vskip 4pt}
\parbox{2.5cm}{Utility \\ discount rate} & \parbox{2.5cm}{\centering Negishi SWF} & \parbox{4.79cm}{\centering Utilitarian SWF:\\Uniform carbon price}
 & \parbox{5cm}{\centering Utilitarian SWF:\\Differentiated carbon price} \\
\noalign{\vskip 3pt}
\hline
$\rho =1.5\%$ & 3,815 & 3,629 & 3,032 \\
$\rho =0.1\%$ & 1,373 & 1,199 & 1,005 \\
\hline \hline
\end{tabular}
\label{T: Cumulative global industrial CO2 emissions}
\end{table}

\subsection{Additional figures}

\todo[inline]{Make figures nicer and consistent. Remove headings. Change figure numbering to B.x.}

\begin{figure}[!htb]
    \centering
    \includegraphics[width=1\linewidth]{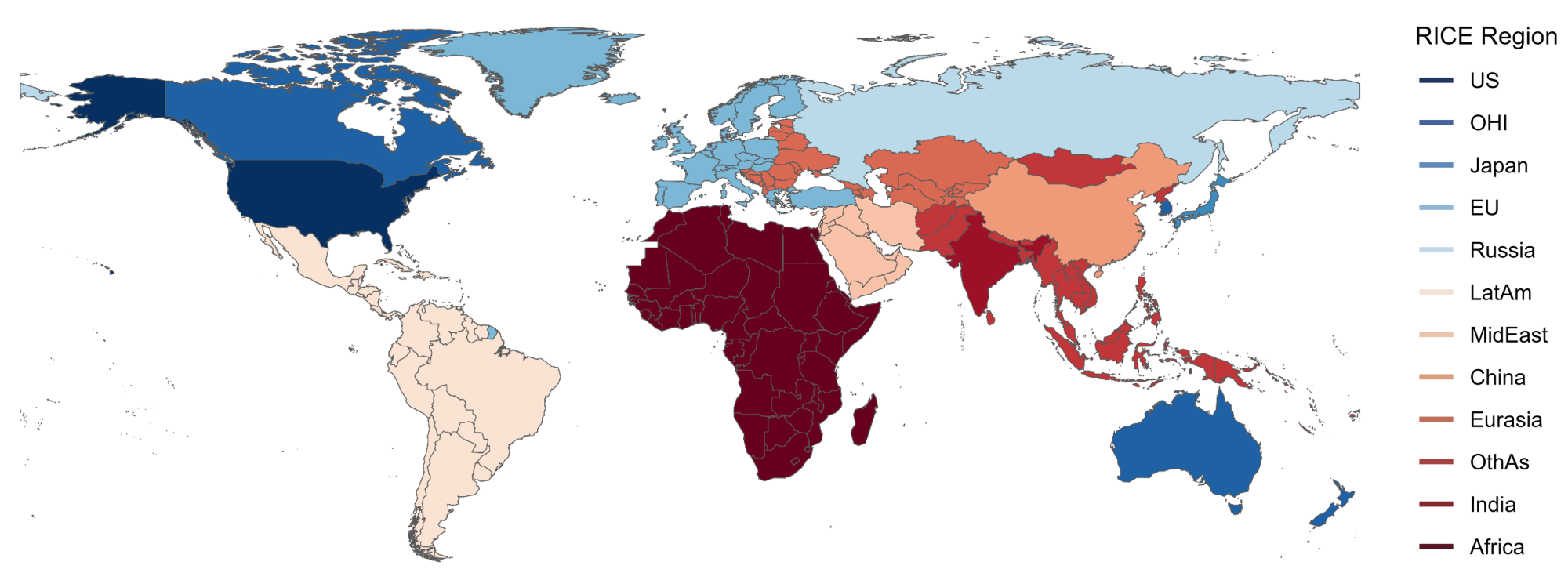}
    \caption{Regions of the RICE model.}
    \vspace{1mm}  
    \begin{minipage}{1\textwidth}
        \small \textit{Notes}:
        Countries of the same color belong to the same region (OHI = Other High Income countries, OthAs = Other Asia). Regions are arranged on the color scale from rich (blue) to poor (red).
    \end{minipage}
    \label{fig:RICE regions}
\end{figure}

\begin{figure}[!htb]
\centering \includegraphics[width=1\textwidth]{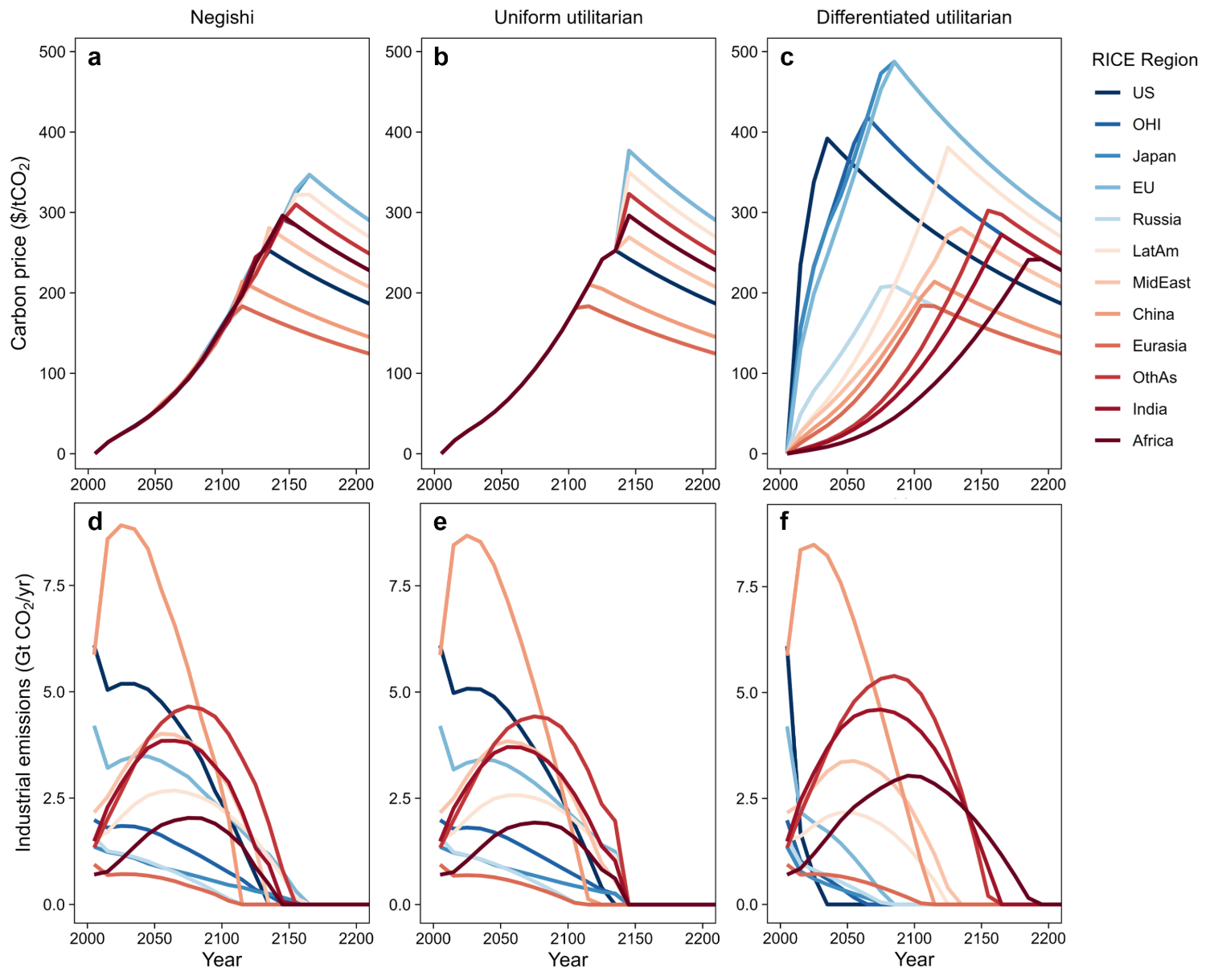}
\caption{Optimal trajectories for carbon prices and industrial emissions conditional on the optimization problem.}
\vspace{1mm}  
    \begin{minipage}{1\textwidth}
        \small \textit{Notes}:
        Results are for the utility discount rate of 1.5\%. Panels (a)-(c) show the optimal carbon price trajectories under the Negishi solution (a) and the utilitarian solution with (b) and without (c) the additional constraint of equalized carbon prices. Panels (d)-(f) show the corresponding industrial emission trajectories. Note that the carbon price decreases once it reaches the region-specific backstop price. Also note that Mimi-RICE-plus only yields an approximately equalized carbon price for the Negishi solution.
    \end{minipage}
\label{F: Optimal trajectories for the carbon price and industrial emissions}
\end{figure}

\begin{figure}[!htb]
    \centering \includegraphics[width=1\textwidth]{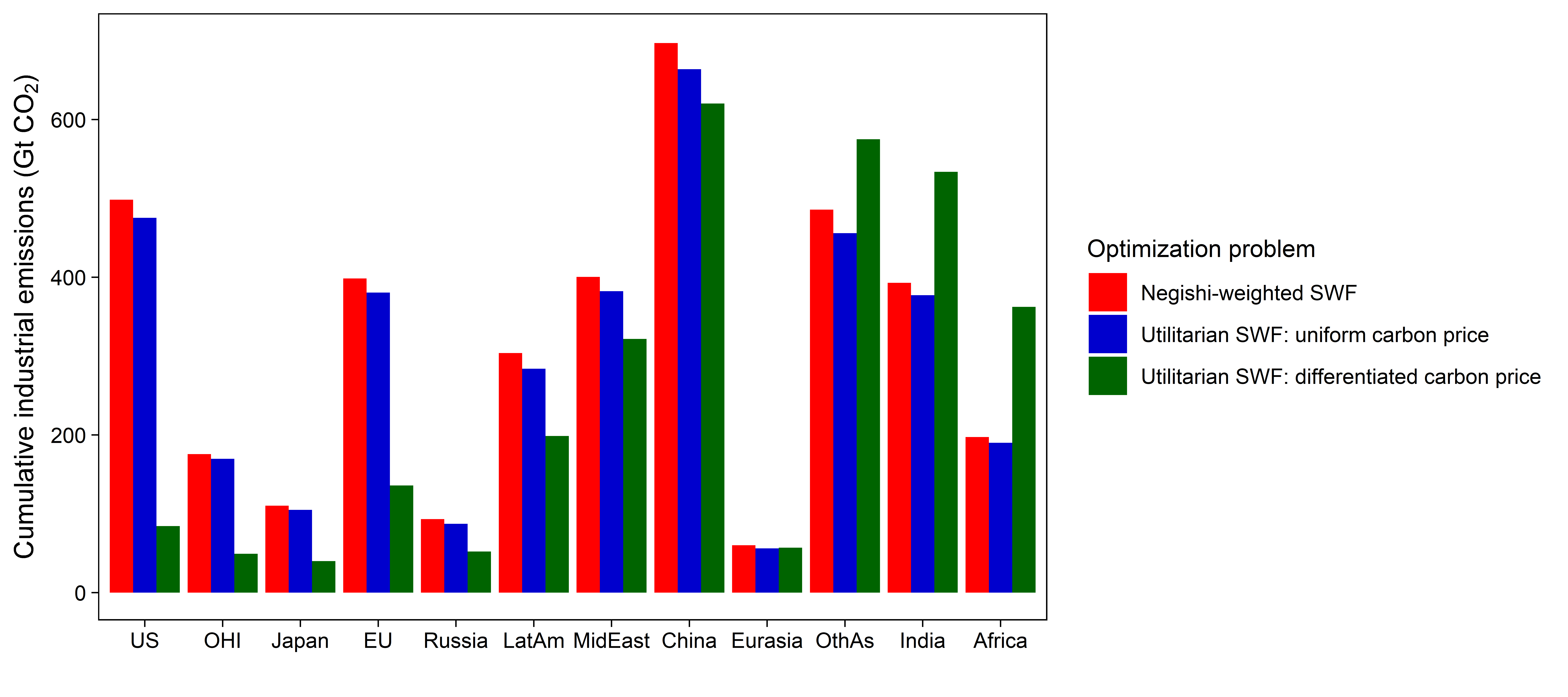}
    \caption{Optimal cumulative industrial emissions depending on the optimization problem.}
    \vspace{1mm}  
    \begin{minipage}{1\textwidth}
        \small \textit{Notes}:
        The figure shows the results for the utility discount rate of 1.5\%.
    \end{minipage}
    \label{EIND_cum_ro_NWvsUtil}
\end{figure}

\begin{figure}[!htb]
    \centering
    \includegraphics[width=1\linewidth]{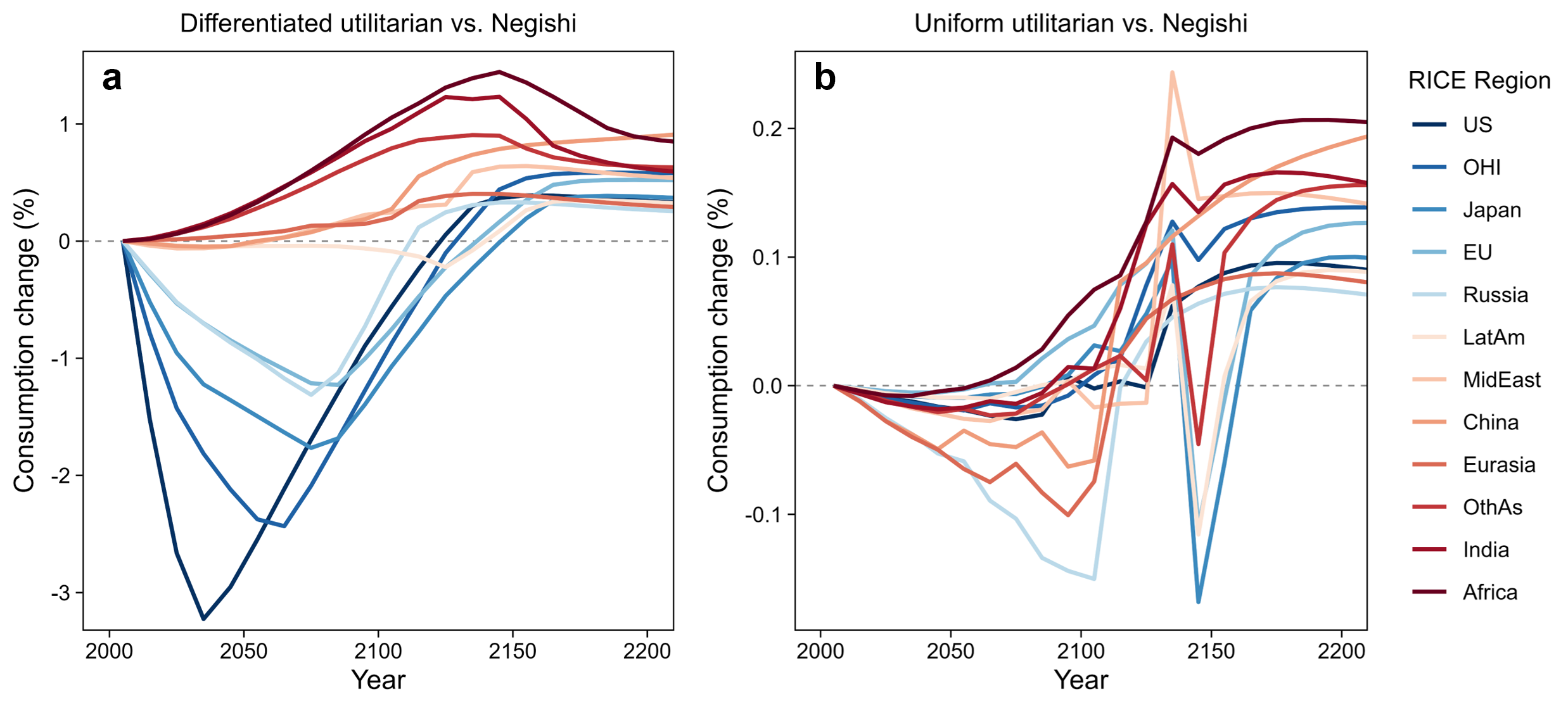}
    \caption{Relative regional consumption changes between the Negishi solution and the utilitarian solutions.}
    \vspace{1mm}  
    \begin{minipage}{1\textwidth}
        \small \textit{Notes}:
        Consumption changes are percentage changes relative to the consumption level in the Negishi solution. Positive values indicate a higher consumption level in the utilitarian solutions. The figure shows the results for the utility discount rate of 1.5\%.
    \end{minipage}
    \label{fig:C_pctchange}
\end{figure}

\begin{figure}[!htb]
    \centering
    \includegraphics[width=1\linewidth]{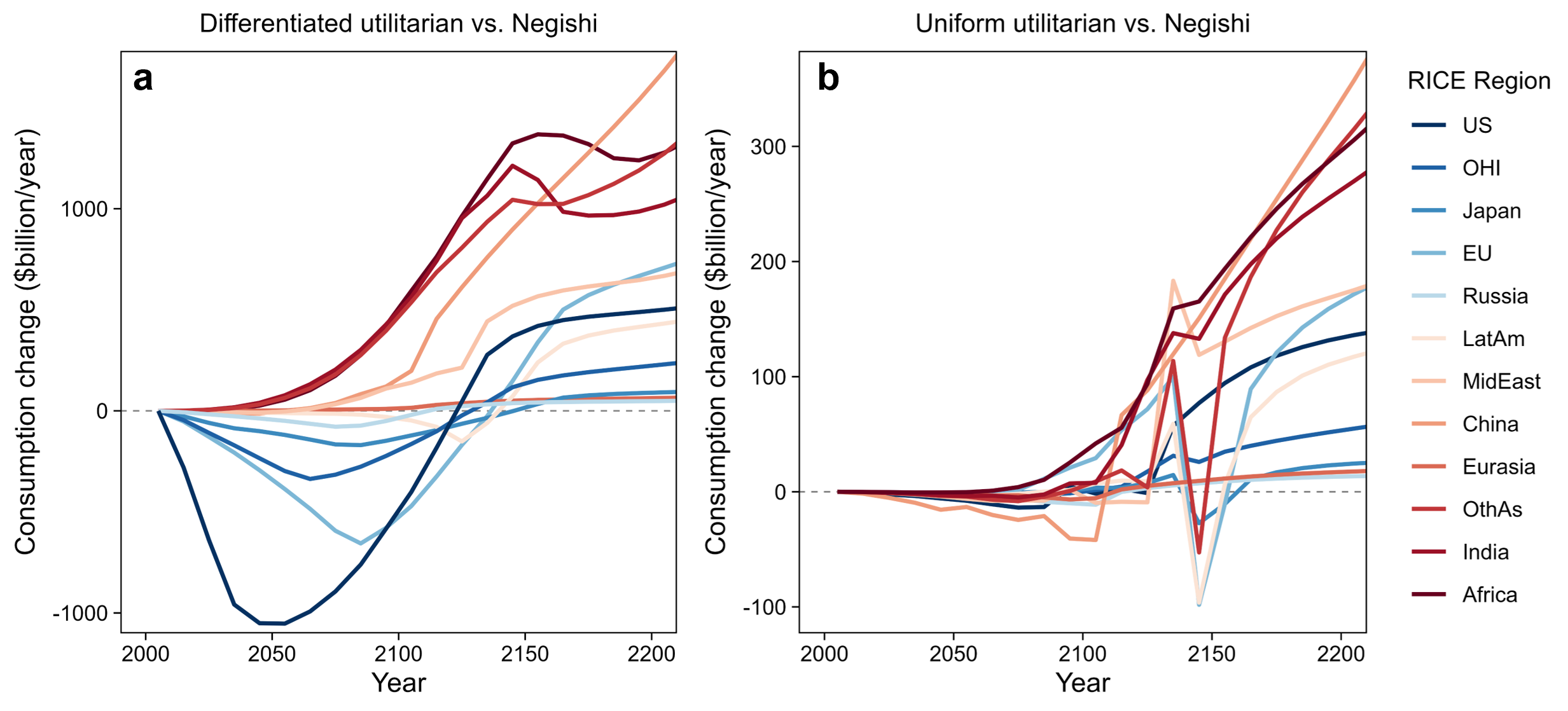}
    \caption{Regional consumption changes between the Negishi solution and the utilitarian solutions.}
    \vspace{1mm}  
    \begin{minipage}{1\textwidth}
        \small \textit{Notes}:
        Positive values indicate a higher consumption level in the utilitarian solutions. The figure shows the results for the utility discount rate of 1.5\%.
    \end{minipage}
    \label{fig:C_change}
\end{figure}

\begin{figure}[!htb]
    \centering
    \includegraphics[width=1\linewidth]{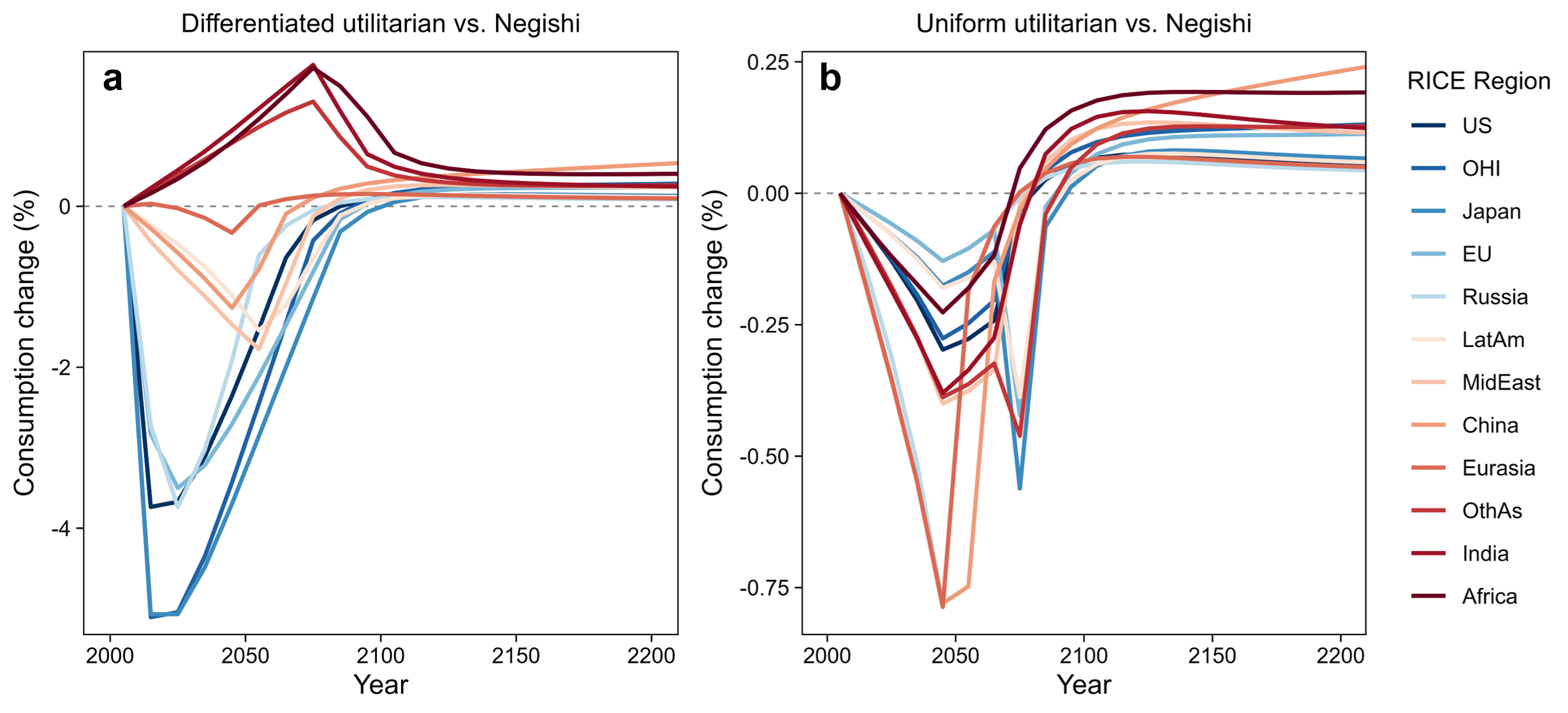}
    \caption{Relative regional consumption changes between the Negishi solution and the utilitarian solutions.}
    \vspace{1mm}  
    \begin{minipage}{1\textwidth}
        \small \textit{Notes}:
        Consumption changes are percentage changes relative to the consumption level in the Negishi solution. Positive values indicate a higher consumption level in the utilitarian solutions. The figure shows the results for the utility discount rate of 0.1\%.
    \end{minipage}
    \label{fig:C_pctchange}
\end{figure}

\begin{figure}[!htb]
    \centering
    \includegraphics[width=1\linewidth]{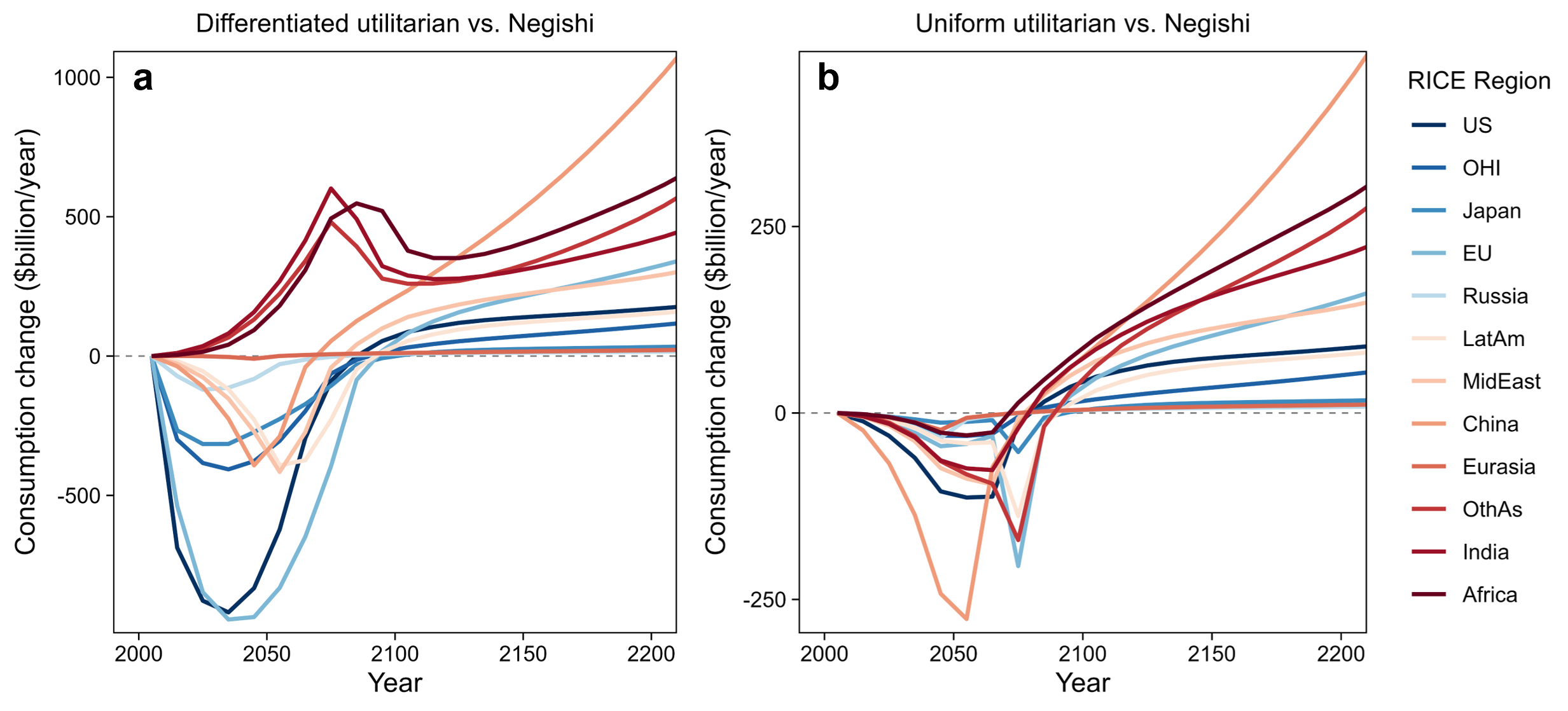}
    \caption{Regional consumption changes between the Negishi solution and the utilitarian solutions.}
    \vspace{1mm}  
    \begin{minipage}{1\textwidth}
        \small \textit{Notes}:
        Positive values indicate a higher consumption level in the utilitarian solutions. The figure shows the results for the utility discount rate of 0.1\%.
    \end{minipage}
    \label{fig:C_change}
\end{figure}

\begin{figure}[!htb]
    \centering
    \includegraphics[width=0.75\linewidth]{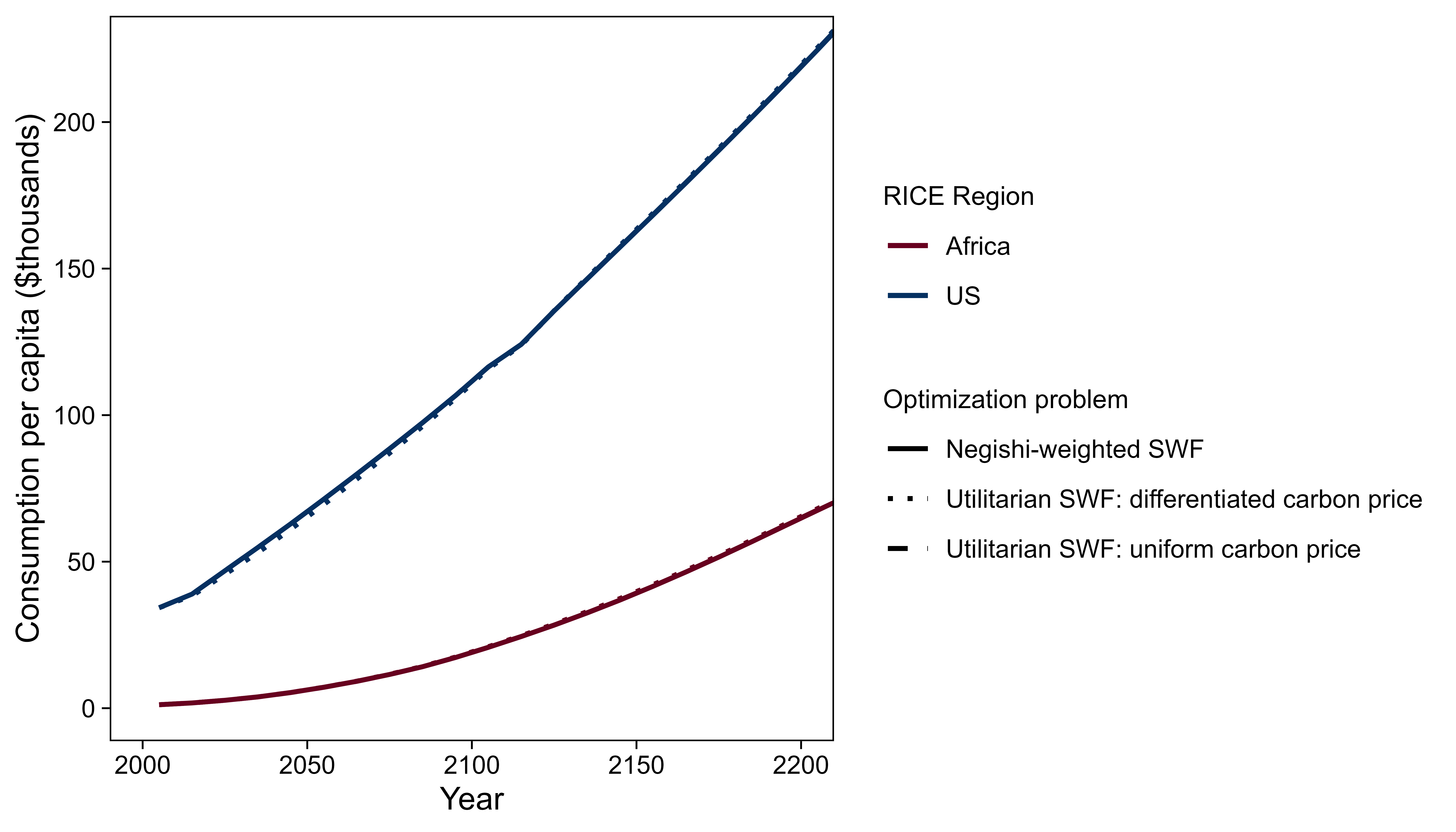}
    \caption{Consumption per capita trajectories for Africa and the US.}
    \vspace{1mm}  
    \begin{minipage}{1\textwidth}
        \small \textit{Notes}:
        The figure shows the results for the utility discount rate of 1.5\%.
    \end{minipage}
    \label{fig: Consumption per capita trajectories}
\end{figure}

\begin{figure}[!htb]
    \centering
    \includegraphics[width=0.9\linewidth]{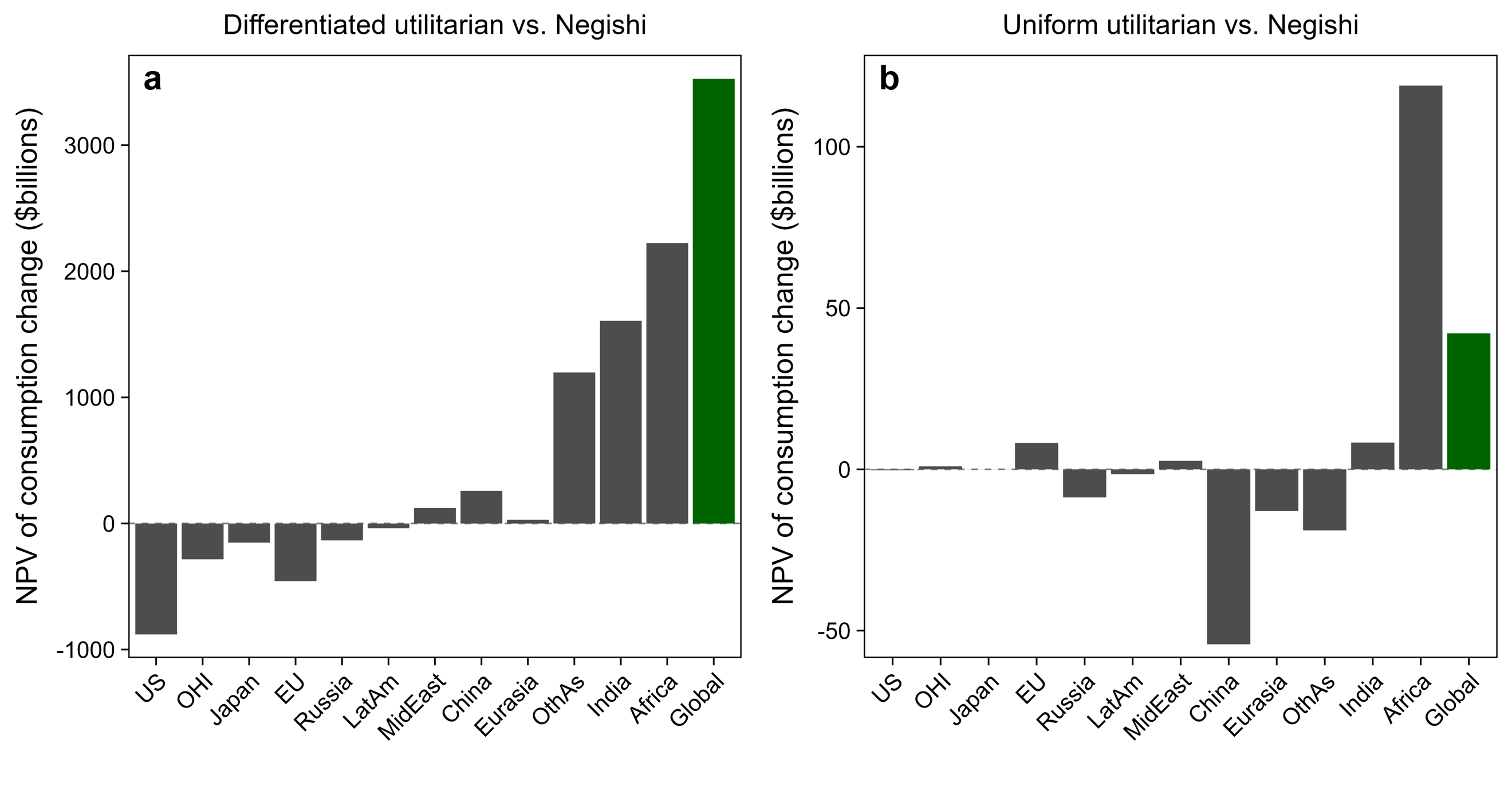}
    \caption{Utilitarian welfare changes.}
    \vspace{1mm}  
    \begin{minipage}{1\textwidth}
        \small \textit{Notes}:
        The values express the regional and global utilitarian welfare change in the welfare-equivalent consumption change in 2005 if consumption were distributed equally (for details, see Appendix \ref{Assec: Calculation of welfare-equivalent consumption changes}). The figure shows the results for the utility discount rate of 1.5\%.
    \end{minipage}
    \label{fig:C_bar_change_NPV}
\end{figure}
\begin{figure}[!htb]
    \centering
    \includegraphics[width=0.9\linewidth]{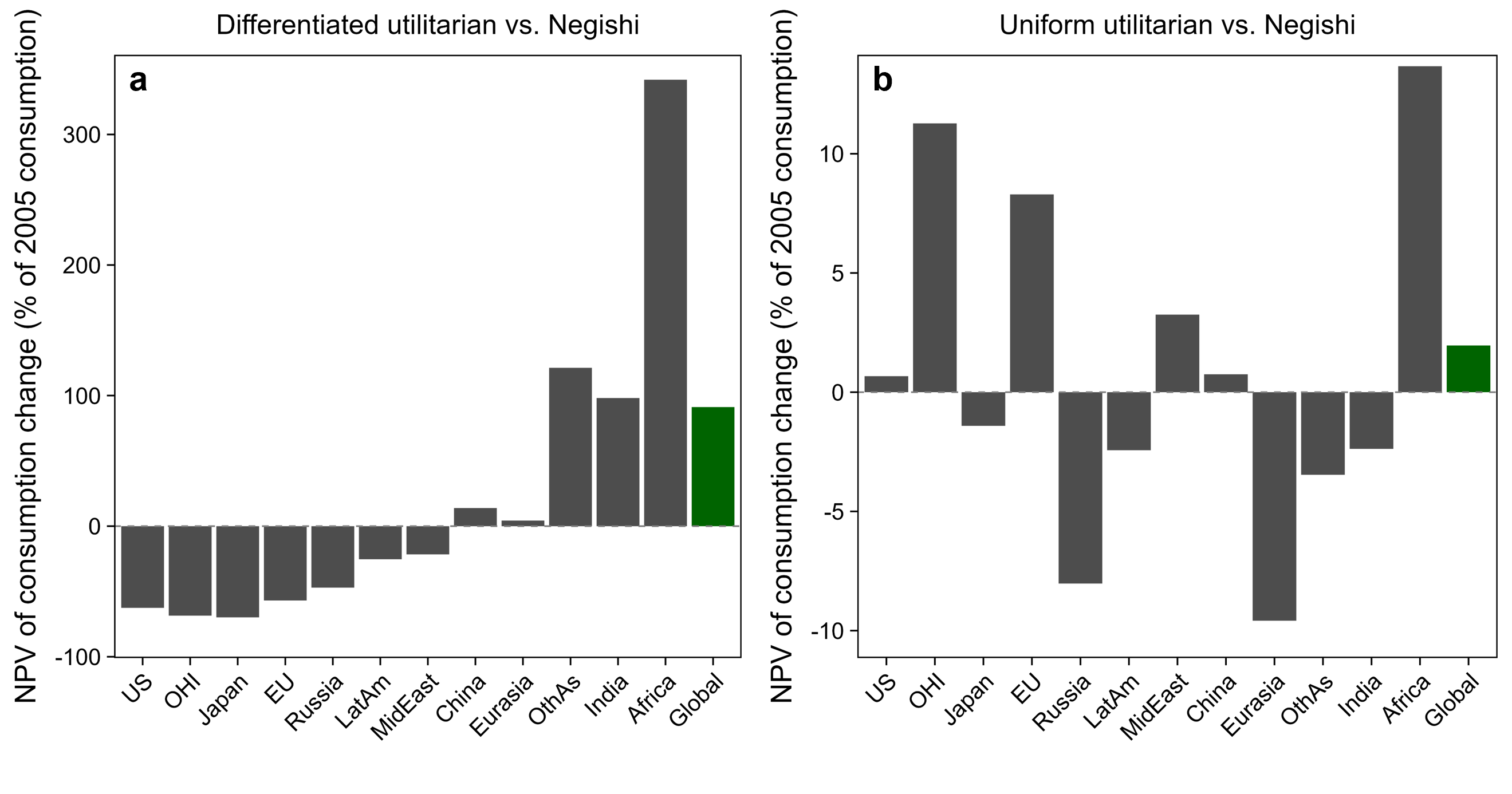}
    \caption{Net present value of consumption changes.}
    \vspace{1mm}  
    \begin{minipage}{1\textwidth}
        \small \textit{Notes}:
        The values show the welfare-equivalent consumption change in 2005, as a percentage of the consumption in 2005. The ``Global'' value expresses the global utilitarian welfare change in the welfare-equivalent consumption change in 2005 if consumption were distributed equally (for details, see Appendix \ref{Assec: Calculation of welfare-equivalent consumption changes}). The figure shows the results for the utility discount rate of 0.1\%.
    \end{minipage}
    \label{fig:C_pctchange_NPV_rho01}
\end{figure}

\begin{figure}[!htb]
    \centering
    \includegraphics[width=0.9\linewidth]{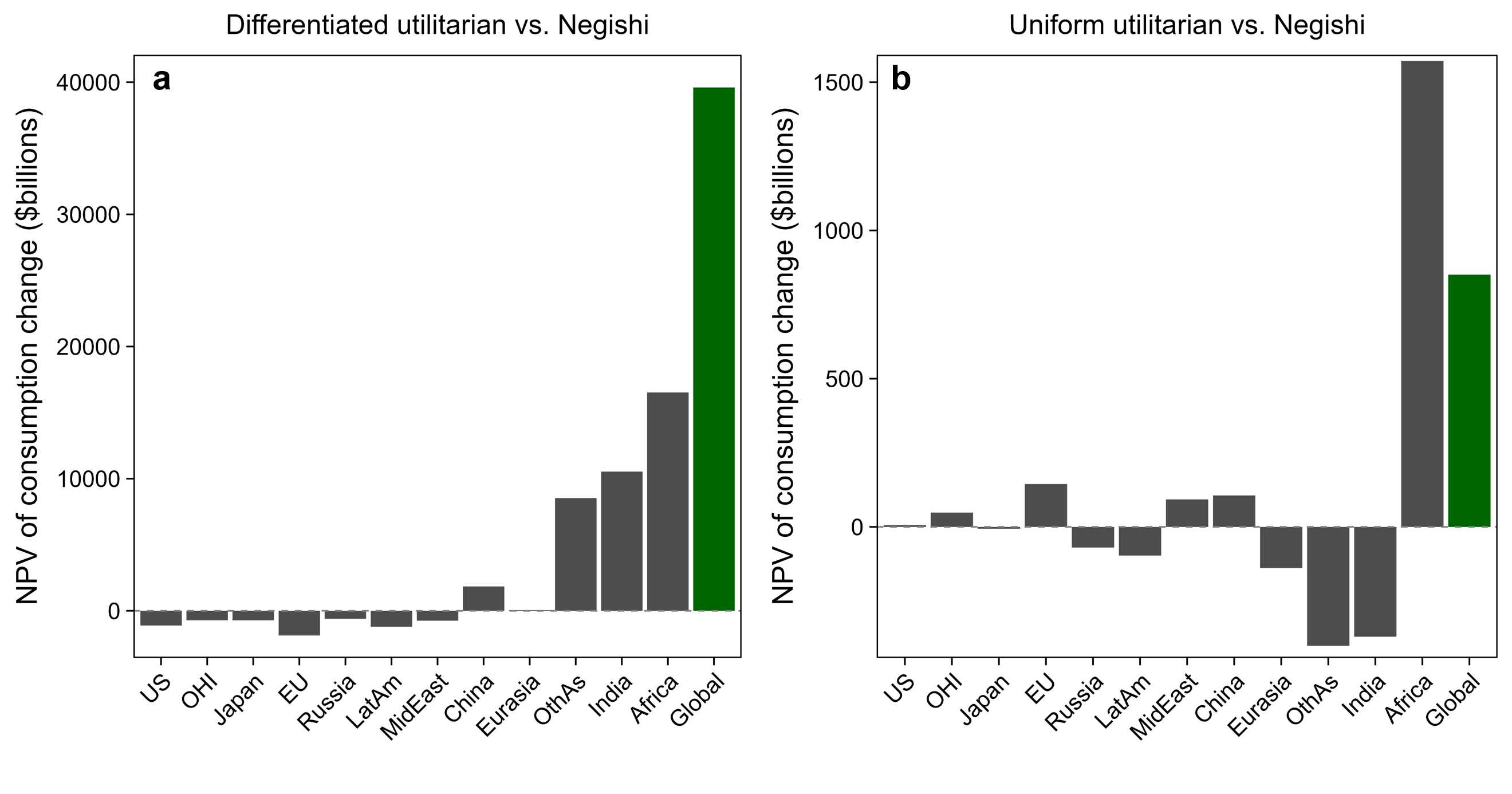}
    \caption{Utilitarian welfare changes.}
    \vspace{1mm}  
    \begin{minipage}{1\textwidth}
        \small \textit{Notes}:
        The values express the regional and global utilitarian welfare change in the welfare-equivalent consumption change in 2005 if consumption were distributed equally (for details, see Appendix \ref{Assec: Calculation of welfare-equivalent consumption changes}). The figure shows the results for the utility discount rate of 0.1\%.
    \end{minipage}
    \label{fig:C_bar_change_NPV_rho01}
\end{figure}

\begin{figure}[!htb]
    \centering
    \includegraphics[width=0.95\linewidth]{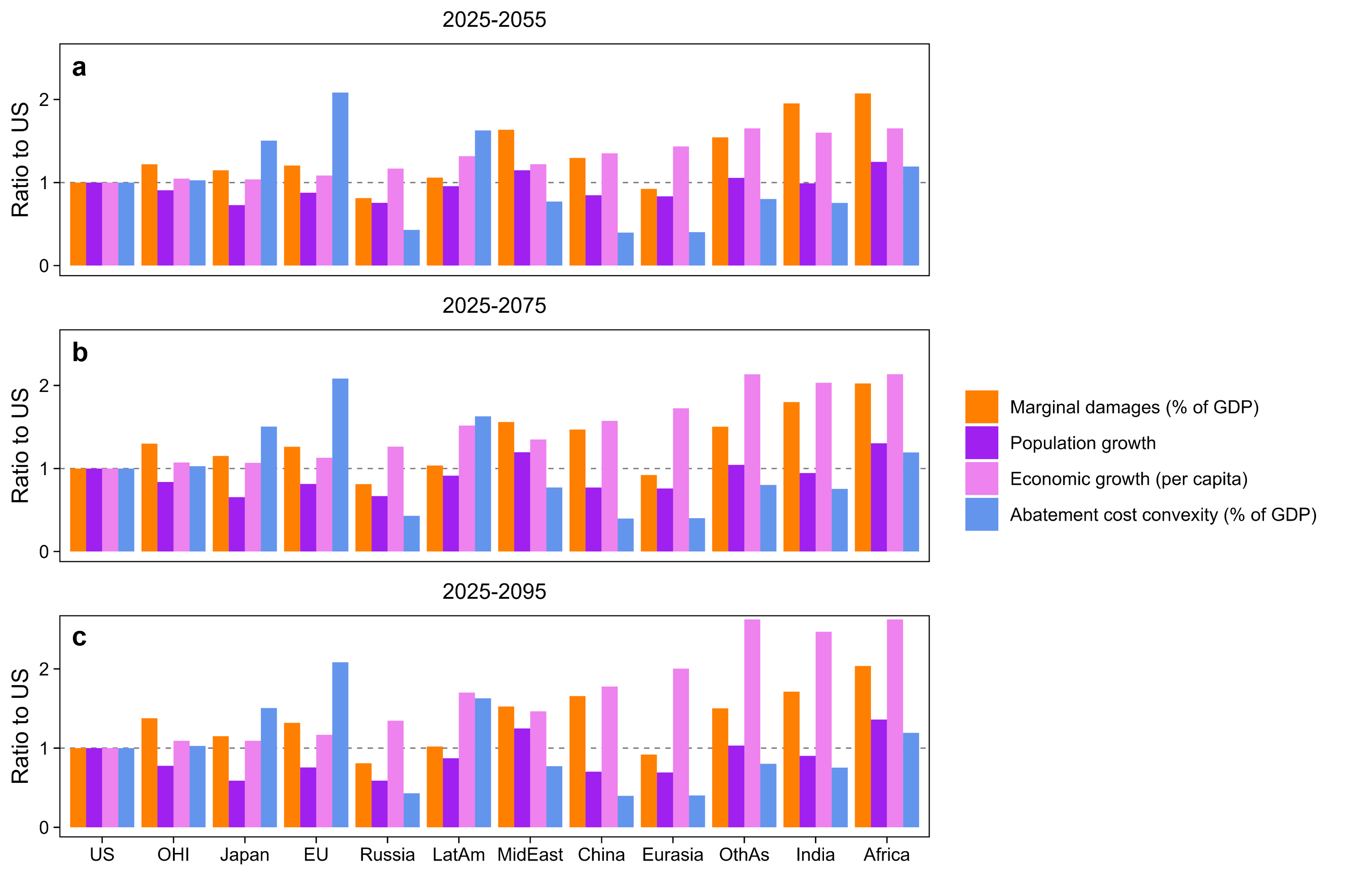} 
    \caption{Relative regional marginal damages, abatement cost convexities, population and economic growth.}
    \vspace{1mm}  
    \begin{minipage}{1\textwidth}
        \small \textit{Notes}:
        The ratio of the convexities in the abatement cost functions is $c''_{i,2025}/c''_{US,2025}$ (evaluated at uniform carbon prices).
        The ratio of the marginal damages as a percentage of GDP is $d'_{i,t}/d'_{US,t}$.
        The ratios of population growth and economic growth are given by the relative growth factors $\frac{L_{i,t}/L_{i,2025}}{L_{US,t}/L_{US,2025}}$ and $\frac{y_{i,t}/y_{i,2025}}{y_{US,t}/y_{US,2025}}$. where $y$ is the GDP per capita. The year $t$ corresponds to either 2055, 2075, or 2095 in panels (a), (b), and (c), respectively.
        The figure shows the results for the utility discount rate of 1.5\%.
    \end{minipage}
    \label{fig:ratios_growth_2025}
\end{figure}

\begin{figure}[!htb]
\centering \includegraphics[width=1\textwidth]{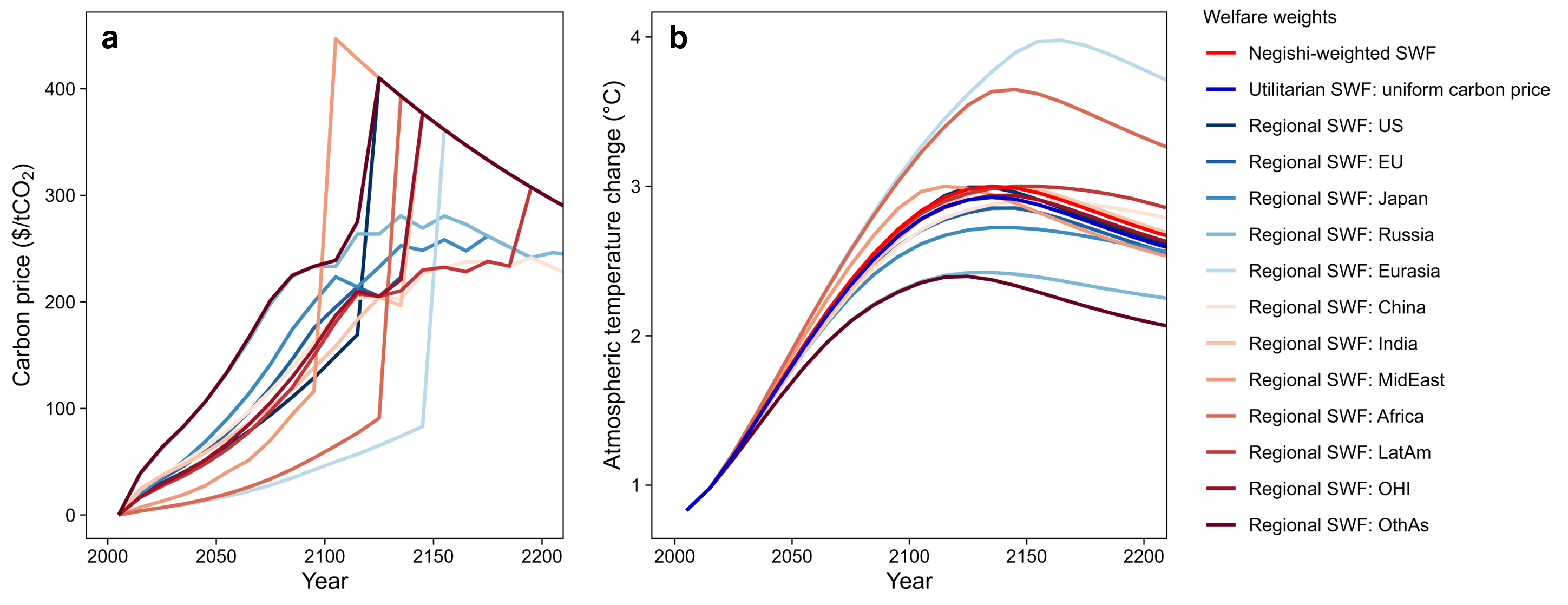}
\caption{Regions' preferred uniform carbon prices and corresponding temperature trajectories.}
\vspace{1mm}  
    \begin{minipage}{1\textwidth}
        \small \textit{Notes}:
        The figure shows the results for the utility discount rate of 1.5\%. 
        Temperature changes are relative to 1900.
        Note that Russia, Eurasia and China have the lowest backstop technology prices, causing the large carbon price increases once it is beneficial for these regions to increase the globally uniform carbon price above the level of their respective backstop prices.
    \end{minipage}
\label{fig: preferred uniform carbon prices}
\end{figure}


\clearpage

\stepcounter{section}
\section*{Appendix \thesection: Supplementary Information} \label{sec:appendixc}
\addcontentsline{toc}{section}{Appendix \thesection: Supplementary Information} 



\subsection{Optimal carbon prices for arbitrary welfare weights} \label{Assec: Optimal carbon prices for arbitrary welfare weights}

\begin{definition}
    The \textbf{optimal differentiated carbon price} (for arbitrary welfare weights) for region $i$ is implicitly defined by
    \begin{gather}\label{Def: optimal differentiated carbon price}
        \tau_i^{diff} = C'_i(A_i^*) = - \frac{1}{\alpha_i u'(x_i^*)}
        \sum_{j \in \mathcal{I}} \alpha_j u' (x_j^*) D'_j(A^*).
    \end{gather}
\end{definition}

In words, the optimal differentiated carbon price is equal to the sum of the avoided weighted marginal welfare damages divided by the weighted marginal utility. Thus, the optimal differentiated carbon price is inversely proportional to the \textit{weighted} marginal utility, $\alpha_i u_i'$. Consequently, the optimal differentiated carbon price is lower in the region with the higher weighted marginal utility. This result has first been established by \textcite{eyckmans_efficiency_1993} and \textcite{chichilnisky_who_1994}. Note that, if the weighted marginal utilities are equal across regions (i.e., $\alpha_S u'_S = \alpha_N u'_N$), we obtain the knife-edge result that the optimal ``differentiated'' carbon price is in fact uniform. This is the case if the weights are the Negishi weights. I return to this below. 

It is insightful to rearrange Equation~\eqref{Def: optimal differentiated carbon price} to
\begin{gather*}
    \alpha_i u'(x_i^*) C'^*_i = -  \sum_{j \in \mathcal{I}} \alpha_j u' (x_j^*) D'_j(A^*).
\end{gather*}

Since the right-hand side is the same for all regions, we know that $\alpha_N u'(x_N^*) C'^*_N = \alpha_S u'(x_S^*) C'^*_S$. That is, the weighted marginal welfare cost of abatement (rather than the marginal abatement cost in monetary terms) is equalized across regions.

\begin{definition}
    The \textbf{optimal uniform carbon price} (for arbitrary welfare weights) is implicitly defined by
    \begin{gather}
    \label{Def: optimal uniform carbon price}
        \tau^{uni} = C_i'(A_i^*) = - \sum_i \alpha_i u'(x_i^*) D'_i(A^*) 
        \frac{C''_S + C''_N}
        {\alpha_N u'(x_N^*) C''_S + \alpha_S u'(x_S^*) C''_N} .
    \end{gather}
\end{definition}

The optimal uniform carbon price again depends on the sum of the avoided weighted marginal welfare damages. However, it also depends on a second factor which contains the second derivatives of the abatement cost functions. To gain some intuition, we can note that the expression collapses to the expression for the optimal differentiated carbon price if one of the regions has a linear abatement cost function\footnote{
Note that I am here, for a moment, relaxing the assumption of strictly convex abatement cost functions.
}; that is, $C''_i = 0$ for one $i$.
Specifically, if the Global North has a linear abatement cost function, then the expression collapses to the differentiated carbon price expression for the Global North\footnote{
However, note that while the algebraic expression is the same as for the optimal differentiated carbon price, the values of the arguments, and thus the optimal carbon prices, are not. This is because the aggregate abatement would be different from the differentiated carbon price optimum since the optimal carbon price in both regions is given by this expression under the uniform carbon price solution.
}; and vice-versa for the Global South. The intuition is that if one region has a linear abatement cost function, and thus constant marginal abatement costs, then the only way to equalize marginal abatement costs across regions is to adjust the marginal abatement cost of the other region. Unsurprisingly, this provides the intuition that the optimal uniform carbon price lies in between the two optimal differentiated prices. Moreover, whether the uniform carbon price is closer to one or the other differentiated carbon prices depends on the relative convexities of the abatement cost functions, the welfare weights, and the relative marginal utilities at the optimal solution.

\subsection{Abatement cost and damage functions of the RICE model} \label{Assec: Abatement cost and damage functions of the RICE model}
In the RICE model, regional climate damages are given by
\begin{gather*}
    \label{E: Temp damage function RICE}
     D_{it} = Y_{it} d_{it},
\end{gather*}
where $Y_{it}$ is the GDP gross of damages and abatement costs, and $d_{it}$ denotes the climate damage as a fraction of GDP, which is composed of damages from atmospheric temperature changes and sea level rise (which are ultimately functions of emissions/abatement).

Regional abatement costs are given by
\begin{gather*}\label{E: AC function RICE}
    C_{it} = Y_{it}
    \underbrace{\frac{b_{it} \sigma_{it}}{\theta} \left(\frac{A_{it}}{\sigma_{it} Y_{it}}\right)^{\theta}}_{c_{it}},
\end{gather*}
where $b_{it}$ is the price of a backstop technology (i.e., the marginal abatement cost at which emissions can be abated completely), $\sigma_{it}$ is the baseline emissions intensity (emissions per GDP) of the economy in the absence of abatement, $\theta > 1$ is a parameter that governs the convexity of the abatement cost function (in RICE, $\theta=2.8$). Note that the abatement costs per GDP, $c_{it}$, are a function of $\frac{A_{it}}{Y_{it}}$.

The damage function from atmospheric temperature changes is shown in Figure \ref{F: Regional damage functions for atmospheric temperature changes in the RICE model.}. The trajectories of the regional baseline carbon intensities and backstop technology prices are shown in Figure \ref{F: carbon intensity and backstop prices}.

\begin{figure}[!ht]
    \centering
    \includegraphics[width=0.5\linewidth]{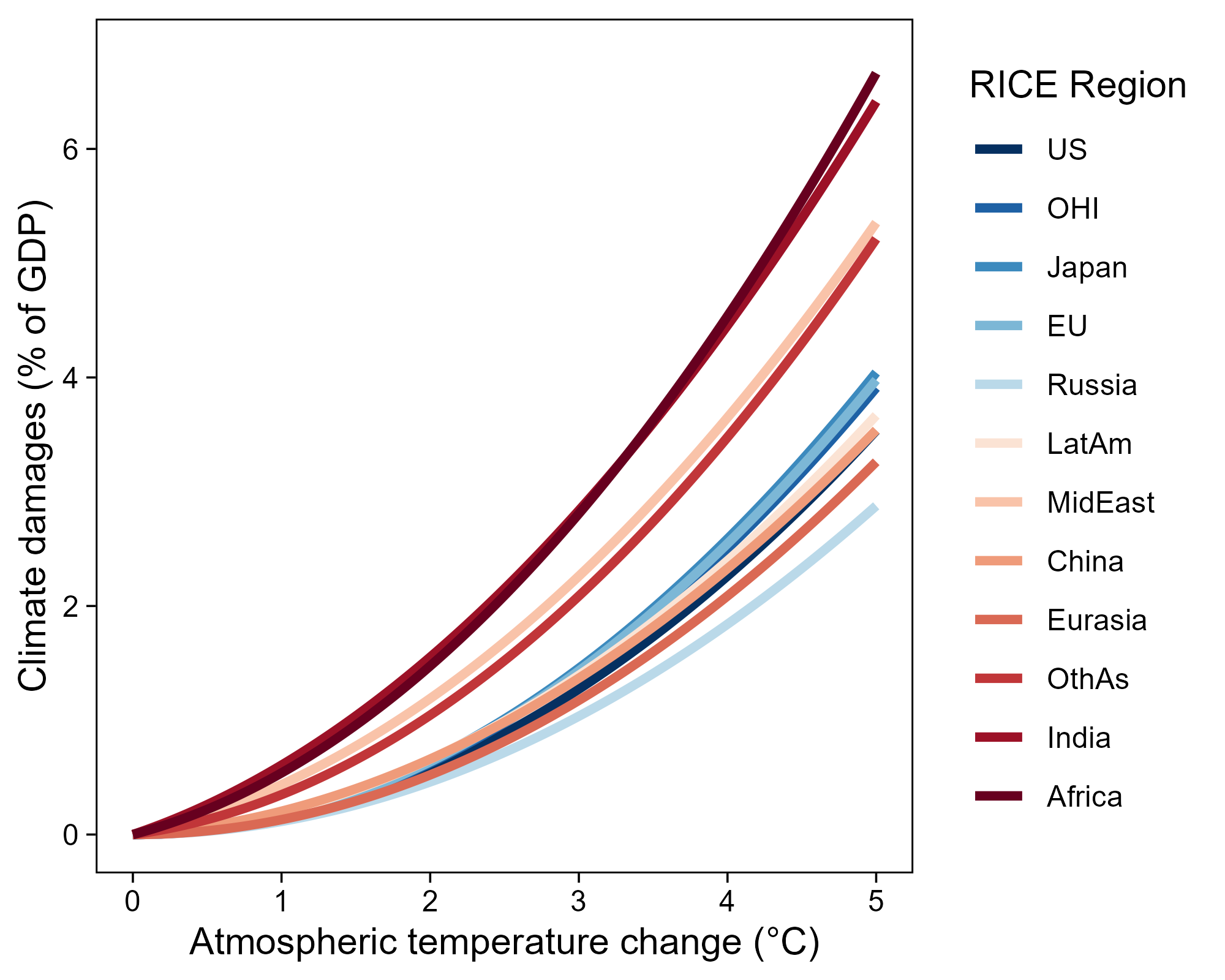}
    \caption{Regional damage functions for atmospheric temperature changes in the RICE model.}
    \vspace{1mm}  
    \begin{minipage}{1\textwidth}
        \small \textit{Notes}:
        Temperature changes are relative to temperatures in 1900.
    \end{minipage}
    \label{F: Regional damage functions for atmospheric temperature changes in the RICE model.}
\end{figure}

\begin{figure}[!ht]
    \centering
    \includegraphics[width=0.9\linewidth]{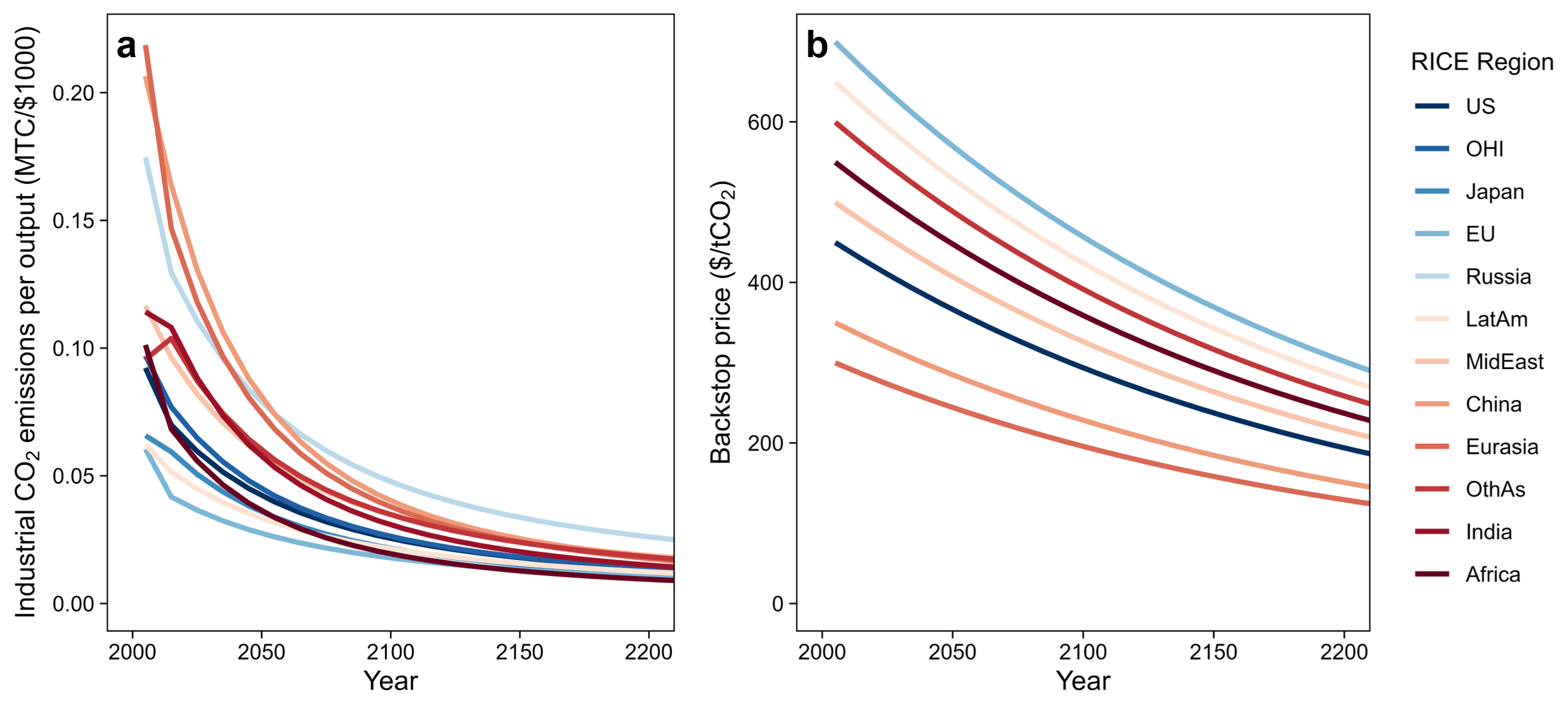}
    \caption{Regional baseline carbon intensities (a) and backstop technology prices (b) in the RICE model.}
    \vspace{1mm}  
    \begin{minipage}{1\textwidth}
        \small \textit{Notes}:
        The carbon intensity is given by the industrial CO$_2$ emissions per economic output. The backstop technology price corresponds to the marginal abatement cost at which all emissions are abated. The following regions have identical backstop prices: (1) Russia and Eurasia, (2) Other High Income (OHI) countries, Africa, and India, (3) Japan and the EU.
    \end{minipage}
    \label{F: carbon intensity and backstop prices}
\end{figure}

\subsection{Time-variant Negishi weights} \label{Assec: Time-variant Negishi weights}

The time-variant Negishi welfare weights are given by
\begin{equation*}
    \alpha_{it} = \frac{1}{u^{\prime}\left(x_{it}\right)} v_t, 
\end{equation*}
where $v_t$ is the wealth-based component of the social discount factor.
In the RICE-2010 model \parencite{nordhaus_economic_2010}, it is defined as the capital-weighted average of the regional wealth-based discount factors:
    \begin{gather*}
    \begin{aligned}
        v_t
        &= \frac{u'_{US,t}}{u'_{US, 0}}
        \sqrt{
        \frac{\sum_{i \in \mathscr{I}} \left(\frac{{\frac{u'_{US,0}}{u'_{i0}}}}{\frac{u'_{US,t}}{u'_{it}}} \frac{K_{it}}{\sum_{j \in \mathscr{I}} K_{jt}}\right)}
        {\sum_{i \in \mathscr{I}}\left( \frac{\frac{u'_{US,t}}{u'_{it}}}{\frac{u'_{US,0}}{u'_{i0}}}  \frac{K_{it}}{\sum_{j \in \mathscr{I}} K_{jt}}\right)}
        } ,
        \label{E: welfare weight}
    \end{aligned}
    \end{gather*}
where $K_{it}$ is the capital stock.

Note that $\frac{1}{u'(x_{it})} v_t$ equalizes the weighted marginal utility across regions. To obtain equalized weighted marginal utilities in each period, the discount factor needs to be equal across regions. Thus, $v_t$ is not region-specific and it pins down the wealth-based component of the world discount factor \parencite{nordhaus_warming_2000}.

\subsection{Discussion of the differentiated carbon price optimum}\label{Assec: Discussion of DCPO}

\todo[inline]{Potentially delete from Appendix (and include more concisely in the main text)}

The welfare maximizing policy that allows for differentiated carbon prices requires much higher carbon prices in rich regions than in poor regions (see Figure~\ref{F: Optimal trajectories for the carbon price and industrial emissions} and Table~\ref{T: Optimal carbon price in 2025}). This result warrants a discussion of several issues.

First, the differentiated carbon price optimum may be opposed by rich nations as it results in an implicit transfer from rich to poor regions. It should be noted, however, that the uniform carbon price optimum is welfare inferior to the differentiated carbon price optimum, as it imposes an additional constraint \parencite{budolfson_optimal_2020}. Importantly, the differentiated carbon price optimum is also in accordance with the principle of “common but differentiated responsibilities and respective capabilities” of the United Nations Framework Convention on Climate Change \parencite{unfccc_united_1992}. As such, \textcite{budolfson_optimal_2020} argue that the differentiated carbon price optimum is a natural focal point for international climate policy and for evaluating the adequacy of the nationally determined contributions (NDCs), which are at the heart of the Paris Agreement. A more recent study by \textcite{budolfson_utilitarian_2021} provides this comparison of the NDCs to implied carbon budgets under the differentiated carbon price optimum.

Second, since differentiated carbon prices are not cost-effective, it should be emphasized that a further welfare improvement over the differentiated price optimum could be achieved by establishing an international emissions trading scheme. This would allow regions with higher carbon prices to buy emission permits from poorer regions where the carbon price is lower, implying a transfer from the rich to the poor. Due to the differential carbon prices, mutual gains can be achieved by such a trading scheme \parencite{budolfson_optimal_2020}. If the permit market is fully competitive, this would result in a globally harmonized carbon price. However, as \textcite{budolfson_optimal_2020} point out, this outcome would be different from the uniform carbon price optimum discussed above, where an a priori constraint of equalized carbon prices was imposed; total emissions will be reduced and the poorest countries will bear a lower burden under the harmonized carbon price attained by the emissions trading scheme. \textcite{chichilnisky_who_1994} thus propose that the efficient allocation of emission permits is established by the differentiated carbon price optimum, and once the optimal allocation of permits is found, these permits are then traded internationally to achieve further welfare gains. The emission budgets shown in Figure~\ref{F: Optimal trajectories for the carbon price and industrial emissions} can thus be understood as providing the first step of this process.

Third, a potential problem with differentiated carbon prices is carbon leakage – an increase in carbon emissions in a country with comparatively laxer climate policies as a result of stricter climate policies in another country (e.g., due to a relocation of carbon-intensive industries to countries with laxer climate policies). The problem of carbon leakage, if it is not addressed, may thus undermine the policy. \textcite{budolfson_utilitarian_2021} provide a brief discussion of the issue of carbon leakage and how it may be addressed. They note that there are two channels for carbon leakage: (1) competitiveness differences resulting from carbon price differences, and (2) lower fossil fuel prices due to decreased global demand. \textcite{budolfson_utilitarian_2021} argue that the competitiveness channel can be addressed with border tax adjustments, such as those proposed by \textcite{flannery_framework_2018}. The second channel is shut down if countries commit to a global emissions cap \parencite{budolfson_utilitarian_2021}. Of course, there is also no carbon leakage if each region commits to its own regional carbon budget.


\clearpage

\listoftodos

\end{document}